\documentclass[a4paper,DIV=14]{scrartcl}
\usepackage[utf8]{inputenc}

\newenvironment{keywords}%
{\begin{quotation} \noindent \textbf{Keywords:}}
{\end{quotation}}

\setkomafont{title}{\normalfont\bfseries}
\setkomafont{section}{\normalfont\bfseries\LARGE}
\setkomafont{subsection}{\normalfont\bfseries\Large}
\setkomafont{subparagraph}{\normalfont\bfseries}

\usepackage[backend=bibtex,style=alphabetic,citestyle=alphabetic,natbib=true,maxbibnames=99]{biblatex} 
\addbibresource{imp.bib} 
\addbibresource{algs.bib} 

\usepackage[autostyle=true]{csquotes} 

\usepackage{amsmath}
\usepackage{amsthm}
\usepackage{amssymb}
\usepackage{microtype}
\usepackage{stmaryrd}
\usepackage{complexity}
\usepackage{tikz}
\tikzstyle{every node}=[font=\scriptsize]

\def\centerarc[#1](#2)(#3:#4:#5)
{ \draw[line width=0.6pt,#1] ($(#2)+({#5*cos(#3)},{#5*sin(#3)})$) arc (#3:#4:#5); }

\def\carc(#1:#2:#3:#4:#5)
{ \centerarc[](#1,#2)(#3-(#4/2):#3+(#4/2):#5); }

\def\carcred(#1:#2:#3:#4:#5)
{ \centerarc[color=red](#1,#2)(#3-(#4/2):#3+(#4/2):#5); }

\def\carcblue(#1:#2:#3:#4:#5)
{ \centerarc[color=blue](#1,#2)(#3-(#4/2):#3+(#4/2):#5); }

\def\carcpurple(#1:#2:#3:#4:#5)
{ \centerarc[color=purple](#1,#2)(#3-(#4/2):#3+(#4/2):#5); }

\def\arclabel(#1:#2:#3:#4:#5:#6)
{ \node (#1) at ($({#3+#5*cos(#6)},{#4+#5*sin(#6)})$) {#2};   }

\def\line(#1:#2:#3)
{ \draw[line width=0.6pt] (#1,#2) -- (#1+#3,#2); }

\def\vlinetikz[#1](#2:#3:#4)
{ \draw[line width=0.6pt,#1] (#2,#3) -- (#2,#3+#4); }

\def\beam[#1](#2:#3:#4:#5:#6)
{ \draw[line width=0.6pt,#1] ({#5*cos(#6)+#2},{#5*sin(#6)+#3}) -- ({#4*cos(#6)+#2},{#4*sin(#6)+#3});  }

\def\line(#1:#2:#3:#4)
{ \fill [white] (#1,#2) rectangle (#1+#3,#2+#4); }

\newcommand\strictinclusion[1][1]{%
    \begin{tikzpicture}[scale=#1]
    \draw (0,0) -- (0.25,0.15);
    \end{tikzpicture}
}
\newcommand\strictinclusiond[1][1]{%
    \begin{tikzpicture}[scale=#1]
    \draw (0,0) -- (0.25,-0.15);
    \end{tikzpicture}
}
\usepackage{tikz-qtree}
\usetikzlibrary{calc}

\theoremstyle{plain}
\newtheorem{theorem}{Theorem}[section]
\newtheorem{lemma}[theorem]{Lemma}
\newtheorem{fact}[theorem]{Fact}
\newtheorem{corollary}[theorem]{Corollary}
\newtheorem{conjecture}[theorem]{Conjecture}
\theoremstyle{definition}
\newtheorem{definition}[theorem]{Definition}

\usepackage{hyperref}

\newcommand{\N}{\mathbb{N}}

\newcommand{\setofgraphs}{\mathcal{G}}
\newcommand{\isomorph}{\cong}

\newcommand{\sign}{\mathrm{sign}}

\newcommand{\set}[2]{\ensuremath{\left\{ #1 \mid #2 \right\}}}

\DeclareMathOperator{\Doma}{dom}
\DeclareMathOperator{\Ima}{Im}
\DeclareMathOperator{\Vars}{Vars}



\newcommand\restr[2]{{
        \left.\kern-\nulldelimiterspace 
        #1 
        \vphantom{\big|} 
        \right|_{#2} 
}}

\newcommand{\gr}[2][]{\text{gr}_{#1}(#2)}
\newcommand{\ccg}[1][]{\ensuremath{\ComplexityFont{G}_{#1}}}

\newcommand{\minorfree}{\ensuremath{\mathrm{MF}}}

\newcommand{\ccex}{\ensuremath{\ComplexityFont{A}}}
\newcommand{\ccall}{\ensuremath{\ComplexityFont{ALL}}}
\newcommand{\cctwoexp}{\ensuremath{\ComplexityFont{2EXP}}}

\newcommand{\gccex}{\ensuremath{\ccg \ComplexityFont{A}}}
\newcommand{\gccall}{\ensuremath{\ccg \ccall}}
\newcommand{\gccr}{\ensuremath{\ccg \ComplexityFont{R}}}
\newcommand{\gccp}{\ensuremath{\ccg \P}}
\newcommand{\gccnp}{\ensuremath{\ccg \NP}}
\newcommand{\gccph}{\ensuremath{\ccg \PH}}
\newcommand{\gccl}{\ensuremath{\ccg \ComplexityFont{L}}}
\newcommand{\gccexp}{\ensuremath{\ccg \EXP}}
\newcommand{\gcctwoexp}{\ensuremath{\ccg \cctwoexp}}
\newcommand{\gccac}{\ensuremath{\ccg \AC^0}}
\newcommand{\gcctc}{\ensuremath{\ccg \TC^0}}

\newcommand{\gccreg}{\ensuremath{\ccg \ComplexityFont{REG}}}

\newcommand{\gccpbs}{\ensuremath{\ComplexityFont{PBS}}}
\newcommand{\gccbpbs}{\ensuremath{\ComplexityFont{PBS}}}


\newcommand{\cogcsh}{\ensuremath{[\mathrm{Sparse} \cap \mathrm{Hereditary}]_{\subseteq}}}
\newcommand{\cogcsmallh}{\ensuremath{[\mathrm{Small} \cap \mathrm{Hereditary}]_{\subseteq}}}
\newcommand{\cogcsmallhsu}{\ensuremath{[\mathrm{Small} \cap \mathrm{Hereditary} \cap \mathrm{Self\text{-}Universal}]_{\subseteq}}}
\newcommand{\cogcpmc}{\ensuremath{[\mathrm{Sparse} \cap \mathrm{Minor\text{-}Closed}]_{\subseteq}}}
\newcommand{\cogcth}{\ensuremath{[\mathrm{Tiny} \cap \mathrm{Hereditary}]_{\subseteq}}}

\newcommand{\gcplanar}{\ensuremath{\mathrm{Planar}}}
\newcommand{\gcforest}{\ensuremath{\mathrm{Forest}}}

\newcommand{\gcinterval}{\ensuremath{\mathrm{Interval}}}

\newcommand{\cloneclosure}{\ensuremath{\mathrm{clone}}}
\newcommand{\clonebf}{\ensuremath{\mathrm{BF}}}
\newcommand{\sgclosure}{\ensuremath{\mathrm{sg}}}

\newcommand{\gccfoq}{\ensuremath{\ccg \FO}}
\newcommand{\gccfo}{\ensuremath{\ccg \FO_{\mathrm{qf}}}}
\newcommand{\gccfoeq}{\ensuremath{\ccg \FO(=)}}
\newcommand{\gccfoeqqf}{\ensuremath{\ccg \FO_{\mathrm{qf}}(=)}}
\newcommand{\gccfolt}{\ensuremath{\ccg \FO(<)}}
\newcommand{\gccfoltqf}{\ensuremath{\ccg \FO_{\mathrm{qf}}(\ltp)}}



\newcommand{\bfreduction}{\ensuremath{\leq_{\mathrm{BF}}}}
\newcommand{\sgreduction}{\ensuremath{\leq_{\mathrm{sg}}}}

\newcommand{\sgdreduction}{\ensuremath{\leq_{\mathrm{sg}}^{\mathrm{diag}}}}

\DeclareMathOperator{\ltp}{\ensuremath{<}}
\DeclareMathOperator{\addp}{\ensuremath{+}}
\DeclareMathOperator{\mulp}{\ensuremath{\times}}

\begin{document}

\title{A Complexity Theory for Labeling Schemes} 
\author{Maurice Chandoo\footnote{Leibniz Universität Hannover,
        Institut für Theoretische Informatik,
        Appelstr.~4, 30167 Hannover, Germany; \hskip 2em      
        E-Mail: \href{mailto:chandoo@thi.uni-hannover.de}{chandoo@thi.uni-hannover.de} }}
\date{\vspace{-5ex}}

\maketitle

\begin{abstract}\noindent\textbf{Abstract.}
    In a labeling scheme the vertices of a given graph from a particular class are assigned short labels such that adjacency can be algorithmically determined from these labels. A representation of a graph from that class is given by the set of its vertex labels. Due to the shortness constraint on the labels such schemes provide space-efficient representations for various graph classes, such as planar or interval graphs. We consider what graph classes cannot be represented by labeling schemes when the algorithm which determines adjacency is subjected to computational constraints.     
\end{abstract}

\begin{keywords}
    adjacency labeling schemes, descriptive complexity of graph properties, graph class reductions, pointer numbers, lower bounds, weak implicit graph conjecture
\end{keywords}

\section{Introduction}
Suppose you have a database and in one of its tables you need to store large interval graphs in such a way that adjacency can be determined quickly. This means generic data compression algorithms are not an option. A graph is an interval graph if each of its vertices can be mapped to a closed interval on the real line such that two vertices are adjacent iff their corresponding intervals intersect. A naive approach would be to store the adjacency matrices of the interval graphs. This requires roughly $n^2$ bits. However, there are only $2^{\mathcal{O}(n \log n)}$ interval graphs on $n$ vertices which means that a space-wise optimal representation of interval graphs would only use $\mathcal{O}(n \log n)$ bits. Adjacency lists perform even worse space-wise if the graphs are dense. A simple and asymptotically optimal solution for this problem is to use the interval representation. More specifically, given an interval graph $G$ with $n$ vertices write down its interval model (the set of intervals that correspond to its vertices), enumerate the endpoints of the intervals from left to right and label each vertex with the two endpoints of its interval, see Figure~\ref{fig:intvex}. The set of vertex labels is a representation of the graph and, moreover, adjacency of two vertices can be determined quickly by comparing their four endpoints. Each endpoint is a number in the range $1,\dots,2n$ and therefore one vertex label requires $2 \log 2n$ bits. Thus the representation of the whole graph requires $\mathcal{O}(n \log n)$ bits which is asymptotically optimal. 

The idea behind this representation can be generalized to other graph classes as follows. Let $\mathcal{C}$ be a graph class with similarly many graphs on $n$ vertices as interval graphs. We say $\mathcal{C}$ has a labeling scheme (or implicit representation) if the vertices of every graph in $\mathcal{C}$ can be assigned binary labels of length $\mathcal{O}(\log n)$  such that adjacency can be decided by an (efficient) algorithm $A$ which gets two labels as input. The algorithm $A$ must only depend on $\mathcal{C}$. We remark that labeling schemes can also be constructed for graph classes which have asymptotically more graphs than interval graphs by adjusting the label length. However, many important graph classes do indeed only have $2^{\mathcal{O}(n \log n)}$ graphs on $n$ vertices and therefore we shall restrict our attention to them; we call such graph classes small. 

The central question of this paper is what small graph classes \emph{do not} admit an implicit representation when imposing computational constraints on the label decoding algorithm $A$. 
We propose a formal framework inspired by classical computational complexity in order to investigate this open-ended question in a structured way. A complexity class in this setting is a set of graph classes, usually defined in terms of labeling schemes. The introduced complexity classes can be seen as a complexity measure for the adjacency structure of a graph class. 

\begin{figure}
    \centering
    \resizebox{11cm}{!}{%
        \begin{tikzpicture}[shorten >=1pt,auto,node distance=1.2cm,
  main node/.style={circle,draw}]

\newcommand*{\xoff}{0}%
\newcommand*{\yoff}{0.3}%

\newcommand*{\xoffg}{10}%
\newcommand*{\yoffg}{0}%

%

\draw[line width=0.6pt] (\xoff+1+0.2,\yoff+0.2) -- (\xoff+6-0.2,\yoff+0.2);
\draw[line width=0.6pt] (\xoff+0.5,\yoff) -- (\xoff+2,\yoff);
\draw[line width=0.6pt] (\xoff+2.7,\yoff) -- (\xoff+4.3,\yoff);
\draw[line width=0.6pt] (\xoff+5,\yoff) -- (\xoff+6.5,\yoff);
\draw[line width=0.6pt] (\xoff+7,\yoff) -- (\xoff+8,\yoff);

\node (l1) at (\xoff+0.5+0.1,\yoff-0.2) {$1$};
\node (l3) at (\xoff+2-0.1,\yoff-0.2) {$3$};
\node (l4) at (\xoff+2.7+0.1,\yoff-0.2) {$4$};
\node (l5) at (\xoff+4.3-0.1,\yoff-0.2) {$5$};
\node (l6) at (\xoff+5+0.1,\yoff-0.2) {$6$};
\node (l8) at (\xoff+6.5-0.1,\yoff-0.2) {$8$};
\node (l9) at (\xoff+7+0.1,\yoff-0.2) {$9$};
\node (l10) at (\xoff+8-0.1,\yoff-0.2) {$10$};
\node (l2) at (\xoff+1+0.2+0.1,\yoff+0.38) {$2$};
\node (l7) at (\xoff+6-0.2-0.1,\yoff+0.38) {$7$};

\node[main node] (a) at (\xoffg,\yoffg) {$1,3$};
\node[main node] (b) at (\xoffg+1,\yoffg) {$4,5$};
\node[main node] (c) at (\xoffg+2,\yoffg) {$6,8$};
\node[main node] (d) at (\xoffg+3,\yoffg) {$9,10$};
\node[main node] (e) at (\xoffg+1,\yoffg+1) {$2,7$};

\path[-]
(a) edge (e)
(b) edge (e)
(c) edge (e)
;


\end{tikzpicture}
    }
    \caption{Interval model and the resulting labeling of the interval graph}
    \label{fig:intvex}                
\end{figure}
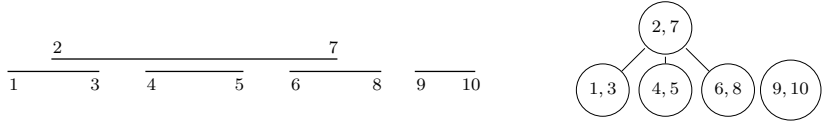 

The concept of labeling schemes was introduced by Muller \cite{muller} and by Kannan, Naor and Rudich \cite{kannan}. Among the first and most basic questions was whether every small and hereditary (= closed under vertex deletion) graph class has a labeling scheme. If one omits the hereditary condition then this is not true due to a simple counting argument \cite[Thm.~7]{spinrad}. Being hereditary is a uniformity condition that is commonly required of graph classes in order for them to have some meaningful structure. This restriction can be slightly weakened without affecting the previous question by considering all graph classes which are a subset of a small and hereditary graph class, the justification being that if a graph class $\mathcal{C}$ has an implicit representation then obviously every subset of $\mathcal{C}$ has an implicit representation as well. For instance, this weaker restriction also includes trees which are not hereditary but their hereditary closure, namely forests, are small and hereditary. This question remains unsolved and has been named the implicit graph conjecture (IGC) by Spinrad: every small and hereditary graph class has an implicit representation. He also gives an overview of graph classes known to have a labeling scheme and those not known to have one, which still remains quite accurate more than a decade later \cite{spinrad}. 

The following is a brief account on results related to the IGC.
Definitions of the mentioned graph classes and other properties are given in the next section. 
Let us call a small and hereditary graph class which is not known to have a labeling scheme a candidate for the IGC. Identifying such graph classes is important when trying  to study  the limitations of labeling schemes. The challenge is to determine whether a given hereditary graph class is small. If a graph class has a labeling scheme then it must be small since every graph on $n$ vertices in that class can be described using $\mathcal{O}(n \log n)$ bits. Clearly, this argument cannot be used to find candidates for the IGC. Another method to establish that a hereditary graph class is small is to apply Warren's theorem \cite{warren}. Roughly speaking, if adjacency in a graph class can be described by a set of polynomial inequations then it is small \cite[p.~54]{spinrad}; we give a possible formalization of this later on by what we call polynomial-boolean systems. This method can be used to show that certain geometrical intersection graph classes are small. Examples include line segment graphs, disk graphs and $k$-dot product graphs. All of the previous examples are hereditary and unknown to have a labeling scheme and thus are candidates for the IGC. The geometrical representations of these classes suggest potential labeling schemes. For instance, in the case of line segment graphs one could assign each vertex four numbers which represent the coordinates of the two endpoints of its line segment in the plane. However, this would only yield a correct labeling scheme if $\mathcal{O}(\log n)$ bits suffice to encode the coordinates for every line segment graph. In \cite{mcdiarmid} it is shown that this does not hold because there are line segment graphs whose geometrical representation requires an exponential number of bits. They prove that this lower bound also holds for disk graphs. In \cite{kang} it is shown that the geometrical representation of $k$-dot product also requires an exponential number of bits. Therefore for none of these graph classes their geometrical representation yields a labeling scheme as opposed to the case of interval graphs. 
In \cite{atminas} certain combinatorial tools are introduced that allow to infer that a hereditary graph class is small. By applying these tools they reveal candidates for the IGC which are defined in terms of their forbidden induced subgraphs. 
Coming from the other direction, one can try to identify sets of graph classes that have labeling schemes. In \cite{veryshort} it is shown that every tiny, hereditary graph class has a labeling scheme where the labels require only a constant number of bits. In \cite[p.~165]{spinrad} and in \cite{courcelle} labeling schemes for graph classes with bounded clique-width are constructed. In both cases the label decoding algorithm has to process the vertex labels in a bitwise fashion. This is a remarkable feature because we are not aware of any other hereditary graph class which requires this. 
The class of graphs with clique-width $k$ seems to have a higher complexity than other hereditary graph classes that are known to have a labeling scheme, as we shall see later on. 

\subparagraph*{Informative Labeling Schemes.}
Labeling schemes are usually understood as a much broader concept than what we consider here. In the following we try to give a more accurate picture of this notion. 
The idea of labeling schemes quickly evolved when Peleg recognized that instead of adjacency one can also infer other properties such as distance from the vertex labels \cite{peleg}.
In this more general form a labeling scheme can be described as follows. 
Suppose you have a graph class $\mathcal{C}$ (this can be any graph class, including the class of all graphs) and a property $P$ which applies to a $k$-tuple of vertices (in most cases  $k=2$ such as adjacency or distance). The goal is to label the vertices of every graph $G$ in $\mathcal{C}$ such that given a $k$-tuple of vertices $(v_1,\dots,v_k)$ of $G$ the property $P(v_1,\dots,v_k)$ can be quickly computed from the $k$ labels of these vertices. The three basic quality criteria of a labeling scheme are the label length measured in terms of the number of vertices, the time required to determine $P$ from the labels called decoding time and the time required to find a labeling for a given graph called encoding time. The encoding time is usually not regarded as important as the other two factors.  
The aim of such a labeling scheme is to reduce the time to determine $P$ by doing some preprocessing, i.e.~finding a labeling for a given graph (encoding time), and using some auxiliary space as determined by the label length.
The following is a non-exhaustive list of examples of labeling schemes for various properties.
 
In \cite{dlov} a distance labeling scheme for undirected graphs with label length $\frac{\log 3}{2}n + o(n)$ and constant decoding time is constructed. This means given an undirected graph $G$ with $n$ vertices one can find a representation of $G$ using $\mathcal{O}(n^2)$ bits such that it is possible to query distance between two vertices in constant time. 
Notice that it is impossible to achieve sublinear label length $o(n)$ in this case because this would contradict the $\Omega(n^2)$ bits required to store an arbitrary graph. For planar graphs a distance labeling scheme with label length $\mathcal{O}(\sqrt{n} \log n)$, decoding time $\mathcal{O}(\log n)$ and polynomial encoding time is given in \cite{gavd}. They also show a lower bound of $\Omega(n^{1/3})$ on the label length for distance in planar graphs. 

Another well-studied property is the ancestor relation in directed trees, which is motivated by the fact that  XML documents are viewed as labeled trees and search engines that operate on such documents naturally perform a lot of ancestor queries. In \cite{fraig} an ancestry labeling scheme is designed which has label length $\log n + \mathcal{O}(\log \log n)$, constant decoding time and linear encoding time. The label length is asymptotically tight, i.e.~there is a lower bound of $\log n + \Omega(\log \log n)$. In general, for the label length the multiplicative constant in front of the first-order term as well as the second-order term are considered to be theoretically important.

In \cite{katz} a flow labeling scheme for undirected graphs with optimal label length $\Theta(\log^2 n + \log n \cdot \log \hat{\omega} )$ ($\hat{\omega}$ is the maximum capacity) and polylogarithmic decoding time w.r.t.~$n$ and $\hat{\omega}$ is constructed. They also give labeling schemes for vertex- and edge-connectivity. The motivation for considering these properties stems from routing tasks in networks where it is useful to be aware of the capacity and connectivity between nodes. They note that for practical usability such a scheme has to be adapted to a dynamic setting since the topology of a network  cannot be assumed to be static. 

In \cite{courcelle} a $P$-labeling scheme for all graph classes of bounded clique-width is constructed where $P$ is an arbitrary property expressible as formula in monadic second-order logic using label length and decoding time $\mathcal{O}(\log n)$ and encoding time $\mathcal{O}(n \log n)$. This is also generalized to properties called monadic second-order optimization functions, which includes distance. 

A sequence of graphs $(G_n)_{n \in \N}$ is said to be universal for a graph class $\mathcal{C}$ if every graph on $n$ vertices in $\mathcal{C}$ occurs as induced subgraph of $G_n$ for all $n \in \N$. Universal graphs for undirected graphs and tournaments have already been studied in the 60's \cite{rado,moon,moon2}.
Adjacency labeling schemes can be seen as an alternative description of universal graphs. Suppose you have an adjacency labeling scheme $S$ with label length $f(n)$ for a graph class $\mathcal{C}$. Then the $n$-th universal graph $G_n$ for $\mathcal{C}$ derived from $S$ has $\{0,1\}^{f(n)}$ as vertex set and its adjacency is determined by the label decoder of $S$. Similarly, a sequence of universal graphs can be easily converted into an adjacency labeling scheme. The label length is logarithmically related to the size of the universal graphs. 
In \cite{alon17} it is shown that undirected graphs have a sequence of universal graphs of size $(1+o(1))\cdot 2^{(n-1)/2}$, and that this is tight up to a lower order additive term. In \cite{alstrup} it is shown that forests have an adjacency labeling scheme with optimal label length $\log n + \mathcal{O}(1)$, constant decoding time and linear encoding time. This implies that forests have a sequence of universal graphs of size $\mathcal{O}(n)$.

There are various relaxations and additional constraints for labeling schemes that have been proposed and investigated. Instead of measuring the maximum label length one can consider the average label length.
This allows for more flexibility when designing a labeling scheme because a few vertices can have longer labels. This measure is also called total label length \cite{gavd}. 
In \cite{korman} it is suggested to consider the labels of additional vertices when decoding a property. For example, in a distance labeling scheme one might use the labels of the two vertices $u,v$ for which distance has to be determined in order to identify a third vertex $w$. Then the label of $w$ can also be used to decode the distance between $u$ and $v$.   
They show that it is possible to circumvent certain lower bounds on the label length in this relaxed model.
In a dynamic labeling scheme the goal is to maintain a labeling of a graph which can change incrementally in a prescribed manner by updating the labels after each change. This can be formalized in different ways and depends on the class of graphs under consideration. A dynamic model for trees is introduced in \cite{dyn1} and a dynamic labeling scheme for XML documents is given in \cite{liu}.

In this paper we exclusively consider adjacency labeling schemes with label length at most $\mathcal{O}(\log n)$. For brevity we omit the qualifier `adjacency'.

\subparagraph*{Overview of the Paper.} 
In Section~\ref{sec:prelim}  we define how to interpret a set of languages (complexity class) such as $\P$ as a set of labeling schemes. For instance, we write $\gccp$ to denote the set of graph classes that can be represented by a labeling scheme where the label decoder can be computed `in' $\P$.  Our definition generalizes the ones given in \cite{kannan} and  \cite{muller}.    
Until now it has been unknown whether the computational aspect of label decoders matters at all, i.e.~$\gccac = \gccall$ would have been consistent with what was known so far ($\ComplexityFont{ALL}$ denotes the set of all languages and $\AC^0$ is a tiny circuit complexity class). 
In Section~\ref{sec:hierarchy} we show that there is a time hierarchy for these $\ccg \cdot$ classes akin to the one known from classical complexity.

In Section~\ref{sec:primitive} we provide a picture of the expressiveness of computationally primitive labeling schemes. For example, graph classes with bounded clique-width, uniformly sparse graph classes and various intersection graph classes can all be found in $\gccac$. We introduce two graph parameters called pointer numbers which are defined in terms of very rudimentary labeling schemes. Despite their simplicity they already generalize uniformly sparse graph classes and graph classes with bounded degree.
Any natural graph class with a labeling scheme that we are aware of can be found in $\gccac$. This suggests that a more restrictive model of computation might be needed to analyze the limitations of labeling schemes.

Before we concern ourselves with finding such a model of computation, we first introduce two reduction notions for graph classes in Section~\ref{sec:reductions}. Informally, we say a graph class is reducible to another one if the adjacency of graphs in the former class can be expressed as boolean combination of the adjacency of graphs in the latter one. This enables us to formally compare graph classes for which no labeling schemes are known in order to see whether there could be a common obstacle to designing a labeling scheme.  Moreover, we consider what a `complete graph class' for classes such as $\gccp$ looks like if it exists. The main part of this section is dedicated to establishing basic properties of these two reduction notions.

In Section~\ref{sec:logic} we define classes of labeling schemes whose label decoders can be expressed by formulas in first-order logic.  
In this setting a vertex label is interpreted as a constant number $k$ of polynomially bounded, non-negative integers and the label decoder is a formula with $2k$ free variables. 
Such formulas neatly capture the naive labeling schemes for many geometrical intersection graph classes such as interval graphs. 
First, we compare the expressiveness of such logical labeling schemes to classes such as $\gccac$ and $\gccp$. Then we show that two fragments of these logical labeling schemes have various natural, complete graph classes such as trees or interval graphs. 
After that, we consider a generalization of quantifier-free logical labeling schemes called polynomial-boolean systems. Roughly speaking, polynomial-boolean systems are obtained by dropping the constraint that the numbers of a vertex label need to be polynomially bounded. Such systems are interesting because they contain many of the candidates for the implicit graph conjecture.

In Section~\ref{sec:algo} we consider the difficulty of solving certain algorithmic tasks on graph classes with labeling schemes where the label decoder has a very low complexity. Intuitively, one would expect such graph classes to have a relatively simple adjacency structure which can be exploited for algorithmic purposes. Somewhat unexpectedly, this relative simplicity is still high enough to lead us to problems which seem to be situated on the frontier of algorithmic research.  
We show that this kind of question naturally fits into the framework of parameterized complexity. For example, whether the Hamiltonian cycle problem is W[1]-hard when parameterized by $\gccp$ is a well-defined question (the answer is: yes, it is). 

In Section~\ref{sec:reg} we introduce another class of labeling schemes defined in terms of regular languages. Since finite state automata are quite sensitive to the order of the input bits the labels of the two vertices for which adjacency has to be determined are interleaved.
We show that the set of graph classes represented by such labeling schemes is closed under reductions and it contains every hereditary graph class with a labeling scheme that we are aware of, just like $\gccac$. 

In Section~\ref{sec:final} we summarize the various complexity classes and their relations, and give some motivation in retrospect. We then point out what we believe to be interesting research directions and state a potentially easier to refute variant of the implicit graph conjecture.

Besides the preliminaries all sections can be read more or less independently of each other. The only exception is when we discuss closure under reductions and complete graph classes, which require Section~\ref{sec:reductions}. For the reader who is primarily interested in algorithmic problems we recommend reading about the pointer numbers and algebraic reductions and then going to Section~\ref{sec:algo}.

\section{Preliminaries}
\label{sec:prelim}
\paragraph*{General Notation and Terminology.}
Let $\N = \{1,2,\dots\}$ be the set of natural numbers and $\N_0$ is $\N \cup \{0\}$. For $n \in \N$ let $[n] = \{1,2,\dots,n\}$ and let $[n]_0 = [n] \cup \{0\}$.  For a set $A$ we write $\mathcal{P}(A)$ to denote the power set of $A$. 
When we say $\log n$ we mean $\lceil \log_2 n \rceil$. Let $\exp(n) = 2^n$ and let $\exp^0(n) = n$ and $\exp^i(n) = \exp({\exp^{i-1}(n)})$ for all $i \geq 1$. For a function $f$ we write $\Ima(f)$ and $\Doma(f)$ to denote its image and domain, respectively. 
Let $f$ be a $k^2$-ary boolean function and let $A= (a_{i,j})_{i,j \in [k]}$ be a $(k \times k)$-matrix over $\{0,1\}$ for some $k \in \N$. We write $f(A)$ to mean $f(a_{1,1},a_{1,2},\dots,a_{1,k},a_{2,1},\dots,a_{k,k})$, i.e.~plugging in the values of $A$ going from left to right and top to bottom.  
In general, we consider an undirected graph to be a special case of a directed graph with symmetric edge relation. In Section~\ref{sec:primitive} it is more adequate to only consider undirected graphs and in that case we say graph to mean undirected graph. At certain points there might be more than one sensible interpretation. For example, suppose there are two sets of graph classes $\mathbb{A}$ and $\mathbb{B}$ with $\mathbb{A} \subseteq \mathbb{B}$ and $\mathbb{A}$ only contains undirected graph classes whereas $\mathbb{B}$ also contains directed graph classes. It trivially follows that $\mathbb{A} \subsetneq \mathbb{B}$. However, the more interesting question is whether $\mathbb{B}$ also contains an undirected graph class which is not in $\mathbb{A}$. If we do not know this we shall not consider $\mathbb{A}$ to be a proper subset of $\mathbb{B}$. We write $\mathcal{G}$ to denote the set of all graphs or the set of all undirected graphs, depending on the context. 
For two graphs $G,H$ we write $G \isomorph H$ to denote that they are isomorphic and $G \subseteq H$ to mean that $G$ is an induced subgraph of $H$. For a graph $G$ and a subset of its vertices $V'$ we write $G[V']$ to mean the subgraph of $G$ which is induced by $V'$. We write $\overline{G}$ to denote the edge-complement of a graph $G$. For a directed graph $G$ and a vertex $u$ of $G$ we write $N_{\mathrm{in}}(u)$ and $N_{\mathrm{out}}(u)$ to denote the in- and out-neighbors of $u$ in $G$, respectively. For an undirected graph $G$ and a vertex $u$ of $G$ we write $N(u)$ to denote the vertices adjacent to $u$ in $G$. We say two vertices $u$ and $v$ of an undirected graph $G$ are twins if $N(u) \setminus \{v\} = N(v) \setminus \{u\}$. The twin relation is an equivalence relation.
The graph $K_n$ denotes the complete graph on $n$ vertices for $n \in \N$.
We speak of $G$ as unlabeled graph to emphasize that we talk about the isomorphism class of $G$ rather than a specific adjacency matrix of $G$. A graph class is a set of finite and unlabeled graphs, i.e.~it is closed under isomorphism. For a graph class $\mathcal{C}$ and $n \in \N$ we write $\mathcal{C}_n$  to denote the set of all graphs in $\mathcal{C}$ with $n$ vertices. Similarly, we write $\mathcal{C}_{\leq n}$ ($\mathcal{C}_{\geq n}$) to denote all graphs in $\mathcal{C}$ with at most (at least) $n$ vertices. For a set of graph classes $\mathbb{A}$ we write $[\mathbb{A}]_{\subseteq}$ to denote its closure under subsets, i.e.~$\set{ \mathcal{C} \subseteq \mathcal{D}  }{ \mathcal{D} \in \mathbb{A}}$.

\paragraph*{Complexity Classes.}
We use the term complexity class informally to mean a countable set of languages with computational restrictions. Unless specified otherwise we consider languages over the binary alphabet $\{0,1\}$. We will talk about the standard complexity classes $\ComplexityFont{L}$ (logspace), $\P$, $\NP$, $\PH$ (polynomial-time hierarchy), $\EXP$ and $\R$ (set of decidable languages), which can be found in most textbooks on complexity theory such as \cite{selman}, and the circuit complexity classes $\AC^0$  and $\TC^0$. The class $\AC^0$ is the set of languages decidable by constant-depth, polynomially sized circuit families using `and', `or' and `negation' gates. $\TC^0$ is defined analogously with the addition of majority gates. Precise definitions of these two classes can be found in \cite{vollmer}. We assume our circuit families to be logspace-uniform, i.e.~for every family of circuits $(C_n)_{n \in \N}$ considered here there is a logspace-transducer $M$ which outputs a representation of $C_i$ on input $i$ in unary.  
We write $\ComplexityFont{ALL}$ to denote the set of all languages.
For a function $t \colon \N \rightarrow \N$ let $\TIME(t)$ denote the set of languages that can be decided by a deterministic Turing machine in time $t$. 
For a set of functions $T$ where every $t \in T$ has signature $\N \rightarrow \N$ let $\TIME(T) = \cup_{t \in T} \TIME(t)$. The class $k\EXP$ is defined as $\TIME(\exp^k(n^{\mathcal{O}(1)}))$ for $k \geq 0$. This means $\ComplexityFont{0}\EXP = \P$. If $k=1$ we simply write $\EXP$.

\paragraph*{First-Order Logic.}
Let $\mathcal{N}$ be the structure that has $\N_0$ as universe equipped with the order  relation `$\ltp$' and addition `$\addp$' and multiplication `$\mulp$' as functions. 
For $n \geq 1$ let $\mathcal{N}_n$ be the structure that has $[n]_0 = \{0,1,\dots,n \}$ as universe, the order relation `$\ltp$' and addition as well as multiplication defined as:
$$\addp(x,y) = \begin{cases}
x+y &, \text{if } x+y \leq n \\
0 &,   \text{if } x+y > n \\
\end{cases}
\: \: , \: \: 
\mulp(x,y) = \begin{cases}
xy &, \text{if } xy \leq n \\
0 &,   \text{if } xy > n \\
\end{cases}
$$
For $\sigma \subseteq \{ \ltp, \addp, \mulp \}$ let $\FO_k(\sigma)$ be the set of first-order formulas with boolean connectives $\neg,\vee,\wedge$, quantifiers $\exists,\forall$ and  $k$ free variables using only equality and the relation and function symbols from $\sigma$. If $\sigma = \{\ltp, \addp, \mulp \}$ we simply write $\FO_k$ and if $\sigma = \emptyset$ we write $\FO_k(=)$.
A formula is called an atom if it contains no boolean connectives and no quantifiers. For a formula $\varphi$ with $a$ atoms let us call the $a$-ary boolean function that results from replacing every atom in $\varphi$ by a proposition the underlying boolean function of $\varphi$. 
Let $\Vars(\varphi)$ be the set of free variables in $\varphi$. Given $\varphi \in \FO_k(\sigma)$, $\Vars(\varphi) = (x_1,\dots,x_k)$ and an assignment $a_1,\dots,a_k \in [n]_0$ we write $\mathcal{N}_n, (a_1,\dots,a_k) \models \varphi$ if the interpretation $\mathcal{N}_n, (a_1,\dots,a_k)$ satisfies $\varphi$ under the semantics of first-order logic. 

Let $\varphi$ be a formula in $\FO_k$. We define the bounded model checking problem for $\varphi$ as follows. On input $a_1,\dots,a_k,n \in \N$ with $a_i \in [n]_0$ for all $i \in [k]$ decide whether $\mathcal{N}_n,(a_1,\dots,a_k) \models \varphi$. We assume that the input is encoded in binary.

\paragraph*{Graph Class Properties.}
Let $\mathcal{C}$ be a graph class.
We call $\mathcal{C}$ small if it has at most $n^{\mathcal{O}(n)}$ graphs on $n$ vertices. Stated differently, $|\mathcal{C}_n| \in n^{\mathcal{O}(n)} = 2^{\mathcal{O}(n \log n)}$; in the literature this is also called factorial speed of growth (\cite{balogh,atminas}). 
$\mathcal{C}$ is tiny if there exists a $c < \frac{1}{2}$ such that $|\mathcal{C}_n| \leq n^{cn}$ for all sufficiently large $n$ \cite{veryshort}. 
$\mathcal{C}$ is hereditary if it is closed under taking induced subgraphs, i.e.~if $G$ is in $\mathcal{C}$ then every induced subgraph of $G$ is in $\mathcal{C}$. 
We write $[\mathcal{C}]_{\subseteq}$ to denote the hereditary closure of $\mathcal{C}$, i.e.~$[\mathcal{C}]_{\subseteq}$ is the set of graphs that occur as induced subgraph of some graph in $\mathcal{C}$. 
$\mathcal{C}$ is sparse if there exists a $c \in \N$ such that every graph in $\mathcal{C}_n$ has at most $cn$ edges for all $n \in \N$.
$\mathcal{C}$ is uniformly sparse if it is a subset of a hereditary and sparse graph class \cite{courcelleus}. Stated differently, $\mathcal{C}$ is uniformly sparse iff it is in $\cogcsh$. For example, planar graphs are uniformly sparse. 
$\mathcal{C}$ is inflatable if for every graph $G$ in $\mathcal{C}$ with $n$ vertices there exists a graph $H$ in $\mathcal{C}$ with $m$ vertices which contains $G$ as induced subgraph for all $n < m \in \N$. 
$\mathcal{C}$ is self-universal if for every finite subset $X$ of $\mathcal{C}$ there exists a graph $G$ in $\mathcal{C}$ which contains every graph in $X$ as induced subgraph. Every graph class that is closed under disjoint union is self-universal.

Let $\mathcal{F}$ be a family of sets over some ground set $U$, i.e.~$\mathcal{F} \subseteq \mathcal{P}(U)$. Let $X$ be a finite subset of $\mathcal{F}$ and $G_X$ is the undirected graph which has $X$ as vertex set and there is an edge $\{u,v\}$ in $G_X$ iff $u$ and $v$ intersect. Let $\mathcal{C}_{\mathcal{F}}$ be the following graph class. A graph $G$ is in $\mathcal{C}_{\mathcal{F}}$ iff there exists a subset $X$ of $\mathcal{F}$ such that $G$ is isomorphic to $G_X$. A graph class $\mathcal{C}$ for which there exists a family of sets $\mathcal{F}$ such that $\mathcal{C} = \mathcal{C}_{\mathcal{F}}$ is called intersection graph class. Note that every intersection graph class is undirected, hereditary, self-universal and inflatable. For example, interval graphs are defined as the class of graphs $\mathcal{C}_{\mathcal{F}}$ where $\mathcal{F}$ is the set of (closed) intervals on the real line.

An undirected graph $H$ is called a minor of an undirected graph $G$ if $H$ can be obtained from $G$ by deleting vertices and edges, and contracting edges (merging two adjacent vertices into one vertex which inherits the neighbors of the two old vertices). $\mathcal{C}$ is minor-closed if every graph that occurs as minor of some graph in $\mathcal{C}$ is in $\mathcal{C}$ as well.
For a graph $G$ we call $\mathcal{C}$  $G$-minor free if no graph in $\mathcal{C}$ contains $G$ as minor. For a graph class $\mathcal{C}$ let $\minorfree(\mathcal{C})$ denote the set of graph classes that are $G$-minor free for some $G$ in $\mathcal{C}$.   
Being minor-closed is a property that only applies to undirected graph classes.

\paragraph*{Graph Classes and Parameters.}
We consider a graph parameter $\lambda$ to be a total function which maps unlabeled graphs to natural numbers. We say a graph class $\mathcal{C}$ is bounded by a graph parameter $\lambda$ if there exists a $c \in \N$ such that $\lambda(G) \leq c$ for all $G \in \mathcal{C}$. A graph parameter can be interpreted as the set of graph classes that are bounded by it. We say two graph parameters are equivalent if they bound the same set of graph classes.

The degeneracy of a graph $G$ is the least $k \in \N$ such that every induced subgraph of $G$ contains a vertex of degree at most $k$. For example, every forest has degeneracy 1 because it either has a leaf or every vertex is isolated.
The arboricity of a graph $G$ is the least $k \in \N$ such that there are $k$ forests $F_1,\dots,F_k$ with the same vertex set as $G$ such that $E(G) = \cup_{i \in [k]} E(F_i)$. 
The thickness of a graph $G$ is the least $k \in \N$ such that there are $k$ planar graphs $H_1,\dots,H_k$ with the same vertex set as $G$ such that $E(G) = \cup_{i \in [k]} E(H_i)$. 
It is well-known that arboricity, thickness and degeneracy are equivalent: they bound exactly the set of uniformly sparse graph classes. 
The boxicity of a graph $G$ is the least $k \in \N$ such that $G$ is the intersection graph of $k$-dimensional axis-parallel boxes. For example, a graph class has boxicity 1 iff it is an interval graph.
The twin index of a graph $G$ is the index of its twin relation. 
The intersection number of a graph $G$ is the smallest $k \in \N$ such that there exists a family of sets $\mathcal{F}$ over a ground set $U$ with $|U| = k$ and $G$ is isomorphic to $G_\mathcal{F}$. 
The tree-width and clique-width of a graph are two algorithmically important graph parameters. Their definitions can be found in \cite{cwtwd} but they are not immediately relevant in our context, hence we omit them. For us it suffices to know that a graph class has bounded tree-width iff it is in $\minorfree(\gcplanar)$ (it is the subset of a minor-closed graph class which does not contain all planar graphs) \cite{twplanar}. The relevant property of graph classes with bounded clique-width is cited when used.

An interval graph is an intersection graph of closed intervals on the real line. A $k$-interval graph is the intersection graph of a union of $k$ disjoint intervals on the real line. 
A chord of a circle is a line segment whose endpoints lie on that circle. A circle graph is an intersection graph of chords of a circle. A line segment graph is an intersection graph of line segments in the plane. Obviously, interval and circle graphs are a subset of line segment graphs. A disk graph is an intersection graph of disks in the plane. A $k$-ball graph is the intersection graph of $k$-dimensional balls. 2-ball graphs are disk graphs. A $k$d-line segment graph is the intersection graph of line segments in $\mathbb{R}^k$. A graph $G$ is a $k$-dot product graph if there exists a function $f \colon V(G) \rightarrow \mathbb{R}^k$ such that two distinct vertices $u,v$ in $G$ are adjacent iff $f(u) \cdot f(v) \geq 1$ \cite{fiduccia}.

\paragraph*{Labeling Schemes.}
We use the terms implicit representation and labeling scheme interchangeably. 

\begin{definition}
    A labeling scheme is a tuple $S = (F,c)$ where $F \subseteq \{0,1\}^* \times \{0,1\}^*$ is called  label decoder and $c \in \mathbb{N}$ is the label length. A graph $G$ on $n$ vertices is in the class of graphs spanned by $S$, denoted by $G \in \gr{S}$, if there exists a labeling $\ell \colon V(G) \rightarrow \{0,1\}^{c \log n}$ such that for all $u, v \in V(G)$:
    $$ (u,v) \in E(G) \Leftrightarrow (\ell(u),\ell(v)) \in F $$
    We say a graph class $\mathcal{C}$ is represented by (or has) a labeling scheme $S$ if $\mathcal{C} \subseteq \gr{S}$.
\end{definition}

The labeling scheme for interval graphs that we have seen in the introduction can be formalized as follows. We define the label decoder $F_{\mathrm{Intv}}$ such that $(x_1x_2,y_1y_2)$ is in $F_{\mathrm{Intv}}$ iff neither $x_2$ is (lexicographically) smaller than $y_1$ nor $y_2$ is smaller than $x_1$ for all $x_1,x_2,y_1,y_2 \in \{0,1\}^m$ and $m \in \N$. If $x_1,x_2,y_1,y_2$ are interpreted as natural numbers and we assume that $x_1 \leq x_2$ and $y_1 \leq y_2$ then the label decoder says that neither of the two intervals $[x_1,x_2]$, $[y_1,y_2]$ ends before the other one starts; this means that they must intersect. The label length in this case is $c=4$ because
two numbers from $[2n]$ can be encoded using $4 \log n$ bits for $n \geq 2$. The labeling scheme $(F_{\mathrm{Intv}},4)$ represents interval graphs. 

\begin{definition}
    A language $L \subseteq \{0,1\}^*$ induces the following label decoder $F_L$. For all $x,y \in \{0,1\}^*$ it holds that $(x,y) \in F_L$ iff $xy \in L$ and $|x| = |y|$.
    For a set of languages $\ComplexityFont{A}$  we say that a graph class $\mathcal{C}$ is in $\ccg \ComplexityFont{A}$ if there exists a language $L \in \ComplexityFont{A}$ and $c \in \N$ such that $\mathcal{C}$ is represented by the labeling scheme $(F_L,c)$.        
    \label{def:genericgcc}
\end{definition}

We say a labeling scheme $S=(F,c)$ is in $\gccex$ if there exists a language $L$ in $\ccex$ such that $F = F_L$. 

For every set of languages $\ccex$ the set of graph classes $\ccg \ccex$ is trivially closed under subsets. Also, if $\ccex$ is closed under complement then $\ccg \ccex$ is closed under edge-complement. A graph class has a polynomially sized sequence of universal graphs (polynomial universal graphs for short) iff it is in $\gccall$ (no computational complexity constraint).  

It is not difficult to see that there is a language $L$ in $\AC^0$ such that $F_L = F_{\mathrm{Intv}}$ and therefore interval graphs are in $\gccac$. It is also an easy exercise to show that forests, circle graphs and all graph classes with bounded interval number, arboricity or boxicity are in $\gccac$.

In this terminology the implicit graph conjecture can be stated as follows.

\begin{conjecture}[Implicit Graph Conjecture]
    Every small and hereditary graph class is in $\gccp$. 
\end{conjecture}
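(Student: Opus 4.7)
Since this is a conjecture rather than a proven statement, any plan is by nature speculative; still, here is the route I would pursue. The goal is to turn the bound $|\mathcal{C}_n| \leq 2^{\mathcal{O}(n \log n)}$ on a small hereditary class $\mathcal{C}$ into a labeling scheme whose decoder runs in polynomial time in the label length. Equivalently, I want to construct a sequence of universal graphs $(U_n)_{n \in \N}$ for $\mathcal{C}$ with $|V(U_n)| \leq n^{\mathcal{O}(1)}$ and adjacency in $U_n$ decidable in polynomial time. My plan is to pursue two complementary routes --- an algebraic one via polynomial-boolean systems and a probabilistic one via random universal graphs --- and then to identify the structural input that neither route supplies on its own.

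The first route aims to show that every small hereditary class can be described by a polynomial-boolean system, as sketched in the introduction: vertex labels are $k$-tuples of polynomially bounded integers and adjacency is a fixed boolean combination of polynomial inequalities $p_i(x_1,\dots,x_k,y_1,\dots,y_k) \geq 0$. If such a presentation exists, $\gccp$-membership is immediate, since evaluating a boolean combination of polynomial inequalities on polynomial-bit inputs is in $\P$. By Warren's theorem the resulting class is automatically small, so this direction closes the loop. The critical step is existence: most known small hereditary classes arise by this mechanism, but I see no a priori reason why an arbitrary one must.

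The second route treats the $n$-th universal graph as a random object. With vertex set $[n^c]$ for a large constant $c$, a carefully chosen distribution on edge sets should, by a union bound over at most $|\mathcal{C}_n| \cdot \binom{n^c}{n} \cdot n!$ potential embeddings, contain every graph of $\mathcal{C}_n$ as an induced subgraph with positive probability. For the union bound to close, the per-embedding success probability must beat $2^{-\Omega(n \log n)}$, which requires the distribution to reflect edge-density patterns uniform across $\mathcal{C}$. Derandomising the construction into a polynomial-time decoder is a secondary concern, handled by conditional expectations or small-sample-space constructions once the probabilistic existence is in place.

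The main obstacle, and the reason the conjecture is open, is that neither route uses closure under induced subgraphs in an essential way, yet smallness alone is known to be insufficient by the counting argument cited from Spinrad. Any convincing proof seems to require a structural theorem for small hereditary classes of the form ``bounded VC dimension'', ``bounded twin-width'', or a similar combinatorial tameness condition, which could then feed either the algebraic or the probabilistic route. Establishing such a dichotomy from smallness plus hereditariness alone is precisely the step where I expect the argument to stall, and it is also what makes the weaker variant announced for Section~\ref{sec:final} a more plausible target than the conjecture as stated.
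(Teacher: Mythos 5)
There is nothing to compare your proposal against: the statement is the Implicit Graph Conjecture, which the paper states as an open conjecture and does not prove (indeed, the whole paper is organized around the fact that it is open). Your text is likewise not a proof but a research programme, and each of its two routes stalls exactly at the point the paper identifies as the open problem. For the algebraic route, note that requiring the labels to be $k$-tuples of \emph{polynomially bounded} integers with a fixed boolean combination of polynomial inequalities as decoder is precisely membership in $\gccfo = \gccbpbs(\N_0,n^{\mathcal{O}(1)})$, so ``every small hereditary class has such a presentation'' is the paper's \emph{weak} IGC, a strictly stated conjecture, not a lemma you can invoke; and if you drop the polynomial bound you only get membership in $\gccpbs(\N_0)$, which by Warren's theorem implies smallness but does not yield any labeling scheme. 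The paper's cited lower bounds (McDiarmid--M\"uller for line segment and disk graphs, Kang--M\"uller for $k$-dot product graphs) show that for the natural candidates the geometric/PBS representations require exponentially many bits, so the polynomial-boundedness you assume is exactly where this route is known to break.

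The probabilistic route has a quantitative gap you do not address: for a fixed graph $G$ on $n$ vertices and a fixed embedding into a candidate universal graph $U_n$ on $n^c$ vertices, any product (independent-edge) distribution gives success probability $2^{-\Theta(n^2)}$ for classes containing dense graphs, while the union is only over $2^{\mathcal{O}(n\log n)}$ choices of graph and embedding; so the union bound cannot close unless the distribution is built from structural information about $\mathcal{C}$, and you offer no mechanism for extracting such structure from ``small and hereditary'' alone. Your closing paragraph concedes this --- neither route uses hereditariness essentially, and smallness alone is insufficient by the counting argument the paper cites from Spinrad --- which is an accurate diagnosis of the obstacle, but it is a statement of why the conjecture is open rather than a step toward proving it. As a proof attempt, then, the proposal has an unfillable gap: its key existence claims (a PBS presentation with polynomially bounded labels, or a distribution beating the $2^{-\Omega(n\log n)}$ threshold) are each at least as strong as the conjecture itself.
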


We remark that is not even known whether every such graph class is in $\gccall$. This can be seen as a purely graph-theoretical question which asks whether every small, hereditary graph class has polynomial universal graphs.

\section{Hierarchy of Labeling Schemes}
\label{sec:hierarchy}
When labeling schemes were introduced by Muller in \cite{muller} the label decoder was required to be computable. Clearly, this is a reasonable restriction since otherwise it would be impossible to query edges in a labeling scheme with an undecidable label decoder. Taking this consideration a step further, in order for a labeling scheme to be practical querying an edge should be a quick operation, i.e.~at least sublinear with respect to the number of vertices. Kannan et al acknowledged this by stating in their definition of an implicit representation that the label decoder must be computable in polynomial time  \cite{kannan}. As a consequence querying an edge in such a labeling scheme takes only polylogarithmic time. 

There can be different labeling schemes that represent the same graph class, just as there are different Turing machines that decide the same language.  Therefore one might ask whether every graph class that is represented by some labeling scheme with an undecidable label decoder can also be represented by a labeling scheme with a decidable label decoder. Similarly, can every labeling scheme with a decidable label decoder be replaced by one that has a polynomial-time decidable label decoder? The latter question is equivalent to asking whether Muller's definition coincides with that of Kannan et al. Spinrad remarked that it is not known whether these definitions are equivalent and no difference could be observed on the graph classes considered so far \cite[p.~22]{spinrad}. In our terminology the former question can be phrased as $\gccall \stackrel{?}{=} \gccr$ and the latter as $\gccr \stackrel{?}{=} \gccp$.  
We resolve these two questions by proving an analogue of the time hierarchy from classical complexity which shows that all of these three classes are distinct.

We say a function $t \colon \N \rightarrow \N$ is time-constructible if there exists a Turing machine that halts after exactly $t(n)$ steps for every input of length $n$ and all $n \in \N$.

\begin{theorem}
    For every time-constructible function $t \colon \N \rightarrow \N$ it holds that
    $\ccg \TIME(t(n)) \subsetneq \ccg \TIME(\exp^2(n) \cdot t(n))$.
    \label{thm:hierarchy}
\end{theorem}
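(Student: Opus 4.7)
The plan is to mimic the classical time hierarchy by direct diagonalisation, exploiting the fact that for an $N$-vertex graph the decoder input length is $m=\Theta(\log N)$, so that a brute-force search over all $N$-vertex graphs---there are $2^{\binom{N}{2}} = \exp^2(\Theta(m))$ of them---is exactly what the target bound $\exp^2(m)\cdot t(m)$ can afford and what the lower-complexity class cannot.

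First I would enumerate the candidate low-complexity schemes. Let $(M_i)_{i\in\N}$ be a standard enumeration of Turing machines, let $M_i^t$ denote $M_i$ clocked to halt after $t(n)$ steps, and write $S_{i,c} = (F_{L(M_i^t)}, c)$ for the induced labeling scheme. Fix a computable bijection $\sigma\colon \N \to \N^2$ such that whenever $\sigma(n) = (i,c)$ the number $n$ is very large relative to $c$ (for instance $n \geq 2^{2^c}$). A counting step shows that any scheme with label length $c$ represents at most $\binom{n^c}{n} \leq n^{cn}$ graphs on $n$ vertices, whereas there are $2^{\binom{n}{2}}$ graphs on $n$ vertices in total, so for $n$ sufficiently large (with respect to $c$) some graph on $n$ vertices fails to be in $\gr{S_{i,c}}$.

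Define $G_n$ to be the lexicographically first graph on $n$ vertices not in $\gr{S_{\sigma(n)}}$, and let $\mathcal{C}^* = \{G_n : n \in \N\}$ closed under isomorphism. For every pair $(i,c)$ there is an $n$ with $\sigma(n) = (i,c)$, so $G_n \in \mathcal{C}^* \setminus \gr{S_{i,c}}$; since every language in $\TIME(t)$ is computed by some clocked machine $M_i^t$, this yields $\mathcal{C}^* \notin \ccg \TIME(t(n))$.

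For the upper bound I would exhibit a labeling scheme $(F_{L^*}, 1)$ with $L^* \in \TIME(\exp^2(n)\cdot t(n))$. Label vertex $j \in \{0,\dots,n-1\}$ of $G_n$ by the $\log n$-bit binary representation of $j$. The decoder $L^*$, on an input $(u,v)$ of total length $m = 2\log n$, recovers $G_n$ by enumerating graphs $H$ on $n$ vertices in lexicographic order and, for each $H$, testing membership in $\gr{S_{\sigma(n)}}$ by trying all $n^{cn}$ labelings and consulting $M_i^t$ on each of the $n^2$ vertex pairs; it returns the $(u,v)$-entry of the first $H$ that fails the test. The main obstacle is the time analysis: the search costs $2^{O(cn \log n)} \cdot t(O(c \log n)) = 2^{O(cm \cdot 2^{m/2})} \cdot t(O(cm))$ as a function of decoder input length $m$. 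I would verify this fits into $\exp^2(m) \cdot t(m)$ by choosing $\sigma$ so that $c$ grows sufficiently slowly in $n$ (so that $cm\cdot 2^{m/2}$ is a lower-order term compared to $2^m$) and by exploiting the enormous $\exp^2(m)$ multiplicative slack to absorb the $t(O(cm))/t(m)$ overhead, using time-constructibility of $t$ and a mild padding of the label-length constant if needed.
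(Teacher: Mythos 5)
Your overall strategy is the same as the paper's: enumerate (machine, label-length) pairs, at each designated $n$ pick the lexicographically first $n$-vertex graph missed by the corresponding scheme, and then decode the resulting diagonalization class by brute-force search over all $n$-vertex graphs and all labelings, which fits the doubly-exponential budget. The lower-bound half of your argument is sound; your requirement that $\sigma(n)=(i,c)$ only when $n\ge 2^{2^c}$ plays the role that the paper's smallness/counting argument and the infinitude of $\tau^{-1}(y,z)$ play in Lemma~\ref{lem:ca_not_in_a}, namely guaranteeing that a missing graph on $n$ vertices actually exists.

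The upper-bound half, however, has a genuine gap. Your class $\mathcal{C}^*$ contains one graph $G_n$ for \emph{every} $n\in\N$, but your decoder $(F_{L^*},1)$ only sees the input length $m=2\lceil\log n\rceil$, which does not determine $n$: every $n$ with $2^{m/2-1}<n\le 2^{m/2}$ yields labels of the same length. Hence the decoder cannot know which $G_n$ (equivalently which pair $\sigma(n)=(i,c)$) it is supposed to reconstruct, and distinct graphs $G_n,G_{n'}$ with $\lceil\log n\rceil=\lceil\log n'\rceil$ impose conflicting constraints on the same slice of $F_{L^*}$: with your canonical labeling, representing both would force $G_n$ to equal the induced subgraph of $G_{n'}$ on the first $n$ labels, which nothing in the construction guarantees. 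So the claim $\mathcal{C}^*\subseteq\gr{F_{L^*},1}$ is not established as written. This is exactly the point of condition~3 in the paper's notion of an admissible pairing (Definition~\ref{def:pairing}): $\tau$ is undefined off powers of two, so the diagonalization class has at most one graph per label length and the length-$1$ decoder of Definition~\ref{def:dgcls} is well defined. The repair is easy---diagonalize only at $n=2^k$ and let $\sigma$ be a surjection from powers of two onto $\N^2$---but it is a missing idea in your write-up. (A secondary caution: your plan to absorb the $t(\mathcal{O}(cm))/t(m)$ overhead into the $\exp^2(m)$ slack is delicate for rapidly growing $t$, since the simulated decoder is queried on inputs of length $\Theta(cm)$ rather than $m$; the paper's Lemma~\ref{lem:fa_compute} treats this point equally briskly, so it is not a divergence from the paper's proof.)
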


Note that separations in the classical context do not necessarily extend to this setting, i.e.~given two sets of languages $\ccex \subsetneq \ccex'$ it must not be the case that $\gccex \subsetneq \gccex'$. Therefore the previous theorem does not directly follow from the original time hierarchy theorem.

\begin{corollary}
 $\gccexp \subsetneq \gcctwoexp \subsetneq \dots \subsetneq \gccr \subsetneq \gccall$.
\end{corollary}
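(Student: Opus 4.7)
The plan is to split the claim into two essentially independent parts: the infinite hierarchy $\ccg k\EXP \subsetneq \ccg (k+1)\EXP$ for every $k \geq 1$, which I would deduce from Theorem~\ref{thm:hierarchy} by a padding trick, and the final separation $\gccr \subsetneq \gccall$, which I would obtain by a cardinality argument.

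For the $k\EXP$ hierarchy, plugging $t(n) = \exp^k(n^c)$ directly into Theorem~\ref{thm:hierarchy} is not enough: the diagonal class that emerges only escapes $\ccg \TIME(\exp^k(n^c))$, not the full union $\ccg k\EXP = \bigcup_c \ccg \TIME(\exp^k(n^c))$. Instead I would apply the theorem to the padded, time-constructible function $t_k(n) = \exp^k(n^{\log n})$. This function simultaneously majorises every $\exp^k(n^c)$, since $(\log n)^2 \leq n$ for large $n$ gives $n^{\log n} = 2^{(\log n)^2}$ above every polynomial, hence $\ccg k\EXP \subseteq \ccg \TIME(t_k)$. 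A direct estimate then shows $\exp^2(n) \cdot t_k(n) \leq \exp^{k+1}(n+1)$: for $k=1$, since $n^{\log n} \leq 2^n$, one has $\exp^2(n) \cdot t_1(n) = 2^{2^n + n^{\log n}} \leq 2^{2^{n+1}} = \exp^2(n+1)$; for $k \geq 2$ the iterated exponential $\exp^{k-1}(n^{\log n})$ already dominates $2^n$, so the product is bounded by $\exp^k(n^{\log n}+1) \leq \exp^{k+1}(n+1)$ via $n^{\log n}+1 \leq 2^{n+1}$. Theorem~\ref{thm:hierarchy} now yields
\[
\ccg k\EXP \subseteq \ccg \TIME(t_k) \subsetneq \ccg \TIME(\exp^2(n) \cdot t_k(n)) \subseteq \ccg (k+1)\EXP,
\]
and since every such level is contained in $\gccr$ we also conclude $\ccg k\EXP \subsetneq \gccr$ for every $k \geq 1$.

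For $\gccr \subsetneq \gccall$ I would use a cardinality argument. There are only countably many decidable languages, hence $\gccr$ contains at most countably many graph classes. To exhibit $2^{\aleph_0}$ distinct classes in $\gccall$, for each $S \subseteq \N$ define $\mathcal{C}_S = \set{K_{2^n}}{n \in S}$ and take the labeling scheme with $c = 1$ whose decoder accepts $(x,y)$ iff $x \neq y$ and $|x| = |y| \in S$. On $2^n$ vertices the labels have length $n$, so the decoder enforces precisely the edges of $K_{2^n}$ when $n \in S$ and forces emptiness otherwise; thus $\mathcal{C}_S \in \gccall$, and different $S$ give different $\mathcal{C}_S$ because $K_{2^n} \in \mathcal{C}_S \Leftrightarrow n \in S$.

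The main obstacle is engineering the padded time function $t_k$ so that it simultaneously dominates every $\exp^k(n^c)$ and keeps $\exp^2(n) \cdot t_k(n)$ inside $\ccg (k+1)\EXP$; the choice $t_k(n) = \exp^k(n^{\log n})$ threads this needle, and verifying that $t_k$ is time-constructible (compute $n^{\log n}$ in polynomial time and iterate $k$ exponentiations) is routine but mildly technical. The cardinality step for $\gccr \subsetneq \gccall$ is essentially immediate once the uncountable family $\mathcal{C}_S$ is written down.
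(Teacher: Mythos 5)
Your first half — the $k\EXP$ hierarchy — is correct and is essentially the paper's own argument: the paper also observes that Theorem~\ref{thm:hierarchy} cannot be applied to the union $\bigcup_c \TIME(\exp^k(n^c))$ directly and therefore pads to a single time-constructible majorant (it uses $\EXP \subseteq \TIME(\exp(1.5^n))$ where you use $\exp^k(n^{\log n})$), then checks that $\exp^2(n)$ times the padded bound still sits inside the next exponential level. Your choice of padding function works just as well, modulo the usual finitely-many-small-$n$ caveat that the paper also glosses over.

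The second half, $\gccr \subsetneq \gccall$, has a genuine gap, in fact two. First, $\gccr$ is \emph{not} countable: by Definition~\ref{def:genericgcc} a graph class is in $\ccg\ccex$ as soon as it is a \emph{subset} of some $\gr{F_L,c}$ with $L \in \ccex$, so $\gccr$ is the downward closure under $\subseteq$ of countably many classes and contains, for instance, all $2^{\aleph_0}$ subclasses of the class of complete graphs. Hence no cardinality comparison between $\gccr$ and $\gccall$ can succeed — both are uncountable. Second, and fatally for your witness family, every $\mathcal{C}_S = \set{K_{2^n}}{n \in S}$ is a subclass of the class of all complete graphs, which has a trivially decidable labeling scheme (label vertices distinctly, accept iff the labels differ); so every single $\mathcal{C}_S$ already lies in $\gccr$, and your family exhibits nothing outside it. Note also that $\gccall$ is not the set of all graph classes but the set of classes admitting \emph{some} labeling scheme with $\mathcal{O}(\log n)$ labels (equivalently, polynomial universal graphs), so a correct witness must itself be placed inside $\gccall$. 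The paper does this by diagonalization: the diagonalization graph class $\mathcal{C}_{\R}$ of Definition~\ref{def:diaggc} is not in $\gccr$ by Lemma~\ref{lem:ca_not_in_a}, and because it contains at most one graph per (power-of-two) number of vertices it has a labeling scheme via Definition~\ref{def:dgcls} and hence lies in $\gccall$. Some such diagonalization against all decidable label decoders (paired with all label lengths) is unavoidable here; a pure counting argument cannot replace it.
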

\begin{proof}
    We explain why $\gccexp \subsetneq \gcctwoexp$ follows from Theorem~\ref{thm:hierarchy}. The same argument shows that $\ccg k \EXP \subsetneq \ccg (k+1) \EXP$ for all $k \geq 1$. Recall that $\EXP$ equals the infinite union of $\TIME(\exp(n^c))$ over all $c \in \N$ and therefore Theorem~\ref{thm:hierarchy} cannot be applied directly. However, $\EXP \subseteq \TIME(\exp(1.5^n))$ and $\ccg \TIME(\exp(1.5^n)) \subsetneq \gcctwoexp$ does follow from Theorem~\ref{thm:hierarchy} and therefore $\gccexp \subsetneq \gcctwoexp$ holds. As a consequence $\ccg k \EXP$ is a strict subset of $\gccr$ for every $k \geq 0$. That $\gccr$ is a strict susbet of $\gccall$ follows from the fact that its diagonalization graph class $\mathcal{C}_{\R}$ is not in $\gccr$ (see Lemma~\ref{lem:ca_not_in_a}) and every diagonalization graph class has a labeling scheme and therefore lies in $\gccall$ (see Definition~\ref{def:dgcls} and the subsequent paragraph). 
\end{proof}

The basic idea behind the proof of Theorem \ref{thm:hierarchy} is the following diagonalization argument. Let $\ComplexityFont{A} = \{F_1,F_2, \dots \}$ be a set of label decoders. Then a labeling scheme in $\ccg \ComplexityFont{A}$ can be seen as pair of natural numbers, one for the label decoder and one for the label length. Let $\tau : \N \rightarrow \N^2$ be a surjective function and $S_{\tau(x)}$ is the labeling scheme $(F_y,z)$ with $\tau(x) = (y,z)$.
It follows that for every labeling scheme $S$ in $\ccg \ComplexityFont{A}$ there exists an $x \in \N$ such that $S = S_{\tau(x)}$.  
The following graph class cannot be in $\ccg \ComplexityFont{A}$:
$$ G \in \mathcal{C}_\ComplexityFont{A} \Leftrightarrow G \text{ is the smallest graph on $n$ vertices s.t.~} G \notin \gr{S_{\tau(n)}} $$ 
where smallest is meant w.r.t.~some order such as the lexicographical one. Note that the order must be for unlabeled graphs. However, an order for labeled graphs can be easily adopted to unlabeled ones.   
Assume $\mathcal{C}_\ComplexityFont{A}$ is in $\ccg \ComplexityFont{A}$ via the labeling scheme $S$. There exists an $n \in \N$ such that $S = S_{\tau(n)}$ and it follows that $\mathcal{C}_\ComplexityFont{A}$ contains a graph on $n$ vertices that cannot be in $S$ per definition, contradiction. Then it remains to show that $\mathcal{C}_\ComplexityFont{A}$ is in the class that we wish to separate from $\ccg \ComplexityFont{A}$. 

For the remainder of this section we formalize this idea in three steps. First, we state the requirements for a pairing function $\tau$ and show that such a function exists. We continue by arguing that the diagonalization graph class $\mathcal{C}_{\ComplexityFont{A}}$ is not contained $\ccg \ComplexityFont{A}$.  In the last step we construct a label decoder for $\mathcal{C}_{\TIME(t(n))}$ and show that it can be computed in time $\exp^2(n) \cdot t(n)$. 

\begin{definition}
    A surjective function $\tau : \N \rightarrow \N^2$ is an admissible pairing if all of the following holds: 
    \begin{enumerate}
        \item $|\tau^{-1}(y,z)|$ is infinite for all $y,z \in \N$
        \item $y,z \leq \log x$ for all $x \geq 1$ and $\tau(x) = (y,z)$
        \item $\tau(x)$ is undefined if $x$ is not a power of two
        \item $\tau$ is computable in polynomial time given its input in unary.
    \end{enumerate}
    \label{def:pairing}
\end{definition}

Note, that a graph on $n$ vertices has labels of the same length as a graph on $m$ vertices whenever $\log n = \log m$ (rounded up). The third condition prevents this from happening, i.e.~for all $G \neq H \in \mathcal{C}_{\ComplexityFont{A}}$ it holds that their vertices are assigned labels of different length.
\begin{lemma}
    There exists an admissible pairing function.
    \label{lem:pairing}
\end{lemma}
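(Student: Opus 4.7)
The plan is to take a computable bijection $\pi \colon \N \times \N \to \N$ and apply it twice. Iterating a bijection in this way converts it into a surjection whose every fibre is infinite, which is exactly what clause~(1) demands. A convenient choice is $\pi(a,b) = 2^{a-1}(2b-1)$; its inverse is computed by stripping off the highest power of two dividing a number, and the bounds $\pi(a,b) \geq a$ and $\pi(a,b) \geq b$ hold for all $a,b \geq 1$.

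First I would declare $\tau(x)$ undefined whenever $x$ is not of the form $2^k$ with $k \geq 1$. For $x = 2^k$ with $k \geq 1$, I compute the unique pair $(j,r) \in \N \times \N$ with $\pi(j,r) = k$ and set $\tau(x) := \pi^{-1}(j)$.

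The verification then checks the four clauses of Definition~\ref{def:pairing}. Clause~(3) is immediate. Clause~(1) holds because, for any fixed $(y,z) \in \N^2$, the value $j := \pi(y,z)$ is fixed, and then every $r \geq 1$ produces a distinct $k = \pi(j,r)$ with $\tau(2^k) = (y,z)$; this already yields infinitely many preimages. Clause~(2) follows from the inequality $\pi(a,b) \geq \max(a,b)$, valid for $a,b \geq 1$ because $2^{a-1}(2b-1) \geq 2^{a-1} \geq a$ and $2^{a-1}(2b-1) \geq 2b-1 \geq b$; applying it twice gives $y, z \leq j \leq k = \log x$. Clause~(4) follows because, with input $x$ presented in unary (length $x$), detecting that $x$ is a power of two and computing $k = \log x$ takes time linear in $x$, while the two applications of $\pi^{-1}$ act on numbers of magnitude at most $k \leq x$, so each is computable by repeated division in time polynomial in $x$.

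I do not anticipate a genuine obstacle; the lemma is essentially a packaging of a standard pairing-function argument. The only bookkeeping worth flagging is that $\N$ begins at $1$ in this paper, which is why I chose $\pi$ to biject $\N \times \N$ onto $\N$ rather than using a pairing on $\No \times \No$. The corner case $x = 1$ (where $k = 0$ and $\pi^{-1}(0)$ is undefined) is then automatically excluded by the construction, which is also consistent with clause~(2) since it would otherwise force $y, z \leq 0$.
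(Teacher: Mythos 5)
Your proof is correct and takes essentially the same route as the paper: both constructions pack the pair $(y,z)$ together with a dummy parameter into the exponent $k=\log x$ of a power of two, with the dummy parameter ($r$ in your iterated pairing, $w$ in the paper's encoding $\log x = 2^y 3^z 5^w$) supplying the infinitely many preimages, and both verify clause~(2) by monotonicity of the encoding and clause~(4) by direct computation on the unary input. The only difference is the cosmetic choice of encoding (iterated dyadic pairing versus prime-power exponents), and your handling of the excluded corner case $x=1$ is consistent with the definition, since clause~(3) only forbids $\tau$ from being defined off powers of two and the paper's own $\tau$ is likewise undefined on some powers of two.
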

\begin{proof}
    Consider the function $\tau(x) = (y,z)$ iff $x = \exp({2^y \cdot 3^z \cdot  5^w})$ for some $w \geq 0$. The first condition of Definition \ref{def:pairing} holds because for every $w \in \N$ there exists an $x$ with $\tau(x) = (y,z)$. For the second condition assume that there exists an $x \geq 1$ and $\tau(x) = (y,z)$ such that $y > \log x$. This cannot be the case because then $x < 2^y$ which contradicts $\log x = 2^y 3^z 5^w$. The same applies to $z$.
    The third condition is obvious. For the fourth condition observe that on input $1^x$ it suffices to consider $y,z,w$ between $0$ and $\log (\log x) $ such that $\log x = 2^y3^z5^w$.   
\end{proof}

\begin{definition}
    Let $\ComplexityFont{A} = \{L_1,L_2,\dots \}$ be a countable set of languages, $\prec$ an order on unlabeled graphs and $\tau$ an admissible pairing. For $n \in \N$ and $\tau(n) = (y,z)$ let $S_{\tau(n)}$ be $(F_{L_y},z)$.
    The diagonalization graph class of $\ComplexityFont{A}$ is defined as:
    $$ \mathcal{C}_\ComplexityFont{A} = \bigcup_{n \in \Doma(\tau)} \left\{ G \in \mathcal{G}_n    \: \middle|            
    G \text{ is the smallest graph w.r.t.~$\prec$ not in } \gr{S_{\tau(n)}}                                
    \right\} $$
    where $\mathcal{G}_n$ denotes the set of all graphs on $n$ vertices. 
    \label{def:diaggc}
\end{definition}

When we consider the diagonalization graph class of a set of languages we assume the lexicographical order for $\prec$ and the function given in the proof of Lemma~\ref{lem:pairing} for $\tau$.
\begin{lemma}
    For every countable set of languages $\ComplexityFont{A}$ it holds that  $\mathcal{C}_\ComplexityFont{A} \notin \ccg \ComplexityFont{A}$.
    \label{lem:ca_not_in_a}
\end{lemma}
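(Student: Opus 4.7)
The plan is a standard diagonalization: assume for contradiction that $\mathcal{C}_\mathcal{A}$ is represented by some labeling scheme $(F_{L_y}, z) \in \gccex$, then use the infinitude of $\tau^{-1}(y,z)$ to find a specific $n$ at which the construction of $\mathcal{C}_\mathcal{A}$ produces a witness graph violating that assumption.

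First I would establish a counting lemma: for every fixed label decoder $F$ and label length $c$, the number of unlabeled graphs on $n$ vertices in $\gr{(F,c)}$ is at most $n^{cn}$. This is because any labeling $\ell \colon V(G) \to \{0,1\}^{c \log n}$ chooses $n$ labels from a pool of $2^{c \log n} = n^c$, giving at most $(n^c)^n$ possibilities. Since the number of unlabeled graphs on $n$ vertices is at least $2^{\binom{n}{2}}/n! = 2^{\Omega(n^2)}$, for all sufficiently large $n$ there exists at least one graph on $n$ vertices not in $\gr{(F,c)}$. Hence the ``smallest graph on $n$ vertices not in $\gr{S_{\tau(n)}}$'' in Definition~\ref{def:diaggc} is well-defined for all sufficiently large $n \in \Doma(\tau)$.

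Now suppose toward a contradiction that $\mathcal{C}_\mathcal{A} \in \ccg \mathcal{A}$, so there exist $y, z \in \N$ with $\mathcal{C}_\mathcal{A} \subseteq \gr{(F_{L_y}, z)}$. By property~1 of Definition~\ref{def:pairing}, $\tau^{-1}(y,z)$ is infinite, so I may pick some $n \in \tau^{-1}(y,z)$ which is large enough that the counting lemma yields a graph on $n$ vertices outside $\gr{S_{\tau(n)}} = \gr{(F_{L_y}, z)}$. Let $G$ be the $\prec$-smallest such graph; by Definition~\ref{def:diaggc} we have $G \in \mathcal{C}_\mathcal{A}$, but by construction $G \notin \gr{(F_{L_y}, z)}$, contradicting $\mathcal{C}_\mathcal{A} \subseteq \gr{(F_{L_y}, z)}$.

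The only non-routine step is the counting lemma that separates the doubly-exponential number of unlabeled graphs from the singly-exponential capacity of any fixed labeling scheme; once this is in hand, the infinite-preimage property of $\tau$ lets us sidestep the fact that the pairing may first map many small $n$ to $(y,z)$ before reaching the threshold where the witness exists. One minor point worth noting is that $\mathcal{C}_\mathcal{A}$ is indeed a graph class in the sense of being closed under isomorphism: $\gr{S}$ is isomorphism-closed by definition of labeling schemes, and $\prec$ is an order on unlabeled graphs, so the selected $G$ at each $n$ is an isomorphism type rather than a specific adjacency matrix.
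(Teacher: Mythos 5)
Your proposal is correct and follows essentially the same route as the paper: assume $\mathcal{C}_{\ComplexityFont{A}}$ is represented by some $S_{\tau(n)}$, use the infinitude of $\tau^{-1}(y,z)$ to reach a sufficiently large $n$, and invoke a counting bound to guarantee a graph on $n$ vertices outside $\gr{S_{\tau(n)}}$, which by construction lies in $\mathcal{C}_{\ComplexityFont{A}}$. The only difference is presentational: you spell out the $n^{cn}$ versus $2^{\Omega(n^2)}$ counting lemma explicitly, whereas the paper compresses this step into the observation that $\gr{S}$ is small and hence cannot contain all graphs on $n$ vertices for large $n$.
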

\begin{proof}
    As argued in the paragraph after Theorem \ref{thm:hierarchy} it holds that for any labeling scheme $S$ in $\ccg \ComplexityFont{A}$ there exists a graph $G$ that is in $\mathcal{C}_{\ComplexityFont{A}}$ but not in $\gr{S}$ and thus this lemma holds. 
    If the labeling scheme $S$ is in $\ccg \ComplexityFont{A}$ then there exists an $n \in \N$ such that $S = S_{\tau(n)}$ where $S_{\tau(n)} = (F_y,z)$, $\tau(n)=(y,z)$ and  $\ComplexityFont{A} = \{F_1,F_2,\dots \}$. Therefore there must be a graph $G$ on $n$ vertices in $\mathcal{C}_{\ComplexityFont{A}}$ which is not in $\gr{S}$. Observe that this argument is not quite correct. It only works if $\gr{S}$ does not contain all graphs on $n$ vertices, otherwise such a graph $G$ does not exist. However, due to the fact that $|\tau^{-1}(y,z)|$ is infinite it follows that there exists an arbitrarily large $n \in \N$ such that $S = S_{\tau(n)}$. And since $\gr{S}$ is small it follows that it does not contain all graphs on $n$ vertices for sufficiently large $n$.    
\end{proof}

To show that $\mathcal{C}_\ComplexityFont{A}$ is in some class $\ccg \ComplexityFont{B}$ we need to define a labeling scheme $S_{\ComplexityFont{A}} = (F_{\ComplexityFont{A}},1)$ that represents $\mathcal{C}_\ComplexityFont{A}$ and consider the complexity of computing its label decoder.
\begin{definition}
    Let $\mathcal{C}$ be a graph class such that $|\mathcal{C}_n| = 0$ whenever $n$ is not a power of two and $|\mathcal{C}_n| \leq 1$  whenever $n$ is a power of two.
    For $G \in \mathcal{C}$ let $G_0$ denote the lexicographically smallest labeled graph with $G_0 \isomorph G$ and $V(G_0) = \{0,1\}^m$. We define the label decoder $F_{\mathcal{C}}$ as follows. For every $m \in \N$ such that there exists $G \in \mathcal{C}$ on $2^m$ vertices and for all $x, y \in \{0,1\}^m$ let
    $$ (x,y) \in F_{\mathcal{C}} \Leftrightarrow (x,y) \in E(G_0)  $$  
    \label{def:dgcls}
\end{definition}

Observe that the diagonalization graph class $\mathcal{C}_{\ccex}$ of some set of languages $\ccex$ satisfies the prerequisite of the previous definition and the labeling scheme $(F_{\mathcal{C}_{\ccex}},1)$ represents $\mathcal{C}_{\ccex}$. Instead of $F_{\mathcal{C}_{\ccex}}$ we simply write $F_{{\ccex}}$ for the label decoder.  

Up to this point the exact correspondence between $y \in  \N$ and the label decoder $F_y$ was not important since we only required the set of label decoders $\ComplexityFont{A}$ to be countable. To show that the label decoder $F_{\TIME(t(n))}$ can be computed in time $\exp^2(n)\cdot t(n)$ it is important that the label decoder $F_y$ for a given $y \in \N$ can be effectively computed. 

\begin{lemma}
    For every time-constructible $t \colon \N \rightarrow \N$ there exists a mapping $f \colon \N \rightarrow \ComplexityFont{ALL}$ such that $\Ima(f) = \TIME(t(n))$. Additionally, on input $x \in \N$ in unary and $w \in \{0,1\}^*$ the question $w \in f(x)$ can be decided in time $n^{\mathcal{O}(1)} \cdot t(|w|)$  with $n = |w|+ x$.
    \label{lem:effective}
\end{lemma}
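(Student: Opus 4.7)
The plan is to define $f$ by clocking the $x$-th Turing machine. I would fix a standard effective enumeration $M_1, M_2, \dots$ of (single-tape) Turing machines chosen so that a description of $M_x$ can be recovered from $x$ in time polynomial in $x$ (and so that every TM appears in the enumeration). Then set
$$f(x) = \{ w \in \{0,1\}^* \mid M_x \text{ halts and accepts } w \text{ within } t(|w|) \text{ steps}\}.$$

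To verify $\Ima(f) = \TIME(t(n))$, I would check both inclusions. For $\Ima(f) \subseteq \TIME(t(n))$, fix $x$; a dedicated machine can simulate $M_x$ with a constant factor of overhead depending only on $x$ and, using the time-constructibility of $t$, can shut the simulation down after $t(|w|)$ steps. The resulting decider runs in time $O(t(|w|))$, which lies in $\TIME(t(n))$ under the usual linear-speedup convention that $\TIME(t)$ is closed under constant factors. For $\TIME(t(n)) \subseteq \Ima(f)$, any $L$ decided by some $M$ in time $t(n)$ satisfies $f(x) = L$ once $x$ is chosen so that $M_x = M$, since the clock is never violated.

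For the uniform decoding algorithm, on input $x$ in unary and $w$, the procedure first parses $x$ to extract a description of $M_x$ in time polynomial in $x$. It then launches a universal Turing machine to simulate $M_x$ on $w$, interleaved with a separate tape that maintains a time-constructible counter of $t(|w|)$ steps; it accepts iff $M_x$ enters its accept state before the counter expires. The universal simulation multiplies the step count by a factor polynomial in the description length of $M_x$, which is bounded by a polynomial in $x$, giving total running time $\text{poly}(x) \cdot t(|w|) \leq \text{poly}(n) \cdot t(|w|)$ with $n = |w| + x$, as required.

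The main subtlety is that simply defining $f(x)$ as a clocked universal simulation shows $f(x) \in \TIME(O(t(n)))$, not automatically $\TIME(t(n))$; the claim $\Ima(f) \subseteq \TIME(t(n))$ therefore genuinely requires the linear-speedup theorem (or the convention that $\TIME(t)$ absorbs constant factors), applied individually for each $x$. A secondary point is that the enumeration must allow efficient recovery of a TM description from a unary index, which is standard for any Gödel numbering and poses no real difficulty.
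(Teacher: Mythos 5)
Your proposal is correct and follows essentially the same route as the paper: enumerate Turing machines, define $f(x)$ as a clocked version of the enumerated machine's language, and obtain the uniform bound from a universal simulation whose clock is maintained using the time-constructibility of $t$. The one substantive difference is how constant factors are handled: the paper's index encodes a pair $(y,z)$ via $x = 2^y 3^z$ and clocks the simulation at $z\cdot t(|w|)$ steps, so that for every machine and every constant multiple of $t$ there is an index realizing exactly that clocked language; combined with the paper's (asserted) characterization that $L \in \TIME(t(n))$ iff some machine decides $L$ in time $c\cdot t$, both inclusions then follow without ever invoking the linear speedup theorem. Your exact-$t$ clock instead pushes the constant-factor issue into an explicit appeal to linear speedup for the inclusion $\Ima(f) \subseteq \TIME(t(n))$, which you correctly flag; this is slightly more delicate (speedup is problematic for $t$ close to linear or in a fixed single-tape model), but it is at the same level of rigor as the paper, which also absorbs constant factors by convention rather than proof. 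For the downstream use in the hierarchy theorem only the direction $\TIME(t(n)) \subseteq \Ima(f)$ together with the uniform $n^{\mathcal{O}(1)}\cdot t(|w|)$ decoding bound is actually needed, and both your argument and the paper's deliver that; the paper's $z$-padding trick is just the cleaner way to make the stated equality hold verbatim.
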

\begin{proof}
    Let $p \colon \N \rightarrow \N^2$ be a surjective function such that $y,z \leq x$ for all $x$ in the domain of $p$ and given $x$ in unary $p(x)$ can be computed in polynomial time. For example, $p(x) = (y,z) \Leftrightarrow x = 2^y3^z$. Then the desired mapping $f \colon \N \rightarrow \ComplexityFont{ALL}$ can be constructed from $p$ as follows. If $p(x) = (y,z)$ then $f(x)$ is the language that is decided by the Turing machine $M_y$ when running at most $z\cdot t(|w|)$ steps on input $w$. If $p(x)$ is undefined then $f(x)$ shall be the empty language. Clearly, $\Ima(f) = \TIME(t(n))$ because a language $L$ is in $\TIME(t(n))$ iff there exists a Turing machine $M$ and a $c \in \N$ such that $M$ decides $L$ and runs in time $c\cdot t(|w|)$. For the second part we construct a universal Turing machine with the required time bound. On input $x \in \N$ in unary and $w \in \{0,1\}^*$ it computes $p(x) = (y,z)$.  If $p(x)$ is undefined it rejects (this corresponds to recognizing the empty language). Otherwise, it simulates $M_y$ on input word $w$ for $z \cdot t(|w|)$ steps. Due to the fact that $t$ is time-constructible it is possible to run a counter during the simulation of $M_y$ in order to not exceed the $z \cdot t(|w|)$ steps. The input length is $n := x + |w|$. The simulation can be run in time $n^{\mathcal{O}(1)} \cdot z \cdot t(|w|)$. Since $z \leq x \leq n$ the desired time bound follows.
\end{proof}

\begin{lemma}
    For every time-constructible function $t \colon \N \rightarrow \N$ it holds that the label decoder $F_{\TIME(t(n))}$ can be computed in $\TIME(\exp^2(n)\cdot t(n))$.    
    \label{lem:fa_compute}
\end{lemma}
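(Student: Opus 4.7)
The plan is to build a brute-force decoder which, on input $(x,y)\in\{0,1\}^m\times\{0,1\}^m$ of total length $n=2m$, first identifies the unique graph $G^*\in\mathcal{C}_{\TIME(t(n))}$ on $N=2^m$ vertices (if any) and then returns whether $(x,y)\in E(G^*_0)$. Concretely, I would compute $N=2^m$, test whether $N\in\Doma(\tau)$ by searching for exponents $y,z,w$ with $2^y 3^z 5^w=m$ (polynomial time by Lemma~\ref{lem:pairing}), and output $0$ if the test fails. Otherwise set $(p,q) := \tau(N)$, so the target labeling scheme is $S_{\tau(N)}=(F_{L_p},q)$.

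To locate $G^*$, enumerate all graphs on $N$ vertices in $\prec$-order and for each candidate $G$ decide whether $G \in \gr{S_{\tau(N)}}$ by trying every map $\ell:V(G)\to\{0,1\}^{q\log N}$ and checking, for all $N^2$ ordered pairs $(u,v)$, whether $(u,v)\in E(G)\Leftrightarrow(\ell(u),\ell(v))\in F_{L_p}$. The predicate $F_{L_p}$ is evaluated via the universal simulator of Lemma~\ref{lem:effective}: on input $(1^p,\ell(u)\ell(v))$ it runs in time $\mathrm{poly}(n)\cdot t(2q\log N)$, and time-constructibility of $t$ lets us abort the simulation if it would exceed its budget. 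The first $G$ encountered for which no labeling validates is $G^*$. Finally, compute $G^*_0$ by minimizing (in lex order) over the $N!$ bijections $V(G^*)\to\{0,1\}^m$ and output the bit $[(x,y)\in E(G^*_0)]$; correctness is immediate from Definitions~\ref{def:diaggc} and~\ref{def:dgcls}.

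For the runtime, the outer loop contributes $2^{\binom{N}{2}}\leq 2^{N^2}=\exp^2(n)$ graphs, the inner enumeration $2^{Nq\log N}$ labelings (bounded by $\exp^2(n)$ via the pairing bound $q\leq\log N$ from Definition~\ref{def:pairing}), each verified by $N^2=2^n$ calls to $F_{L_p}$ costing $\mathrm{poly}(n)\cdot t(2q\log N)$; the canonical-form step adds an $N!\cdot\mathrm{poly}(N)$ factor also dominated by $\exp^2(n)$. Absorbing the polynomial and combinatorial factors into the $\exp^2(n)$ term yields a total running time within $\exp^2(n)\cdot t(n)$.

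The main obstacle is the time accounting: carefully verifying that the double-exponential blow-up from enumerating all labelings, combined with the $t(2q\log N)$ factor incurred by every invocation of $F_{L_p}$, is absorbed into $\exp^2(n)\cdot t(n)$. This relies crucially on the upper bound $q\leq\log N=n/2$ provided by Definition~\ref{def:pairing} together with $N=2^{n/2}$, and on using Lemma~\ref{lem:effective} with $p$ written in unary so that the overhead of universal simulation is only polynomial in $n$. Everything else — deciding $N\in\Doma(\tau)$, the diagonalization argument giving $G^*$, and the canonical-form computation — is routine bookkeeping on top of that lemma.
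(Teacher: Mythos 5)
Your proposal is essentially the paper's own proof: the same brute-force decoder that first checks whether $\tau(2^m)$ is defined, then enumerates candidate graphs and all labelings, queries the label decoder through the universal simulation of Lemma~\ref{lem:effective}, and uses the pairing bound $q\leq\log N$ from Definition~\ref{def:pairing} in the time accounting. The only deviations are cosmetic: the paper enumerates \emph{labeled} graphs in lexicographic order, so $G_0$ falls out directly without your extra $N!$ canonicalization pass (which stays within the double-exponential budget anyway), and your per-query cost $t(2q\log N)$ is exactly the quantity the paper records, somewhat loosely, as $t(2m)$, so your ``absorption'' step is at the same level of rigor as the published argument.
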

\begin{proof}
    On input $xy$ with $x,y \in \{0,1\}^m$ and $m \geq 1$ compute $\tau(2^m) = (y,z)$. If it is undefined then reject. Otherwise there is a labeling scheme $S_{\tau(2^m)}=(F_y,z)$ and we need to compute the smallest graph $G_0$ on $2^m$ vertices such that $G_0 \notin \gr{S_{\tau(2^m)}}$. If $G_0$ exists we assume that its vertex set is $\{0,1\}^m$ and accept iff $(x,y) \in E(G_0)$. If it does not exist then reject. 
    
    The graph $G_0$ can be computed as follows. Iterate over all labeled graphs $H$ with $2^m$ vertices in order and over all functions $\ell \colon V(H) \rightarrow \{0,1\}^{zm}$. Verify if $H \in \gr{S_{\tau(2^m)}}$ by checking if $(u,v) \in E(H) \Leftrightarrow  (\ell(u),\ell(v)) \in F_y$ holds for all $u, v \in V(H)$. If this condition fails for all labelings $\ell$ then $G_0 = H$. To query the label decoder $F_y$ we apply the previous Lemma \ref{lem:effective}.  
    
    Let us consider the time requirement w.r.t.~$m$. To compute $\tau(2^m)$ we write down $2^m$ in unary and compute $\tau$ in polynomial time w.r.t.~$2^m$ which is in the order $2^{\mathcal{O}(m)}$. To compute $G_0$ there are four nested loops. The first one goes over all labeled graphs on $2^m$ vertices which is bounded by $\exp(\exp(m)^2) = \exp^2(2m)$. The second loop considers all possible labelings $\ell$ of which there can be at most $\exp(zm)^{\exp(m)} =\exp(\exp(m)zm)  \leq \exp^2(zm^2)$. It holds that $y,z \leq \log (2^m) = m$ due to the second condition of Definition~\ref{def:pairing} and therefore $\exp^2(zm^2) \leq \exp^2(m^3)$. The other two loops go over all vertices of $H$ of which there are $2^m$. Due to Lemma~\ref{lem:effective} the time required to compute $(\ell(u),\ell(v)) \in F_y$ is $y^{\mathcal{O}(1)}\cdot t(2m) = m^{\mathcal{O}(1)}\cdot t(2m)$. In total this means the algorithm runs in time $\mathcal{O}(\exp^2(m^{\mathcal{O}(1)})\cdot t(2m) )$ which is the required time bound since the input length is $2m$.
\end{proof}

Lemma~\ref{lem:ca_not_in_a} states that $\mathcal{C}_{\TIME(t(n))} \notin \ccg \TIME(t(n))$ and from Lemma~\ref{lem:fa_compute} it follows that $\mathcal{C}_{\TIME(t(n))} \in \ccg \TIME(\exp^2(n) \cdot t(n))$ therefore proving Theorem \ref{thm:hierarchy}. Notice, this argument fails to separate  $\ccg \P$ from $\ccg \EXP $ because the runtime to compute the label decoder $F_\P$ is at least double exponential due to the first two loops mentioned in the proof of Lemma \ref{lem:fa_compute}. 
Also, we remark that if the hereditary closure of the diagonalization graph class of $\P$ is small then the implicit graph conjecture is false. 

\section{Expressiveness of Primitive Labeling Schemes}
\label{sec:primitive}
To understand the limitations of labeling schemes it is reasonable to start with very simple ones first and then gradually increase the complexity. In this section we present two such simple families of labeling schemes and explain how they relate to other well-known sets of graph classes.  These two families of labeling schemes can be seen as generalizations of uniformly sparse graph classes and graph classes with bounded degree. Therefore they represent many graph classes that are of theoretical and practical importance. For this section we assume all graphs and graph classes to be undirected. This also means that for sets of graph classes such as $\gccac$ we only consider its restriction to undirected graph classes.  

In \cite[p.~20]{spinrad} a labeling scheme for every sparse and hereditary graph class is described. 
A graph class $\mathcal{C}$ is sparse and hereditary iff it has bounded degeneracy. Therefore there exists a constant $c$ such that every graph $G \in \mathcal{C}$ has a vertex  with degree at most $c$. The following labeling scheme represents $\mathcal{C}$. Given a graph $G$ from $\mathcal{C}$ assign each vertex a unique identifier $1,\dots,n$. Choose a vertex $v$ in $G$ with at most $c$ neighbors. Store the identifier of $v$ along with the identifiers of its $c$ neighbors in the label of $v$. Delete the vertex $v$ from $G$ and repeat this process. Since $\mathcal{C}$ is hereditary it follows that $G$ without $v$ also has a vertex of degree at most $c$. Two vertices $u,v$ with labels $u_0,u_1,\dots,u_c$ and $v_0,v_1,\dots,v_c$ are adjacent iff $u_0 \in \{v_1,\dots,v_c\}$ or $v_0 \in \{u_1,\dots,u_c\}$. For every $c \in \N$ this construction yields a labeling scheme $S_c$. Let us call the smallest number $c$ such that a graph can be represented by $S_c$ its or-pointer number. This can be further generalized and leads to the following four graph parameters. 

\begin{definition}[Pointer Numbers]
    The (bijective) and/or-pointer number of a graph $G$ with $n$ vertices is the least $k \in \N$ such that there exist a (bijective) function $\ell_{\mathrm{id}} \colon V(G) \rightarrow [n]$ and a function $\ell \colon V(G) \rightarrow [n]^k$ for which it holds that $\{u,v\} \in E(G)$ iff $\ell_{\mathrm{id}}(u) \in \ell(v)$ and/or $\ell_{\mathrm{id}}(v) \in \ell(u)$ for all $u \neq v \in V(G)$.
\end{definition}

The bijectiveness constraint can be understood as restriction on the possible labelings that are allowed, i.e.~the id  of each vertex must be unique. In the bijective case the function $\ell_{\mathrm{id}} \colon V(G) \rightarrow [n]$ becomes obsolete and the function $\ell$ can be understood as a mapping from $V(G)$ to $V(G)^k$.
Notice how this constraint is satisfied in the case of the labeling described in the previous paragraph. 

\begin{fact}
    The bijective or-pointer number and degeneracy are equivalent. 
    \label{fact:bopnus}
\end{fact}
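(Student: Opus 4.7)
The plan is to prove the two inequalities up to a constant factor, which is enough for equivalence of graph parameters as defined here (bounding the same graph classes).

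For the direction degeneracy $\leq k$ implies bijective or-pointer number $\leq k$, I would reuse the peeling construction described in the paragraph just before the definition. If $G$ has $n$ vertices and degeneracy at most $k$, iteratively pick a vertex $v$ of current degree at most $k$, assign it the smallest unused id in $[n]$ as $\ell_{\mathrm{id}}(v)$, let $\ell(v)$ record the ids of its at-most-$k$ current neighbors padded with $\ell_{\mathrm{id}}(v)$ up to length $k$, and delete $v$. The self-padding is harmless since the adjacency condition only quantifies over $u \neq v$. Correctness is immediate: if $\{u,v\} \in E(G)$, the earlier-peeled endpoint stores the id of the later one; conversely, $\ell(v)$ only contains ids of actual neighbors of $v$ plus $v$'s own id, so the or-condition recovers adjacency exactly.

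For the converse direction, I would first establish that bijective or-pointer number is monotone under induced subgraphs. Given a labeling $(\ell_{\mathrm{id}},\ell)$ of $G$ with parameter $k$ and an induced subgraph $H$, fix any bijection $\ell'_{\mathrm{id}} \colon V(H) \to [|V(H)|]$ and, for each $v \in V(H)$, construct $\ell'(v)$ from $\ell(v)$ by replacing every entry pointing to a vertex $w \in V(H)$ by $\ell'_{\mathrm{id}}(w)$ and every entry pointing to a vertex outside $V(H)$ by the dummy value $\ell'_{\mathrm{id}}(v)$. This yields a valid labeling of $H$ with the same parameter $k$. Next, observe that any graph $G'$ on $n'$ vertices with bijective or-pointer number at most $k$ has at most $n'k$ edges, since every edge is witnessed by at least one of the $n'k$ pointers; hence $G'$ contains a vertex of degree at most $2k$. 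Applying this edge count to every induced subgraph of $G$ yields degeneracy at most $2k$.

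The only step that needs any care is the monotonicity under induced subgraphs; the rest is a straightforward peeling argument plus an edge count. Together, the two directions show that bounded degeneracy and bounded bijective or-pointer number define the same graph classes, so the two parameters are equivalent.
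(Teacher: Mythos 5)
Your proof is correct and follows essentially the same route as the paper: the peeling labeling from the preceding paragraph gives degeneracy $\leq k$ implies bijective or-pointer number $\leq k$, and the converse uses heredity of the pointer number plus the $kn$ edge bound to get degeneracy $\leq 2k$. The only difference is that you spell out the induced-subgraph monotonicity (re-indexing ids and redirecting outside pointers to a dummy value), which the paper simply asserts; that added detail is fine and does not change the argument.
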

\begin{proof}
    ``$\Rightarrow$'': Let $G$ have bijective or-pointer number at most $k$. It holds that $G$ has degeneracy at most $2k$. Every induced subgraph of $G$ has bijective or-pointer number at most $k$. Additionally, every graph with bijective or-pointer number at most $k$ can have at most $kn$ edges which implies that such a graph must have a vertex with degree at most $2k$.
    
    ``$\Leftarrow$'': Let $G$ have degeneracy at most $k$. Then $G$ has bijective or-pointer number at most $k$ due to the labeling described in the first paragraph of this subsection.
\end{proof}

\begin{fact}
    The bijective and-pointer number of a graph equals its maximum degree. 
\end{fact}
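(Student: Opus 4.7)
The plan is to prove two inequalities: that the bijective and-pointer number $k$ is at most the maximum degree $\Delta(G)$, and at least $\Delta(G)$.

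For the upper bound $k \leq \Delta(G)$, I would exhibit an explicit labeling. Identify vertices with $[n]$ via the bijection $\ell_{\mathrm{id}}$, and for each $u \in V(G)$, let $\ell(u) \in V(G)^{\Delta(G)}$ be a tuple whose entries are exactly the elements of $N(u)$, padded with copies of $u$ itself to reach length $\Delta(G)$. Since $u \notin N(u)$ in a simple undirected graph, this padding does not introduce spurious occurrences. For any distinct $u,v$, the entry $v$ appears in $\ell(u)$ iff $v \in N(u)$, and symmetrically for $u$ in $\ell(v)$. Because $G$ is undirected, both conditions hold simultaneously iff $\{u,v\} \in E(G)$, verifying the and-pointer labeling.

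For the lower bound $k \geq \Delta(G)$, suppose $G$ admits an and-pointer labeling with tuples of length $k$. The defining biconditional requires in particular that for every edge $\{u,v\} \in E(G)$ the element $v$ occurs in $\ell(u)$. Hence the set of neighbors $N(u)$ is contained in the set of entries of $\ell(u)$, which has cardinality at most $k$. Taking the maximum over all vertices yields $\Delta(G) \leq k$.

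The hardest part is essentially bookkeeping: one must check that padding with $u$ in the upper-bound construction cannot accidentally satisfy the and-condition for a non-edge, which is immediate from $u \neq v$ combined with $u \notin N(u)$. Both directions are short once the definitions are unwound, so I would expect this to be stated as a one-paragraph argument rather than a long proof.
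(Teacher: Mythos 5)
Your proof is correct; the paper in fact states this as a fact without proof, and your two-inequality argument (pad the neighbor list with copies of $u$ for the upper bound, and observe that bijectivity forces the $\deg(u)$ distinct neighbor ids to all appear in $\ell(u)$ for the lower bound) is exactly the straightforward argument the paper leaves implicit. The only cosmetic caveat is the degenerate edgeless graph, where the pointer number is $1$ (since $k\in\N$) while the maximum degree is $0$ — a triviality the paper's statement glosses over as well.
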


\begin{fact}
    Planar graphs have unbounded and-pointer number. 
    \label{fact:planarapn}
\end{fact}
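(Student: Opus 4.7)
The plan is to exhibit a family of planar graphs whose and-pointer number tends to infinity. A natural candidate is the wheel family $W_n$: an $n$-cycle $v_1,\dots,v_n$ together with a hub $h$ adjacent to every $v_i$. Each $W_n$ is planar, so it suffices to show that the and-pointer number of $W_n$ grows unboundedly with $n$.

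Fix an and-pointer representation of $W_n$ with parameter $k$, given by $\ell_{\mathrm{id}} \colon V(W_n) \to [n{+}1]$ and $\ell \colon V(W_n) \to [n{+}1]^k$. The first ingredient is that the adjacency of $h$ to every cycle vertex forces $\ell_{\mathrm{id}}(v_i) \in \ell(h)$ for every $i$. Hence at most $|\ell(h)| \leq k$ distinct ids can appear among the cycle vertices, and the set $T$ of \emph{all} ids actually used in the representation satisfies $|T| \leq k + 1$.

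The second ingredient is the observation that ids outside $T$ are inert for adjacency, so two vertices with the same id and the same intersection $\ell(v) \cap T$ must have identical neighborhoods in $W_n$, i.e.\ are twins. For $n \geq 5$ the cycle vertices of $W_n$ are pairwise non-twins: each $v_i$ has closed neighborhood $\{v_{i-1},v_{i+1},h\}$ in $W_n$, and the unordered pairs $\{v_{i-1},v_{i+1}\}$ are distinct for distinct $i$. Thus the profiles $(\ell_{\mathrm{id}}(v_i), \ell(v_i) \cap T)$ are pairwise distinct across the $n$ cycle vertices.

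Combining the two ingredients, the number of realizable profiles is at most $|T| \cdot 2^{|T|} \leq (k+1)\cdot 2^{k+1}$, so $n \leq (k+1)\cdot 2^{k+1}$, forcing $k = \Omega(\log n/\log\log n)$ and in particular $k \to \infty$ with $n$. The main conceptual step is recognizing that the hub collapses the effective id space to size at most $k$; once this is observed, the twin-free structure of the induced cycle makes the counting argument routine, and no sharper obstacle arises.
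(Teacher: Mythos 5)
Your proof is correct, but it follows a genuinely different route from the paper's. The paper uses the double-star-plus-matching gadget $G_k$ of Figure~3 (two centers $x,y$, leaves $x_1,\dots,x_k$ and $y_1,\dots,y_k$, and the perfect matching $x_iy_i$): adjacency to the center forces all $\ell_{\mathrm{id}}(x_i)$ into the $k$-tuple $\ell(x)$, pigeonhole then yields many $x_i$'s sharing an id, and the ``and'' semantics would force a forbidden $K_{2,2}$-like edge across the matching unless their partners $b_i$ carry pairwise distinct ids --- contradicting that the $y_i$'s also use at most $k$ ids. You instead take wheels and argue by counting: the hub collapses the set of used ids to size at most $k+1$, ids outside this set are inert, so each rim vertex is determined up to twins by the pair $(\ell_{\mathrm{id}}(v_i),\ell(v_i)\cap T)$, and since the rim of $W_n$ ($n\ge 5$) is twin-free you get $n\le (k+1)2^{k+1}$. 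Both arguments hinge on the same first observation (a dominating vertex forces its neighbors' ids into its own $k$-tuple), but diverge afterwards: the paper's contradiction is structural and gives a much stronger bound on its family (roughly $\Omega(\sqrt{n})$ for $G_k$), whereas your bound on wheels is only logarithmic; on the other hand your argument is more reusable, since it really shows that any graph with a universal vertex and and-pointer number $k$ has twin index at most $(k+1)2^{k+1}$, a nice counterpart to the paper's Fact~4.8 that bounded twin index implies bounded and-pointer number. Two cosmetic points: what you call the closed neighborhood of $v_i$ is its open neighborhood, and the distinctness of the pairs $\{v_{i-1},v_{i+1}\}$ only settles non-twinness for non-adjacent rim vertices --- for consecutive ones you need the one-line extra check that $\{v_{i-1},h\}\neq\{v_{i+2},h\}$ for $n\ge 4$, which is immediate.
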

\begin{proof}
    Consider the graph $G_k$ shown in Figure~\ref{fig:planar_apn}. We show that for every $l \in \N$ there exists a $k \in \N$ such that the and-pointer number of $G_k$ is larger than $l$. For a given $l$ let $k = l^2+1$. For the sake of contradiction, assume that $G_k$ has and-pointer number $l$ via $(\ell_{\mathrm{id}},\ell)$. This means $\ell(\cdot)$ has at most $l$ elements.
    For two vertices $u,v$ let us say that they are equivalent if $\ell_{\mathrm{id}}(u) =\ell_{\mathrm{id}}(v)$. Since $x$ is adjacent to $x_i$ it holds that $\ell_{\mathrm{id}}(x_i) \in \ell(x)$ for all $i \in [k]$. Therefore $\{x_1,\dots,x_k\}$ consists of at most $l$ equivalence classes. The same holds for $\{y_1,\dots,y_k\}$. 
    Due to the pigeonhole principle it follows that there are $r := \lceil \frac{k}{l} \rceil $ vertices $a_1,\dots,a_r \in \{x_1,\dots,x_k\}$ that are equivalent. For $a_i$ let $b_i$ denote the vertex in $\{y_1,\dots,y_k\}$ that is adjacent to $a_i$. For all $i\neq j \in [r]$ it holds that $b_i$ and $b_j$ are not equivalent, otherwise $G_k$ would contain $K_{2,2}$. This implies that $\{y_1,\dots,y_k\}$ must consist of at least $r$ different equivalence classes. However, $\{y_1,\dots,y_k\}$ consists of at most $l$ different classes and $r = \lceil \frac{k}{l} \rceil = \lceil \frac{l^2+1}{l} \rceil = l+1$. Therefore the and-pointer number of $G_k$ must be larger than $l$.   
\end{proof}

We remark that every forest has and-pointer number at most two. See Figure~\ref{fig:tree_apn} for an example of how to label a tree; the number left of the bar is the id of the vertex. 

In comparison, it seems not quite as simple to prove that a small graph class has unbounded or-pointer number. The following observation might be helpful in that regard. Consider a graph $G$ with a  $k$-or-pointer representation $(\ell_{\mathrm{id}},\ell)$. Let $c$ denote the number of unique ids, i.e.~the cardinality of the image of $\ell_{\mathrm{id}}$. There exists an induced subgraph of $G$ with $c$ vertices which has bijective or-pointer number at most $k$ and therefore this subgraph can have at most $kc$ edges. Informally, if a graph has many edges then in any $k$-or-pointer representation of this graph there cannot be many unique ids. As a consequence the structure of such a graph is quite constrained.
Therefore it seems reasonable to suspect that the edge-complement of some sparse graph class such as planar graphs has unbounded or-pointer number. Also, does every graph class with bounded and-pointer number have bounded or-pointer number?

\begin{figure}
    \centering
    \begin{minipage}[b]{.5\textwidth}
        \centering
        \begin{tikzpicture}[shorten >=1pt,auto,node distance=1.2cm,
  main node/.style={circle,draw}]

%

%
%

\newcommand*{\xoff}{0}%
\newcommand*{\yoff}{0}%

\newcommand*{\movex}{1}%
\newcommand*{\movey}{0.8}%

\node[main node,inner sep=1.25mm] (x) at (\xoff,\yoff) {$x$};
\node[main node,inner sep=0.8mm] (x1) at (\xoff+\movex,\yoff+1.5*\movey) {$x_1$};
\node[main node,inner sep=0.8mm] (x2) at (\xoff+\movex,\yoff+0.5*\movey) {$x_2$};
\node[] (x3) at (\xoff+\movex,\yoff-0.4*\movey) {$\vdots$};
\node[main node,inner sep=0.8mm] (x4) at (\xoff+\movex,\yoff-1.5*\movey) {$x_k$};

\node[main node,inner sep=0.8mm] (y1) at (\xoff+2.5*\movex,\yoff+1.5*\movey) {$y_1$};
\node[main node,inner sep=0.8mm] (y2) at (\xoff+2.5*\movex,\yoff+0.5*\movey) {$y_2$};
\node[] (y3) at (\xoff+2.5*\movex,\yoff-0.4*\movey) {$\vdots$};
\node[main node,inner sep=0.8mm] (y4) at (\xoff+2.5*\movex,\yoff-1.5*\movey) {$y_k$};
\node[main node,inner sep=1.25mm] (y) at (\xoff+3.5*\movex,\yoff) {$y$};
%

\path[-]
(x) edge (x1)
(x) edge (x2)
(x) edge (x4)

(y) edge (y1)
(y) edge (y2)
(y) edge (y4)

(x1) edge (y1)
(x2) edge (y2)
(x4) edge (y4)
;


\end{tikzpicture}
        \captionof{figure}{Unbounded and-pointer number $G_k$}
        \label{fig:planar_apn}
    \end{minipage}%
    \begin{minipage}[b]{.5\textwidth}
        \centering
        \begin{tikzpicture}[shorten >=1pt,auto,node distance=1.2cm,
  main node/.style={circle,draw}]

\newcommand*{\xoff}{0}%
\newcommand*{\yoff}{0}%

\newcommand*{\movex}{0.7}%
\newcommand*{\movey}{-1}%

\node[main node,inner sep=0.1mm,minimum size=0.8cm] (r) at (\xoff,\yoff) {$1|2$};

\node[main node,inner sep=0.1mm,minimum size=0.8cm] (b) at (\xoff,\yoff+\movey) {$2|1$};

\node[main node,inner sep=0.1mm,minimum size=0.8cm] (a) at (\xoff-2*\movex,\yoff+\movey) {$2|1,3$};
\node[main node,inner sep=0.1mm,minimum size=0.8cm] (a1) at (\xoff-2.8*\movex,\yoff+2*\movey) {$3|2$};
\node[main node,inner sep=0.1mm,minimum size=0.8cm] (a2) at (\xoff-1.3*\movex,\yoff+2*\movey) {$3|2$};

\node[main node,inner sep=0.1mm,minimum size=0.8cm] (c) at (\xoff+2*\movex,\yoff+\movey) {$2|1,4$};
\node[main node,inner sep=0.1mm,minimum size=0.8cm] (c1) at (\xoff+2*\movex,\yoff+2*\movey) {$4|2$};

\path[-]
(r) edge (a)
(r) edge (b)
(r) edge (c)

(a) edge (a1)
(a) edge (a2)
(c) edge (c1)
;

%
%
%
%


\end{tikzpicture}
        \captionof{figure}{And-pointer labeling of a tree}
        \label{fig:tree_apn}
    \end{minipage}
\end{figure}

In Figure \ref{fig:sparse_diagram} we give an overview of the relation of $\gccac$ and the pointer numbers to other sets of graph classes defined in terms of graph class properties and graph parameters. A graph parameter in this figure is interpreted as the set of graph classes that are bounded by it. For example, a well-known result by Robertson and Seymour states that a graph class $\mathcal{C}$ has bounded tree-width iff there exists a planar graph $G$ such that no graph in $\mathcal{C}$ has $G$ as minor, i.e.~$\mathcal{C} \in \minorfree(\gcplanar)$. The remainder of this section explains the relations of the classes shown in Figure \ref{fig:sparse_diagram} going from top to bottom.

\begin{figure}
    \centering
    \begin{tikzpicture}[shorten >=1pt,auto,node distance=1.2cm,
main node/.style={circle,draw}]
\usetikzlibrary{shapes.misc}

\newcommand*{\movey}{1.4}%
\newcommand*{\movex}{2}%
\newcommand*{\movexsi}{0.02}%

\node (gp) at (-1*\movex,0) {$\gccp$};
\node (gac) at (-1*\movex,-1*\movey) {$\gccac$};

\node[align=center] (gsh) at (\movex,-0.5*\movey) {{ $\cogcsmallh$}};

\node (gacsh) at (0*\movex,-2*\movey) {{ \hspace{2em} $\gccac \cap \cogcsmallh$ } };

\node (gopn) at (0*\movex,-3*\movey) {{ Or-Pointer Number}};
\node (gapn) at (2*\movex,-3*\movey) {{ And-Pointer Number}};
\node (gcw) at (-2.7*\movex,-3*\movey) {{ Clique-Width }};

\node[align=center] (gus) at (-1*\movex,-4.5*\movey) {
    $\cogcsh$\\
    Degeneracy\\
    Bij.~Or-Pointer No.    
};

\node[align=center] (gpmc) at (-1*\movex,-6*\movey) {
   $\cogcpmc$\\
    $\minorfree(\setofgraphs)$
};

\node[align=center] (gtw) at (-1*\movex,-7*\movey) {
    $\minorfree(\gcplanar)$\\
    Tree-Width
};


\node[align=center] (gmd) at (1*\movex,-6*\movey+0.6*\movey) {
    Max.~Degree\\
    Bij.~And-Pointer No.
};

\node[align=center] (gti) at (2.9*\movex,-6*\movey+0.6*\movey) {
    {$\cogcth$}\\
    Twin Index
};

\node[align=center] (gecc) at (2.9*\movex,-7*\movey+0.6*\movey) {
    Intersection Number
};

\path[-]
(gac) edge (gp)
(gacsh) edge (gac)
(gacsh) edge (gsh)
(gopn) edge (gacsh)
(gapn) edge (gacsh)
(gcw) edge (gacsh)

(gus) edge (gopn)
(gpmc) edge (gus)
(gtw) edge (gpmc)

(gmd) edge (gus)
(gmd) edge (gapn)

(gti) edge (gapn)
(gti) edge (gopn)

(gecc) edge (gti)

(gtw) edge[bend left=45,bend angle=20] (gcw);

\node (sitw) at (-1*\movex,-7*\movey+0.43*\movey) {\strictinclusion};
\node (sipmc) at (-1*\movex,-6*\movey+0.68*\movey)  {\strictinclusion};
\node (sish) at (-1*\movex+0.6*\movex,-4*\movey+0.38*\movey) {\strictinclusiond};

\node (simd) at (1*\movex-0.9*\movex,-6*\movey+1.02*\movey) {\strictinclusion};

\node (simdapn) at (1*\movex+0.3*\movex,-6*\movey+1.2*\movey+0.15) {\strictinclusiond};
\node (siecc) at (2.9*\movex,-7*\movey+1.05*\movey) {\strictinclusion};
\node (sithopn) at (1*\movex+1.2*\movex,-6*\movey+1.2*\movey) {\strictinclusion};
\node (sithapn) at (1*\movex+1.64*\movex,-6*\movey+1.2*\movey+0.19) {\strictinclusion};

\node (sitwcw) at (-2.43*\movex,-5.8*\movey) {\strictinclusion};
\node (sicw) at (-1.35*\movex,-2.5*\movey) {\strictinclusiond};

\node (siapngac) at (0.74*\movex+0.54,-2.5*\movey) {\strictinclusion};

\node (siac) at (-0.55*\movex,-1.44*\movey) {\strictinclusion};

\end{tikzpicture}
    \caption{Various sets of graph classes and their relation to labeling schemes}
    \label{fig:sparse_diagram}                
\end{figure}
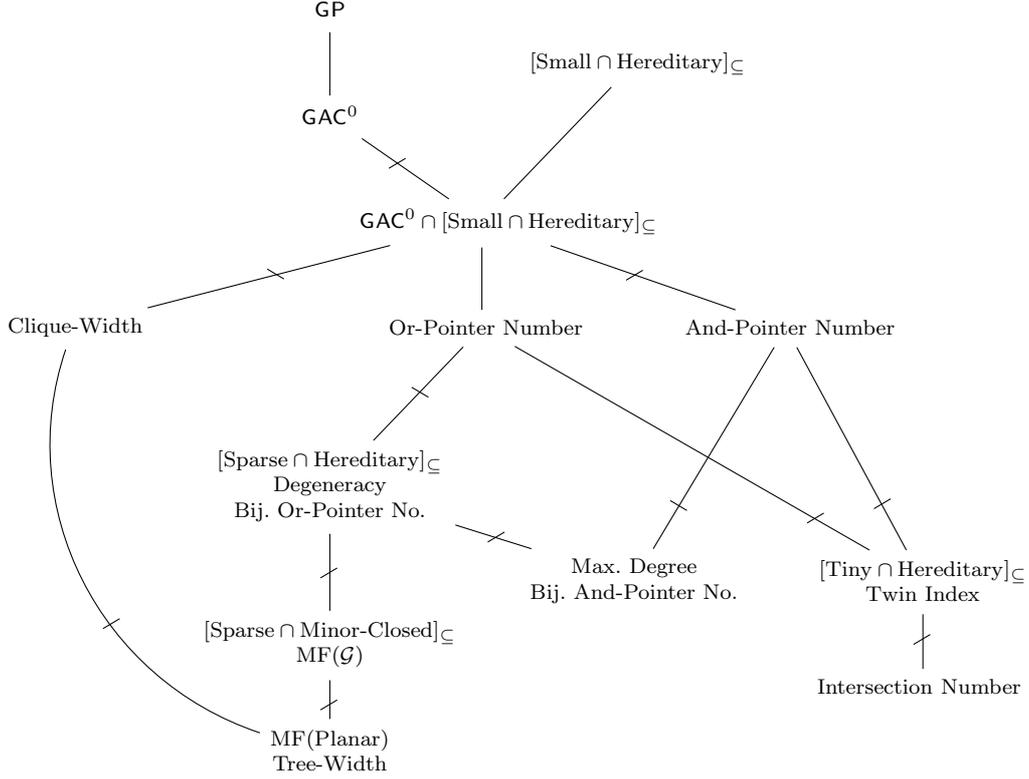  

The question of whether $\cogcsmallh$ is a subset of $\gccp$ is equivalent to the implicit graph conjecture. Therefore it is also unknown whether $\cogcsmallh$ is a subset of $\gccac$ since this would imply the implicit graph conjecture. However, it can be shown that the converse does not hold, i.e.~$\gccac$ is not a subset of $\cogcsmallh$. From a graph-theoretical point of view being hereditary is the weakest form of uniformity condition that can be imposed on a graph class in order for it to have some meaningful structure. Therefore  there probably is no elegant graph-theoretical characterization of $\gccac$. This might also be one of the reasons why it is so difficult to analyze $\gccac$ and its supersets. 

\begin{theorem}
    $\gccac \not\subseteq  \cogcsmallh$.
    \label{thm:acsmallh}
\end{theorem}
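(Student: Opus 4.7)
The plan is to exhibit a single graph class $\mathcal{C} \in \gccac$ whose hereditary closure $[\mathcal{C}]_\subseteq$ is already not small. This suffices: if $\mathcal{C} \subseteq \mathcal{D}$ for some hereditary $\mathcal{D}$, then $[\mathcal{C}]_\subseteq \subseteq \mathcal{D}$, and any subset of a small class is small, so $\mathcal{D}$ cannot be small. Hence $\mathcal{C} \notin \cogcsmallh$.

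For each $m \in \N$ I will define an auxiliary graph $U_m$ that, up to isomorphism, simultaneously contains every graph on $m$ vertices as induced subgraph. Let $\mathcal{A}_m$ be the set of symmetric $\{0,1\}$-matrices of order $m$ with zero diagonal, i.e.~labeled adjacency matrices of graphs on $m$ vertices. Take $V(U_m) = [m] \times \mathcal{A}_m$ and declare $(i,A) \sim (j,B)$ iff $i \neq j$, $A = B$, and $A_{i,j} = 1$. Then $U_m$ is the disjoint union of one copy of each labeled graph on $m$ vertices, and the map $i \mapsto (i, A_G)$ embeds an arbitrary graph $G$ on $m$ vertices into $U_m$ as an induced subgraph. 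I then let $\mathcal{C}$ be the isomorphism closure of $\{U_m : m \in \N\}$.

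To place $\mathcal{C}$ in $\gccac$, use the labeling $\ell(i,A) = (i,A)$. Since $U_m$ has $n_m = m \cdot 2^{\binom{m}{2}}$ vertices, each label occupies exactly $\lceil \log m \rceil + \binom{m}{2} = \log n_m$ bits, so the label length constant is $c = 1$. The label decoder receives two equal-length strings, parses them as $(i,A)$ and $(j,B)$, and checks: $i \neq j$, $A = B$ bitwise, and $A_{i,j} = 1$. The first two are trivial for constant-depth circuits; the third is a bit-selection on $A$ at a position that is a simple quadratic function of $i$ and $j$, which can be realised as a logspace-uniform OR over the $\mathcal{O}((\log n)^2)$ candidate positions, each branch being an AND of a position match with the corresponding bit of $A$. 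Hence the decoder is in $\AC^0$.

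Finally, $[\mathcal{C}]_\subseteq$ contains every graph on $m$ vertices, and the number of pairwise non-isomorphic such graphs is at least $2^{\binom{m}{2}}/m! = 2^{m^2/2 - \mathcal{O}(m \log m)}$. So $|[\mathcal{C}]_{\subseteq,m}| \geq 2^{\Omega(m^2)}$, which exceeds $2^{\mathcal{O}(m \log m)}$; therefore $[\mathcal{C}]_\subseteq$ is not small and the argument concludes. The only delicate point in the execution is the AC$^0$ bit-selection together with ensuring that the label length is an integer constant times $\log n$, both of which are handled by the construction above.
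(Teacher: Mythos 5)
Your proof is correct, and it takes a more direct route than the paper. The paper proceeds in two steps: it first builds a labeling scheme with a merely \emph{computable} label decoder whose spanned class has hereditary closure equal to all graphs (so $\gccr \not\subseteq \cogcsmallh$), and then transfers this to $\gccac$ via a padding lemma, arguing that once the labels are padded far beyond the decoder's runtime, a logspace-uniform $\AC^0$ family can precompute the relevant DNF. You instead exhibit one explicit family: $U_m$, the disjoint union of all labeled $m$-vertex graphs, labeled by $(i,A)$ so that each label carries the entire adjacency matrix of its block; since $n_m = m\cdot 2^{\binom{m}{2}}$, the label length is exactly $\log n_m$ with $c=1$, and the decoder (check $i\neq j$, $A=B$, and the bit $A_{i,j}$) is a trivially logspace-uniform constant-depth circuit. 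The underlying mechanism is the same in both arguments --- make the graphs so large that $\log n$ bits can hold all the adjacency information needed, and then use the fact that a hereditary closure containing all graphs cannot be small --- but your construction bakes the ``padding'' into the graphs themselves, avoiding the detour through $\gccr$ and the uniformity bookkeeping of the DNF-precomputation, and gives a concrete undirected witness. What the paper's route buys in exchange is a reusable transfer principle (every $\gccr$ class is dominated, up to hereditary closure, by a $\gccac$ class), which is more general than the single witness you produce; what yours buys is brevity and full explicitness. Your minor details check out: the label length is an exact integer multiple of $\log n$ since $\binom{m}{2}$ is an integer, the bit-selection is an OR over only $O(\log n)$ hardwired positions (your $O((\log n)^2)$ bound is a harmless overcount), and the closing counting step correctly shows $[\mathcal{C}]_{\subseteq}$ is not small, which rules out any small hereditary superset of $\mathcal{C}$.
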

\begin{proof}
    For a graph class $\mathcal{C}$ recall that $[\mathcal{C}]_{\subseteq}$ denote its closure under induced subgraphs. If a graph class $\mathcal{C}$ is in $\cogcsmallh$ then $[\mathcal{C}]_{\subseteq}$ is in $\cogcsmallh$ as well. 
    To prove the above statement we show that (I) $\gccr \not\subseteq  \cogcsmallh$ and (II) for every graph class $\mathcal{C} \in \gccr$ there exists a graph class $\mathcal{D}$ in $\gccac$ with $[\mathcal{C}]_{\subseteq} \subseteq [\mathcal{D}]_{\subseteq}$.
    
    For (I) we construct a labeling scheme $S=(F,1)$ where $F$ is computable and $[\gr{S}]_{\subseteq}$ is the class of all graphs. Let $f \colon \N \rightarrow \setofgraphs$ be a computable function such that: 
    \begin{enumerate}
        \item the image of $f$ equals the set of all graphs $\setofgraphs$
        \item $|V(f(n))| \leq n$ for all $n \in \N$
        \item if $n$ is not a power of two then $f(n)$ is the graph with a single vertex
    \end{enumerate}
    For $n \geq 2$ let $G_n$ be the lexicographically smallest labeled graph that is isomorphic to $f(n)$ and $V(G_n) \subseteq \{0,1\}^{\log n}$. We define the label decoder $F$ as
    $$(x,y) \in F \Leftrightarrow (x,y) \in E(G_{2^m})$$
    for all $x,y \in \{0,1\}^m$ and $m \in \N$. For every graph $G$ with at least two vertices there exists an $m \in \N$ such that $G$ is isomorphic to $f(2^m)$ and therefore $G$ occurs as induced subgraph of the graph with $2^m$ vertices in $\gr{S}$.  
    
    We show that (II) holds due to a padding argument. Given a labeling scheme $S=(F,c)$ in $\gccr$ and a function $p \colon \N \rightarrow \N$ we define the labeling scheme $S_p=(F_p,c)$ as
    $$ (x,y) \in F \Leftrightarrow (xx',yy') \in F_p$$
    for all $m \in \N$, $x,y \in \{0,1\}^{cm}$, $x',y' \in \{0,1\}^{cp(m)}$. To see that $[\gr{S}]_{\subseteq} \subseteq [\gr{S_p}]_{\subseteq}$ consider a graph $G$ that is in $\gr{S}$ via a labeling $\ell \colon V(G) \rightarrow \{0,1\}^{c \log n}$. We show that there is a graph $G_0$ in $\gr{S_p}$ on $n_0 = 2^{\log n + p(\log n)}$ vertices such that $G$ is an induced subgraph of $G_0$. Let us assume that $V(G) \subseteq V(G_0)$.
    The partial labeling $\ell' \colon V(G_0) \rightarrow \{0,1\}^{c (\log n + p(\log n))}$ with $\ell'(u) = \ell(u) 0^{c p(\log n)}$ for all $u \in V(G)$ shows that $G$ is an induced subgraph of $G_0$. It remains to argue that for every computable label decoder $F$ there exists a padding function $p$ such that $F_p$ is computable by a family of logspace-uniform $\AC^0$-circuits. Let $t \colon \N \rightarrow \N$ be the runtime of a Turing machine which computes $F$. The idea is to choose $p$ sufficiently large in terms of $t$ such that the logspace transducer which computes the circuit family can precompute the satisfying assignments for the circuit and then compile them into a DNF. Notice that membership in $F_p$ only depends on a small part of the input bits and therefore the DNF is only polynomial w.r.t.~the input size and thus can be directly encoded into the circuit.
\end{proof}

The or- and and-pointer numbers are hereditary because deleting vertices does not increase them. Also, the labeling schemes behind these numbers can be computed in $\gccac$ and therefore the pointer numbers are contained in the intersection of these two classes. The and-pointer number is strictly contained in the intersection because planar graphs have unbounded and-pointer number but bounded or-pointer number.

\begin{fact}[{\cite[p.~165 f.]{spinrad}}]
	Every graph class with bounded clique-width is in $\gccac$. 
	\label{fact:cwgac}
\end{fact}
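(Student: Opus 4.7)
The plan is to recall the labeling scheme for bounded-clique-width graph classes (as given by Spinrad and by Courcelle--Vanicat) and verify that its decoder lies in logspace-uniform $\AC^0$. So fix $k$ and let $\mathcal{C}$ be a graph class of clique-width at most $k$. For a graph $G \in \mathcal{C}$ with $n$ vertices, start from a $k$-expression for $G$ and convert it into an \emph{expression tree} $T$ whose leaves are the vertices of $G$ and whose internal nodes are labelled by union, rename, or add-edge operations from the clique-width calculus. A key preliminary step is to rebalance $T$ so that its depth is $O(\log n)$ while keeping the width bounded by some $k' = k'(k)$; this is doable for clique-width by a standard rotation / redistribution argument analogous to the one used for tree-width.

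Next I would design the label of each vertex $v \in V(G)$ as a triple $\ell(v) = (p_v, \mu_v, \sigma_v)$ where: $p_v$ is the binary root-to-leaf path of $v$ in $T$, of length $O(\log n)$; $\mu_v$ stores, for each of the $O(\log n)$ ancestors $a$ of $v$, the current clique-width label in $[k']$ that $v$ carries just before reaching $a$ (so $O(\log n \cdot \log k')=O(\log n)$ bits total); and $\sigma_v$ encodes, again level by level along $p_v$, the type of operation performed at each ancestor and, if it is an add-edge operation, which pair $(i,j) \in [k']^2$ of labels it joins. Because $k'$ is constant, $\sigma_v$ also fits in $O(\log n)$ bits, so the total label length is $O(\log n)$.

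The decoder receives two labels $\ell(u), \ell(v)$ and must decide adjacency. It proceeds as follows: (i) find the first position $d$ where $p_u$ and $p_v$ disagree -- this identifies the depth of the lowest common ancestor $a^* = \mathrm{LCA}(u,v)$ in $T$; (ii) for each depth $d' < d$ (i.e.~each ancestor of $a^*$) read off the operation stored at that level in $\sigma_u$ (it must agree with the corresponding entry of $\sigma_v$ by construction), and, if it is an add-edge operation $\mathrm{add}(i,j)$, check whether $(\mu_u[d'], \mu_v[d']) \in \{(i,j),(j,i)\}$; (iii) output the OR over all such depths. Each of these primitives -- bitwise inequality, first-difference (a parallel prefix-OR), fixed-size lookup on constantly many label values, and a final big OR -- is in $\AC^0$, and the overall circuit has constant depth and polynomial size. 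Logspace-uniformity is clear since the circuit's structure depends only on $n$ and on the constant $k'$.

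The main obstacle, and really the only non-trivial point, is the balancing step: not every $k$-expression is already of logarithmic depth, and one must invoke a rebalancing lemma for clique-width expressions to cut the depth of $T$ down to $O(\log n)$ at the cost of a constant-factor blow-up in the width. Once this is in place, the rest of the argument is routine bit-level circuit design; all the other pieces (LCA via first-difference, constant-size lookup, big OR) are well-known $\AC^0$ gadgets, so no further obstacle arises.
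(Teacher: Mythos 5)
Your construction is correct in outline, but it follows a genuinely different route from the paper's. The paper does not touch $k$-expressions at all: it invokes Spinrad's assertion that every graph of clique-width $k$ (on at least three vertices) has a \emph{balanced $k$-module}, i.e.~a set $S$ containing between one and two thirds of the vertices that splits into parts $S_1,\dots,S_k$ each indistinguishable from outside $S$. Recursing on this yields a binary tree of depth $\mathcal{O}(\log n)$ directly, and a vertex's label stores per level only a side bit plus either its part index (left child) or the subset of part indices it is adjacent to (right child); the decoder finds the first level where the two vertices separate and does one constant-size membership test there. Your route is instead the Courcelle--Vanicat one (which the paper cites in the introduction as the other known construction): keep the clique-width expression tree, rebalance it to depth $\mathcal{O}(\log n)$, and track each vertex's clique-width label and the (compressed) unary operations along its root-to-leaf path, deciding adjacency by an OR over all levels above the LCA. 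Both approaches hinge on exactly one non-routine combinatorial ingredient taken on faith -- Spinrad's balanced-$k$-module lemma there, the expression-rebalancing theorem here -- so your proposal is of comparable standing to the paper's own sketch; but be aware that the rebalancing step is a genuine theorem of Courcelle and Vanicat rather than a ``standard rotation argument analogous to tree-width,'' and the blow-up in the number of labels is a function of $k$ that need not be a constant factor (harmless here, since $k$ is fixed). Minor points to tidy up in your version: compress maximal chains of unary rename/add-edge nodes into single constant-size composite operations so that each level really contributes $\mathcal{O}(1)$ symbols, and make sure the add-edge operations sitting in the unary chain directly above the LCA union node are included in the levels you check. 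What the paper's route buys is a slightly simpler label (no per-level operation record, a single membership test instead of an OR of per-level tests); what yours buys is that it works from the standard clique-width calculus without needing Spinrad's module-decomposition claim.
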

\begin{proof}
    A subset of vertices $S$ of a graph $G$ is called a $k$-module if it can be partitioned into at most $k$ parts $S_1,\dots,S_k$ such that $S_i$ is a module in $G[( V \setminus S)\cup S_i]$, i.e.~the vertices in $S_i$ are indistinguishable to vertices of $V \setminus S$ for $i \in [k]$. A $k$-module $S$ of $G$ is called balanced if $S$ contains at least one third and at most two thirds of the vertices of $G$. 
    Spinrad asserts that every graph with at least three vertices and clique-width $k$ has a balanced $k$-module. 
    Therefore given a graph $G$ with clique-width $k$ one can construct a binary tree $T(G)$ by recursively finding a balanced $k$-module $S$ and putting the vertices of $S$ in the left node and the vertices of $V(G) \setminus S$ in the right node. The root node contains every vertex of $G$. Since every child node in $T(G)$ has at most two thirds of the vertices of its parent node it follows that the tree has depth $\mathcal{O}(\log n)$. This tree can be used to construct the following labeling scheme. Given a graph $G$ with clique width $k$ a vertex $v$ of $G$ is labeled as follows. There is a path in $T(G)$ from the root node to the leaf node which contains $v$. Let this path be $x_1,\dots,x_c$ where $x_1$ is the root node and $x_c$ is the leaf. For each $1 \leq i < c$ the following information is stored in the label of $v$. For $x_i$ let $S^i$ be the balanced $k$-module stored in the left child node of $x_i$ which can be partitioned into $S^i_1,\dots,S^i_k$. The first bit of $v$ for the $i$-th level denotes whether $v$ is in the left child of $x_i$. If $v$ is in the left child of $x_i$ then this means $v$ is in the balanced $k$-module $S^i$ and one also stores the index $j$ such that $v \in S^i_j$ for $1 \leq j \leq k$. If $v$ is in the right child of $x_i$ then one stores the subset of $X$ of $\{1,\dots,k\}$ such that $v$ is adjacent to the modules $S^i_j$ for all $j \in X$. Each level only requires a constant number of bits. To check whether two vertices $u,v$ of $G$ are adjacent one has to find the first level of $T(G)$ such that $u$ and $v$ are placed in different nodes. Assume that $u$ is in the left subtree and $v$ in the right one. Then $u$ and $v$ are adjacent iff the index $j$ of the part of the balanced $k$-module that $u$ is contained in is part of the subset $X$ of $v$ for this level. It is not difficult to construct a constant-depth circuit which computes this label decoder. 
\end{proof}

Every uniformly sparse graph class has bounded degeneracy and thus bounded or-pointer number due to Fact \ref{fact:bopnus}. Since the family of complete graphs has bounded or-pointer number but is not (uniformly) sparse this inclusion is strict. 

\begin{fact}
    Every graph class with bounded twin index has bounded or- and and-pointer number.  
\end{fact}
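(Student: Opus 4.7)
The plan is to exploit the observation that, when the twin index of $G$ is at most $k$, its entire adjacency is governed by a symmetric relation on the $k$ twin classes. Let $C_1,\dots,C_k$ be the twin classes of $G$ and write $i(v)$ for the index of the class containing $v$. First I would verify two structural consequences of the identity $N(u)\setminus\{v\} = N(v)\setminus\{u\}$: (i) each $C_j$ is either a clique or an independent set (apply the identity to a third vertex $w \in C_j$ to rule out mixed behavior), and (ii) for $i \neq j$ either every pair in $C_i \times C_j$ is adjacent, or none is. Both are short consequences of the definition of twins. Together they yield a symmetric relation $A \subseteq [k]^2$ such that $\{u,v\} \in E(G)$ iff $(i(u), i(v)) \in A$ for all distinct $u,v \in V(G)$.

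Next I would construct the labels. Since the twin index is at most $k \leq n$, we have $[k] \subseteq [n]$. Set $\ell_{\mathrm{id}}(v) := i(v)$, and let $\ell(v) \in [n]^k$ be any $k$-tuple whose set of entries equals $\{\, j \in [k] : (j, i(v)) \in A \,\}$, padding with repetitions if this set has fewer than $k$ elements. Then $\ell_{\mathrm{id}}(u) \in \ell(v)$ iff $(i(u),i(v)) \in A$, and by the symmetry of $A$ the same condition is equivalent to $\ell_{\mathrm{id}}(v) \in \ell(u)$. Hence for $u \neq v$ both the conjunction and the disjunction of these two membership conditions coincide with adjacency of $u$ and $v$ in $G$. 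This simultaneously establishes that the and-pointer number and the or-pointer number of $G$ are at most $k$, as required.

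I do not anticipate a serious obstacle: the only delicate point is verifying that the twin relation forces the clique/independent-set dichotomy inside each class and a uniform all-or-none adjacency between distinct classes, but both facts follow by a one-line application of the twin identity to auxiliary third vertices.
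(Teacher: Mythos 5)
Your construction is correct and is essentially the paper's own proof: label each vertex by its twin-class index and point to the classes adjacent to its own class (including its own index exactly when that class is a clique, which is what your diagonal of $A$ encodes), so that membership in either direction coincides with adjacency and both pointer numbers are at most $k$. The only cosmetic looseness (shared with the paper) is the padding when the pointer set is empty, in which case one can use a value in $[n]\setminus[k]$, which exists because that situation forces some twin class to have at least two vertices.
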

\begin{proof}
    Let $\mathcal{C}$ be a graph class with bounded twin index $k$. 
    This means a graph $G$ in $\mathcal{C}$ has at most $k$ twin classes $V_1,\dots,V_k$. 
    The following labeling of the vertices in $G$ shows that $G$ has or- and and-pointer number at most $k$. Given a vertex $v$ in $G$ let its id be the index of the twin class, i.e.~$\ell_{\mathrm{id}}(v) = i$ such that $v \in V_i$. Then let $\ell(v)$ be the subset of $[k]$ such that $j \in \ell(v)$ iff the twin class $V_j$ is adjacent to the twin class of $v$ or $j=\ell_{\mathrm{id}}(v)$ and $V_j$ is a clique. 
\end{proof}

The class of square grid graphs has unbounded twin index but bounded degree. Therefore the twin index is strictly contained in the pointer numbers.

\begin{fact}
    A graph class is in $\cogcth$ iff it has bounded twin index.
    \label{fact:thfu}
\end{fact}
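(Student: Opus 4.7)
The plan is to prove the two directions separately.

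For the forward direction, ``bounded twin index implies $\cogcth$'', suppose $\mathcal{C}$ has twin index at most $k$. Let $\mathcal{D}_k$ be the class of all undirected graphs whose twin index is at most $k$; since $\mathcal{C} \subseteq \mathcal{D}_k$, it suffices to show that $\mathcal{D}_k$ is hereditary and tiny. Heredity follows by a direct check: if $u, w \in V(G) \setminus \{v\}$ are twins in $G$, then the defining equality $N_G(u) \setminus \{w\} = N_G(w) \setminus \{u\}$ is preserved under the symmetric deletion of $v$ from both sides, so the twin relation of $G$ restricted to $V(G) \setminus \{v\}$ refines the twin relation of $G - v$, and thus the twin index cannot grow under vertex deletion. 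For tiny-ness I would count: a graph in $\mathcal{D}_k$ on $n$ vertices is specified, up to isomorphism, by the multiset of sizes of its at most $k$ twin classes (at most $n^k$ options) together with a quotient structure on at most $k$ nodes recording inter-class adjacencies and a clique/independent-set marking on each class (at most $2^{O(k^2)}$ options). Here I would use the elementary observation that a twin class of size at least two is either internally complete or internally edgeless, which is immediate from the definition of twins. This yields $|(\mathcal{D}_k)_n| \leq n^k \cdot 2^{O(k^2)} = n^{O(1)}$, which is polynomial in $n$ and hence bounded by $n^{cn}$ for every $c > 0$ at sufficiently large $n$.

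For the converse, let $\mathcal{C} \subseteq \mathcal{D}$ with $\mathcal{D}$ tiny and hereditary. As twin index passes to subclasses, it suffices to show that $\mathcal{D}$ itself has bounded twin index. I would argue by contrapositive: from a hereditary class $\mathcal{D}$ of unbounded twin index, derive that $|\mathcal{D}_n| > n^{cn}$ at infinitely many $n$ for every $c < \tfrac{1}{2}$, contradicting tiny-ness. This reduction is the main combinatorial content of~\cite{veryshort}, which establishes precisely the equivalence of tiny-ness and bounded twin index on hereditary classes, and I would appeal to it to close the argument.

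The hard part will be this converse direction. The forward count is purely elementary, but the reverse is a genuine growth lower bound on hereditary classes. A naive reduction via ``take one representative per twin class and invoke heredity to obtain a twin-free induced subgraph'' fails, because restricting to an induced subgraph can strictly coarsen the twin relation: the chosen representatives may become mutually indistinguishable once the remaining vertices are removed. Any self-contained proof would therefore have to reproduce the more delicate counting argument of~\cite{veryshort}, which is where the technical heart of the statement lies.
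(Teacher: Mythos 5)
Your proposal is correct and follows essentially the same route as the paper: the direction from bounded twin index to $\cogcth$ is handled by showing that the class of all graphs with twin index at most $k$ is hereditary and tiny via a counting argument (your count over multisets of twin-class sizes even gives a polynomial bound, sharper than the paper's $2^{\mathcal{O}(n)}$ partition-number estimate), and the hard converse direction is deferred to the literature exactly as the paper does. The only discrepancy is the attribution of that converse: the paper cites \cite{scheinerman} for it rather than \cite{veryshort}.
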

\begin{proof}
    ``$\Rightarrow$'': This direction is proved in \cite{scheinerman}.

    ``$\Leftarrow$'':  Let $\mathcal{C}$ be the set of graphs with twin index at most $k$ for a $k \in \N$. Clearly, $\mathcal{C}$ is hereditary. It remains to show that $\mathcal{C}$ is tiny. Recall that a graph class $\mathcal{C}$ is tiny if there exists a $c < \frac{1}{2}$ such that $|\mathcal{C}_n| \leq n^{cn}$ for all sufficiently large $n$.  A graph $G$ on $n$ vertices in $\mathcal{C}$ with twin index $1 \leq i \leq k$ is determined by the following choices. Choose an unordered partition of $n$ into $i$ parts $p_1, p_2,  \dots,  p_i \in [n]$, which means $p_1 + \dots + p_i =n$ and $p_1 \leq p_2 \leq \dots \leq p_i$. Let $P_{n,i}$ denote the number of such partitions. This partition tells us that the first $p_1$ vertices of $G$ are a twin class, the next $p_2$ vertices of $G$ are a twin class and so on. For every twin class one has to choose whether it is a clique or an independent set ($2^i$ possibilities). It remains to choose how the $i$ twin classes interact ($|\setofgraphs_i| \leq 2^{i^2}$ possibilities).      
    $$ \sum_{i=1}^k  2^{i^2 + i} \cdot  P_{n,i}     
    \leq \sum_{i=1}^k  2^{i^2+i} \cdot 2^n \leq k \cdot  2^{k^2+k} \cdot 2^n \leq k \cdot 2^{n +k^2} \leq 2^{\frac{1}{3} n \log n} = n^{\frac{1}{3}n}  $$    
    These inequalities are meant to hold for fixed $k$ and sufficiently large $n$.  
    The first inequality holds due to the fact that the partition number $P_n = \sum_{i=1}^n P_{n,i}$ is in $\mathcal{O}(2^n)$, which follows from Hardy and Ramanujan's asymptotic formula for $P_n$.
\end{proof}

It is interesting to note that  a set of graph classes defined in terms of graph-theoretical properties (tiny and hereditary) can be characterized in terms of a certain class of labeling schemes (labeling schemes with constant label length \cite{veryshort}).

To see that the intersection number is bounded by the twin index observe that a graph with intersection number $k$ can have at most $2^k$ different twin classes because the twin class of a vertex is determined by a subset of $k$ elements. To see that this inclusion is strict consider the class of complete bipartite graphs which have twin index two but unbounded intersection number. 

\section{Reductions Between Graph Classes}
\label{sec:reductions}
The concept of reduction is vital to complexity theory as it enables one to formally compare the complexity of problems as opposed to just treating each problem separately. In our context we want to say a graph class $\mathcal{C}$ reduces to a graph class $\mathcal{D}$ if the adjacency of graphs in $\mathcal{C}$ can be expressed in terms of adjacency of graphs in $\mathcal{D}$. It should satisfy the following closure property: if $\mathcal{D}$ has a labeling scheme and $\mathcal{C}$ is reducible to $\mathcal{D}$ then $\mathcal{C}$ has a labeling scheme as well. This closure should also hold when the complexity of label decoders is restricted, i.e.~if $\mathcal{D}$ is in $\gccp$ and $\mathcal{C}$ reduces to $\mathcal{D}$ then $\mathcal{C}$ is in $\gccp$. Such a reduction notion makes it possible to compare graph classes for which no labeling schemes are known to see whether there might be a common obstacle to designing a labeling scheme. We introduce two kinds of reduction that satisfy this closure property.
Before we formally define and examine them let us first explain the intuition behind them.

For the first reduction type, called algebraic reduction, the idea is to express the adjacency of a graph $G$ on vertex set $V$ in terms of graphs $H_1,\dots,H_k$ which also have vertex set $V$ and a $k$-ary boolean function $f$. Two vertices $u,v$ in $G$ are adjacent iff $f(x_1,\dots,x_k) = 1$ where $x_i$ denotes whether $u$ and $v$ are adjacent in $H_i$. For example, every axis-aligned rectangle intersection graph can be expressed as the conjunction of two interval graphs as shown in Figure \ref{fig:intvbox}, i.e.~two boxes intersect iff both of their corresponding intervals intersect.
Therefore we say axis-aligned rectangle intersection graphs reduce to interval graphs.
We call this type of reduction algebraic because it is build upon an interpretation of boolean functions as functions on graph classes. The resulting algebra on graph classes inherits some of the properties of its boolean ancestor.
For instance, the negation of a graph class $\mathcal{C}$ equals its edge-complement co-$\mathcal{C}$ in this interpretation. Thus negating a graph class twice is an involution, i.e.~$\neg \neg \mathcal{C} = \mathcal{C}$. Additionally, this algebra gives a unifying terminology for concepts such as arboricity, thickness and boxicity.  We remark that a similar but in a sense less general notion called locally bounded coverings of graphs has been used in \cite{lozin} and \cite{atminas}. They say a set of graphs $H_1,\dots,H_k$ is a covering of a graph $G$ if $V(G) = \cup_{i=1}^k V(H_i)$ and $E(G) = \cup_{i=1}^k E(H_i)$. If we assume that $V(H_i) = V(G)$ then in our terminology  $G$ is the disjunction of $H_1,\dots,H_k$. They also use this as a tool to prove the existence of implicit representations \cite[Lem.~4]{atminas}.     

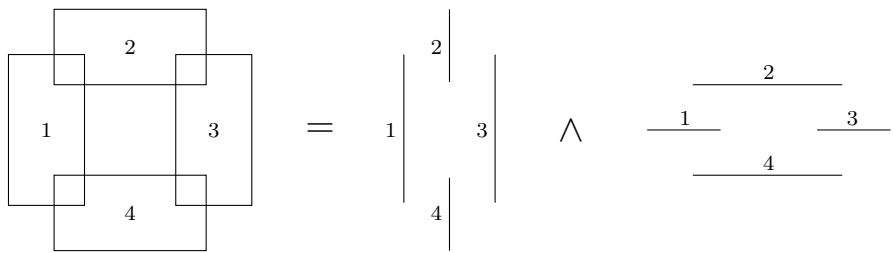
\begin{figure}[b]
    \centering
    \begin{tikzpicture}[shorten >=1pt,auto,node distance=1.2cm,
main node/.style={circle,draw}]
\usetikzlibrary{shapes.misc}

\newcommand*{\xa}{0}%
\newcommand*{\ya}{0}%

\newcommand*{\xb}{4.2}%
\newcommand*{\yb}{0}%

\newcommand*{\xc}{8.4}%
\newcommand*{\yc}{0}%

\newcommand*{\ibo}{0.6}%

\newcommand*{\boxlong}{2}%
\newcommand*{\boxshort}{1}%

\draw (\xa-\ibo-\boxshort,\ya+0.5*\boxlong) rectangle (\xa-\ibo,\ya-0.5*\boxlong);
\draw (\xa+\ibo,\ya+0.5*\boxlong) rectangle (\xa+\ibo+\boxshort,\ya-0.5*\boxlong);

\draw (\xa-0.5*\boxlong,\ya+\ibo+\boxshort) rectangle (\xa+0.5*\boxlong,\ya+\ibo);
\draw (\xa-0.5*\boxlong,\ya-\ibo) rectangle (\xa+0.5*\boxlong,\ya-\ibo-\boxshort);

\node (a) at (\xa-\ibo-0.5*\boxshort,\ya) {1};
\node (c) at (\xa+\ibo+0.5*\boxshort,\ya) {3};
\node (b) at (\xa,\ya+\ibo+0.5*\boxshort) {2};
\node (d) at (\xa,\ya-\ibo-0.5*\boxshort) {4};

\node (eqn) at (\xa+2.5,\ya) {\Large =};

\draw (\xb-\ibo,\yb+0.5*\boxlong) -- (\xb-\ibo,\yb-0.5*\boxlong);
\draw (\xb+\ibo,\yb+0.5*\boxlong) -- (\xb+\ibo,\yb-0.5*\boxlong);
\draw (\xb,\yb+\ibo+\boxshort) -- (\xb,\yb+\ibo);
\draw (\xb,\yb-\ibo-\boxshort) -- (\xb,\yb-\ibo);
\node (aa) at (\xb-\ibo-0.17,\yb) {1};
\node (ca) at (\xb+\ibo-0.17,\yb) {3};
\node (ba) at (\xb-0.17,\yb+\ibo+0.5*\boxshort) {2};
\node (da) at (\xb-0.17,\yb-\ibo-0.5*\boxshort) {4};

\node (eqn) at (\xb+1.6,\ya) {\Large $\wedge$};

\draw (\xc-\ibo-\boxshort,\yc) -- (\xc-\ibo,\yc);
\draw (\xc+\ibo+\boxshort,\yc) -- (\xc+\ibo,\yc);
\draw (\xc-0.5*\boxlong,\yc+\ibo) -- (\xc+0.5*\boxlong,\yc+\ibo);
\draw (\xc-0.5*\boxlong,\yc-\ibo) -- (\xc+0.5*\boxlong,\yc-\ibo);

\node (ab) at (\xc-\ibo-0.5*\boxshort,\yc+0.16) {1};
\node (ab) at (\xc+\ibo+0.5*\boxshort,\yc+0.16) {3};
\node (ab) at (\xc,\yc+\ibo+0.16) {2};
\node (ab) at (\xc,\yc-\ibo+0.16) {4};


\end{tikzpicture}
    \caption{Axis-aligned rectangle intersection graph as conjunction of two interval graphs}
    \label{fig:intvbox}                
\end{figure} 

The second kind of reduction is called subgraph reduction. While it is technically more tedious to define than the algebraic variant, the underlying intuition is just as simple. Instead of expressing the adjacency of a graph $G$ using a sequence of graphs as before, we do this in terms of a single, larger graph $H$. Informally, every vertex of $G$ is assigned to a constant-sized subgraph of $H$ and two vertices $u,v$ of $G$ are adjacent if their combined (labeled) subgraph in $H$ satisfies some condition. To illustrate this let us consider how the adjacency of $k$-interval graphs can be expressed in terms of interval graphs in this sense. Let $G$ be a $k$-interval graph with $n$ vertices. This means there exists an interval model $M(G)$ of $G$ with $kn$ intervals. Let $H$ be the interval graph with $kn$ vertices that is induced by the intervals of $M(G)$. Then assign each vertex $u$ of $G$ the $k$ vertices  $u_1,\dots,u_k$ in $H$ that correspond to its $k$ intervals. Two vertices $u,v$ in $G$ are adjacent iff there exist $i,j \in [k]$ such that $u_i$ and $v_j$ are adjacent in $H$. It is not clear whether $k$-interval graphs can also be reduced to interval graphs in the algebraic sense. We show that algebraic reductions are a special case of subgraph reductions under certain circumstances. 

An important application of reductions is to demonstrate that certain problems are representative (complete) for certain complexity classes. The notion of completeness is also applicable to our setting and it is natural to ask what a complete graph class for $\gccp$ looks like. We show that no hereditary graph class can be complete for $\gccac$ or any superset thereof with respect to algebraic or subgraph reductions. In fact, it might very well be the case that there do not even exist complete graph classes for $\gccp$ and $\gccac$ at all. 
On the upside, in the next section we introduce classes of labeling schemes defined in terms of first-order logic for which completeness results under both types of reductions hold.

\subsection{Algebraic Reductions}
\label{sec:ar}
\begin{definition}
    Let $f$ be a $k$-ary boolean function and $H_1,\dots,H_k$ are graphs with the same vertex set $V$ and $k \geq 0$. We define $f(H_1,\dots,H_k)$ to be the graph with vertex set $V$ and an edge $(u,v)$ iff $f( x_1, \dots, x_k) =1$ where $x_i = \llbracket (u,v) \in E(H_i) \rrbracket$ for all $u,  v \in V$.
\end{definition}

The constant boolean functions $0$ and $1$ define the empty and complete graph, respectively. The negation of a graph is its edge-complement.

\begin{definition}
    Let $f$ be a $k$-ary boolean function and $\mathcal{C}_1,\dots,\mathcal{C}_k$ are graph classes and $k \geq 0$. We define $f(\mathcal{C}_1,\dots,\mathcal{C}_k)$ to be the graph class that contains every graph $G$ such that there exist $(H_1,\dots,H_k)  \in \mathcal{C}_1 \times \dots \times \mathcal{C}_k$ with $(u,v) \in E(G)$ iff $(u,v) \in E(f(H_1,\dots,H_k))$ for all $u \neq v \in V$  assuming that $G$ and $H_1,\dots,H_k$ all have vertex set $V$.
    \label{def:gciobf}
\end{definition} 

If one exclusively considers graphs without self-loops then the condition `$(u,v) \in E(G)$ iff $(u,v) \in E(f(H_1,\dots,H_k))$ for all $u \neq v \in V$' can be replaced with `$G=f(H_1,\dots,H_k)$'. 
We explain the importance of this subtle detail when we consider closure under reductions. 

Under the graph class interpretation the constant boolean functions $0$ and $1$ define the class of empty and complete graphs, respectively. The negation of a graph class $\mathcal{C}$ is the edge-complement co-$\mathcal{C}$. 
A graph $G$ has arboricity at most $k$ iff $G \in \bigvee_{i=1}^k \gcforest$. Similarly, $G$ has thickness at most $k$ iff  $G \in \bigvee_{i=1}^k \gcplanar$ and boxicity at most $k$ iff $G \in \bigwedge_{i=1}^k \gcinterval$.

Suppose you are given two boolean formulas $F_1,F_2$ with $k$ variables. Due to the previous definition we can naturally interpret the formula $F_i$ as a function $f_i$ which maps $k$ graph classes to a graph class (each subformula of $F_i$ evaluates to a graph class). The following statement shows that $f_1 = f_2$ iff $F_1$ and $F_2$ are logically equivalent.

\begin{lemma}[Compositional Equivalence]
Given boolean functions $f,g,h_1,\dots,h_l$ where $f,h_1,\dots,h_l$ have arity $k$ and $g$ has arity $l$
 such that $f$ is the composition of $g,h_1,\dots,h_l$, i.e.~$f(\vec{x})$ $= g(h_1(\vec{x}),\dots,h_l(\vec{x}))$ for all $\vec{x} \in \{0,1\}^k$. Then for all sequences of graph classes $\vec{\mathcal{C}} = (\mathcal{C}_1,\dots, \mathcal{C}_k)  $ it holds that
$$ f(\vec{\mathcal{C}}) = g(h_1(\vec{\mathcal{C}}) , \dots, h_l(\vec{\mathcal{C}}) ) $$  
\label{lem:compequiv}   
\end{lemma}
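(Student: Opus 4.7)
The plan is to prove the set equality by establishing both inclusions separately, in each case unfolding Definition~\ref{def:gciobf} and applying the hypothesis $f(\vec{x}) = g(h_1(\vec{x}), \ldots, h_l(\vec{x}))$ pointwise to the indicator bits of each ordered pair of vertices. Throughout, when I write $G = f(H_1, \ldots, H_k)$ I mean edgewise agreement for all $u \neq v$ as in the definition, and I keep all graphs on a common vertex set $V = V(G)$.

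For the forward inclusion $f(\vec{\mathcal{C}}) \subseteq g(h_1(\vec{\mathcal{C}}), \ldots, h_l(\vec{\mathcal{C}}))$, I would start from $G \in f(\vec{\mathcal{C}})$ with a witness tuple $(H_1, \ldots, H_k) \in \mathcal{C}_1 \times \cdots \times \mathcal{C}_k$ on $V$ satisfying $G = f(H_1, \ldots, H_k)$, and then define $G_j := h_j(H_1, \ldots, H_k)$ for every $j \in [l]$. The same tuple $(H_1, \ldots, H_k)$ immediately witnesses $G_j \in h_j(\vec{\mathcal{C}})$ by the definition, and evaluating the composition identity on the bits $x_i = \mathbf{1}_{(u,v) \in E(H_i)}$ for each pair $u \neq v$ yields $G = g(G_1, \ldots, G_l)$ on the edge level, so $G \in g(h_1(\vec{\mathcal{C}}), \ldots, h_l(\vec{\mathcal{C}}))$. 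This direction is essentially a one-line unfolding of the definitions.

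For the reverse inclusion, given $G \in g(h_1(\vec{\mathcal{C}}), \ldots, h_l(\vec{\mathcal{C}}))$ I would unpack it as graphs $G_1, \ldots, G_l$ on $V$ with $G = g(G_1, \ldots, G_l)$ and, for each $j \in [l]$, further witness tuples $(H^{j}_1, \ldots, H^{j}_k) \in \mathcal{C}_1 \times \cdots \times \mathcal{C}_k$ with $G_j = h_j(H^{j}_1, \ldots, H^{j}_k)$. The task is to synthesize a single tuple $(H_1, \ldots, H_k)$ with $H_i \in \mathcal{C}_i$ and $G = f(H_1, \ldots, H_k)$. The main obstacle is that the per-$j$ witnesses $H^{j}_i$ need not agree across different $j$, so any naive selection such as $H_i := H^{j_0}_i$ for a fixed $j_0$ only recovers $G_{j_0}$ rather than $G$. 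My proposed attack would be pair-by-pair: for each $u \neq v$ pick a vector $\vec{x}^{*}(u,v) \in \{0,1\}^k$ satisfying $h_j(\vec{x}^{*}(u,v)) = \mathbf{1}_{(u,v) \in E(G_j)}$ simultaneously for all $j \in [l]$, then set $E(H_i) := \{(u,v) : x^{*}_i(u,v) = 1\}$, so that the composition identity forces $G = g(h_1(H_1, \ldots, H_k), \ldots, h_l(H_1, \ldots, H_k)) = f(H_1, \ldots, H_k)$. The hardest step will be verifying $H_i \in \mathcal{C}_i$: a pointwise combination of the $H^{j}_i$ may fall outside $\mathcal{C}_i$, and ensuring this does not happen is where I expect the bulk of the argument to lie, likely exploiting the per-edge freedom afforded by Definition~\ref{def:gciobf} together with closure of each $\mathcal{C}_i$ under isomorphism.
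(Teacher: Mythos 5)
Your forward inclusion is correct and coincides with what the paper actually proves: the pointwise identity $f(\vec{H})=g(h_1(\vec{H}),\dots,h_l(\vec{H}))$ for graphs on a common vertex set, from which the witnesses $G_j:=h_j(\vec{H})$ are read off. The genuine gap is in your reverse direction, and it is twofold. First, the vector $\vec{x}^{*}(u,v)$ you want to pick need not exist: the map $\vec{x}\mapsto(h_1(\vec{x}),\dots,h_l(\vec{x}))$ need not be surjective onto $\{0,1\}^l$, so the bit tuple $(\mathbf{1}_{(u,v)\in E(G_1)},\dots,\mathbf{1}_{(u,v)\in E(G_l)})$ may have no preimage (already for $h_1=h_2=\mathrm{id}$ only $(0,0)$ and $(1,1)$ are attainable). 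Second, as you yourself note, even when it exists the synthesized $H_i$ has no reason to lie in $\mathcal{C}_i$, and isomorphism closure cannot help. Neither obstacle is repairable, because the reverse inclusion fails at this level of generality: take $k=1$, $h_1=h_2=\mathrm{id}$, $g=\vee$, hence $f=\mathrm{id}$, and $\mathcal{C}=\gcforest$; then $g(h_1(\mathcal{C}),h_2(\mathcal{C}))$ contains $K_3$ (the union of the forest with edges $\{1,2\},\{2,3\}$ and the forest with edge $\{1,3\}$ on the same vertex set), while $f(\mathcal{C})$ does not. This is exactly the paper's own remark following the lemma that $\gcforest\vee\gcforest\neq f(\gcforest)$ for every unary $f$; the culprit is precisely the witness mismatch you identified, namely that on the right-hand side different $h_j$'s may be instantiated with different representatives of the same classes, which the composed $f$ cannot simulate.

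For comparison, the paper's proof consists solely of the graph-level identity (your forward computation) and then asserts that the class-level equality ``directly follows''. What directly follows is only the inclusion $f(\vec{\mathcal{C}})\subseteq g(h_1(\vec{\mathcal{C}}),\dots,h_l(\vec{\mathcal{C}}))$, i.e.\ your forward direction, and that inclusion is all that is used later: in Lemma~\ref{lem:cloneregclos} the ambient set $\mathbb{A}$ is closed under subsets, so membership of $g(h_1(\vec{\mathcal{D}}),\dots,h_l(\vec{\mathcal{D}}))$ in $\mathbb{A}$ already yields membership of $f(\vec{\mathcal{D}})$. (The reverse inclusion does hold in the special case where the $h_j$ depend on pairwise disjoint blocks of variables, since then the per-$j$ witnesses can be merged coordinatewise.) So your instinct that the reverse direction hides real work is sound, but the resolution is not a cleverer synthesis of witnesses: as stated, the equality should be read as, or weakened to, the forward inclusion, which your argument already establishes in the same way as the paper.
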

\begin{proof}
    We show that $f(\vec{H}) = g(h_1(\vec{H}),\dots,h_l(\vec{H}))$ for all sequences of graphs $\vec{H} = (H_1,\dots,H_k)$. From that it directly follows that $ f(\vec{\mathcal{C}}) = g(h_1(\vec{\mathcal{C}}) , \dots, h_l(\vec{\mathcal{C}}) ) $. 
	Let $G_f = f(\vec{H})$ and $G_g = g(h_1(\vec{H}),\dots,h_l(\vec{H}))$. It holds that
	$$ (u,v) \in E(G_f) \Leftrightarrow f(\vec{x}) = 1 \Leftrightarrow g(h_1(\vec{x}),\dots,h_l(\vec{x})) = 1 \Leftrightarrow (u,v) \in E(G_g) $$
	with $\vec{x} = (x_1,\dots,x_k)$ and $x_i = \llbracket (u,v) \in E(H_i) \rrbracket$ for $i \in [k]$.
\end{proof}

Given a boolean function $f$ of arbitrary arity it is trivial to construct a unary boolean function $g$ such that $f(x,\dots,x) = g(x)$ for all $x \in \{0,1\}$. This is not possible under the graph class interpretation. For instance, $\gcforest \vee \gcforest \neq f(\gcforest)$ for any of the four unary boolean functions $f$. Therefore disjunction is not idempotent and one can show that neither is conjunction. However, this algebra does inherit some of the properties of its boolean ancestor. For instance, disjunction and conjunction are associative and commutative, negation is an involution and De Morgan's laws apply. More generally, all laws of boolean algebra where every variable occurs at most once on each side apply to this algebra as well.  Boolean algebra is a special case of this algebra on graph classes if one restricts the universe to the class of complete graphs $\set{K_n}{n \in \N}$ and the edge-complement of it.

\begin{definition}[Algebraic Reduction]
    Let $F$ be a set of boolean functions and $\mathcal{C}, \mathcal{D}$ are graph classes. We say $\mathcal{C} \leq_F \mathcal{D}$ if $\mathcal{C} \subseteq f(\mathcal{D},\dots,\mathcal{D})$ for some $f \in F$. 
    We write $[\mathcal{D}]_F$ to denote the set of graph classes reducible to $\mathcal{D}$ w.r.t.~$\leq_F$.        
\end{definition}
 
In order for $\leq_F$ to be reflexive and transitive the following additional requirement has to be made.
A set of boolean functions is called a boolean clone if it is closed under composition and contains all projection functions $\pi_k^i(x_1,\dots,x_k) = x_i$ for $1 \leq i \leq k$ and $k \in \N$. For a set of boolean functions $F$ let us write $[F]_{\cloneclosure}$ to denote the closure of $F$ and all projections functions under composition. Alternatively, one can think of $[F]_{\cloneclosure}$ as all boolean functions that can be expressed as boolean formulas which use functions from $F$ as connectives. For example,  $[\neg,\wedge]_{\cloneclosure}$ is the set of all boolean functions $\clonebf$. A list of all boolean clones can be found in \cite[p.~145]{lau}.

\begin{lemma}
    If $F$ is a boolean clone then $\leq_F$ is reflexive and transitive.
\end{lemma}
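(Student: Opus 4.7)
The plan is to verify the two properties separately, in each case exploiting the two defining features of a boolean clone: containment of the projections $\pi_k^i$ and closure under composition.

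For reflexivity, I would choose the identity projection $\pi_1^1 \in F$ as the witness. Unfolding Definition~\ref{def:gciobf}, the graph class $\pi_1^1(\mathcal{C})$ consists of all graphs $G$ with vertex set $V$ for which there exists some $H \in \mathcal{C}$ on the same vertex set such that $E(G)$ and $E(\pi_1^1(H)) = E(H)$ coincide on pairs of distinct vertices. Since graph classes consist of simple graphs (no self-loops), this forces $G = H$, so $\pi_1^1(\mathcal{C}) = \mathcal{C}$ and therefore $\mathcal{C} \subseteq \pi_1^1(\mathcal{C})$, witnessing $\mathcal{C} \leq_F \mathcal{C}$.

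For transitivity, assume $\mathcal{C} \leq_F \mathcal{D}$ via some $k$-ary $f \in F$ and $\mathcal{D} \leq_F \mathcal{E}$ via some $l$-ary $g \in F$. I would define the $kl$-ary boolean function
$$ h(x_{1,1}, \dots, x_{1,l}, \dots, x_{k,1}, \dots, x_{k,l}) \;=\; f\bigl(g(x_{1,1}, \dots, x_{1,l}), \dots, g(x_{k,1}, \dots, x_{k,l})\bigr), $$
which lies in $F$ since $F$ is closed under composition (the fact that the $k$ blocks of inputs are disjoint is handled using the projections that a clone always contains). To see that $\mathcal{C} \subseteq h(\mathcal{E}, \dots, \mathcal{E})$, take any $G \in \mathcal{C}$ on vertex set $V$, use the first reduction to obtain graphs $H_1, \dots, H_k \in \mathcal{D}$ on $V$ with $G$ matching $f(H_1, \dots, H_k)$ on all pairs $u \neq v$, and then use the second reduction to obtain, for each $i$, graphs $H_{i,1}, \dots, H_{i,l} \in \mathcal{E}$ on $V$ with $H_i$ matching $g(H_{i,1}, \dots, H_{i,l})$ on all pairs $u \neq v$. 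Applying the graph-level statement proved in Lemma~\ref{lem:compequiv} (compositional equivalence) yields $f(g(H_{1,1}, \dots, H_{1,l}), \dots, g(H_{k,1}, \dots, H_{k,l})) = h(H_{1,1}, \dots, H_{k,l})$, so $G$ agrees with $h(H_{1,1}, \dots, H_{k,l})$ on all pairs of distinct vertices, witnessing $G \in h(\mathcal{E}, \dots, \mathcal{E})$ and hence $\mathcal{C} \leq_F \mathcal{E}$.

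I do not expect any genuine obstacle. The only subtleties are bookkeeping: making sure all graphs in each witness can be taken on the common vertex set $V$ (which is forced by Definition~\ref{def:gciobf}), explicitly invoking the projections to permit reusing variables across the $k$ copies of $g$ when forming $h$, and remembering that equality of two graphs in the expression $f(H_1,\dots,H_k) = h(H_{1,1},\dots,H_{k,l})$ only needs to hold on pairs $u \neq v$, which is precisely what both reductions and Lemma~\ref{lem:compequiv} deliver.
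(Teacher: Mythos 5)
Your proof is correct and follows essentially the same route as the paper: reflexivity via the identity projection and transitivity via the block-wise composition $h(\vec{x_1},\dots,\vec{x_k}) = f(g(\vec{x_1}),\dots,g(\vec{x_k}))$, unfolding both reductions on a common vertex set. The extra care you take with the "only on pairs $u \neq v$" condition and the explicit appeal to Lemma~\ref{lem:compequiv} is fine but not a departure from the paper's argument.
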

\begin{proof}
	Reflexivity directly follows from the identity function which is a projection function and therefore contained in $F$. For transitivity let $\mathcal{C} \leq_F \mathcal{D}$ via a $k$-ary $f \in F$ and $\mathcal{D} \leq_F \mathcal{E}$ via an $l$-ary $g \in F$. We show that $\mathcal{C} \leq_F \mathcal{E}$ via the $kl$-ary boolean function $h(\vec{x_1},\dots,\vec{x_k}) = f(g(\vec{x_1}),\dots,g(\vec{x_k}))$ where $\vec{x_i}$ denotes a sequence of $l$ variables. Since $h$ is the composition of $f$ and $g$ it follows that $h$ is in $F$. Consider a graph $G$ in $\mathcal{C}$. There are graphs $H_1,\dots,H_k \in \mathcal{D}$ such that $G = f(H_1,\dots,H_k)$. Since $H_i \in \mathcal{D}$ there are graphs $I^i_1,\dots,I^i_l \in \mathcal{E}$ such that $H_i = g(I^i_1,\dots, I^i_l)$ for all $i \in [k]$. Therefore $G = h(\vec{I^1},\dots,\vec{I^k}) =  f(g(\vec{I^1}),\dots,g(\vec{I^k}))$ with $\vec{I^i} = (I^i_1,\dots,I^i_l)$ for $i \in [k]$. 
\end{proof}

We say a set of graph classes $\mathbb{A}$ is closed under a $k$-ary boolean function $f$ if for all $\mathcal{C}_1,\dots,\mathcal{C}_k \in \mathbb{A}$ it holds that $f(\mathcal{C}_1,\dots,\mathcal{C}_k) \in \mathbb{A}$. We say $\mathbb{A}$ is closed under $\leq_F$ for a set of boolean functions $F$ if $\mathcal{C} \leq_F \mathcal{D}$ and $\mathcal{D} \in \mathbb{A}$ implies $\mathcal{C} \in \mathbb{A}$. 

\begin{lemma}
	Let $F$ be a boolean clone and let $B$ be a set of boolean functions with $F = [B]_{\cloneclosure}$. 
    If a set of graph classes $\mathbb{A}$ is closed under subsets and $f$ for every $f \in B$ then  $\mathbb{A}$ is closed under $\leq_{F}$. 
    \label{lem:cloneregclos}
\end{lemma}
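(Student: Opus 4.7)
The plan is to first reduce the statement to the following auxiliary claim: for every $f \in F$ and every $\mathcal{D} \in \mathbb{A}$, the graph class $f(\mathcal{D}, \dots, \mathcal{D})$ is in $\mathbb{A}$. This reduction is immediate from closure under subsets: if $\mathcal{C} \leq_F \mathcal{D}$ then by definition $\mathcal{C} \subseteq f(\mathcal{D}, \dots, \mathcal{D})$ for some $f \in F$, so once the auxiliary claim is established, $\mathcal{C}$ is a subset of a member of $\mathbb{A}$ and hence lies in $\mathbb{A}$ itself.

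I would prove the auxiliary claim by structural induction on a term representation of $f$ over the generating set $B \cup \{\pi_k^i : 1 \leq i \leq k,\, k \in \N\}$. Such a term exists because $F = [B]_{\cloneclosure}$ is by hypothesis the closure of $B$ and the projections under composition. In the base case $f = \pi_k^i$ is a projection, and unfolding Definition~\ref{def:gciobf} shows $\pi_k^i(\mathcal{D}, \dots, \mathcal{D}) = \mathcal{D}$, which lies in $\mathbb{A}$.

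For the inductive step, $f$ has the form $f(\vec{x}) = g(h_1(\vec{x}), \dots, h_l(\vec{x}))$ where $g$ is either in $B$ or a projection, and each $h_j$ admits a shorter term representation. By the Compositional Equivalence Lemma (Lemma~\ref{lem:compequiv}),
$$ f(\mathcal{D}, \dots, \mathcal{D}) \,=\, g\bigl(h_1(\mathcal{D}, \dots, \mathcal{D}),\, \dots,\, h_l(\mathcal{D}, \dots, \mathcal{D})\bigr). $$
The induction hypothesis places each $\mathcal{D}_j := h_j(\mathcal{D}, \dots, \mathcal{D})$ in $\mathbb{A}$. If $g$ is a projection, then $f(\mathcal{D}, \dots, \mathcal{D})$ equals one of the $\mathcal{D}_j$ and is therefore in $\mathbb{A}$. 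If $g \in B$, then the assumed closure of $\mathbb{A}$ under $g$ yields $g(\mathcal{D}_1, \dots, \mathcal{D}_l) \in \mathbb{A}$, completing the step.

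The argument has no real obstacle; the one point worth flagging is that the induction must invoke closure under each $g \in B$ applied to potentially distinct members $\mathcal{D}_1, \dots, \mathcal{D}_l$ of $\mathbb{A}$, not merely to identical copies of one graph class. This is why the notion `closed under $f$' must be read in the fully general form of Definition~\ref{def:gciobf} (arbitrary tuples from $\mathbb{A}$), and why Lemma~\ref{lem:compequiv} is essential: it licences the rewriting of $f(\mathcal{D}, \dots, \mathcal{D})$ into a nested application of the generating functions on graph classes that no longer all coincide with $\mathcal{D}$.
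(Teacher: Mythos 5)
Your proposal is correct and follows essentially the same route as the paper: both reduce the lemma (via closure under subsets) to showing $\mathbb{A}$ is closed under every $f \in F$, and both establish this by pushing closure through composition using Lemma~\ref{lem:compequiv}, with $B$ and the projections as the base functions. Your version merely makes the induction on the term representation of $f$ explicit where the paper states the composition step once and appeals to the fact that every function in $F$ is a composition of functions from $B$ and projections.
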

\begin{proof}
    Let $\mathbb{A}$ be closed under subsets and every boolean function in $B$. We need to argue that $\mathbb{A}$ is closed under $f$ for every $f \in F$. From that and the closure under subsets it follows that $\mathbb{A}$ is closed under $\leq_F$. 
    
    Observe that if $\mathbb{A}$ is closed under some boolean functions then it is also closed under the composition of these functions for the following reason. Let $f,g,h_1,\dots,h_l$ be boolean functions where $f,h_1,\dots,h_l$ have arity $k$ and $g$ has arity $l$ and $f$ is the composition of $g$ with $h_1,\dots,h_l$. Let $\mathbb{A}$ be closed under $g$ and $h_1,\dots,h_l$. Given $\vec{\mathcal{D}} \in \mathbb{A}^k$ it holds that $f(\vec{\mathcal{D}}) = g(h_1(\vec{\mathcal{D}}),\dots,h_l(\vec{\mathcal{D}}))$ due to the compositional equivalence from Lemma \ref{lem:compequiv}. Since $\mathbb{A}$ is closed under $h_i$ it follows that $\mathcal{\mathcal{D}}_i := h_i(\vec{\mathcal{D}})$ is in $\mathbb{A}$ for all $i \in [l]$. Therefore  $g(\mathcal{\mathcal{D}}_1,\dots,\mathcal{\mathcal{D}}_l)$ is in $\mathbb{A}$ as well and hence $\mathbb{A}$ is closed under $f$.     
    Since every function in $F$ can be expressed as composition of functions from $B$ and projections and $\mathbb{A}$ is closed under every function from $B$ and projections it follows that it is closed under every function from $F$. 
\end{proof}

If $\mathbb{A}$ is closed under union then the implication in Lemma~\ref{lem:cloneregclos} becomes an equivalence. 

\begin{corollary}
    A set of graph classes is closed under $\bfreduction$ if it is closed under subsets, negation and conjunction. Moreover, for a set of graph classes $\mathbb{A}$ that is closed under union it holds that $\mathbb{A}$ is closed under $\bfreduction$ iff it is closed under subsets, negation and conjunction.   
    \label{corol:bfcompl}
\end{corollary}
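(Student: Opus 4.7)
My plan is to obtain both directions by essentially invoking Lemma~\ref{lem:cloneregclos} and then handling the converse by hand using three carefully chosen reductions.

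For the first assertion, observe that $\clonebf = [\{\neg, \wedge\}]_{\cloneclosure}$, since every boolean function can be written as a composition of negation, conjunction, and projections. Hence Lemma~\ref{lem:cloneregclos} applied with $F = \clonebf$ and $B = \{\neg, \wedge\}$ immediately gives that any set of graph classes closed under subsets, negation, and conjunction is closed under $\bfreduction$. This is essentially a one-line consequence of the prior lemma.

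For the ``moreover'' direction, suppose $\mathbb{A}$ is closed under union and under $\bfreduction$. I would verify the three closure properties separately by exhibiting a suitable function in $\clonebf$ in each case. For subset closure, if $\mathcal{C} \subseteq \mathcal{D} \in \mathbb{A}$, then $\mathcal{C} \subseteq \pi_1^1(\mathcal{D})$ with $\pi_1^1 \in \clonebf$, so $\mathcal{C} \bfreduction \mathcal{D}$ and thus $\mathcal{C} \in \mathbb{A}$. For negation closure, given $\mathcal{C} \in \mathbb{A}$, the unary function $\neg \in \clonebf$ satisfies $\neg \mathcal{C} = \neg(\mathcal{C})$, so $\neg \mathcal{C} \bfreduction \mathcal{C}$ and hence $\neg \mathcal{C} \in \mathbb{A}$.

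The only step that requires real work, and which is the reason the converse needs the extra hypothesis, is conjunction. Given $\mathcal{C}, \mathcal{D} \in \mathbb{A}$, I would set $\mathcal{E} := \mathcal{C} \cup \mathcal{D}$, which is in $\mathbb{A}$ by union closure. Then I claim $\mathcal{C} \wedge \mathcal{D} \subseteq \mathcal{E} \wedge \mathcal{E}$: any $G \in \mathcal{C} \wedge \mathcal{D}$ can be written as $G = H_1 \wedge H_2$ with $H_1 \in \mathcal{C} \subseteq \mathcal{E}$ and $H_2 \in \mathcal{D} \subseteq \mathcal{E}$ (sharing the vertex set of $G$ by definition of the graph-class operation). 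Since $\wedge \in \clonebf$, this yields $\mathcal{C} \wedge \mathcal{D} \bfreduction \mathcal{E}$, and closure under $\bfreduction$ delivers $\mathcal{C} \wedge \mathcal{D} \in \mathbb{A}$. The expected main obstacle is precisely this last step: without union closure one does not have a single graph class in $\mathbb{A}$ that contains copies of both $\mathcal{C}$ and $\mathcal{D}$ to plug into the two arguments of $\wedge$, which is why the equivalence in the ``moreover'' clause cannot be asserted unconditionally.
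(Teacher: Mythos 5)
Your proposal is correct and follows essentially the same route as the paper: the forward direction is exactly Lemma~\ref{lem:cloneregclos} applied with $B=\{\neg,\wedge\}$ and $F=\clonebf=[\neg,\wedge]_{\cloneclosure}$, and your converse merely spells out the detail behind the paper's remark that union closure turns the implication into an equivalence. Your use of $\mathcal{E}=\mathcal{C}\cup\mathcal{D}$ to handle conjunction is precisely the point where union closure is needed, since $\bfreduction$ only allows a single target class, so the argument is sound as written.
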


Before we show closure under $\bfreduction$ for certain classes, let us go back to that subtle detail concerning self-loops in Definition~\ref{def:gciobf}. Let us say two graphs $G,H$ on the same vertex set $V$ are equivalent, in symbols $G \equiv H$, if $(u,v) \in E(G) \Leftrightarrow (u,v) \in E(H)$ for all \emph{distinct} $u$ and $v$ in $V$. This means that self-loops are ignored when comparing $G$ and $H$.
For a graph class $\mathcal{C}$ let $[\mathcal{C}]_{\equiv}$ be the set of graphs that are equivalent to a graph in $\mathcal{C}$. We say a graph class $\mathcal{C}$ is closed under $\equiv$ if $\mathcal{C} = [\mathcal{C}]_{\equiv}$. Given $k \in \N$, a $k$-ary boolean function $f$ and graph classes $\mathcal{C}_1,\dots,\mathcal{C}_k$ it holds that $f(\mathcal{C}_1,\dots,\mathcal{C}_k)$ is closed under $\equiv$. Intuitively, this means a reduction does not need to preserve information about self-loops. In contrast, in a labeling scheme self-loops must be correctly described by the label decoder $F$ and a labeling $\ell$, i.e.~$(u,u) \in E(G)$ iff $(\ell(u),\ell(u)) \in F$. This becomes relevant when we consider whether a set of graph classes $\mathbb{A}$ is closed under reductions. 
However, we can safely ignore the self-loop issue because for every set of graph classes $\mathbb{A}$ for which we show closure under reductions it can be easily seen that $\mathcal{C} \in \mathbb{A}$ implies $[\mathcal{C}]_{\equiv} \in \mathbb{A}$. Practically, this means our constructed labeling schemes do not need to correctly capture self-loops. 

\begin{lemma}
	The classes $\gccac, \gccl, \gccp, \gccexp, \gccr$ and $\gccall$ are closed under $\bfreduction$.
\end{lemma}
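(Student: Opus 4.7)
The plan is to invoke Corollary \ref{corol:bfcompl}: since $\clonebf = [\neg,\wedge]_{\cloneclosure}$, it suffices to verify that each of the six classes is closed under subsets, negation (edge-complement) and conjunction. Closure under subsets is immediate from Definition \ref{def:genericgcc}, which was already noted in the paragraph following that definition. Closure under negation follows by the same paragraph from the fact that each of $\AC^0$, $\ComplexityFont{L}$, $\P$, $\EXP$, $\R$, $\ComplexityFont{ALL}$ is closed under complement: given a witnessing language $L \in \ccex$ for $\mathcal{C}$, the language $\overline{L}$ yields a labeling scheme for co-$\mathcal{C}$ of the same label length.

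The main step is closure under conjunction. Given $\mathcal{C}_1,\mathcal{C}_2 \in \ccg \ccex$ with labeling schemes $(F_{L_1},c_1)$ and $(F_{L_2},c_2)$, I will construct a labeling scheme $(F_L,c_1+c_2)$ representing $\mathcal{C}_1 \wedge \mathcal{C}_2$. For any $G \in \mathcal{C}_1 \wedge \mathcal{C}_2$ on $n$ vertices there exist $H_1 \in \mathcal{C}_1$, $H_2 \in \mathcal{C}_2$ both on $V(G)$ with $G \equiv H_1 \wedge H_2$, together with labelings $\ell_i \colon V(G) \to \{0,1\}^{c_i \log n}$. Define the combined labeling by concatenation $\ell(v) := \ell_1(v)\,\ell_2(v)$, of total length $(c_1+c_2)\log n$. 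Accordingly, $L$ is defined on inputs of length $2(c_1+c_2)m$ (for any $m \geq 1$) by: parse the input as $x_1 x_2 y_1 y_2$ with $|x_i|=|y_i|=c_i m$, and accept iff $x_1 y_1 \in L_1$ and $x_2 y_2 \in L_2$. By construction, two vertices are adjacent in $G$ iff they are adjacent in both $H_1$ and $H_2$, which happens iff both component decoders accept, so $(F_L, c_1+c_2)$ represents $\mathcal{C}_1 \wedge \mathcal{C}_2$. Since ignoring self-loops is harmless here (as discussed in the paragraph preceding the lemma), no further adjustment is needed.

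The remaining verification is that $L \in \ccex$ whenever $L_1, L_2 \in \ccex$, for each of the six classes. This reduces to closure of each class under (i) extracting fixed bit ranges whose endpoints depend only on the input length and (ii) taking a single AND of two sub-computations. For $\ComplexityFont{ALL}$ and $\R$ this is trivial; for $\P$ and $\EXP$ it is routine since the two sub-computations are run in sequence and combined; for $\ComplexityFont{L}$ the bit splits are logspace-computable from the input length and the two logspace machines are composed and ANDed with only logarithmic additional state. The main obstacle, and the only case needing care, is $\AC^0$: the combined family consists of the two $\AC^0$-circuits for $L_1$ and $L_2$ wired to disjoint input projections and joined by one top-level AND gate, so depth stays constant and size polynomial. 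Logspace-uniformity is preserved because the split positions are determined by simple arithmetic on the input length $2(c_1+c_2)m$, which a logspace transducer producing the combined circuit can compute from the two given uniform transducers for $L_1$ and $L_2$.
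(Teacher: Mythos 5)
Your proposal is correct and follows essentially the same route as the paper: negation is handled via complement-closure of the underlying language classes, and conjunction via the concatenated-label decoder $(x_1x_2,y_1y_2)\in F_3 \Leftrightarrow (x_1,y_1)\in F_1 \wedge (x_2,y_2)\in F_2$ with label length $c_1+c_2$, after which Corollary~\ref{corol:bfcompl} gives closure under $\bfreduction$. The only difference is that you spell out the (routine) verification that the combined decoder stays within each of the six classes, which the paper leaves as "simple to see."
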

\begin{proof}
    If a set of languages $\ccex$ is closed under complement then $\gccex$ is closed under negation.
    To see that the above mentioned classes are closed under conjunction, consider the following construction. Given two labeling schemes $S_1 = (F_1,c_1)$ and $S_2 = (F_2,c_2)$ let the labeling scheme $S_3 = (F_3,c_1+c_2)$ with $(x_1x_2,y_1y_2) \in F_3 \Leftrightarrow (x_1,y_1) \in F_1 \wedge (x_2,y_2) \in F_2$ and $\frac{|x_i|}{|x|} = \frac{|y_i|}{|y|} = \frac{c_i}{c}$ for $i \in [2]$. It holds that $\gr{S_1} \wedge \gr{S_2} \subseteq \gr{S_3}$ and it is simple to see that this construction works for all of the above complexity classes. 
\end{proof}

\begin{lemma}
    The class $\cogcsmallh$ is closed under $\bfreduction$.
\end{lemma}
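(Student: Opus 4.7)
My plan is to invoke Corollary~\ref{corol:bfcompl}: since $\cogcsmallh$ is trivially closed under subsets by construction, it suffices to prove that it is closed under negation and under conjunction. Given any $\mathcal{C} \in \cogcsmallh$, fix a small and hereditary $\mathcal{D}$ with $\mathcal{C} \subseteq \mathcal{D}$; the witnesses for the two closure properties will be built from such $\mathcal{D}$'s.

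For negation, I will observe that the edge-complement operation commutes with taking induced subgraphs and is a bijection on graphs with a fixed vertex set. Hence co-$\mathcal{D}$ is hereditary whenever $\mathcal{D}$ is, and $|\text{co-}\mathcal{D}_n| = |\mathcal{D}_n|$ is in $2^{\mathcal{O}(n \log n)}$, so co-$\mathcal{D}$ is small and hereditary. Since co-$\mathcal{C} \subseteq \text{co-}\mathcal{D}$, we get co-$\mathcal{C} \in \cogcsmallh$.

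The main work is conjunction. Given $\mathcal{C}_i \subseteq \mathcal{D}_i$ with $\mathcal{D}_i$ small and hereditary ($i=1,2$), the natural candidate witness is $\mathcal{D}_1 \wedge \mathcal{D}_2$, and the monotonicity of the algebraic operation gives $\mathcal{C}_1 \wedge \mathcal{C}_2 \subseteq \mathcal{D}_1 \wedge \mathcal{D}_2$. I then need to verify both smallness and heredity of $\mathcal{D}_1 \wedge \mathcal{D}_2$. Heredity is easy: if $G \equiv H_1 \wedge H_2$ with $H_i \in \mathcal{D}_i$ on the common vertex set $V$, then for any $V' \subseteq V$ the induced subgraphs satisfy $H_i[V'] \in \mathcal{D}_i$ and $G[V'] \equiv H_1[V'] \wedge H_2[V']$, so $G[V'] \in \mathcal{D}_1 \wedge \mathcal{D}_2$ (recalling that $f(\cdot,\cdot)$-classes are automatically closed under $\equiv$, so self-loops are irrelevant here). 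For smallness, I will count labeled graphs on vertex set $[n]$: smallness of $\mathcal{D}_i$ gives $|\mathcal{D}_{i,n}| \leq 2^{\mathcal{O}(n \log n)}$, and multiplying by $n!$ shows that the number of labeled graphs on $[n]$ in $\mathcal{D}_i$ is also $2^{\mathcal{O}(n \log n)}$. Every labeled graph on $[n]$ in $\mathcal{D}_1 \wedge \mathcal{D}_2$ is (up to equivalence, i.e.\ up to self-loops) obtained by componentwise conjunction from a pair of labeled graphs on $[n]$, one from each $\mathcal{D}_i$; hence there are at most $2^{\mathcal{O}(n\log n)} \cdot 2^{\mathcal{O}(n \log n)} \cdot 2^n = 2^{\mathcal{O}(n \log n)}$ such labeled graphs, and therefore at most that many unlabeled graphs as well.

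The only mild subtlety I expect is bookkeeping around self-loops and the $\equiv$-equivalence, but since every class of the form $f(\mathcal{C}_1,\dots,\mathcal{C}_k)$ is closed under $\equiv$ by construction and the introductory paragraph on self-loops says we may safely ignore them, this will not affect the argument. Combined with the already-noted closure under subsets and negation, Corollary~\ref{corol:bfcompl} then yields closure of $\cogcsmallh$ under $\bfreduction$.
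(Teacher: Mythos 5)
Your proposal is correct and follows essentially the same route as the paper: closure under subsets, negation and conjunction via Corollary~\ref{corol:bfcompl}, with heredity of the conjunction witnessed by $G[V'] = H_1[V'] \wedge H_2[V']$ and smallness by a product counting bound. Your counting of labeled graphs (with the $n!$ and $2^n$ factors) is just a slightly more explicit version of the paper's bound $|(\mathcal{C}\wedge\mathcal{D})_n| \leq |\mathcal{C}_n|\cdot|\mathcal{D}_n| \in n^{\mathcal{O}(n)}$, and the reduction from $\cogcsmallh$ to small hereditary superclasses is the same implicit step.
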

\begin{proof}
   If a graph class is small and hereditary then so is its edge-complement and therefore we have closure under negation. It remains to show that this set of graph classes is closed under conjunction. Let $\mathcal{C},\mathcal{D}$ be small and hereditary graph classes. A graph in $\mathcal{C} \wedge \mathcal{D}$ is determined by choosing a graph in $\mathcal{C}$ and $\mathcal{D}$ and therefore   
   $ |(\mathcal{C} \wedge \mathcal{D})_n| \leq |\mathcal{C}_n| \cdot |\mathcal{D}_n| \in n^{\mathcal{O}(n)} $ since $\mathcal{C}$ and $\mathcal{D}$ are small. Given a graph $G \in \mathcal{C} \wedge \mathcal{D}$. Let $G = H_1 \wedge H_2$ with $H_1 \in \mathcal{C}, H_2 \in \mathcal{D}$. It holds that $G[V'] = H_1[V'] \wedge H_2[V']$ for all vertex subsets $V'$ of $G$. Since $\mathcal{C}$ and $\mathcal{D}$ are hereditary it follows that every induced subgraph of $G$ is in $\mathcal{C} \wedge \mathcal{D}$. 
\end{proof}

Let us say a graph class $\mathcal{C}$ is $\bfreduction$-complete for a set of graph classes $\mathbb{A}$  if $\mathcal{C}$ is in $\mathbb{A}$ and $\mathcal{D} \bfreduction \mathcal{C}$ for every $\mathcal{D} \in \mathbb{A}$. Alternatively, one can also say $\mathcal{C}$ is complete for $\mathbb{A}$ if $[\mathcal{C}]_{\bfreduction} = \mathbb{A}$. Observe that a directed graph class is not $\bfreduction$-reducible to an undirected one. Since classes such as $\gccp$ and $\gccac$ contain directed graph classes it trivially follows that no undirected graph class can be complete for them.
If we assume that only undirected graph classes can be small and hereditary then it trivially holds that $\gccac$ and its supersets do not have a $\bfreduction$-complete small and hereditary graph class. To make this more interesting we can either drop this assumption or we can restrict $\gccac$ to undirected graph classes. Irregardless of the choice, the following statement holds.

\begin{fact}
    There exists no small, hereditary  graph class that is $\bfreduction$-complete for $\gccac$, or a superset thereof.  
\end{fact}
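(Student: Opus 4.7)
The plan is to obtain a contradiction by combining the previous closure lemma with Theorem~\ref{thm:acsmallh}. Suppose, for contradiction, that $\mathbb{B} \supseteq \gccac$ is a set of graph classes (possibly equal to $\gccac$ itself) and that some small, hereditary graph class $\mathcal{C}$ is $\bfreduction$-complete for $\mathbb{B}$. By definition of completeness, $\mathbb{B} \subseteq [\mathcal{C}]_{\bfreduction}$.

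Next, I would invoke the lemma just proved that $\cogcsmallh$ is closed under $\bfreduction$. Since $\mathcal{C}$ lies in $\cogcsmallh$ (it is small and hereditary, hence a subset of itself, which is small and hereditary), closure under $\bfreduction$ yields $[\mathcal{C}]_{\bfreduction} \subseteq \cogcsmallh$. Chaining the two containments gives $\gccac \subseteq \mathbb{B} \subseteq [\mathcal{C}]_{\bfreduction} \subseteq \cogcsmallh$, which directly contradicts Theorem~\ref{thm:acsmallh} stating $\gccac \not\subseteq \cogcsmallh$.

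The only subtlety I foresee concerns the convention about directed versus undirected graph classes that is flagged in the paragraph preceding the fact: under either reading of the statement the argument is the same, because the Theorem~\ref{thm:acsmallh} construction in part~(I) of its proof (padding a computable label decoder into $\AC^0$) can be carried out with an undirected diagonalization graph class, so even the undirected-only restriction of $\gccac$ fails to be contained in $\cogcsmallh$. Consequently there is no real obstacle: the whole argument is a one-line combination of the closure property of $\cogcsmallh$ under $\bfreduction$ with the non-inclusion $\gccac \not\subseteq \cogcsmallh$, and the same argument works verbatim for every superset of $\gccac$.
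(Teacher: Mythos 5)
Your argument is exactly the paper's: the completeness assumption plus closure of $\cogcsmallh$ under $\bfreduction$ forces $\gccac \subseteq \cogcsmallh$, contradicting Theorem~\ref{thm:acsmallh}. The proposal is correct (including the explicit handling of supersets and the directed/undirected remark, which the paper addresses in the preceding paragraph), so there is nothing to add.
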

\begin{proof}
    Assume there exists a small, hereditary graph class $\mathcal{C}$ that is complete for $\gccac$ with respect to $\bfreduction$. Since $\mathcal{C}$ is in  $\cogcsmallh$ and  $\cogcsmallh$ is closed under $\bfreduction$ it follows that every class reducible to $\mathcal{C}$ must be in $\cogcsmallh$ and thus $\gccac \subseteq \cogcsmallh$. This contradicts Theorem \ref{thm:acsmallh}.
\end{proof}
 
As a closing example, we explain why every uniformly sparse graph class is $\bfreduction$-reducible to interval graphs. A graph class is uniformly sparse iff it has bounded arboricity. Stated differently, $\mathcal{C}$ is uniformly sparse iff there exists a $k \in \N$ such that $\mathcal{C} \subseteq \bigvee_{i=1}^k \gcforest$. It holds that every forest has boxicity at most 2, i.e.~$\gcforest \subseteq \gcinterval \wedge \gcinterval$. Therefore if  $\mathcal{C}$ is uniformly sparse then there exists a $k \in \N$ such that $\mathcal{C} \subseteq \bigvee_{i=1}^k (\gcinterval \wedge \gcinterval)$. Notice that this reduction does not require negation and therefore is a $\leq_{\mathrm{M}}$-reduction  where $\mathrm{M}$ denotes the monotone clone $[\wedge,\vee]_{\cloneclosure}$.

\subsection{Subgraph Reductions}

Given $k \geq 0$, a $k^2$-ary boolean function $f$ and a $(k \times k)$-matrix $A$ over $\{0,1\}$. We write $f(A)$ to denote the value of $f$ when plugging in the entries of $A$ from left to right and top to bottom. We say $f$ is diagonal if the value of $f$ only depends on the $k$ entries on the main diagonal of $A$. Given $k,l \geq 1$, a $k^2$-ary boolean function $f$ and an $l^2$-ary boolean function $g$. We define the composition of $f$ with $g$ to be the $(kl)^2$-ary boolean function 
$$(f \circ g ) 
\left( \begin{matrix}
B_{1,1} &  \dots &  B_{1,k} \\
 \vdots &  \ddots 	& \vdots \\
B_{k,1} &  	\dots		& B_{k,k}
\end{matrix} \right) 
= 
f 
\left( \begin{matrix}
g(B_{1,1}) &  \dots &  g(B_{1,k}) \\
\vdots &  \ddots 	& \vdots \\
g(B_{k,1}) &  	\dots		& g(B_{k,k})
\end{matrix} \right)
$$
where $B_{i,j}$ is a $(l \times l)$-matrix for $i,j \in [k]$. 

\begin{definition}
	Given graphs $G,H$, $k \in \N$ and a $k^2$-ary boolean function $f$. We say $G$ has an $(H,f)$-representation if there exists an $\ell \colon V(G) \rightarrow V(H)^k$ such that for all $u \neq v \in V(G)$ 
    $$(u,v) \in E(G) \Leftrightarrow f(A_{uv}^{\ell}) = 1$$
    with $A_{uv}^{\ell} = (\llbracket (\ell(u)_i, \ell(v)_j) \in E(H)  \rrbracket)_{i,j \in [k]} $.
\end{definition}

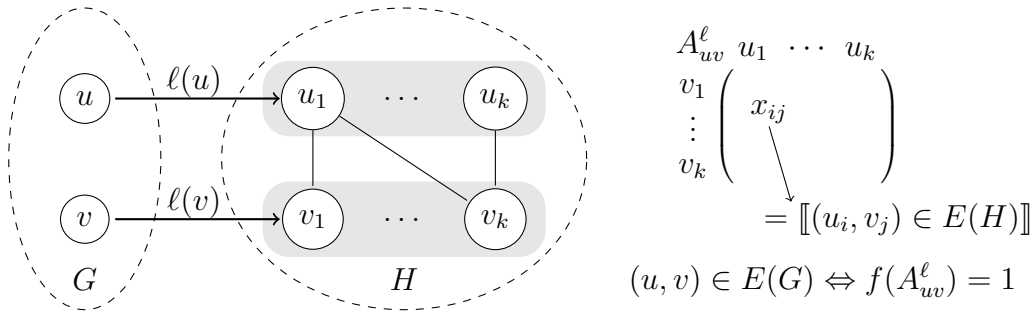
\begin{figure}[b]
    \centering
    \begin{tikzpicture}[shorten >=1pt,auto,node distance=1.2cm,
main node/.style={circle,draw}]
\usetikzlibrary{shapes.misc}

\newcommand*{\xa}{0}%
\newcommand*{\ya}{0}%
\newcommand*{\xb}{4.2}%
\newcommand*{\xc}{9.7}%

\newcommand*{\movea}{0.8}%
\newcommand*{\moveb}{0.63}%
\newcommand*{\movec}{0.4}%

\newcommand*{\nd}{1.2}%

\node[main node] (un) at (\xa,\ya+\movea) {\large $u$};
\node[main node] (vn) at (\xa,\ya-\movea) {\large $v$};
\draw[dashed] (\xa,\ya) ellipse (1cm and 2cm);
\node (gn) at (\xa,\ya-2*\movea) {\large $G$};

\fill[black!10,rounded corners=10pt] (\xb-1.55*\nd,\ya+\movea+0.5) rectangle (\xb+1.55*\nd,\ya+\movea-0.5);
\fill[black!10,rounded corners=10pt] (\xb-1.55*\nd,\ya-\movea+0.5) rectangle (\xb+1.55*\nd,\ya-\movea-0.5);
\node[main node,fill=white] (u1n) at (\xb-1*\nd,\ya+\movea) {\large$u_1$};
\node[main node,fill=white] (v1n) at (\xb-1*\nd,\ya-\movea) {\large$v_1$};
\node (u3n) at (\xb+0*\nd,\ya+\movea) {\large$\dots$};
\node (v3n) at (\xb+0*\nd,\ya-\movea) {\large$\dots$};
\node[main node,fill=white] (u4n) at (\xb+1*\nd,\ya+\movea) {\large$u_k$};
\node[main node,fill=white] (v4n) at (\xb+1*\nd,\ya-\movea) {\large$v_k$};
\draw[dashed] (\xb,\ya) ellipse (2.4cm and 2cm);
\node (hn) at (\xb,\ya-2*\movea) {\large $H$};

\path[-]
(u1n) edge (v1n)
(u1n) edge (v4n)
(u4n) edge (v4n)
;

\draw[thick,->] (\xa+0.4,\ya+\movea) -- (\xb-1*\nd-0.4,\ya+\movea);
\node (hn) at (\xa+1.8*\movea,\ya+\movea+0.23) {\large $\ell(u)$};

\draw[thick,->] (\xa+0.4,\ya-\movea) -- (\xb-1*\nd-0.4,\ya-\movea);
\node (hn) at (\xa+1.8*\movea,\ya-\movea+0.23) {\large $\ell(v)$};

\node (hn) at (\xc-1.2-\movec,\ya+\movea+0.1+\moveb) {\large $A_{uv}^\ell$};
\node (hn) at (\xc+0.2-\movec,\ya+\movea+\moveb) {\large$u_1 \: \: \cdots \: \: u_k$};
\node (hn) at (\xc+0.2-\movec,\ya-0.2+\moveb) {$\left( 
    {\color{white} 
\begin{matrix}
a_1 & a_2 & a 345  \\
b_1 & b_2 & b 4 \\
c_1 & c_2 & c 3  \\
d_1 & d_2 & d 3 \\
c_1 & c_2 & c 3  
\end{matrix} }
  \right)$};

\node (hn) at (\xc-1.3-\movec,\ya+\movea-0.5+\moveb) {\large$v_1$};
\node (hn) at (\xc-1.3-\movec,\ya+\movea-0.93+\moveb) {\large$\vdots$};
\node (hn) at (\xc-1.3-\movec,\ya+\movea-0.93-0.63+\moveb) {\large$v_k$};

\node (hn) at (\xc-\movec-0.3,\ya+\moveb) {\large$x_{ij}$};
\draw[->] (\xc-\movec-0.3,\ya+\moveb-0.2) -- (\xc+1-1.4,\ya-1*\movea+0.2);

\node (hn) at (\xc+1,\ya-1*\movea) {\large$= \llbracket (u_i,v_j) \in E(H) \rrbracket$};

\node (hn) at (\xc,\ya-2*\movea) {\large$(u,v) \in E(G) \Leftrightarrow f(A_{uv}^\ell) = 1$};

\end{tikzpicture}
    \caption{Schematic of an $(H,f)$-representation of a graph $G$ with labeling $\ell$}    
    \label{fig:sgreduction}                
\end{figure}

\begin{definition}[Subgraph Reduction]
	Given graph classes $\mathcal{C},\mathcal{D}$. We say $\mathcal{C} \sgreduction \mathcal{D}$  if there exist $c,k \in \N$ and a $k^2$-ary boolean function $f$ such that for all $n \in \N$ and $G \in \mathcal{C}_n$ there exists an $H \in \mathcal{D}_{n^c}$ such that $G$ has an $(H,f)$-representation. We say $\mathcal{C} \sgdreduction \mathcal{D}$ if this holds for a diagonal $f$. We write $[\mathcal{C}]_{\sgclosure}$ to denote the set of graph classes $\sgreduction$-reducible to $\mathcal{C}$.
\end{definition}

In the case of intersection graph classes the subgraph reduction can be simplified. Given two families of sets $X$ and $Y$ let $\mathcal{C}_X$ and $\mathcal{C}_Y$ denote the intersection graph classes that they induce. It holds that $\mathcal{C}_Y \sgreduction \mathcal{C}_X$ if there exists a $k \in \N$, a $k^2$-ary boolean function $f$ and a labeling $\ell \colon Y \rightarrow X^k$ such that for all $u \neq v \in Y$ it holds that $u \cap v \neq \emptyset \Leftrightarrow f(A_{uv}^\ell) = 1$ with $A_{u,v}^{\ell} = (\llbracket \ell(u)_i \cap \ell(v)_j \neq \emptyset \rrbracket)_{i,j \in [k]}$. For instance, let $X$ be the set of intervals on the real line and $Y = \set{\cup_{i=1}^k x_i }{ x_1,\dots,x_k \in X}$ for some $k \in \N$. This means $\mathcal{C}_X$ is the set of interval graphs and $\mathcal{C}_Y$ is the set of $k$-interval graphs. To reduce $\mathcal{C}_Y$ to $\mathcal{C}_X$ we can choose the labeling $\ell(y) = (x_1,\dots,x_k)$ with $y \in Y$ and $x_1,\dots,x_k \in X$ such that $y = \cup_{i=1}^k x_i$ and $f$ is the boolean function that returns one if at least one of its arguments is one. 

Let $\mathcal{C},\mathcal{D}$ be two graph classes such that $\mathcal{C} \sgreduction \mathcal{D}$ via $c,k \in \N$ and a $k^2$-ary boolean function $f$. In the definition of subgraph reductions it is required that every graph on $n$ vertices in $\mathcal{C}$ must have an $(H,f)$-representation for a graph $H \in \mathcal{D}$ on exactly $n^c$ vertices. The following lemma shows that if $\mathcal{D}$ satisfies some fairly weak conditions then this size restriction becomes obsolete.

Recall that a graph class $\mathcal{C}$ is called inflatable if for every graph $G$ on $n$ vertices in $\mathcal{C}$ and all $m > n$ there exists a graph on $m$ vertices in $\mathcal{C}$ which contains $G$ as induced subgraph.

\begin{lemma}
    Let $\mathcal{D}$ be a hereditary and inflatable graph class. It holds for all graph classes $\mathcal{C}$ that $\mathcal{C} \sgreduction \mathcal{D}$ iff there exists a $k \in \N$ and a $k^2$-boolean function $f$ such that every graph in $\mathcal{C}$ has an $(H,f)$-representation for some $H \in \mathcal{D}$.
    \label{lem:sgwoc}
\end{lemma}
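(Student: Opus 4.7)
The forward implication ($\Rightarrow$) is immediate from the definition: if $\mathcal{C} \sgreduction \mathcal{D}$ via constants $c, k$ and boolean function $f$, then the host graphs $H \in \mathcal{D}_{n^c} \subseteq \mathcal{D}$ already witness the weaker conclusion. All the content of the lemma is in the converse, which asks us to upgrade a bare existence statement into one with a polynomial size bound on the host.

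For the converse, my plan is the \emph{restrict-then-inflate} strategy. Given $G \in \mathcal{C}_n$ and an $(H_0, f)$-representation with labeling $\ell$ and $H_0 \in \mathcal{D}$ of arbitrary size, let $\Ima(\ell) = \bigcup_{v \in V(G)} \{\ell(v)_1, \dots, \ell(v)_k\} \subseteq V(H_0)$. Then $|\Ima(\ell)| \leq kn$. Since $\mathcal{D}$ is hereditary, $H_1 := H_0[\Ima(\ell)] \in \mathcal{D}$, and the same $\ell$ (now with codomain $V(H_1)^k$) is still an $(H_1, f)$-representation of $G$: every entry of $A_{uv}^\ell$ is an adjacency between two vertices of $\Ima(\ell)$, and these adjacencies are preserved under induced subgraphs. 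Next, since $\mathcal{D}$ is inflatable, for any $m > |V(H_1)|$ there is some $H_2 \in \mathcal{D}_m$ containing $H_1$ as induced subgraph. Viewing $\ell$ as a labeling $V(G) \to V(H_1)^k \subseteq V(H_2)^k$, the adjacency matrix $A_{uv}^\ell$ is again unchanged, so $G$ has an $(H_2, f)$-representation.

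It remains to choose the constant $c$ uniformly. I want $n^c \geq kn$ so that $m = n^c$ is a legal target for inflating $H_1$. Setting $c := 1 + \lceil \log_2 k \rceil$ gives $n^c \geq n \cdot 2^{c-1} \geq kn$ for every $n \geq 2$, which handles all but finitely many sizes. For $n = 1$, if $\mathcal{D}$ is empty then $\mathcal{C}$ must be empty by hypothesis and there is nothing to prove; otherwise, hereditariness supplies some $H \in \mathcal{D}_1$, and because the $(H,f)$-representation condition only quantifies over distinct pairs $u \neq v$, the constant labeling $\ell(u) = (v, \dots, v)$ trivially works on a one-vertex input. (If one prefers a cleaner treatment, replace the fixed $c$ by $\max(c, c')$ large enough to cover the finitely many small $n$; this remains a constant.)

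The main (and only nontrivial) obstacle is this matching of sizes. The restrict-then-inflate machinery is conceptually straightforward, but one has to verify carefully that both operations preserve the adjacency matrix $A_{uv}^\ell$ entry by entry so that the original $f$ still witnesses adjacency, and one has to deal with the regime $n^c < kn$ which is unavoidable for very small $n$. Once these wrinkles are absorbed into the constant $c$ and a trivial base case, the lemma follows.
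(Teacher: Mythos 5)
Your proof is correct and follows essentially the same restrict-then-inflate argument as the paper: pass to the induced subgraph of $H$ on the (at most $kn$) vertices in the image of $\ell$ using hereditariness, then use inflatability to grow it to a graph of the prescribed polynomial size, with the representation unchanged. The only difference is cosmetic — you pick $c = 1 + \lceil \log_2 k \rceil$ and treat $n=1$ explicitly, whereas the paper simply takes $c = k$.
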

\begin{proof}
    The direction ``$\Rightarrow$'' is clear. For the other direction consider a graph class $\mathcal{C}$ and let $f$ be a $k^2$-ary boolean function such that every graph in $\mathcal{C}$ has an $(H,f)$-representation for some $H \in \mathcal{D}$. We show that $\mathcal{C} \sgreduction \mathcal{D}$ via $c=k,k$ and $f$. Let $G$ be a graph in $\mathcal{C}$ with $n$ vertices. We show that there exists a graph $H''$ on $n^k$ vertices in $\mathcal{D}$ such that $G$ has an $(H'',f)$-representation. There exists a graph $H$ in $\mathcal{D}$ such that $G$ has an $(H,f)$-representation via a labeling $\ell \colon V(G) \rightarrow V(H)^k$. Let $V'$ be the set of vertices of $H$ that occur in the image of $\ell$. Let $H'$ be the induced subgraph of $H$ on $V'$. It holds that $H'$ has at most $kn$ vertices and $G$ has an $(H',f)$-representation via $\ell$. Since $\mathcal{D}$ is hereditary it holds that $H'$ is in it. Let $H''$ be a graph with $n^k$ vertices in $\mathcal{D}$ which contains $H'$ as induced subgraph. Since $\mathcal{D}$ is inflatable such a graph must exist. It follows that $G$ has an $(H'',f)$-representation via $\ell$. 
\end{proof}

\begin{lemma}
	$\sgreduction$ and $\sgdreduction$ are reflexive and transitive. 
\end{lemma}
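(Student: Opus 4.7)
The plan is: prove reflexivity via the identity labeling, and prove transitivity by composing the two witnessing representations with the boolean-matrix composition $f \circ g$ already defined on the previous page.

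For reflexivity, given a graph class $\mathcal{C}$, I would set $c := 1$, $k := 1$, $f(x) := x$, and for every $G \in \mathcal{C}_n$ take $H := G$ and $\ell := \mathrm{id}_{V(G)}$. Then $A^\ell_{uv}$ is the $1 \times 1$ matrix whose sole entry is $\llbracket (u, v) \in E(G) \rrbracket$ and $f$ returns that bit, giving a $(G, f)$-representation. Since a $1^2$-ary boolean function is vacuously diagonal, this simultaneously witnesses $\mathcal{C} \sgreduction \mathcal{C}$ and $\mathcal{C} \sgdreduction \mathcal{C}$.

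For transitivity, suppose $\mathcal{C} \sgreduction \mathcal{D}$ via $(c_1, k_1, f_1)$ and $\mathcal{D} \sgreduction \mathcal{E}$ via $(c_2, k_2, f_2)$; I would put $c := c_1 c_2$, $k := k_1 k_2$, and $f := f_1 \circ f_2$. For $G \in \mathcal{C}_n$ I would first pick a witness $H \in \mathcal{D}_{n^{c_1}}$ together with an $(H, f_1)$-representation $\ell_1 \colon V(G) \to V(H)^{k_1}$, and then $I \in \mathcal{E}_{(n^{c_1})^{c_2}} = \mathcal{E}_{n^c}$ together with an $(I, f_2)$-representation $\ell_2 \colon V(H) \to V(I)^{k_2}$ of $H$. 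I would define $\ell \colon V(G) \to V(I)^k$ by concatenation, identifying $[k]$ with $[k_1] \times [k_2]$ and setting $\ell(u)_{(i, i')} := \ell_2(\ell_1(u)_i)_{i'}$. The key observation is that for distinct $u, v \in V(G)$ the matrix $A^\ell_{uv}$ decomposes into a $k_1 \times k_1$ grid of $k_2 \times k_2$ blocks whose $(i, j)$-block is exactly $A^{\ell_2}_{\ell_1(u)_i, \ell_1(v)_j}$. Unfolding the composition then gives
\[ f(A^\ell_{uv}) = f_1\!\left( \bigl( f_2(A^{\ell_2}_{\ell_1(u)_i, \ell_1(v)_j}) \bigr)_{i, j \in [k_1]} \right) = f_1(A^{\ell_1}_{uv}) = \llbracket (u, v) \in E(G) \rrbracket, \]
with the middle equality using, entry by entry, that $\ell_2$ is an $(I, f_2)$-representation of $H$.

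For the diagonal refinement $\sgdreduction$ I would observe that if $f_1$ and $f_2$ are diagonal then so is $f_1 \circ f_2$: only the diagonal blocks of $A^\ell_{uv}$ enter $f_1$, only the diagonal entries of those blocks enter $f_2$, and these together exhaust the diagonal of the full $(k_1 k_2) \times (k_1 k_2)$ matrix. The main and essentially only obstacle will be the middle equality above in the collision case $\ell_1(u)_i = \ell_1(v)_j$ for distinct $u, v \in V(G)$: there $f_2$ is being applied to a diagonal matrix $A^{\ell_2}_{x, x}$ while the $(I, f_2)$-representation of $H$ constrains $f_2$ only at pairs of \emph{distinct} vertices of $H$. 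I would handle this exactly as in Section~\ref{sec:ar}: claims about reductions are read modulo the self-loop-insensitive equivalence $\equiv$, so this discrepancy is immaterial and the composed $\ell$ witnesses $\mathcal{C} \sgreduction \mathcal{E}$ (respectively $\mathcal{C} \sgdreduction \mathcal{E}$ when $f_1, f_2$ are diagonal).
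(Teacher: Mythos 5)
Your proof is correct and takes essentially the same route as the paper: reflexivity via the identity $(G,f)$-representation with $c=k=1$ and $f(x)=x$, transitivity via the composed boolean function and the concatenated labeling $\ell(u)_{(i,i')}=\ell_2(\ell_1(u)_i)_{i'}$, and the observation that a composition of diagonal functions is diagonal — the paper states exactly this and leaves the block-matrix verification you spell out as "it can be verified".

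The one place you go beyond the paper is the collision case $\ell_1(u)_i=\ell_1(v)_j$, and there your patch is not quite the right one: the $\equiv$-convention of Section~\ref{sec:ar} licenses ignoring self-loops of the \emph{represented} graph (pairs $(u,u)$ of $G$, which the definition of an $(H,f)$-representation never queries), whereas the discrepancy you identified concerns self-loop entries of the \emph{host} graph $H$ appearing in $A^{\ell_1}_{uv}$ for distinct $u,v$, about which that convention says nothing. The paper's own proof silently glosses over the same corner case, so this does not separate your argument from the paper's, but as written the appeal to $\equiv$ does not actually close the gap you raised; it would be more honest to either note the issue as an implicit convention about host-graph self-loops or to argue it away directly.
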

\begin{proof}
    For reflexivity it suffices to show that $\mathcal{C} \sgdreduction \mathcal{C}$. This holds because every $G \in \mathcal{C}$ has a $(G,f)$-representation with $f(x) = x$ ($c = k = 1$) and $f$ is diagonal.
    
    For transitivity assume that $\mathcal{C} \sgreduction \mathcal{D}$ via $c,k \in \N$ and a $k^2$-ary boolean function $f$  and $\mathcal{D} \sgreduction \mathcal{E}$ via $d,l  \in \N$ and an $l^2$-ary boolean function $g$. We show that $\mathcal{C} \sgreduction \mathcal{E}$ via $cd,kl$ and the $(kl)^2$-ary boolean function $f \circ g$. Given $G \in \mathcal{C}_n$ we need to show that there exists an $I \in \mathcal{E}_{n^{cd}}$ such that $G$ has an $(I,f \circ g)$-representation. Since $G \in \mathcal{C}_n$ there exist an $H \in \mathcal{D}_{n^c}$ such that $G$ has an $(H,f)$-representation via a labeling $\ell \colon V(G) \rightarrow V(H)^k$. Also, there exists an $I \in \mathcal{E}_{n^{cd}}$ such that $H$ has an $(I,g)$-representation via a labeling $\ell' \colon V(H) \rightarrow V(I)^l$. Then it can be verified that $G$ has an $(I,f \circ g)$-representation due to the following labeling $\ell'' \colon V(G) \rightarrow V(I)^{kl}$. For $u \in V(G)$ let $\ell(u) = (u_1,\dots,u_k)$ and let $\ell'(u_i) = (u_{i,1},\dots,u_{i,l})$ for $i \in [k]$. We define $\ell''(u)$ as $(u_{1,1},\dots,u_{1,l},u_{2,1},\dots,u_{2,l},\dots,u_{k,1},\dots,u_{k,l})$.
    The same argument shows that $\sgdreduction$ is transitive because the composition of two diagonal boolean functions yields a diagonal boolean function.  
\end{proof}

\begin{lemma}
    The class $\gccac$ is closed under $\sgreduction$.
    \label{lem:sgacclosure}
\end{lemma}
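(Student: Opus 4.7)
The plan is to build a labeling scheme for $\mathcal{C}$ by combining the labeling scheme for $\mathcal{D}$ with the subgraph reduction in a direct, composable way, and then to verify that the resulting label decoder is still computable by a constant-depth, polynomially sized circuit family.

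Suppose $\mathcal{D} \in \gccac$ via a labeling scheme $(F_L, c')$ with $L \in \AC^0$, and $\mathcal{C} \sgreduction \mathcal{D}$ via parameters $c, k \in \N$ and a $k^2$-ary boolean function $f$. Given $G \in \mathcal{C}_n$, fix an $H \in \mathcal{D}_{n^c}$ and a mapping $\ell \colon V(G) \rightarrow V(H)^k$ witnessing the $(H,f)$-representation of $G$, together with a labeling $\ell_H \colon V(H) \rightarrow \{0,1\}^{c' \log n^c}$ witnessing $H \in \gr{(F_L, c')}$. Define the labeling of $G$ by
\[ \ell_G(u) = \ell_H(\ell(u)_1)\,\ell_H(\ell(u)_2)\cdots \ell_H(\ell(u)_k), \]
which has length $k \cdot c' \log n^c = kcc' \log n$, so the label length is the constant $kcc'$.

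Let $L' \subseteq \{0,1\}^*$ be the language whose induced decoder $F_{L'}$ does the following on input $(x_1\cdots x_k,\, y_1\cdots y_k)$ with each $x_i, y_j \in \{0,1\}^{cc' \log n}$: compute $a_{ij} = \llbracket (x_i, y_j) \in F_L \rrbracket$ for every $i,j \in [k]$, and accept iff $f(a_{1,1},\dots,a_{k,k}) = 1$. By the defining property of the $(H,f)$-representation, $(u,v) \in E(G) \Leftrightarrow (\ell_G(u), \ell_G(v)) \in F_{L'}$ for all $u \neq v$, so $\mathcal{C} \subseteq \gr{(F_{L'}, kcc')}$. (Self-loops can be handled arbitrarily, as discussed before Corollary~\ref{corol:bfcompl}.)

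It remains to verify $L' \in \AC^0$. A decoder circuit for input length $2kcc' \log n$ consists of $k^2$ parallel copies of the $\AC^0$ circuit for $L$ on input length $2cc' \log n$, each fed the appropriate pair $(x_i, y_j)$ of constant-fraction substrings, with their outputs feeding a constant-size circuit computing $f$. Since $k$ and $f$ are fixed, this adds only constant depth and polynomial size, and logspace-uniformity is preserved because extracting the substrings $x_i, y_j$ and wiring the $f$-subcircuit are trivial logspace computations on top of the logspace-uniform family for $L$. The only mild subtlety, which will be the main thing to be careful about, is to align input lengths: labels of $G$ on $n$ vertices correspond via $\ell_H$ to labels for $H$ on $n^c$ vertices, so one must apply the circuit family for $L$ at parameter $n^c$ rather than $n$; since $c$ is constant this changes input length by a constant factor and both the $\AC^0$ and the logspace-uniformity conditions are unaffected. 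Hence $L' \in \AC^0$ and $\mathcal{C} \in \gccac$.
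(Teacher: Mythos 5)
Your proposal is correct and follows essentially the same route as the paper's proof: label each vertex of $G$ with the concatenation of the $\mathcal{D}$-scheme labels of its $k$ image vertices in $H$, and let the new decoder evaluate $F$ on all $k^2$ block pairs and feed the results into $f$. You spell out the $\AC^0$ circuit construction and uniformity/input-length alignment in somewhat more detail than the paper, which simply asserts that $F'$ is $\AC^0$-computable since $F$ is, but the argument is the same.
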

\begin{proof}
    Let $\mathcal{C} \sgreduction \mathcal{D}$ and let $\mathcal{D} \in \gccac$. We need to show that $\mathcal{C} \in \gccac$. 
Let $\mathcal{D} \in \gccac$ via the labeling scheme $S=(F,c)$ and $\mathcal{C} \sgreduction \mathcal{D}$ via $d,k \in \N$ and a $k^2$-ary boolean function $f$. Then we claim that the labeling scheme $S' = (F',cdk)$ with 
$$ (x_1 \dots x_k,y_1 \dots y_k) \in F' \Leftrightarrow 
f \left( 
\begin{matrix}
\llbracket (x_1,y_1) \in F \rrbracket &   \llbracket (x_1,y_2) \in F \rrbracket & \dots \\
\vdots &  \ddots 	&  \\
\llbracket (x_k,y_1) \in F \rrbracket &  			& \llbracket (x_k,y_k) \in F \rrbracket
\end{matrix}
\right)  
= 1 $$   
for all $x_i,y_i \in \{0,1\}^{cdm}$ with $m \in \N$ and $i \in [k]$
represents $\mathcal{C}$ and $F'$ can be computed in $\AC^0$ since $F$ can be computed in $\AC^0$. We show how to label a given $G \in \mathcal{C}_n$. Since $\mathcal{C} \sgreduction \mathcal{D}$ there exist an $H \in \mathcal{D}_{n^d}$ such that $G$ has an $(H,f)$-representation via the labeling $\ell_G \colon V(G) \rightarrow V(H)^k$. Since $\mathcal{D}$ is represented by $S$ there also exists a labeling $\ell_H \colon V(H) \rightarrow \{0,1\}^{c \log n^d}$ such that $H$ is represented by $S$ via $\ell_H$. Then a vertex $u$ of $G$ can be labeled with $\ell_H(u_1) \dots \ell_H(u_k)$  where $u_i$ is the $i$-th component of $\ell_G(u)$.
\end{proof}

\begin{corollary}
    The classes $\gccl, \gccp, \gccexp, \gccr$ and $\gccall$ are closed under $\sgreduction$.
\end{corollary}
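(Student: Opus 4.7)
The plan is to replay verbatim the construction from the proof of Lemma~\ref{lem:sgacclosure}: given $\mathcal{C} \sgreduction \mathcal{D}$ via constants $d, k$ and a $k^2$-ary boolean function $f$, and given $\mathcal{D} \in \ccg \ccex$ via the labeling scheme $S = (F, c)$, define $S' = (F', cdk)$ by the same formula
\[
(x_1 \dots x_k, y_1 \dots y_k) \in F' \iff f\bigl((\llbracket (x_i, y_j) \in F \rrbracket)_{i, j \in [k]}\bigr) = 1.
\]
The labeling argument showing $\mathcal{C} \subseteq \gr{S'}$ is independent of $\ccex$, so the only thing left to verify, class by class, is that $F' \in \ccex$ whenever $F \in \ccex$.

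The key observation is that $k$, $d$, and $f$ are constants that depend only on the reduction and not on the input length. Therefore computing $F'$ on input of length $2cdkm$ reduces to (i) locating the $k^2$ subword pairs $(x_i, y_j)$, each of length $2cdm$, (ii) evaluating $F$ on each of them, and (iii) applying the fixed boolean function $f$ to the resulting $k^2$-bit vector. Step (iii) is a constant-size computation. Steps (i) and (ii) contribute $k^2$ invocations of $F$ plus a constant amount of indexing overhead.

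For $\P$ and $\EXP$ the conclusion is immediate: $k^2$ is constant, and a polynomial (resp.~exponential) time bound is closed under constant-factor repetition and simple input slicing. For $\R$, decidability is trivially preserved under constant-many oracle queries to $F$ and a constant boolean postprocessing. For $\ComplexityFont{ALL}$ there is no constraint to check. For $\ComplexityFont{L}$ one has to be slightly more careful: I would iterate over the index pairs $(i, j) \in [k]^2$ using $O(1)$ bits of counter, for each pair simulate the logspace machine for $F$ on $(x_i, y_j)$ (using input pointers offset by $(i-1)cdm$ and $(j-1)cdm$, which takes $O(\log m)$ bits), store the resulting bit in one of the $k^2$ workspace cells, and finally evaluate $f$ on those $k^2$ bits. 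Since the $F$-computations are performed sequentially and their workspace can be reused, the total space is $O(\log m) + O(1)$.

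The only mildly delicate step is the logspace case, where one must ensure that the $k^2$ subword offsets and the slot indices of the matrix are all stored with $O(\log m)$ bits and that the $F$-subroutines do not clobber each other; this is handled by running them one at a time with a shared workspace. All remaining verifications are routine, so the corollary follows directly from Lemma~\ref{lem:sgacclosure}'s construction combined with these uniform computability observations.
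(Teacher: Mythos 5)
Your proposal is correct and takes essentially the same route as the paper, which simply observes that the construction from Lemma~\ref{lem:sgacclosure} carries over verbatim to each of these classes; your per-class verifications (including the logspace bookkeeping) just spell out the details the paper leaves implicit.
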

\begin{proof}
    Observe that for all these complexity classes the same argument as the one given for $\gccac$ in Lemma \ref{lem:sgacclosure} works. 
\end{proof}

\begin{lemma}
        Let $\mathcal{C},\mathcal{D}$ be graph classes and $\mathcal{D}$ is in $\cogcsmallh$. If $\mathcal{C} \sgreduction \mathcal{D}$ holds then $\mathcal{C}$ is small.
\end{lemma}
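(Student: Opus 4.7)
The plan is to bound $|\mathcal{C}_n|$ by $2^{\mathcal{O}(n \log n)}$ directly, by observing that although the witness graph $H$ has $n^c$ vertices, only the vertices appearing in the image of the labeling $\ell$ matter; and there are at most $kn$ of these. Fix a small hereditary graph class $\mathcal{D}'$ with $\mathcal{D} \subseteq \mathcal{D}'$, and let $c, k \in \N$ and the $k^2$-ary boolean function $f$ witness $\mathcal{C} \sgreduction \mathcal{D}$.

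For each graph $G \in \mathcal{C}_n$, pick a witness $H \in \mathcal{D}_{n^c}$ and labeling $\ell \colon V(G) \rightarrow V(H)^k$. Let $V' \subseteq V(H)$ be the set of vertices that appear as some component of $\ell(u)$ for some $u \in V(G)$, and let $H' = H[V']$. Then $|V'| \leq kn$, and since $\mathcal{D}'$ is hereditary, $H' \in \mathcal{D}'$. Moreover $G$ still has an $(H', f)$-representation via $\ell$ (viewed as a map into $(V')^k$). Hence the unlabeled graph $G$ is determined up to isomorphism by the pair $(H', \ell)$ together with the fixed data $(k, f)$.

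Now I count such pairs. Up to relabeling we may assume $V' = [m]$ for some $m \leq kn$. The number of labeled graphs on $[m]$ in $\mathcal{D}'$ is at most $|\mathcal{D}'_m| \cdot m! \leq 2^{\mathcal{O}(m \log m)}$ since $\mathcal{D}'$ is small. The number of labelings $\ell \colon [n] \to [m]^k$ is $m^{kn} \leq (kn)^{kn} = 2^{\mathcal{O}(n \log n)}$. Using $m \leq kn$, both factors are $2^{\mathcal{O}(n \log n)}$, and summing over $m \in [kn]$ only adds a polynomial factor. Therefore
\[
|\mathcal{C}_n| \leq \sum_{m=1}^{kn} 2^{\mathcal{O}(m \log m)} \cdot m^{kn} \leq 2^{\mathcal{O}(n \log n)},
\]
which is exactly what it means for $\mathcal{C}$ to be small.

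The only subtle point (and the main thing to be careful with) is the labeled-vs-unlabeled bookkeeping: I am counting unlabeled $G$'s by exhibiting an injection into pairs $(H', \ell)$ where $H'$ is a \emph{labeled} graph on $[m]$, which is why the factor $m!$ appears in the count of graphs in $\mathcal{D}'$. Everything else is a routine estimate, and crucially no hypothesis beyond smallness and heredity of the supergraph class $\mathcal{D}'$ is needed — we do not require $\mathcal{D}$ itself to be hereditary or inflatable, unlike in Lemma~\ref{lem:sgwoc}.
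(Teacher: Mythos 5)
Your proof is correct and follows essentially the same route as the paper's: restrict attention to the induced subgraph $H'$ on the at most $kn$ vertices appearing in the image of $\ell$, use smallness and heredity (of a small hereditary superclass) to bound the number of choices for $H'$ by $n^{\mathcal{O}(n)}$, and bound the number of labelings by $(kn)^{kn} \leq n^{\mathcal{O}(n)}$. Your explicit passage to the superclass $\mathcal{D}'$ and the $m!$ labeled-versus-unlabeled factor merely make precise bookkeeping that the paper leaves implicit, without changing the argument.
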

\begin{proof}
    Let $\mathcal{C} \sgreduction \mathcal{D}$ via $c,k \in \N$ and a $k^2$-ary boolean function $f$ and $\mathcal{D}$ is small and hereditary. A graph $G$ with $n$ vertices in $\mathcal{C}$ is determined by a graph $H$ with $n^c$ vertices in $\mathcal{D}$ and a labeling $\ell \colon V(G) \rightarrow V(H)^k$. Observe that the adjacency relation of $G$ only depends on an induced subgraph $H'$ of $H$ with at most $kn$ vertices (every vertex of $H$ that occurs in the image of $\ell$). Since $\mathcal{D}$ is small and hereditary the number of options for $H'$ is limited by 
    $$ \sum_{i=1}^{kn} |\mathcal{D}_i| \leq \sum_{i=1}^{kn} i^{\mathcal{O}(i)} \leq kn \cdot  kn^{\mathcal{O}(kn)} \leq n^{\mathcal{O}(n)}  $$ 
    The number of different labelings $\ell \colon V(G) \rightarrow V(H')^k$ is limited by ${(kn)}^{kn} \leq n^{\mathcal{O}(n)}$. Therefore $\mathcal{C}$ is small.        
\end{proof}

\begin{lemma}
    The class $\cogcsmallh$ is closed under $\sgreduction$.
\end{lemma}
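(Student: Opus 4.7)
The plan is to show that for $\mathcal{C} \sgreduction \mathcal{D}$ with $\mathcal{D} \in \cogcsmallh$, one can exhibit a concrete small and hereditary superset of $\mathcal{C}$. The previous lemma already establishes smallness, so the key additional work is to enlarge $\mathcal{C}$ to a hereditary class while preserving smallness.

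Suppose $\mathcal{C} \sgreduction \mathcal{D}$ via $c, k \in \N$ and a $k^2$-ary boolean function $f$, and let $\mathcal{D}' \supseteq \mathcal{D}$ be a small hereditary class witnessing $\mathcal{D} \in \cogcsmallh$. Define
$$
\mathcal{E} = \set{G \in \setofgraphs}{\text{there exist } H \in \mathcal{D}' \text{ and } \ell \colon V(G) \to V(H)^k \text{ such that $G$ has an $(H,f)$-rep.\ via $\ell$}},
$$
with \emph{no} restriction on the size of $H$. The goal is to verify (i) $\mathcal{C} \subseteq \mathcal{E}$, (ii) $\mathcal{E}$ is hereditary, (iii) $\mathcal{E}$ is small. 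Together these give $\mathcal{C} \in \cogcsmallh$.

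For (i), any $G \in \mathcal{C}_n$ has by hypothesis an $(H,f)$-representation for some $H \in \mathcal{D}_{n^c} \subseteq \mathcal{D}'$, hence $G \in \mathcal{E}$. For (ii), if $G \in \mathcal{E}$ via $H \in \mathcal{D}'$ and $\ell$, then for any $V' \subseteq V(G)$ the restriction $\ell|_{V'} \colon V' \to V(H)^k$ witnesses $G[V'] \in \mathcal{E}$ with the same $H$ and $f$; hereditariness is immediate from the pointwise nature of the $(H,f)$-condition.

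The main step is (iii), but it is essentially a repetition of the counting argument used in the previous lemma, now applied to $\mathcal{D}'$ instead of $\mathcal{D}$. For $G \in \mathcal{E}_n$, the adjacency of $G$ is fully determined by (a) the induced subgraph $H'$ of $H$ on the image of $\ell$, which has at most $kn$ vertices and lies in $\mathcal{D}'$ since $\mathcal{D}'$ is hereditary, and (b) the labeling $\ell \colon V(G) \to V(H')^k$. The number of choices for $H'$ (as a labeled graph on some $i \leq kn$ vertices) is at most $\sum_{i=1}^{kn} i! \cdot |\mathcal{D}'_i|$, and since $\mathcal{D}'$ is small ($|\mathcal{D}'_i| \leq i^{\mathcal{O}(i)}$) this is $n^{\mathcal{O}(n)}$; the number of labelings $\ell$ is at most $(kn)^{kn} = n^{\mathcal{O}(n)}$. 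Multiplying, $|\mathcal{E}_n| \leq n^{\mathcal{O}(n)}$, so $\mathcal{E}$ is small. I do not expect any real obstacle here; the only mildly delicate point is making sure the counting is over labeled representatives and that hereditariness of $\mathcal{D}'$ is genuinely used to keep $H'$ inside the small class, but both are routine.
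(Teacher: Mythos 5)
Your proposal is correct, but it takes a different route from the paper. The paper argues by contradiction: it passes to the hereditary closure $\mathcal{C}'$ of $\mathcal{C}$ and shows that $\mathcal{C}'$ minus a finite exception class is itself $\sgreduction$-reducible to $\mathcal{D}$, which forces it to wrestle with the exact-size requirement in the definition of $\sgreduction$ (the host graph must have exactly $(n')^c$ vertices, hence the condition $kn_0 \leq n_0^c$ and the deletion of surplus vertices of $H$ while preserving the image of the restricted labeling); it then invokes the preceding lemma to conclude smallness. You instead exhibit a concrete small hereditary superset $\mathcal{E}$ of $\mathcal{C}$ directly, by dropping the size restriction on $H$ altogether; crucially, $\mathcal{E}$ need not itself be $\sgreduction$-reducible to $\mathcal{D}$, it only has to be hereditary and small, and both are immediate: heredity because restricting the labeling witnesses induced subgraphs with the same $H$ and $f$, and smallness by the same counting as in the preceding lemma (only the at most $kn$ vertices of $H$ hit by the labeling matter, and they induce a graph in $\mathcal{D}'$ by heredity of $\mathcal{D}'$). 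This buys a cleaner argument with no contradiction, no finite exception class, and no vertex-count bookkeeping, at the modest cost of re-running the counting argument rather than citing the previous lemma as a black box (indeed your $\mathcal{C}\subseteq\mathcal{E}$ with $\mathcal{E}$ small reproves that lemma in passing). The only points left implicit, both routine, are that $\mathcal{E}$ is isomorphism-closed (the $(H,f)$-representation condition is isomorphism-invariant) and that self-loops, which an $(H,f)$-representation does not determine, contribute at most a factor $2^n$ to the count, which is absorbed in $n^{\mathcal{O}(n)}$ and is anyway ignored by the paper's own convention for reductions.
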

\begin{proof}
    Let $\mathcal{C},\mathcal{D}$ be graph classes such that $\mathcal{C} \sgreduction \mathcal{D}$ via $c,k \in \N$ and a $k^2$-ary boolean function $f$ and $\mathcal{D}$ is in $\cogcsmallh$. For the sake of contradiction let us assume that $\mathcal{C}$ is not a subset of a small and hereditary graph class. Let $\mathcal{C}'$ be the closure of $\mathcal{C}$ under induced subgraphs. The class $\mathcal{C}'$ cannot be small since this would imply that $\mathcal{C}$ is in $\cogcsmallh$. We argue that there exists a finite graph class $\mathcal{E}$ such that $\mathcal{C}' \setminus \mathcal{E} \sgreduction \mathcal{D}$ via $c,k$ and $f$. From the previous lemma it follows that $\mathcal{C}' \setminus \mathcal{E}$ is small. Since $\mathcal{E}$ is only finite it follows that $\mathcal{C}'$ is also small, which is a contradiction. Let $n_0$ be the smallest number such that $kn_0 \leq n_0^c$ and $\mathcal{E}$ is the class of all graphs with at most $n_0$ vertices. Let $G' \in \mathcal{C}' \setminus \mathcal{E}$ with $n'$ vertices. We need to show that there exists a graph $H' \in \mathcal{D}$ on ${(n')}^c$ vertices such that $G'$ has an $(H',f)$-representation. There exists a $G \in \mathcal{C}$ on $n$ vertices such that $G'$ is an induced subgraph of $G$ and an $H \in \mathcal{D}_{n^c}$ such that $G$ has an $(H,f)$-representation via a labeling $\ell \colon V(G) \rightarrow V(H)^k$. Therefore $G'$ has an $(H,f)$-representation via the labeling $\ell' \colon V(G') \rightarrow V(H)^k$  defined as the restriction of $\ell$ to vertices from $G'$. Since $G'$ is an induced subgraph of $G$ it holds that $n' \leq n$ and therefore ${(n')}^c \leq n^c$. Additionally, since $G' \notin \mathcal{E}$ it holds that $kn' \leq {(n')}^c$. To construct the desired graph $H' \in \mathcal{D}$ with ${(n')}^c$ vertices such that $G'$ has an $(H',f)$-representation one can delete vertices from $H$ until only ${(n')}^c$ vertices remain but every vertex in the image of $\ell'$ is still in $H'$.
\end{proof}

\begin{corollary}
    There exists no small, hereditary graph class that is complete for $\gccac$ (or a superset thereof) with respect to $\sgreduction$.
\end{corollary}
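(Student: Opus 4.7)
The plan is to adapt the argument used in the proof of the analogous statement for $\bfreduction$-completeness, now relying on the closure of $\cogcsmallh$ under $\sgreduction$ established in the immediately preceding lemma. The key observation is that completeness plus closure of a containing class forces an inclusion that we have already ruled out.

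First, I would assume for contradiction that there exists a small, hereditary graph class $\mathcal{C}$ which is $\sgreduction$-complete for $\gccac$ (or some superset $\mathbb{A} \supseteq \gccac$). By definition of completeness, every graph class $\mathcal{D} \in \gccac$ satisfies $\mathcal{D} \sgreduction \mathcal{C}$. Since $\mathcal{C}$ is small and hereditary, it trivially belongs to $\cogcsmallh$. By the preceding lemma, $\cogcsmallh$ is closed under $\sgreduction$, so $\mathcal{D} \in \cogcsmallh$ for every $\mathcal{D} \in \gccac$. This yields $\gccac \subseteq \cogcsmallh$, directly contradicting Theorem~\ref{thm:acsmallh}, which established $\gccac \not\subseteq \cogcsmallh$.

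The argument for supersets of $\gccac$ is immediate, since if $\gccac \subseteq \mathbb{A}$ and every class in $\mathbb{A}$ reduces to $\mathcal{C}$, then in particular every class in $\gccac$ reduces to $\mathcal{C}$, and the same contradiction applies. There is no real obstacle here; the heavy lifting was done in proving Theorem~\ref{thm:acsmallh} (the padding argument exhibiting a graph class in $\gccac$ whose hereditary closure covers all graphs) and the preceding closure lemma (whose bound $|\mathcal{C}_n| \leq n^{\mathcal{O}(n)}$ was obtained by counting induced subgraphs of $\mathcal{D}$ on at most $kn$ vertices and the possible labelings into them). The present statement is essentially a one-line corollary combining these two facts, exactly parallel to the $\bfreduction$ version proved earlier.
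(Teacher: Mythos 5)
Your proof is correct and follows essentially the same route as the paper: assume a small, hereditary $\sgreduction$-complete class exists, note it lies in $\cogcsmallh$, invoke the closure of $\cogcsmallh$ under $\sgreduction$ to conclude $\gccac \subseteq \cogcsmallh$, and contradict Theorem~\ref{thm:acsmallh}. The handling of supersets of $\gccac$ by restricting to the classes in $\gccac$ is also exactly what the paper intends.
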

\begin{proof}
    $\gccac$ is no subset of $\cogcsmallh$ due to Theorem \ref{thm:acsmallh}. If a small and hereditary graph class is complete for $\gccac$ w.r.t.~$\sgreduction$ then this would imply that $\gccac$  is a subset of $\cogcsmallh$ since $\cogcsmallh$ is closed under $\sgreduction$.
\end{proof}

Recall that a graph class $\mathcal{C}$ is called self-universal if for every finite subset of $\mathcal{C}$ there exists a graph in $\mathcal{C}$ that contains every graph of that finite subset as induced subgraph. 

\begin{lemma}
    Let $\mathcal{D}$ be a self-universal and inflatable graph class. Then it holds that $\mathcal{C} \bfreduction \mathcal{D}$ implies $\mathcal{C} \sgdreduction \mathcal{D}$ for all graph classes $\mathcal{C}$. 
    \label{lem:bftosg}
\end{lemma}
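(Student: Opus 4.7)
Given a $k$-ary boolean function $f \in \clonebf$ witnessing $\mathcal{C} \bfreduction \mathcal{D}$, for each $G \in \mathcal{C}_n$ there are $H_1, \dots, H_k \in \mathcal{D}$ on vertex set $V(G)$ such that $G$ agrees with $f(H_1, \dots, H_k)$ on all distinct pairs. The plan is to amalgamate $H_1, \dots, H_k$ into a single host graph $H \in \mathcal{D}$ via self-universality, and then to read their adjacencies off the main diagonal of the induced label-matrix using a diagonal boolean function. Concretely, applying self-universality of $\mathcal{D}$ to the finite family $\{H_1, \dots, H_k\}$ yields $H \in \mathcal{D}$ together with induced-subgraph embeddings $\phi_i \colon V(G) \to V(H)$ for $i \in [k]$; these embeddings may overlap, which turns out to be harmless for the construction.

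Define the labeling $\ell \colon V(G) \to V(H)^k$ by $\ell(u) = (\phi_1(u), \dots, \phi_k(u))$ and the diagonal $k^2$-ary boolean function $g$ by $g(A) = f(a_{1,1}, a_{2,2}, \dots, a_{k,k})$. For $u \neq v$, the $i$-th diagonal entry of $A_{uv}^{\ell}$ is $\llbracket (\phi_i(u), \phi_i(v)) \in E(H) \rrbracket$, and since $\phi_i$ is an induced embedding of $H_i$ into $H$ this equals $\llbracket (u,v) \in E(H_i) \rrbracket$. Consequently $g(A_{uv}^{\ell}) = f(\ldots) = \llbracket (u,v) \in E(G) \rrbracket$, so $G$ has an $(H,g)$-representation with $g$ diagonal by construction. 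To meet the size requirement $|V(H)| = n^c$ in the definition of $\sgdreduction$, invoke inflatability of $\mathcal{D}$ in the same manner as in the proof of Lemma~\ref{lem:sgwoc}: grow $H$ to a larger graph $H'' \in \mathcal{D}_{n^c}$ and pull the labeling back through the induced inclusion $V(H) \hookrightarrow V(H'')$.

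The main obstacle is controlling the size of the self-universal witness $H$: without hereditariness we cannot trim $H$ to the subgraph induced by $\bigcup_i \phi_i(V(G))$ as done in the proof of Lemma~\ref{lem:sgwoc}, so one must separately arrange $|V(H)|$ to be polynomially bounded in $n$ before inflatability can adjust the size to exactly $n^c$. Once a polynomial-size universal witness is secured, the diagonal function $g$ and labeling $\ell$ described above complete the $\sgdreduction$-reduction.
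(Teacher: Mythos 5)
Your construction is essentially identical to the paper's: define the diagonal $k^2$-ary function from $f$ via the main diagonal, use self-universality of $\mathcal{D}$ to embed $H_1,\dots,H_k$ into a single host $H\in\mathcal{D}$, label each vertex by the $k$-tuple of its images under the embeddings, and invoke inflatability to reach the exact vertex count demanded by the definition of $\sgdreduction$. The ``main obstacle'' you flag (controlling the size of the self-universal witness before inflating) is not handled any differently in the paper, which simply asserts that self-universality together with inflatability yields a suitable host on $n^k$ vertices, so your proposal matches the paper's argument in both approach and substance.
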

\begin{proof}
Since $\mathcal{C} \bfreduction \mathcal{D}$ there exists a $k$-ary boolean function $f$ for some $k \in \N$ such that $\mathcal{C} \subseteq f(\mathcal{D},\dots,\mathcal{D})$. Let $f'$ be defined as the $k^2$-ary diagonal boolean function $f'(A) = f(A_{1,1},\dots,A_{k,k})$. To show that $\mathcal{C} \sgdreduction \mathcal{D}$ we argue that for all $n \in \N$ every graph $G \in \mathcal{C}_n$ has an $(H,f')$-representation for some $H \in \mathcal{D}_{n^k}$. It holds that $G = f(H_1,\dots,H_k)$ for some $H_1,\dots,H_k \in \mathcal{D}_n$ and $G,H_1,\dots,H_k$ all have the same vertex set $V$. 	
Since $\mathcal{D}$ is self-universal and inflatable there exists a graph $H \in \mathcal{D}_{n^k}$ which contains $H_1,\dots,H_k$ as induced subgraphs.	 
Let $\pi_i \colon V \rightarrow V(H)$ be the witness that shows that $H_i$ is an induced subgraph of $H$ for $i \in [k]$.  Then $G$ has an $(H,f)$-representation via the labeling $\ell(u) = (\pi_1(u),\dots,\pi_k(u))$.    
\end{proof}

\begin{fact}
    Let $\mathcal{D}$ be an intersection graph class. For all graph classes $\mathcal{C}$ it holds that $\mathcal{C} \bfreduction \mathcal{D}$ iff $\mathcal{C} \sgdreduction \mathcal{D}$.
\end{fact}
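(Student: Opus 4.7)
The forward direction, $\mathcal{C} \bfreduction \mathcal{D} \Rightarrow \mathcal{C} \sgdreduction \mathcal{D}$, is immediate from Lemma~\ref{lem:bftosg}: the preliminaries record that every intersection graph class is hereditary, self-universal, and inflatable, which are exactly the hypotheses of that lemma.

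For the converse, suppose $\mathcal{C} \sgdreduction \mathcal{D}$ with parameters $c, k \in \N$ and a diagonal $k^2$-ary boolean function $f$; write $f(A) = g(A_{1,1}, \dots, A_{k,k})$ for some $k$-ary boolean function $g$. Fix $G \in \mathcal{C}_n$ together with an $(H,f)$-representation $\ell \colon V(G) \to V(H)^k$ with $H \in \mathcal{D}_{n^c}$. For each coordinate $i \in [k]$ I would form the \emph{pullback graph} $H_i$ on vertex set $V(G)$ given by $u \sim_{H_i} v \iff \ell_i(u) \sim_H \ell_i(v)$ for distinct $u, v$. The diagonal condition on $f$ then yields $G \equiv g(H_1, \dots, H_k)$, so a $\bfreduction$-witness with boolean function $g$ follows as soon as every $H_i$ lies in $\mathcal{D}$.

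The key step is this membership claim. Writing $\mathcal{D} = \mathcal{C}_{\mathcal{F}}$ and $H = G_X$ with $X \subseteq \mathcal{F}$, let $S_w \in X$ be the set attached to vertex $w \in V(H)$. The natural candidate for $H_i$ is the intersection graph of the family $(S_{\ell_i(u)})_{u \in V(G)}$ of sets drawn from $\mathcal{F}$. When $\ell_i$ is injective this is literally an induced subgraph of $H$ and lies in $\mathcal{D}$ by heredity.

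The main obstacle I expect is the non-injective case: if $\ell_i(u) = \ell_i(v) = w$ for distinct $u, v$, then in $H_i$ the two vertices are non-adjacent (since $H$ carries no self-loop at $w$), whereas the intersection graph obtained by repeating the set $S_w$ makes them adjacent. To reconcile this I plan to decompose $H_i = \hat{H}_i \wedge \neg T_i$, where $\hat{H}_i$ is the adjacent-twin blow-up of $H$ along $\ell_i$ (realized as an intersection graph by taking indexed duplicates of the sets $S_{\ell_i(u)}$, hence in $\mathcal{D}$) and $T_i$ is the equivalence graph of $\ell_i$ -- a disjoint union of cliques realized in $\mathcal{D}$ by placing each equivalence class on its own disjoint ground set with a shared common element. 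Substituting this into $g(H_1, \dots, H_k)$ and invoking Lemma~\ref{lem:compequiv} produces a $2k$-ary boolean function $g'$ with $G = g'(\hat{H}_1, T_1, \dots, \hat{H}_k, T_k)$, which witnesses $\mathcal{C} \bfreduction \mathcal{D}$.
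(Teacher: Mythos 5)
Your forward direction is exactly the paper's: intersection graph classes are hereditary, self-universal and inflatable, so Lemma~\ref{lem:bftosg} applies. The problem is in the converse, and it is rooted in the self-loop convention. The paper's proof explicitly uses that every vertex of an intersection graph has a self-loop (its set intersects itself), so when $\ell_i(u)=\ell_i(v)=w$ for distinct $u,v$ the diagonal entry $\llbracket (w,w)\in E(H)\rrbracket$ evaluates to $1$; your pullback analysis assumes it is $0$. This is not a cosmetic choice: under your reading the equivalence you are trying to prove fails. Take $\mathcal{D}$ to be the class of complete graphs, which is an intersection class (take $\mathcal{F}=\{\{0,i\}:i\in\N\}$, all sets sharing the ground element $0$). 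Without self-loops the diagonal entries read $\llbracket \ell_i(u)\neq\ell_i(v)\rrbracket$, so by repeating labels one represents, e.g., $K_{2,2}$ (label each side of the bipartition with one vertex of $K_{n^c}$, $k=1$, $f(x)=x$), and more generally all complete multipartite graphs; yet any boolean combination of complete graphs on a common vertex set is complete or empty on the non-diagonal pairs, so $K_{2,2}\not\bfreduction\mathcal{D}$.

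Accordingly your argument must break, and it does at the membership claim $T_i\in\mathcal{D}$: a disjoint union of cliques need not lie in an arbitrary intersection graph class (complete graphs again), and you are not free to ``place each equivalence class on its own disjoint ground set with a shared common element'' --- $\mathcal{D}=\mathcal{C}_{\mathcal{F}}$ is the intersection class of a \emph{fixed} family $\mathcal{F}$, so new sets or ground elements cannot be manufactured. (Your ``indexed duplicates'' for $\hat H_i$ have the same flavour; the paper's analogous move is to clone a vertex, and it is justified precisely by the self-loop property, since a second copy of the same set is adjacent to the first, so duplication happens inside the intersection model rather than by altering the ground set.) Once the self-loop convention is adopted, the difficulty you set up disappears entirely: repeated labels produce \emph{adjacent} pullback vertices, the pullback is the adjacent-twin blow-up, and the argument collapses to the paper's proof --- make $\ell$ repetition-free by cloning, let $H_i$ be the induced subgraph of $H$ on $\set{\ell(u)_i}{u\in V(G)}$, and conclude $H_i\in\mathcal{D}$ by heredity.
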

\begin{proof}
    ``$\Rightarrow$'': Every intersection graph class is self-universal and inflatable. Therefore this direction follows from the previous lemma.    
    
	``$\Leftarrow$'': Let $\mathcal{C} \sgdreduction \mathcal{D}$ via $c,k \in \N$ and a $k^2$-ary diagonal boolean function $f$. Let $g$ be the $k$-ary boolean function which underlies $f$, i.e.~$f(A) = g(A_{1,1},\dots,A_{k,k})$. We claim that $\mathcal{C} \bfreduction \mathcal{D}$ via $g$. This means for every $G \in \mathcal{C}$ we need to show that there exist $H_1,\dots,H_k \in \mathcal{D}$ on vertex set $V(G)$ such that $G = g(H_1,\dots,H_k)$. Given $G \in \mathcal{C}_n$ there exist an $H \in \mathcal{D}_{n^c}$ and a labeling $\ell \colon V(G) \rightarrow V(H)^k$ such that $(u,v) \in E(G)$ iff $g(x_1,\dots,x_k) = 1$ with $x_i := \llbracket (\ell(u)_i,\ell(v)_i) \in E(H)  \rrbracket$. Since $\mathcal{D}$ is an intersection graph class we can assume without loss of generality that no vertex of $H$ occurs more than once in the image of $\ell$ (otherwise we could just clone that vertex; every vertex in $H$ has a self-loop since $H$ is an intersection graph). 
    Let $H_i$ be the induced subgraph of $H$ that has $\set{ \ell(u)_i }{ u \in V(G)}$ as vertex set for $i \in [k]$. Since $\mathcal{D}$ is hereditary it follows that $H_i$ is in $\mathcal{D}$ for all $i \in [k]$.	
\end{proof}

\section{Logical Labeling Schemes}
\label{sec:logic}
Many of the graph classes which are known to have an implicit representation can be represented by a labeling scheme where a vertex label is interpreted as a fixed number of non-negative, polynomially bounded integers and the label decoding algorithm performs basic arithmetic on these integers and compares the results. First-order logic provides a fairly natural way to describe this class of labeling schemes.
For example, consider the labeling scheme for interval graphs that we have seen in the introduction. A vertex of an interval graph is labeled with two numbers which represent the endpoints of its interval. The label decoder for this labeling scheme can be expressed as FO formula $\varphi(x_1,x_2,y_1,y_2) \triangleq \neg (x_2 < y_1 \vee y_2 < x_1)$. Given two vertices $u,v$ of an interval graph on $n$ vertices with labels $u_1,u_2,v_1,v_2 \in [2n]$, the intervals $[u_1,u_2]$ and $[v_1,v_2]$ intersect iff $\varphi(u_1,u_2,v_1,v_2)$ holds, assuming that $u_1 \leq u_2$ and $v_1 \leq v_2$. Our motivation for introducing logical labeling schemes is that it seems probable that lower bounds for hereditary graph classes can be proved against certain fragments of such logical labeling schemes.

In the first part of this section we compare the expressiveness of logical labeling schemes for various fragments and relate them to classes such as $\gccac$.
In the second part we show that if arithmetic (addition and multiplication) is disallowed then the resulting sets of graph classes have various complete graph classes, such as trees and interval graphs.
In the third part we consider a generalization of quantifier-free logical labeling schemes called polynomial-boolean systems (PBS) where the size restriction is dropped, i.e.~the numbers of a vertex label are not polynomially bounded anymore. As a consequence, such a system cannot be regarded as a labeling scheme.  
PBS are interesting for two reasons. First, they contain many of the candidates for the implicit graph conjecture. Secondly, we show that quantifier-free logical labeling schemes can be separated from certain subclasses by proving that these subclasses do not coincide with the potentially much larger set of graph classes representable by polynomial-boolean systems.

\subsection{Definition and Basic Properties}

\begin{definition}
    A (quantifier-free, atomic) logical labeling scheme is a tuple $S=(\varphi,c)$ with a (quantifier-free, atomic) formula $\varphi \in \FO_{2k}$ and $c,k \in \N$. 
    A graph $G$ is in $\gr{S}$  if there exists a labeling $\ell \colon V(G) \rightarrow {[n^c]}_0^k$ such that $(u,v) \in E(G) \Leftrightarrow \mathcal{N}_{n^c},(\ell(u),\ell(v)) \models \varphi$ for all $u, v \in V(G)$. 
    \label{def:lls}
\end{definition}
\begin{definition}
    Let $\sigma \subseteq \{ \ltp, \addp, \mulp \}$. A graph class $\mathcal{C}$ is in $\ccg\FO(\sigma)$ if there exists a logical labeling scheme $(\varphi,c)$ with $\varphi \in \FO_{2k}(\sigma)$ and $c,k \in \N$ such that $\mathcal{C} \subseteq \gr{\varphi,c}$. Let $\ccg\FO_{\mathrm{qf}}(\sigma)$ denote the quantifier-free analogue.
\end{definition}

We say a logical labeling scheme $S=(\varphi,c)$ is in $\ccg\FO_{(\mathrm{qf})}(\sigma)$ if $\varphi$ is a (quantifier-free) formula in $\FO(\sigma)$.

\begin{lemma}
    Let $\sigma \subseteq \{\ltp,\addp,\mulp\}$ and let $\ccex$ be a complexity class that is closed under $\AC^0$ many-one reductions.
    If the bounded model checking problem for every (quantifier-free) formula in $\FO(\sigma)$ can be decided in $\ccex$ then $\ccg\FO_{(\mathrm{qf})}(\sigma) \subseteq \gccex$. 
\end{lemma}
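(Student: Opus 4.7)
\medskip
\noindent\textbf{Proof proposal.} The plan is to translate a logical labeling scheme into an equivalent ``binary'' labeling scheme $(F_L,c')$ in the sense of Definition~\ref{def:genericgcc}, such that deciding membership in $L$ reduces via an $\AC^0$ many-one reduction to the bounded model checking problem for the underlying formula. Closure of $\ccex$ under $\AC^0$ many-one reductions then yields $L\in\ccex$, hence $\mathcal{C}\in\gccex$ as required.

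Concretely, fix $\mathcal{C}\in\gccfoq(\sigma)$ witnessed by a logical labeling scheme $(\varphi,c)$ with $\varphi\in\FO_{2k}(\sigma)$ for some $k\in\N$, so that each vertex is labeled by a $k$-tuple in $[n^c]_0^k$. Set $c':=k(c+1)$ and encode each component of a label as a block of $(c+1)\log n$ binary digits; thus each vertex label becomes a binary string of length exactly $c'\log n$. This encoding is tight enough to still accommodate every value in $[n^c]_0$, since $2^{(c+1)\log n}\geq 2 n^c>n^c$. Given such an encoding one obtains, from the original labeling $\ell\colon V(G)\to[n^c]_0^k$ of a graph $G\in\mathcal{C}$ on $n$ vertices, a binary labeling $\ell'\colon V(G)\to\{0,1\}^{c'\log n}$, which will serve as the witness that the new scheme represents~$\mathcal{C}$.

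Next I would define the language $L$ as follows. On input $w\in\{0,1\}^*$, first check that $|w|=2c'm$ for some $m\geq 1$; otherwise reject. Set $n:=2^m$, split $w=xy$ with $|x|=|y|=c'm$, and parse $x$ and $y$ each as $k$ consecutive blocks of $(c+1)m$ bits interpreted as natural numbers $a_1,\dots,a_k,b_1,\dots,b_k$. Reject if any of these numbers exceeds $n^c$; otherwise accept iff $\mathcal{N}_{n^c},(a_1,\dots,a_k,b_1,\dots,b_k)\models\varphi$. By construction, $(x,y)\in F_L$ holds exactly when the two encoded $k$-tuples satisfy $\varphi$ in $\mathcal{N}_{n^c}$, so $(F_L,c')$ represents $\mathcal{C}$ via the binary labelings $\ell'$ derived above.

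It remains to show $L\in\ccex$. The preprocessing (verifying $|w|$ is a multiple of $2c'$, recovering $m$ and hence $n^c$ in binary, slicing $w$ into $2k$ fixed-width blocks, and comparing each block against the binary representation of $n^c$) is clearly computable in $\AC^0$, since $k,c,c'$ are constants and all operations are bit-slicings and fixed-arity comparisons on blocks of polylogarithmic length. The remaining computation is precisely the bounded model checking problem for $\varphi$ on the instance $(a_1,\dots,a_{2k},n^c)$, which lies in $\ccex$ by hypothesis. Hence $L$ is $\AC^0$ many-one reducible to a problem in $\ccex$, and by the assumed closure of $\ccex$ under such reductions, $L\in\ccex$, giving $\mathcal{C}\in\gccex$. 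The quantifier-free case is identical; I expect no real obstacle beyond bookkeeping on the block widths and the verification that the $\AC^0$ preprocessing is actually admissible, which is routine.
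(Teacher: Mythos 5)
Your overall plan (re-encode the numeric labels in binary, let the label decoder parse the blocks and hand the tuple to the bounded model checking problem, and invoke closure of $\ccex$ under $\AC^0$ many-one reductions) is exactly the route the paper takes, but your proposal has a genuine gap at the one point where the paper has to do something extra: the decoder's knowledge of the universe bound. In Definition~\ref{def:lls} adjacency is defined by $\mathcal{N}_{n^c},(\ell(u),\ell(v))\models\varphi$, where $n$ is the \emph{exact} number of vertices, and the semantics of $\addp$ and $\mulp$ (the overflow threshold) depend on the exact value of $n^c$. Your language $L$ only sees the two labels, and from the block width $m$ it can recover no more than $m=\lceil\log_2 n\rceil$; you then simply declare $n:=2^m$ and evaluate $\varphi$ over $\mathcal{N}_{(2^m)^c}$. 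When $n$ is not a power of two this is a different structure: for instance an atom such as $\addp(x_1,y_1)=c_0$ can hold in $\mathcal{N}_{n^c}$ (because $x_1+y_1=n^c+1$ overflows to $0$) yet fail in $\mathcal{N}_{(2^m)^c}$, and with quantifiers the larger universe changes the range of quantification as well. So the labeling $\ell'$ you derive from $\ell$ need not witness $G\in\gr{F_L,c'}$, and nothing in your argument supplies an alternative labeling; note that removing the dependence on the exact universe is itself a nontrivial matter in the paper (Lemma~\ref{lem:inftyinterpretation}), is only proved for certain quantifier-free fragments, and is not available here for general $\sigma$ or for formulas with quantifiers.

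The paper's fix is precisely aimed at this: since the formal definition of $\gccex$ forbids any input besides the labels, it appends $\mathrm{bin}(n^c)$ as an extra block to every vertex label (raising the label length from $kc'$ to $(k+1)c'$), and the decoder reads the universe bound from that block before forming the bounded model checking instance $(a_1,\dots,a_{2k},n^c)$. If you add this padding step, the rest of your argument (the $\AC^0$ slicing and comparisons, and the appeal to closure under $\AC^0$ many-one reductions) goes through as you wrote it.
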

\begin{proof}
    We show that the above statement holds for $\sigma = \{\ltp, \addp, \mulp \}$ and formulas with quantifiers. From this the result for all restricted classes of formulas follows.

    Let us assume that the bounded model checking problem for every formula in $\FO$ can be decided in $\ccex$. Consider a logical labeling scheme $S=(\varphi,c)$ with $2k$ variables. Let $G$ be a graph that is in $\gr{S}$ via a labeling $\ell \colon V(G) \rightarrow [n^c]_0^k$. We can translate the labeling $\ell$ into a binary encoded one $\ell' \colon V(G) \rightarrow \{0,1\}^{kc'\log n}$ where $c' = c + 1$ and a block of $c' \log n$ bits represents a number in $[n^c]_0$. For all $u, v \in V(G)$ it holds that $(u,v) \in E(G) \Leftrightarrow \mathcal{N}_{n^c},(\ell(u),\ell(v)) \models \varphi \Leftrightarrow (\ell'(u),\ell'(v),\mathrm{bin}(n^c))$ is a positive instance of the bounded model checking problem for $\varphi$. This almost gives us a labeling scheme which shows that $\gr{S}$ is in $\gccex$. The problem, however, is that $\mathrm{bin}(n^c)$ is not part of the labeling and    
    the formal definition of $\gccex$ does not allow us to use any input but the labels. To solve this we append $\mathrm{bin}(n^c)$ to the labeling. Consider a labeling $\ell'' \colon V(G) \rightarrow \{0,1\}^{(k+1)c'\log n}$ such that $\ell''(u) = \ell'(u)\mathrm{bin}(n^c)$. The last $c'\log n$ bits of $\ell''(u)$ encode $n^c$. 
    Now, the following labeling scheme $S'=(F,(k+1)c')$ shows that $\gr{S}$ is in $\gccex$. The label decoder is defined as
    $(x_1 \dots x_{k+1},y_1 \dots y_{k+1}) \in F$ iff $(x_1,\dots,x_k,y_1,\dots,y_k,x_{k+1})$ is a positive instance of the bounded model checking problem for $\varphi$
    for all $x_i,y_i \in \{0,1\}^{c' m}$, $i \in [k+1]$ and $m \in \N$. The label decoder $F$ can be decided in $\ccex$ since it is closed under $\AC^0$ many-one reductions and thus we can transform the input of the label decoder into an instance of the bounded model checking problem. 
\end{proof}

\begin{theorem}
    $\ccg\FO_{\mathrm{qf}}(\ltp,\addp) \subsetneq \gccac$, $\gccfo \subsetneq \gcctc$ and $\gccfoq \subseteq \gccph$.
\end{theorem}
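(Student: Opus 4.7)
The plan is to derive each of the three containments from the preceding lemma by bounding the complexity of the bounded model checking problem for the relevant fragment, and then to separate the two pairs by exhibiting graph classes whose adjacency predicate is expressible in the larger class but not in the smaller one.

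For the containments, first observe that a quantifier-free formula with only $\ltp$ and $\addp$ is a fixed Boolean combination of a constant number of atoms of the form $s \: R \: t$ with $R \in \{\ltp,=\}$ and $s,t$ sums of the labeled integers. Since iterated integer addition and comparison on $\mathcal{O}(\log n)$-bit numbers lie in $\AC^0$, and the truncation in $\mathcal{N}_n$ only adds an $\AC^0$ overflow comparison against $n^c$, BMC for every such formula is in $\AC^0$, which gives $\ccg\FO_{\mathrm{qf}}(\ltp,\addp) \subseteq \gccac$. For $\gccfo \subseteq \gcctc$ the only additional atoms involve multiplication; integer multiplication of polynomially bounded numbers is a classical $\TC^0$ problem, and $\TC^0$ is closed under constant-depth Boolean combinations, so BMC for every q.f.\ $\FO$-formula is in $\TC^0$. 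For $\gccfoq \subseteq \gccph$ observe that each quantifier ranges over $[n^c]_0$, a polynomial-sized domain, so a $\Sigma_k$-formula can be evaluated by $k$ polynomially bounded alternations on top of a $\TC^0 \subseteq \P$ quantifier-free core; this puts BMC into $\Sigma_k^{\mathrm P}$, and the union over $k$ is $\PH$. In all three cases the containment then follows from the preceding lemma.

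For the strict inclusions the plan is to exploit that the relations definable by the restricted fragments are very structured. A q.f.\ $\FO(\ltp,\addp)$-formula in $2k$ variables defines a semilinear subset of $\mathbb{N}^{2k}$, i.e.\ a finite Boolean combination of integer half-spaces, while $\AC^0$ decoders can realise inherently non-semilinear predicates such as bit-access $\mathrm{BIT}(x,y)$. The plan for $\ccg\FO_{\mathrm{qf}}(\ltp,\addp) \subsetneq \gccac$ is therefore to pick a graph class whose adjacency is genuinely bit-access-like (for instance a bipartite class on labels $(x,y)\in [n]^2$ with adjacency determined by selected bits of $x$ indexed by $y$) and to argue that no semilinear predicate of bounded size can reproduce it on infinitely many label lengths, ruling out alternative encodings by a counting/growth-rate argument. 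The separation $\gccfo \subsetneq \gcctc$ is analogous: a quantifier-free formula touches only a constant number of arithmetic atoms, so its underlying Boolean function depends on a constant number of arithmetic conditions, whereas $\TC^0$ decoders can realise majority-like predicates on arbitrarily many label bits; a separating class can be built from a Majority-based decoder and shown not to collapse to a constant-atom formula under any relabeling.

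The containments are routine; the main obstacle is the two strict separations. The difficulty is that a labeling scheme may choose its labels freely, so a bare circuit lower bound (semilinearity closure for $\ltp,\addp$, or $\AC^0 \subsetneq \TC^0$) must be upgraded into a statement that is robust under all label encodings with polynomially bounded coordinates. The cleanest route is probably to combine such a lower bound with a combinatorial invariant of the candidate graph class (VC-dimension, growth rate, or a local density measure) that is preserved by every valid labeling and that already exceeds what any fixed-complexity semilinear or constant-atom decoder can achieve.
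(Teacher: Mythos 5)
Your treatment of the three containments is essentially the paper's: both reduce to bounding the complexity of the bounded model checking problem (addition and comparison in $\AC^0$, multiplication in $\TC^0$, quantifiers over the polynomial-size domain absorbed by the alternation/non-determinism of $\PH$) and then invoke the preceding lemma. That part is fine.

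The strict inclusions, however, are a genuine gap in your proposal. You only sketch a plan: pick a ``bit-access-like'' or ``majority-based'' graph class and show that no semilinear or constant-atom decoder can reproduce it under \emph{any} polynomially bounded relabeling, hoping to finish with a counting, VC-dimension, or density invariant. You correctly identify the obstacle (a circuit or semilinearity lower bound is not robust under free choice of labels), but you never overcome it, and the candidates you name may not even work: a bipartite class with adjacency $\mathrm{BIT}(x,y)$ on labels in $[n]$ has at most $n^{\mathcal{O}(n)}$ graphs on $n$ vertices, so it sits inside a small hereditary class, and proving that such a class is outside $\ccg\FO_{\mathrm{qf}}(\ltp,\addp)$ is exactly the kind of lower bound against hereditary-type classes that the paper treats as a major open problem. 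The paper avoids this entirely by running the growth-rate invariant in the opposite direction: every class in $\gccfo$ (hence in $\ccg\FO_{\mathrm{qf}}(\ltp,\addp)$) lies in $\cogcsmallh$, via the polynomial-boolean-system bound and Warren's theorem (Theorem~\ref{thm:pbssmallh}, Corollary~\ref{corol:pbs}), while $\gccac \not\subseteq \cogcsmallh$ (Theorem~\ref{thm:acsmallh}, by a padding construction whose hereditary closure is all graphs). Since $\gccac \subseteq \gcctc$, the same witness separates both pairs at once. So the missing ingredient in your argument is not a finer circuit lower bound but the observation that smallness is a labeling-invariant obstruction satisfied by all logically defined schemes and violated by some $\AC^0$ scheme; without that (or a genuinely new inexpressibility proof for a concrete small hereditary class) the two strictness claims remain unproven.
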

\begin{proof}
    Due to the previous lemma we can prove these inclusions by showing that the bounded model checking problem for formulas in $\FO_{\mathrm{qf}}(\ltp,\addp)$, $\FO_{\mathrm{qf}}$ and $\FO$ can be solved in $\AC^0$, $\TC^0$ and $\PH$, respectively. First, let us explain why the bounded model checking problem for formulas in $\FO_{\mathrm{qf}}$ can be solved in $\TC^0$. The naive approach to model-check a  quantifier-free formula $\varphi$ is as follows: evaluate the terms of $\varphi$ (expressions involving addition and multiplication of the free variables (the input)), then evaluate the atomic formulas which means comparing numbers and finally compute the underlying boolean function of $\varphi$. Since the order relation, addition and multiplication can be computed in $\TC^0$ (see \cite{vollmer}) and $\varphi$ is fixed this naive approach can be realized by a family of $\TC^0$-circuits (for addition and multiplication one has to additionally handle the overflow condition). Since order and addition can be computed in $\AC^0$ it follows that the bounded model checking problem for formulas in $\FO_{\mathrm{qf}}(\ltp,\addp)$ is in $\AC^0$. In the case of formulas with quantifiers we can use the non-determinism of $\PH$ to evaluate them. Observe that the number of bits that need to be guessed are only linear w.r.t.~the input size.  Therefore we can conclude that $\ccg\FO_{\mathrm{qf}}(\ltp,\addp) \subseteq \gccac$, $\gccfo \subseteq \gcctc$ and $\gccfoq \subseteq \gccph$. The strictness of the first two inclusions is a consequence of the fact that $\gccac \not\subseteq \cogcsmallh$ (see Theorem \ref{thm:acsmallh}) and $\gccfo \subseteq \cogcsmallh$ which is proved later (see Theorem~\ref{thm:pbssmallh} and Corollary~\ref{corol:pbs}).      
\end{proof}

\begin{lemma}
	$\ccg\FO(\sigma)$ and $\ccg\FO_{\mathrm{qf}}(\sigma)$ are closed under $\bfreduction$ for all $\sigma \subseteq \{<,+,\times \}$.
	\label{lem:gfobf}
\end{lemma}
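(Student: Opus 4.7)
By Corollary~\ref{corol:bfcompl}, closure under $\bfreduction$ reduces to closure under subsets, negation, and conjunction. Closure under subsets is immediate from Definition~\ref{def:lls}: any subclass of $\gr{\varphi,c}$ is still represented by the same logical labeling scheme. Closure under negation is also immediate: the scheme $(\neg\varphi, c)$ represents the edge-complement whenever $(\varphi, c)$ represents $\mathcal{C}$, and $\neg\varphi$ stays in the same fragment $\FO_{(\mathrm{qf})}(\sigma)$ as $\varphi$.

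The substantive step is closure under conjunction. Given schemes $(\varphi_i, c_i)$ with $2k_i$ free variables representing $\mathcal{C}_i$ for $i \in \{1,2\}$, the plan is to concatenate labelings coordinate-wise and combine the two formulas over disjoint variable blocks. Writing $k = k_1 + k_2$, partition the $2k$ free variables of the new formula into blocks $(\vec{x}^{(1)}, \vec{y}^{(1)})$ and $(\vec{x}^{(2)}, \vec{y}^{(2)})$, and for $G \in \mathcal{C}_1 \wedge \mathcal{C}_2$ with witnessing labelings $\ell_i \colon V(G) \to [n^{c_i}]_0^{k_i}$ use a combined labeling $\ell(u) \in [n^c]_0^k$ assembled from $\ell_1(u)$ and $\ell_2(u)$ (together with possibly a few constant reference coordinates), and the combined formula $\varphi_1^*(\vec{x}^{(1)}, \vec{y}^{(1)}) \wedge \varphi_2^*(\vec{x}^{(2)}, \vec{y}^{(2)})$ in which each $\varphi_i^*$ is an adapted version of $\varphi_i$.

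The technical obstacle is that $\varphi_i$ was designed for evaluation in $\mathcal{N}_{n^{c_i}}$ but must now be evaluated in the single larger structure $\mathcal{N}_{n^c}$, whose bounded-arithmetic operators have different overflow thresholds. For $\sigma \subseteq \{\ltp\}$ this is harmless because order semantics are insensitive to the universe size; one takes $c = \max(c_1, c_2)$ and $\varphi_i^* = \varphi_i$. When $\sigma$ contains $\addp$ but not $\mulp$, there is a clean workaround via scaling: take $c > \max(c_1, c_2)$, scale the $i$-th block of label coordinates by $n^{c - c_i}$, and verify that $\ltp$, equality and $\addp$ all commute with the scaling so that $\mathcal{N}_{n^c}$ on the scaled labels simulates $\mathcal{N}_{n^{c_i}}$ on the original labels exactly. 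When $\sigma$ contains $\mulp$, scaling breaks; the plan is then to reserve extra constant coordinates in every vertex label that hold the bounds $n^{c_1}, n^{c_2}$ (and a few useful reference values), choose $c$ large enough in terms of the fixed syntactic depths of $\varphi_1, \varphi_2$ that no intermediate sum or product overflows in $\mathcal{N}_{n^c}$, and then relativize each arithmetic subterm $t$ of $\varphi_i$ by the explicit two-case branch ``the $\mathcal{N}_{n^c}$-value of $t$ does not exceed its stored bound, in which case $t$ keeps that value, or it exceeds the bound, in which case $t$ is forced to $0$'', so that the rewriting faithfully simulates $\mathcal{N}_{n^{c_i}}$-semantics on labels in $[n^{c_i}]_0$.

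The hardest piece will be implementing this relativization inside the prescribed syntactic class. For full $\FO(\sigma)$ one introduces a fresh existentially quantified variable for the truncated value of each subterm and describes it by the obvious two-case disjunction. For $\FO_{\mathrm{qf}}(\sigma)$ one instead unfolds every atomic formula into a finite disjunction ranging over the possible overflow patterns of the bounded-size syntax trees of its subterms, which remains quantifier-free and uses only symbols from $\sigma$ together with the stored reference coordinates; the bound check ``$t$ exceeds its stored bound'' uses $\ltp$ when available and otherwise is expressed via the overflow behaviour of $\addp$ or $\mulp$ against suitably chosen reference coordinates such as $n^{c - c_i}$. Both rewritings preserve both the signature $\sigma$ and the quantifier-free fragment, so the resulting combined scheme lies in $\ccg\FO_{(\mathrm{qf})}(\sigma)$, completing the conjunction step.
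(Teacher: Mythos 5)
Your decomposition via Corollary~\ref{corol:bfcompl} is a legitimate strategy in principle, but it creates a problem the paper's own proof never faces, and your handling of that problem has a real gap. Because you prove closure under conjunction of two \emph{arbitrary} classes in $\ccg\FO_{(\mathrm{qf})}(\sigma)$, you must merge two schemes $(\varphi_1,c_1)$ and $(\varphi_2,c_2)$ with different label lengths, i.e.\ different universes $\mathcal{N}_{n^{c_1}}$, $\mathcal{N}_{n^{c_2}}$. This is exactly the padding statement $(\star)$ that the paper only establishes (in Lemma~\ref{lem:inftyinterpretation} and Fact~\ref{fact:gfounion}) for $\sigma=\emptyset$ or $\ltp\in\sigma$, and your workarounds for the remaining signatures, while plausible for the quantifier-free fragment (the scaling argument for $\addp$ and the overflow-pattern unfolding with bound checks expressed through forced overflow against stored constants do go through), do not suffice for the quantified fragment $\ccg\FO(\sigma)$, which the lemma also covers. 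Enlarging the universe changes not only the overflow thresholds but also the \emph{range of the quantifiers}, and you never relativize them: your claim that the pure-order case is ``harmless'' is false for quantified formulas --- e.g.\ $\varphi_2(x,y)\triangleq\exists z\,(x<z\wedge y<z)$ distinguishes labels equal to the maximum $n^{c_2}$ of the small universe, and this distinction disappears in $\mathcal{N}_{n^{c}}$ with $c>c_2$, so reusing $\varphi_2$ unchanged misrepresents $\gr{\varphi_2,c_2}$. The same omission affects your relativization in the $\mulp$ case, which guards subterms but not quantifier ranges. This is repairable (bound each quantifier by a stored constant $n^{c_i}$, which is expressible even without $\ltp$ via forced overflow, or eliminate quantifiers when $\sigma=\emptyset$ as in Fact~\ref{fact:foeqqf}), but as written the argument does not prove the statement for $\ccg\FO(\sigma)$.

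The paper avoids all of this by proving closure under $\bfreduction$ directly rather than through conjunction of two different classes: if $\mathcal{C}\bfreduction\mathcal{D}$ via a $k$-ary $f$ and $\mathcal{D}\subseteq\gr{\varphi,c}$, then \emph{all} the witnessing graphs $H_1,\dots,H_k$ come from the single scheme $(\varphi,c)$, so one sets $\psi(\vec{x_1},\dots,\vec{x_k},\vec{y_1},\dots,\vec{y_k}) \triangleq f\bigl(\varphi(\vec{x_1},\vec{y_1}),\dots,\varphi(\vec{x_k},\vec{y_k})\bigr)$ with the \emph{same} label length $c$ and concatenates the $k$ labelings. Every copy of $\varphi$ is interpreted over the same universe $\mathcal{N}_{n^c}$ as before, so no padding, scaling, overflow guarding or quantifier relativization is needed, and the argument works uniformly for every $\sigma\subseteq\{\ltp,\addp,\mulp\}$ and for both the quantified and quantifier-free fragments. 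If you want to keep your route, you must either prove the padding statement $(\star)$ for all signatures (including quantifier relativization), or restrict the conjunction argument to schemes with a common $c$ obtained in some other way; the direct construction makes both unnecessary.
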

\begin{proof}
    Let $\mathcal{C} \bfreduction \mathcal{D}$ via a $k$-ary boolean function $f$ and $\mathcal{D}$ is in $\ccg\FO_{(\mathrm{qf})}(\sigma)$ via a logical labeling scheme $S=(\varphi,c)$ with $2l$ variables. 
    We construct a logical labeling scheme $S' = (\psi,c)$ with $2kl$ variables which shows that $\mathcal{C}$ is in $\ccg\FO_{(\mathrm{qf})}(\sigma)$. Let $\psi$ have variables $x_{i,j},y_{i,j}$ for $i \in [k]$ and $j \in [l]$ and let us write $\vec{x_i}$ for $x_{i,1},\dots,x_{i,l}$. Let $\psi(\vec{x_1},\dots,\vec{x_k},\vec{y_1},\dots,\vec{y_k})$ be defined as $f( \varphi(\vec{x_1},\vec{y_1}), \dots , \varphi(\vec{x_k},\vec{y_k}) )$. We claim that $\mathcal{C} \subseteq \gr{S'}$. Consider a graph $G \in \mathcal{C}$ with vertex set $V$. There exist $k$ graphs $H_1,\dots,H_k\in \mathcal{D}$ with vertex set $V$ such that $G = f(H_1,\dots,H_k)$. Since $H_i$ is in $\mathcal{D}$ it is also in $\gr{S}$ via a labeling $\ell_i \colon V \rightarrow [n^c]_0^l$ for $i \in [k]$. Consider the labeling $\ell \colon V \rightarrow [n^c]_0^{kl}$ with $\ell(u) = (\ell_1(u),\dots,\ell_k(u))$ for all $u \in V$. It is easy to verify that $G$ is in $\gr {S'}$ via $\ell$. 
\end{proof}

\begin{lemma}
    $\ccg\FO(\sigma)$ and $\ccg\FO_{\mathrm{qf}}(\sigma)$ are closed under $\sgreduction$ for all $\sigma \subseteq \{<,+,\times \}$.
\end{lemma}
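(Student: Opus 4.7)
The plan is to adapt the argument from Lemma~\ref{lem:gfobf}, generalizing a $k$-ary boolean combination of full graphs to a substitution governed by the $k^2$-ary boolean function supplied by the subgraph reduction. Suppose $\mathcal{C} \sgreduction \mathcal{D}$ via $c, k \in \N$ and a $k^2$-ary boolean function $f$, and $\mathcal{D} \in \ccg\FO_{\mathrm{qf}}(\sigma)$ (resp.\ $\ccg\FO(\sigma)$) via a logical labeling scheme $S = (\varphi, d)$ with $\varphi \in \FO_{2l}(\sigma)$. I write the free variables of $\varphi$ as $(\vec{x},\vec{y})$ with $\vec{x},\vec{y}$ of length $l$.

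I would introduce the logical labeling scheme $S' = (\psi, cd)$ whose formula $\psi$ has $2kl$ free variables $\vec{x}_1,\ldots,\vec{x}_k,\vec{y}_1,\ldots,\vec{y}_k$ and is obtained by substituting copies of $\varphi$ into $f$:
\begin{equation*}
\psi \;\triangleq\; f\bigl(\varphi(\vec{x}_1,\vec{y}_1),\; \varphi(\vec{x}_1,\vec{y}_2),\; \ldots,\; \varphi(\vec{x}_k,\vec{y}_k)\bigr),
\end{equation*}
where the arguments of $f$ range left-to-right, top-to-bottom through $(i,j) \in [k]^2$, matching the convention used in the definition of $f(A)$. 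Since $\psi$ is a boolean combination of renamings of $\varphi$, it uses only symbols from $\sigma$ and is quantifier-free whenever $\varphi$ is; hence $\psi$ lives in the same fragment as $\varphi$.

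To see that $S'$ represents $\mathcal{C}$, fix $G \in \mathcal{C}_n$. The reduction produces $H \in \mathcal{D}_{n^c}$ and $\ell_G \colon V(G) \to V(H)^k$ with $(u,v) \in E(G) \Leftrightarrow f(A_{uv}^{\ell_G}) = 1$, and $S$ yields $\ell_H \colon V(H) \to [(n^c)^d]_0^l$ so that $(a,b) \in E(H) \Leftrightarrow \mathcal{N}_{(n^c)^d},(\ell_H(a),\ell_H(b)) \models \varphi$. Since $(n^c)^d = n^{cd}$, the combined labeling $\ell \colon V(G) \to [n^{cd}]_0^{kl}$ defined by $\ell(u) = (\ell_H((\ell_G(u))_1), \ldots, \ell_H((\ell_G(u))_k))$ respects the size bound required by Definition~\ref{def:lls} with $c$ replaced by $cd$. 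Unfolding the semantics, the $(i,j)$-th occurrence of $\varphi$ in $\psi$ evaluates under $\ell$ to the $(i,j)$-entry of $A_{uv}^{\ell_G}$, so $\psi$ evaluates to $\llbracket (u,v) \in E(G) \rrbracket$, as required.

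I do not anticipate any real obstacle: the only subtlety is to confirm that the universe $[n^{cd}]_0$ of $\mathcal{N}_{n^{cd}}$, in which $\psi$ is interpreted for a graph on $n$ vertices of $\mathcal{C}$, matches the universe $[(n^c)^d]_0$ over which $\varphi$ certifies $H \in \gr{S}$; this is immediate from $(n^c)^d = n^{cd}$. Since the label-length constant $cd$ depends only on $c,d,k,l$ and not on $n$, $S'$ is a bona fide logical labeling scheme in the right fragment, establishing closure of both $\ccg\FO(\sigma)$ and $\ccg\FO_{\mathrm{qf}}(\sigma)$ under $\sgreduction$.
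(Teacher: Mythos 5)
Your proposal is correct and follows essentially the same route as the paper's proof: the substituted formula $\psi = f(A)$ with $A_{ij} = \varphi(\vec{x}_i,\vec{y}_j)$, label length $cd$, and the combined labeling $\ell(u) = (\ell_H((\ell_G(u))_1),\ldots,\ell_H((\ell_G(u))_k))$ are exactly the paper's construction. Your explicit check that the universe $[(n^c)^d]_0 = [n^{cd}]_0$ matches is a minor detail the paper leaves implicit, but otherwise the arguments coincide.
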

\begin{proof}
    Let $\mathcal{C} \sgreduction \mathcal{D}$ via $c,k \in \N$ and a $k^2$-ary boolean function $f$ and $\mathcal{D}$ is in $\ccg\FO_{(\mathrm{qf})}(\sigma)$ via the labeling scheme $S=(\varphi,d)$ and $\varphi$ has $2l$ variables.  Let $\psi$ be a formula with $2kl$ variables $\vec{x}_1,\dots,\vec{x}_k,\vec{y}_1,\dots,\vec{y}_k$ where $\vec{x}_i$ and $\vec{y}_i$ are sequences of $l$ variables. We define $\psi$ as $f(A)$ where $A$ is a $(k \times k)$-matrix with $A_{ij} = \varphi(\vec{x}_i,\vec{y}_j)$ for $i,j \in [k]$. We claim that $\mathcal{C} \subseteq \gr {\psi,cd}$.  Given a graph $G \in \mathcal{C}$ on $n$ vertices there exists a graph $H \in \mathcal{D}$ on $n^c$ vertices such that $G$ has an $(H,f)$-representation via a labeling $\ell_G \colon V(G) \rightarrow V(H)^k$. Also, $H \in \gr {S}$ via a labeling $\ell_H \colon V(H) \rightarrow {[n^{cd}]}_0^l$. 
    Let $\ell \colon V(G) \rightarrow {[n^{cd}]}^{kl}_0$ be defined as follows. Given $u \in V(G)$ let $\ell_G(u) = (u_1,\dots,u_k)$ and $\ell_H(u_i) = (u_{i,1},\dots,u_{i,l})$ for $i \in [k]$. We define $\ell(u)$ as $(u_{1,1},\dots,u_{1,l},u_{2,1},\dots,u_{2,l},\dots,u_{k,1},\dots,u_{k,l})$.
    It can be verified that $G$ is in $\gr {\psi,cd}$ via the labeling $\ell$. No new atoms or quantifiers are introduced in $\psi$ compared to $\varphi$ and thus it remains  in the same class of formulas.
\end{proof}

In a logical labeling scheme $S=(\varphi,c)$ the label length $c$ determines the size of the universe ($\mathcal{N}_{n^c}$) which is used to interpret $\varphi$ for graphs on $n$ vertices. For quantifier-free formulas  the universe size only affects at what point the overflow condition of addition and multiplication applies. Since $\varphi$ is known a priori one can always choose $c$ sufficiently large in order to ensure that no overflow occurs for all labelings of a predetermined size. Is it possible to exploit the overflow condition to express a graph class that would not be expressible without it? We show that for certain fragments this is not the case. Therefore for these fragments the formula can be assumed to be always interpreted over $\mathcal{N}$ irregardless of the number of vertices of the graph.

\begin{definition}
    Given a logical labeling scheme $S=(\varphi,c)$. A graph $G$ is in $\gr[\infty]{S}$ if there exists a labeling $\ell \colon V(G) \rightarrow {[n^c]}_0^k$ such that $(u,v) \in E(G) \Leftrightarrow \mathcal{N},(\ell(u),\ell(v)) \models \varphi$ for all $u, v \in V(G)$.   
\end{definition}

\begin{lemma}
	Let $\sigma = \emptyset$, or $\sigma \subseteq \{\ltp,\addp,\mulp\}$ and `$\ltp$' is in $\sigma$.
    A graph class $\mathcal{C}$ is in $\ccg\FO_{\mathrm{qf}}(\sigma)$ iff there exists a logical labeling scheme $S$ in $\ccg\FO_{\mathrm{qf}}(\sigma)$ such that $\mathcal{C} \subseteq \gr[\infty]{S}$. 
    \label{lem:inftyinterpretation}
\end{lemma}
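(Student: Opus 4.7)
The statement asks for an equivalence between the bounded semantics (over $\mathcal{N}_{n^c}$) and the unbounded one (over $\mathcal{N}$) for quantifier-free logical labeling schemes with the specified signatures. I would split the proof into the two inclusions, and within the $\Rightarrow$ direction further distinguish signatures without arithmetic from those that actually contain $+$ or $\times$.

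For the $\Leftarrow$ direction the plan is to reuse the formula and simply enlarge the label length. Given $(\varphi, c)$ with $\mathcal{C} \subseteq \gr[\infty]{\varphi,c}$, each term of the fixed quantifier-free $\varphi$ is a polynomial of constant degree $d$, so evaluated in $\mathcal{N}$ on labels from $[n^c]_0^k$ its value is at most $n^{cd+1}$ once $n$ is sufficiently large. Choosing $c' > cd+1$ puts every intermediate term value inside the universe of $\mathcal{N}_{n^{c'}}$, so the overflow clauses in the definitions of $+$ and $\times$ are never triggered and the truth value of $\varphi$ in $\mathcal{N}_{n^{c'}}$ coincides with its value in $\mathcal{N}$ on these labels. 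Viewing the original labeling as a map into $[n^{c'}]_0^k$ then witnesses $\mathcal{C} \subseteq \gr{\varphi,c'}$, with the finitely many exceptional small graphs absorbed by enlarging $c'$ further.

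For the $\Rightarrow$ direction I would first observe that when $\sigma \subseteq \{<\}$ no arithmetic symbol occurs, so $\mathcal{N}_{n^c}$ and $\mathcal{N}$ agree on every atom whose arguments lie in $[n^c]_0$, and the same scheme $(\varphi,c)$ works in both semantics. The substantive case is when $\sigma$ additionally contains $+$ or $\times$, in which the prescribed overflow-to-$0$ behavior of $\mathcal{N}_{n^c}$ must be emulated inside $\mathcal{N}$. My plan is to extend each label by one coordinate holding the bound $M := n^c$ (which fits since $n^c \in [n^c]_0$) and to construct a quantifier-free formula $\varphi^* \in \FO_{\mathrm{qf}}(\sigma)$ with $2(k+1)$ free variables that simulates the bounded semantics on this extended labeling.

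Concretely, for each compound subterm $s_i = s_{j_i} \square_i s_{k_i}$ of $\varphi$ I would introduce a binary overflow flag, and for every assignment $f$ of flags to compound subterms define a new term $V_{f,i}$ by $V_{f,i} = 0$ whenever $f(i)=1$ and $V_{f,i} = V_{f,j_i} \square_i V_{f,k_i}$ otherwise (variables and constants remain unchanged). Then $\varphi^*$ is the finite disjunction over all $f$ of the conjunction of the consistency conditions $V_{f,j_i} \square_i V_{f,k_i} \le M$ (when $f(i)=0$) and $V_{f,j_i} \square_i V_{f,k_i} > M$ (when $f(i)=1$), together with the rewriting of $\varphi$ in which each subterm $s_i$ is replaced by $V_{f,i}$. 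Since $\varphi$ is fixed the disjunction has constant size, and every atom uses only $<,=,+,\times$, so indeed $\varphi^* \in \FO_{\mathrm{qf}}(\sigma)$. Correctness follows by induction on the subterm tree: on a labeling obtained by appending $n^c$ to the original one, exactly one $f$ satisfies all consistency conditions in $\mathcal{N}$—namely the one recording the actual overflows in $\mathcal{N}_{n^c}$—and under this $f$ each $V_{f,i}$ equals the bounded value of $s_i$, so $\varphi^*$ holds in the unbounded semantics iff $\varphi$ holds in the bounded semantics. The main obstacle I anticipate is the careful case analysis ruling out that any ``wrong'' $f$ also satisfies the consistency conditions; the assumption $< \in \sigma$ is essential, since without $<$ the consistency conditions could not be expressed in $\FO_{\mathrm{qf}}(\sigma)$.
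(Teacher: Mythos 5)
Your proposal is correct and follows essentially the same route as the paper: the $\Leftarrow$ direction enlarges the label length so that no overflow can occur, and the $\Rightarrow$ direction supplies the bound $n^c$ as a promised extra label coordinate and compiles the overflow behaviour into the propositional part of a quantifier-free formula, your explicit disjunction over overflow-flag assignments being just a DNF rendering of the paper's nested guarded conditionals. One small repair: since the signature has no constant symbols, the value $0$ used for overflowed subterms must also be provided as a promised label coordinate, exactly as the paper reserves variables for both $c_0=0$ and $c_1=n^c$.
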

\begin{proof}
In the case that $\sigma = \emptyset$ it is easy to check that for every logical labeling scheme $S$ in $\ccg\FO_{\mathrm{qf}}(=)$ it holds that $\gr{S} = \gr[\infty]{S}$ and thus the above claim holds. Therefore let us consider the case where `$\ltp$' is in $\sigma$.   	
	
``$\Rightarrow$'': Let $\mathcal{C}$ be a graph class that is in $\ccg\FO_{\mathrm{qf}}(\sigma)$ via a logical labeling scheme $S=(\varphi,c)$, i.e.~$\mathcal{C} \subseteq \gr{S}$. We construct a logical labeling scheme $S'=(\psi,c)$ such that $\mathcal{C} \subseteq \gr[\infty]{S'}$ and $S'$ is in $\ccg\FO_{\mathrm{qf}}(\sigma)$. We assume w.l.o.g.~that we have access to the constants $c_0=0$ and $c_1=n^c$ in $\psi$. 
The constants can be realized by adding two variables to each vertex which are promised to contain the value of the constants for the considered labelings. 
We build $\psi$ from $\varphi$ such that the overflow checks are incorporated into the propositional part of $\psi$. To do this we replace each atomic subformula $A$ of $\varphi$ by a guarded one $A'$. 
We demonstrate how to do this based on the following example. Let $A(x_1,x_2,y_1,y_2)$  be the atomic formula $\mulp(\addp(x_1,y_2),x_2) < \addp(x_2,y_1)$. We convert $A$ into $A'$ by 
checking whether an overflow occurs at each subterm bottom-up. Let $a \rightarrow b$ denote propositional implication which is shorthand for $\neg a \vee b$. 
Then $A'$ is the following formula (order of operation is implied by indentation and reading a propositional formula of the form $\varphi \rightarrow \alpha \wedge \neg \varphi \rightarrow \beta$ as ``if $\varphi$ then $\alpha$ else $\beta$'').
\newcommand{\formtab}[1]{\hspace{#1cm}}
\begin{align*}
  &					c_1 < +(x_1,y_2) \rightarrow
\\&\formtab{1} 			c_1 < \times(c_0,x_2) \rightarrow
\\&\formtab{2} 				c_1 < +(x_2,y_1) \rightarrow
\\&\formtab{3} 					c_0 < c_0 
\\&\formtab{2} 				\wedge \neg c_1 < +(x_2,y_1) \rightarrow
\\&\formtab{3}	 				c_0 < +(x_1,y_1) 
\\&\formtab{1}			\wedge \neg c_1 < \times(c_0,x_2) \rightarrow
\\&\formtab{2}				c_1 < +(x_2,y_1) \rightarrow
\\&\formtab{3}					\times(c_0,x_2) < c_0 
\\&\formtab{2}				\wedge \neg c_1 < +(x_2,y_1) \rightarrow
\\&\formtab{3}					\times(c_0,x_2) < +(x_2,y_1) 
\\&					\wedge \neg c_1 < +(x_1,y_2) \rightarrow
\\&\formtab{1} 			c_1 < \times(+(x_1,y_2),x_2) \rightarrow  
\\&\formtab{2} 			\vdots
\end{align*}
The correctness of this transformation follows from showing that $\mathcal{N}_{n^c}, \vec{a} \models A$ iff $\mathcal{N}, \vec{a} \models A'$ for all $\vec{a} \in [n^c]_0^4$. 

``$\Leftarrow$'': Let $\mathcal{C}$ be a graph class and $S=(\varphi,c)$ is a logical labeling scheme in $\ccg\FO_{\mathrm{qf}}(\sigma)$ such that $\mathcal{C} \subseteq \gr[\infty]{S}$. The maximal value that results from evaluating any term in $\varphi$ must be polynomially bounded, i.e.~there exists a $d \in \N$ such that the largest value produced while evaluating $\varphi$ for a graph with $n$ vertices does not exceed $n^{cd}$. Therefore $\gr[\infty]{\varphi,c} \subseteq \gr{\varphi,cd}$ and $\mathcal{C} \in  \ccg\FO_{\mathrm{qf}}(\sigma)$.
\end{proof}

\begin{fact}
    Let $\sigma = \emptyset$, or $\sigma \subseteq \{\ltp,\addp,\mulp\}$ and `$\ltp$' is in $\sigma$.
    $\ccg\FO(\sigma)$ and $\ccg\FO_{\mathrm{qf}}(\sigma)$ are closed under union.
    \label{fact:gfounion}
\end{fact}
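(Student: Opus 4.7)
The plan is to combine the two given labeling schemes into one by adding a single ``tag'' coordinate to each vertex label that selects which of the two disjuncts applies. Given graph classes $\mathcal{C}_1, \mathcal{C}_2 \in \ccg\FO_{(\mathrm{qf})}(\sigma)$ via labeling schemes $S_i = (\varphi_i, c_i)$ where $\varphi_i$ has $2k_i$ free variables, I would set $k := \max(k_1,k_2)$ and $c := \max(c_1,c_2)$, and define the new formula in variables $x_0,x_1,\dots,x_k,y_0,y_1,\dots,y_k$ by
\[
  \psi \;\equiv\; \bigl((x_0 = y_0) \wedge \varphi_1(x_1,\dots,x_{k_1},y_1,\dots,y_{k_1})\bigr) \;\vee\; \bigl((x_0 \neq y_0) \wedge \varphi_2(x_1,\dots,x_{k_2},y_1,\dots,y_{k_2})\bigr).
\]
The new scheme $(\psi,c)$ lies in the same fragment as $\varphi_1$ and $\varphi_2$, because equality is always available in $\FO(\sigma)$ and the added atoms are quantifier-free, so $\psi$ is quantifier-free whenever $\varphi_1$ and $\varphi_2$ are.

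To show $\mathcal{C}_1 \cup \mathcal{C}_2 \subseteq \gr{\psi,c}$, I would label a graph $G \in \mathcal{C}_1$ on $n$ vertices by prepending the tag value $0$ to every vertex's label produced by $S_1$ (padding the remaining coordinates with zeros); then $x_0 = y_0$ always holds, so $\psi$ collapses to $\varphi_1$ and correctness carries over from $S_1$. For $G \in \mathcal{C}_2$ on $n$ vertices I would enumerate the vertices $v_1,\dots,v_n$ and assign $v_i$ the tag $i$, which fits in $[n^c]_0$ since $c \geq 1$; distinct vertices then receive distinct tags, so $x_0 \neq y_0$ holds on every edge query and $\psi$ collapses to $\varphi_2$. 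Self-loops may be ignored as remarked in the paper before Lemma~\ref{lem:gfobf}.

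I do not foresee a substantial obstacle: the only thing to check carefully is that the tag atoms $x_0 = y_0$ and $x_0 \neq y_0$ are expressible in every fragment named in the statement, which is immediate from the preliminaries (first-order logic always includes equality, and negation is in every class of formulas considered). Hence the construction applies uniformly to $\sigma = \emptyset$ and to every $\sigma \subseteq \{\ltp,\addp,\mulp\}$ with $\ltp \in \sigma$, and it preserves quantifier-freeness, yielding closure under union for both $\ccg\FO(\sigma)$ and $\ccg\FO_{\mathrm{qf}}(\sigma)$.
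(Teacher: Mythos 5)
There is a genuine gap, and it is exactly the point the paper's proof spends most of its effort on. In a logical labeling scheme $(\varphi,c)$ the label length $c$ is not just a bound on the labels: it fixes the structure $\mathcal{N}_{n^c}$ over which $\varphi$ is evaluated, and this structure determines the overflow behaviour of $\addp$ and $\mulp$ and the range of any quantifiers. Your step ``set $c := \max(c_1,c_2)$'' and then reuse $\varphi_1,\varphi_2$ unchanged inside $\psi$ silently assumes that a graph in $\gr{\varphi_1,c_1}$ is still correctly described by $\varphi_1$ when the same labeling is evaluated over the larger universe $\mathcal{N}_{n^c}$. That is false in general: an atom like $c' < \addp(x,y)$ can change its truth value because a sum that overflowed to $0$ in $\mathcal{N}_{n^{c_1}}$ no longer overflows in $\mathcal{N}_{n^c}$, and in the quantified case a formula such as $\exists z\, (x < z)$ changes meaning because the maximal element of the universe changes. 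So your construction is only sound when $c_1 = c_2$, or in the special fragments where the universe is irrelevant (quantifier-free formulas over $=$ or $<$ alone). The paper isolates precisely this issue as claim $(\star)$ --- every scheme $(\varphi,c)$ can be replaced by $(\varphi',c+1)$ spanning at least the same graphs --- and proving $(\star)$ requires the overflow-guarding transformation of Lemma~\ref{lem:inftyinterpretation} plus bounded quantifiers with a constant for $n^c$; none of this appears in your argument.

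The part of your proposal that differs from the paper, the tag coordinate with guards $x_0 = y_0$ and $x_0 \neq y_0$, is fine and in fact removes the paper's need to assume w.l.o.g.\ that both schemes represent all empty graphs $\overline{K_n}$ before taking the plain disjunction $\varphi_1 \vee \varphi_2$. But this only replaces the easy half of the paper's proof; to make your argument complete you must still prove (or invoke) the label-length normalization $(\star)$, or, for the quantifier-free fragments, first pass to the $\gr[\infty]{\cdot}$ semantics via Lemma~\ref{lem:inftyinterpretation} so that the universe no longer matters --- and the quantified case would then still need a separate argument.
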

\begin{proof}
    First, we argue why showing closure under union for $\ccg\FO_{(\mathrm{qf})}(\sigma)$ reduces to proving that ($\star$) for every labeling scheme $S=(\varphi,c)$ in $\ccg\FO_{(\mathrm{qf})}(\sigma)$ there exists a labeling scheme $S'=(\varphi',c+1)$ with $\gr{S} \subseteq \gr{S'}$. Let $\mathcal{C},\mathcal{D}$ be in $\ccg\FO_{(\mathrm{qf})}(\sigma)$ via labeling schemes $S_1 = (\varphi_1,c_1)$ and $S_2=(\varphi_2,c_2)$. Due to ($\star$) we can assume w.l.o.g.~that $c_1 = c_2 = c$. Let $2k_i$ be the number of free variables of $\varphi_i$ for $i \in [2]$.  Furthermore, we assume w.l.o.g.~that $\gr{S_1}$ and $\gr{S_2}$ contain all empty graphs $\overline{K_n}$ on $n$ vertices.    
    Let $S = (\psi,c)$ with $\psi(\vec{x_1},\vec{x_2},\vec{y_1},\vec{y_2}) \triangleq \varphi_1(\vec{x_1},\vec{y_1}) \vee \varphi_2(\vec{x_2},\vec{y_2})$ where $\vec{x_i},\vec{y_i}$ are sequences of $k_i$ variables for $i \in [2]$. It holds that $S$ is in  $\ccg\FO_{\mathrm{qf}}(\sigma)$. To see that $\mathcal{C} \cup \mathcal{D} \subseteq \gr{S}$ holds consider a graph $G$ on $n$ vertices in $\mathcal{C}$. We can combine a labeling which shows that $G$ is in $\gr{S_1}$ with a labeling that shows that $\overline{K_n}$ is in $\gr{S_2}$ to get a labeling which shows that $G$ is in $\gr{S}$ because $G \vee \overline{K_n} = G$. The correctness relies on the fact that the labeling schemes $S,S_1,S_2$ are all interpreted over the same universe $\mathcal{N}_{n^c}$ for all graphs with $n$ vertices and $n \in \N$. 
    
    Next, let us explain why $(\star)$ holds. If $\sigma = \emptyset$ then  $(\star)$ obviously holds for $\gccfoeqqf$. Since $\gccfoeqqf = \gccfoeq$ as we shall see later (Fact~\ref{fact:foeqqf}) this also applies to $\gccfoeq$. Therefore we consider the case where `$\ltp$' is in $\sigma$. Let $S=(\varphi,c)$ be a labeling scheme with $2k$ variables in $\ccg\FO_{(\mathrm{qf})}(\sigma)$. 
    We assume that $\varphi$ is in prenex normal form and it has $q \geq 0$ quantified variables, i.e.~$\varphi(\vec{x}) \triangleq Q_1 z_1 \dots Q_q z_q \psi(\vec{x},\vec{z})$ with $Q_i \in \{ \forall,\exists \}$ for $i \in [q]$ and $\vec{z} = (z_1,\dots,z_q)$ and $\psi$ is a quantifier-free formula.
    Let $\psi'$ be the formula that is obtained from $\psi$ by incorporating the overflow checks into the propositional part (the same construction that is used in the proof of Lemma~\ref{lem:inftyinterpretation}). Let $\varphi'(\vec{x})$ be a FO formula which is equivalent to $(Q_1 z_1 \leq c_1) \dots (Q_q z_q \leq c_1) : \psi(\vec{x},\vec{z})$ where $c_1$ is a constant representing the value $n^c$. It holds that $\gr{S} \subseteq \gr{\varphi',c+1}$. 
\end{proof}

\begin{theorem}[Algebraic Interpretation]
	Let $\sigma = \emptyset$, or $\sigma \subseteq \{\ltp,\addp,\mulp\}$ and `$\ltp$' is in $\sigma$.
	A graph class $\mathcal{C}$ is in $\ccg\FO_{\mathrm{qf}}(\sigma)$ iff there exist an $a \in \N$, atomic labeling schemes $S_1,\dots,S_a$ in $\ccg\FO_{\mathrm{qf}}(\sigma)$ and an $a$-ary boolean function $f$ such that $\mathcal{C} \subseteq f(\gr[\infty]{S_1},\dots, \gr[\infty]{S_a})$.
	\label{thm:gfo_algebra}
\end{theorem}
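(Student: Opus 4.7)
My plan is to prove the two directions separately, and in both cases the work is largely a matter of reassembling tools already in hand rather than genuinely new content; the Algebraic Interpretation theorem is essentially a syntactic reformulation of Lemmas~\ref{lem:inftyinterpretation} and~\ref{lem:gfobf} together with Fact~\ref{fact:gfounion}.

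For the forward direction ($\Rightarrow$), I start with $\mathcal{C}\in\ccg\FO_{\mathrm{qf}}(\sigma)$ and apply Lemma~\ref{lem:inftyinterpretation} to obtain a quantifier-free logical labeling scheme $S=(\varphi,c)$ with $\mathcal{C}\subseteq\gr[\infty]{S}$ (so that no overflow issues arise and $\varphi$ can be read as a formula over $\mathcal{N}$). Enumerate the atoms of $\varphi$ as $\alpha_1,\dots,\alpha_a$, let $f$ be the underlying $a$-ary boolean function, and set $S_i=(\alpha_i,c)$, interpreting each $\alpha_i$ as a formula in $\FO_{2k}(\sigma)$ by padding with unused variables so that a single labeling $\ell\colon V(G)\to[n^c]_0^k$ works uniformly for $\varphi$ and every $\alpha_i$. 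I then claim $\gr[\infty]{\varphi,c}\subseteq f(\gr[\infty]{S_1},\dots,\gr[\infty]{S_a})$: given $G\in\gr[\infty]{\varphi,c}$ via $\ell$, define $H_i$ on vertex set $V(G)$ by declaring $(u,v)\in E(H_i)\Leftrightarrow \mathcal{N},(\ell(u),\ell(v))\models\alpha_i$. By construction each $H_i$ lies in $\gr[\infty]{S_i}$ via the very same labeling $\ell$, and since $\varphi\equiv f(\alpha_1,\dots,\alpha_a)$ propositionally, the equivalence $(u,v)\in E(G)\Leftrightarrow f(\llbracket \alpha_1\rrbracket,\dots,\llbracket \alpha_a\rrbracket)=1$ gives $G\equiv f(H_1,\dots,H_a)$ off the diagonal, which is exactly what is required by Definition~\ref{def:gciobf} (self-loops are ignored in $f(\cdot)$).

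For the backward direction ($\Leftarrow$), suppose $\mathcal{C}\subseteq f(\gr[\infty]{S_1},\dots,\gr[\infty]{S_a})$ with each $S_i$ an atomic logical labeling scheme in $\ccg\FO_{\mathrm{qf}}(\sigma)$. Each atomic scheme is in particular quantifier-free, so Lemma~\ref{lem:inftyinterpretation} applied to $\mathcal{D}_i:=\gr[\infty]{S_i}$ itself gives $\mathcal{D}_i\in\ccg\FO_{\mathrm{qf}}(\sigma)$. Fact~\ref{fact:gfounion} then yields $\mathcal{D}:=\mathcal{D}_1\cup\cdots\cup\mathcal{D}_a\in\ccg\FO_{\mathrm{qf}}(\sigma)$. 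Observe that $f(\mathcal{C}_1,\dots,\mathcal{C}_a)$ is monotone in each argument class (enlarging the allowed factor graphs can only enlarge the resulting class, regardless of whether $f$ is a monotone boolean function), so $\mathcal{C}\subseteq f(\mathcal{D},\dots,\mathcal{D})$, i.e.\ $\mathcal{C}\bfreduction\mathcal{D}$. Closure of $\ccg\FO_{\mathrm{qf}}(\sigma)$ under $\bfreduction$ (Lemma~\ref{lem:gfobf}) now gives $\mathcal{C}\in\ccg\FO_{\mathrm{qf}}(\sigma)$.

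The main (though minor) obstacle I anticipate is bureaucratic rather than conceptual: being careful that when one atom uses only a subset of the $2k$ variables, the induced atomic labeling scheme is still well-defined, and that the same labeling $\ell$ can be used as a witness for $\varphi$ and simultaneously for every $\alpha_i$. Pinning down a uniform arity $k$ across $\varphi$ and all the $S_i$, together with the precondition of Lemma~\ref{lem:inftyinterpretation} that $\sigma=\emptyset$ or `$\ltp$'$\in\sigma$ (which is why we may freely pass to $\gr[\infty]{\cdot}$ in both directions), takes care of this. Once that bookkeeping is done, the correspondence between the propositional structure of $\varphi$ and the graph-class operation $f$ is immediate from Definition~\ref{def:gciobf}.
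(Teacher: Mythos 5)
Your proof is correct and follows essentially the same route as the paper: Lemma~\ref{lem:inftyinterpretation} plus decomposition of $\varphi$ into its atoms and underlying boolean function for ``$\Rightarrow$'', and closure under union (Fact~\ref{fact:gfounion}) plus closure under $\bfreduction$ (Lemma~\ref{lem:gfobf}) for ``$\Leftarrow$''. The only cosmetic difference is that the paper makes the atoms' variable sets pairwise disjoint and extracts separate labelings $\ell_i$, whereas you reuse one labeling $\ell$ for all atoms; both are fine since Definition~\ref{def:gciobf} only asks for the existence of suitable witness graphs on the common vertex set.
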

\begin{proof}
	``$\Rightarrow$'': Let $\mathcal{C}$ be in $\ccg\FO_{\mathrm{qf}}(\sigma)$. Due to Lemma \ref{lem:inftyinterpretation} there exists a a logical labeling scheme $S=(\varphi,c)$ in  $\ccg\FO_{\mathrm{qf}}(\sigma)$ such that $\mathcal{C} \subseteq \gr[\infty]{S}$.
    Let $A_1,\dots,A_a$ be the atoms of $\varphi$ and $f$ is the underlying $a$-ary boolean function of $\varphi$. 
	Let $\varphi$ have $2ak$ variables $x_{i,j},y_{i,j}$ with $i \in [a]$ and $j \in [k]$. Furthermore, w.l.o.g.~let the set of variables used in $A_i$ be a subset of $\set{ x_{i,j},y_{i,j} }{ j \in [k]}$ for $i \in [a]$. This means that the variables that occur in $A_i$ and $A_j$ are disjoint for all $i \neq j \in [a]$.
	We claim that $\mathcal{C} \subseteq f(\gr[\infty]{A_1,c},\dots,\gr[\infty]{A_a,c})$. 
	For a graph $G \in \mathcal{C}$ there exist labelings $\ell_i \colon V(G) \rightarrow {[n^c]}_0^{k}$ for each $i \in [a]$ such that 
	$$ (u,v) \in E(G) \Leftrightarrow f(x_1,\dots,x_a) = 1 \text { with } x_i := \llbracket \mathcal{N}, (\ell_i(u),\ell_i(v)) \models A_i \rrbracket  $$
	for all $u, v \in V(G)$. Let $H_i$ be the graph with the same vertex set as $G$ and there is an edge $(u,v) \in E(H_i)$ iff $\mathcal{N}, (\ell_i(u),\ell_i(v)) \models A_i$ for all $i \in [a]$. It holds that $G = f(H_1,\dots,H_a)$ and $H_i \in \gr[\infty]{A_i,c}$ via $\ell_i$.
	
	``$\Leftarrow$'': Since $\ccg\FO_{\mathrm{qf}}(\sigma)$ is closed under union (Fact \ref{fact:gfounion}) it holds that $\mathcal{D} = \bigcup\limits_{i=1}^a \gr[\infty]{S_i}$ is in $\ccg\FO_{\mathrm{qf}}(\sigma)$. Additionally, $\mathcal{C} \bfreduction \mathcal{D}$ via the $a$-ary boolean function $f$  because $\mathcal{C} \subseteq f(\mathcal{D},\dots,\mathcal{D})$. It follows that $\mathcal{C}$ is in $\ccg\FO_{\mathrm{qf}}(\sigma)$ due to closure under $\bfreduction$. 	
\end{proof}

\begin{theorem}
    $\gccfoltqf = \ccg \FO_{\mathrm{qf}}(\ltp,\addp)  = \ccg \FO_{\mathrm{qf}}(\ltp,\mulp) $.
    \label{thm:adduseless}
\end{theorem}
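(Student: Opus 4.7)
The inclusions $\gccfoltqf \subseteq \ccg\FO_{\mathrm{qf}}(\ltp,\addp)$ and $\gccfoltqf \subseteq \ccg\FO_{\mathrm{qf}}(\ltp,\mulp)$ are immediate, so I focus on the two converse directions. By Theorem~\ref{thm:gfo_algebra}, every class in $\ccg\FO_{\mathrm{qf}}(\sigma)$ is contained in some boolean combination $f(\gr[\infty]{A_1,c},\dots,\gr[\infty]{A_a,c})$ where each $A_i$ is a single atom in $\FO_{\mathrm{qf}}(\sigma)$, and $\gccfoltqf$ is closed under $\bfreduction$ by Lemma~\ref{lem:gfobf}, hence under arbitrary boolean combinations. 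It therefore suffices to prove, for every atom $A$ in $\FO_{\mathrm{qf}}(\ltp,\addp)$ or $\FO_{\mathrm{qf}}(\ltp,\mulp)$ and every $c \in \N$, that $\gr[\infty]{A,c}$ lies in $\gccfoltqf$.

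For the additive fragment, an atom has the shape $s(\vec{x},\vec{y}) \triangleleft t(\vec{x},\vec{y})$ with $\triangleleft \in \{=,\ltp\}$ and $s,t$ linear forms with non-negative integer coefficients. Grouping by variable class rewrites this as $\mu(\vec{x}) \triangleleft \nu(\vec{y})$, where $\mu := s_x - t_x$ and $\nu := t_y - s_y$ are integer-valued linear forms that may take negative values but whose absolute values on labels in $[n^c]_0^k$ are bounded by some $M = n^{O(c)}$. Given a labeling $\ell$ witnessing $G \in \gr[\infty]{A,c}$, I append to each vertex $v$ the two non-negative coordinates $\mu(\ell(v))+M$ and $\nu(\ell(v))+M$, both lying in $[0,2M]$, so that a label length $c' = c + O(1)$ suffices. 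The new label decoder is the $\FO_{\mathrm{qf}}(\ltp)$ formula that compares these two appended coordinates via $\triangleleft$; since the common shift $+M$ preserves both $\ltp$ and $=$, the resulting scheme witnesses $\gr[\infty]{A,c} \in \gccfoltqf$.

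For the multiplicative fragment, an atom factors as $P(\vec{x}) Q(\vec{y}) \triangleleft R(\vec{x}) T(\vec{y})$ with $P,Q,R,T$ monomials. I precompute $p(v)=P(\ell(v))$, $q(v)=Q(\ell(v))$, $r(v)=R(\ell(v))$, $t(v)=T(\ell(v))$ as additional label coordinates, each bounded by $(n^c)^d$ where $d$ is the maximum monomial degree and hence still polynomial in $n$. In the generic case $q(v), r(u) > 0$ the inequality $p(u)q(v) \ltp r(u)t(v)$ is equivalent to the rational comparison $p(u)/r(u) \ltp t(v)/q(v)$, which I express via rank encoding: sort the at most $2n$ rationals $\{p(v)/r(v) : r(v) > 0\} \cup \{t(v)/q(v) : q(v) > 0\}$ occurring in $G$ and append to each vertex the ranks of its two rationals in this sorted list, yielding two further coordinates in $[0,2n]$. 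The $\FO_{\mathrm{qf}}(\ltp)$ label decoder then consists of a quantifier-free case split on which of $p,q,r,t$ vanish (with the constant $0$ provided by a sentinel always-zero coordinate appended to every label) together with a $\ltp$-comparison of the two rank coordinates in the non-degenerate case. Equality atoms are handled analogously.

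The main obstacle is the multiplicative case: because we cannot combine the positive and zero subcases through an arithmetic identity in the target language, the encoding must pre-load every auxiliary value required by the case analysis (the four monomial values, the two ranks, and the sentinel zero) into the extended label while keeping its length polynomial in $n$; the rank-encoding trick is what reduces the otherwise cross-vertex product comparison to a pure $\ltp$-comparison between two coordinates.
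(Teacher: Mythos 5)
Your proof is correct and follows essentially the same route as the paper: reduce to single atoms via Theorem~\ref{thm:gfo_algebra} together with closure under union and $\bfreduction$, then eliminate the arithmetic by separating the $x$- and $y$-variables onto opposite sides of the (in)equation and precomputing the resulting one-sided values, order-encoded, as extra label coordinates compared only with $\ltp$ and equality. The differences are cosmetic: for addition you store shifted values where the paper rank-encodes the two sides, and for multiplication you make explicit the ratio/rank encoding and the zero-denominator case split that the paper's proof leaves implicit.
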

\begin{proof}
    To prove that $\ccg \FO_{\mathrm{qf}}(\ltp,\alpha)$ is a subset of $\gccfoltqf$ for $\alpha \in \{\addp,\mulp\}$ we argue that it suffices to show that for every atomic labeling scheme $S$ in $\ccg \FO_{\mathrm{qf}}(\ltp,\alpha)$ it holds that $\gr[\infty]{S} \in \gccfoltqf$. Given a graph class $\mathcal{C} \in \ccg \FO_{\mathrm{qf}}(\ltp,\alpha)$, there exist atomic labeling schemes $S_1,\dots,S_a$ in $\ccg \FO_{\mathrm{qf}}(\ltp,\alpha)$ and an $a$-ary boolean function $f$ such that $\mathcal{C} \subseteq f(\gr[\infty]{S_1},\dots, \gr[\infty]{S_a})$ because of Theorem~\ref{thm:gfo_algebra}. By assumption it holds that $\gr[\infty]{S_1},\dots, \gr[\infty]{S_a}$ are in $\gccfoltqf$ and therefore $\mathcal{D} = \bigcup_{i=1}^k \gr[\infty]{S_i}$ is in $\gccfoltqf$  due to closure under union. Then $\mathcal{C} \bfreduction \mathcal{D}$ via $f$ and due to closure under $\bfreduction$ it follows that $\mathcal{C} \in \gccfoltqf$.

    Let $S = (\varphi,c)$ be an atomic labeling scheme in $\ccg \FO_{\mathrm{qf}}(\ltp,\alpha)$. We argue that $\gr[\infty]{S}$ is in $\gccfoltqf$.
    Using $\gr[\infty]{S}$ instead of $\gr{S}$ allows us to assume that addition and multiplication are associative.  
     Let $\varphi$ have variables $x_1,\dots,x_k,y_1,\dots,y_k$. The idea is to rearrange the (in)equation such that the variables $x_1,\dots,x_k$ are on one side of the (in)equation and $y_1,\dots,y_k$ are on the other side. This allows us to precompute the required values in the labeling of the new labeling scheme which does not use $\alpha$. Let us show how this works in detail when $\alpha$ is `$\addp$' and $\varphi$ uses `$\ltp$'. In that case $\varphi$ is a linear inequation and can be written as
    $$ \sum\limits_{i=1}^k a_i x_i + b_i y_i < \sum\limits_{i=1}^k c_i x_i + d_i y_i $$
    for certain $a_i,b_i,c_i,d_i \in \N_0$ for $i \in [k]$. This can be rewritten as:
    $$ \underbrace{\sum\limits_{i=1}^k (a_i - c_i) x_i}_{l_n(x_1,\dots,x_k)} < \underbrace{\sum\limits_{i=1}^k (d_i - b_i) y_i}_{r_n(y_1,\dots,y_k)} $$    
    For $n \in \N$ let $l_n,r_n$ be the functions induced by the left-hand and right-hand expression with signature $l_n,r_n \colon {[n^c]}_0^k \rightarrow \mathbb{R}$.  
    Let $E_n$ be the union of the image of $l_n$ and the image of $r_n$.      
    Let $E_n = \{e_1,\dots,e_{z_n}\}$ for some $z_n \in \N$ and $e_i < e_j$ for all $i < j$ with $i,j \in [z_n]$. 
    It holds for all $n \in \N$, $\vec{a},\vec{b} \in [n^c]_0^k$ and $e_i = l_n(\vec{a}), e_j = r_n(\vec{b})$ that
    $$\mathcal{N},({\vec{a},\vec{b}}) \models \varphi \Leftrightarrow l_n(\vec{a}) < r_n(\vec{b}) \Leftrightarrow  e_i < e_j  \Leftrightarrow i < j  $$   
    We claim that for the labeling scheme $S'=(\psi,c')$ where $\psi(x_1,x_2,y_1,y_2)$ is $x_1 < y_2$ and $c' \in \N$ is chosen sufficiently large, it holds that $\gr[\infty]{S} \subseteq \gr[\infty]{S'}$. Consider a graph $G$ on $n$ vertices that is in $\gr[\infty]{S}$ via a labeling $\ell \colon V(G) \rightarrow {[n^c]}_0^k$. We construct a labeling $\ell' \colon V(G) \rightarrow {[n^{c'}]}_0^2$ which shows that $G$ is in $\gr[\infty]{S'}$. For $u \in V(G)$ let $\ell'(u) = (i,j)$ with $e_i = l_n(\ell(u))$ and $e_j = r_n(\ell(v))$. For all $u, v \in V(G)$ it holds that    
    \begin{align*}
        (u,v) \in E(G) & \Leftrightarrow \mathcal{N},(\ell(u),\ell(v)) \models \varphi 
        \\ & \Leftrightarrow l_n(\ell(u)) < r_n(\ell(v))  
        \\ & \Leftrightarrow \ell'(u)_1 < \ell'(v)_2
        \\ & \Leftrightarrow \mathcal{N},(\ell'(u),\ell'(v)) \models \psi 
    \end{align*}
\end{proof}

\begin{fact}
   $\ccg\FO_{\mathrm{qf}}(=) = \gccfoeq$.  
   \label{fact:foeqqf}
\end{fact}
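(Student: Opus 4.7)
The plan is to prove the non-trivial inclusion $\gccfoeq \subseteq \ccg\FO_{\mathrm{qf}}(=)$; the converse is immediate since every quantifier-free formula is itself a formula. The main ingredient will be a standard quantifier elimination result for the theory of equality over finite sets: for every $\varphi(\vec{x}) \in \FO(=)$ there exist a quantifier-free $\psi(\vec{x}) \in \FO(=)$ and a threshold $N_0 \in \N$ such that
\[ \mathcal{N}_m,\vec{a} \models \varphi \iff \mathcal{N}_m,\vec{a} \models \psi \]
for every $m \ge N_0$ and every $\vec{a} \in [m]_0^{|\vec{x}|}$. I would establish this by induction on quantifier rank: after invoking the induction hypothesis on $\chi$ inside $\exists z\,\chi(z,\vec{x})$, the witness $z$ either equals some $x_i$ (finitely many cases, expressed as a disjunction of substitutions) or is distinct from every $x_i$, a single case realizable as soon as the universe strictly exceeds the number of free variables and whose truth value is determined by the equality pattern of $\vec{x}$ alone. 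Intuitively, an equality formula can only count elements up to its quantifier rank, so on sufficiently large universes its truth depends only on the equality pattern of the free variables.

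Given $\mathcal{C} \in \gccfoeq$ via a scheme $(\varphi,c)$ with $2k$ variables, I will take $(\psi,N_0)$ to be the quantifier elimination datum for $\varphi$ and set $n_0 = \lceil N_0^{1/c} \rceil$. Then for every $G \in \mathcal{C}$ on $n \ge n_0$ vertices, any labeling $\ell \colon V(G) \to [n^c]_0^k$ witnessing $G \in \gr{\varphi,c}$ also witnesses $G \in \gr{\psi,c}$, because $n^c \ge N_0$ lets the quantifier elimination equivalence apply to every pair $(\ell(u),\ell(v))$. Hence $\mathcal{C}_{\ge n_0} \in \ccg\FO_{\mathrm{qf}}(=)$.

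The finitely many remaining graphs $\mathcal{C}_{<n_0}$ each have fewer than $n_0$ vertices, so labeling each vertex by its index together with a padded list of out-neighbour indices represents $\mathcal{C}_{<n_0}$ via the quantifier-free formula $\bigvee_{i=2}^{n_0+1} y_1 = x_i$ with label length one; thus $\mathcal{C}_{<n_0} \in \ccg\FO_{\mathrm{qf}}(=)$ as well. Closure of $\ccg\FO_{\mathrm{qf}}(=)$ under union (the $\sigma = \emptyset$ case of Fact~\ref{fact:gfounion}, whose proof is self-contained and does not invoke the present fact) then yields $\mathcal{C} = \mathcal{C}_{\ge n_0} \cup \mathcal{C}_{<n_0} \in \ccg\FO_{\mathrm{qf}}(=)$. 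The principal obstacle is the quantifier elimination step; the small-$n$ patch-up via an adjacency-list labeling is routine but unavoidable because the equivalence between $\varphi$ and $\psi$ only engages once $n^c \ge N_0$.
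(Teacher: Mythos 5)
Your proposal is correct and follows essentially the same route as the paper: quantifier elimination for pure equality (valid on all sufficiently large universes, which the paper phrases via the infinite structure $\mathcal{N}$ and the observation that $\mathcal{N}_n$ agrees with $\mathcal{N}$ once $n$ exceeds the number of variables), handling the finitely many small graphs by a separate quantifier-free scheme, and combining via closure of $\ccg\FO_{\mathrm{qf}}(=)$ under union. You are also right that the $\sigma=\emptyset$, quantifier-free case of Fact~\ref{fact:gfounion} is independent of the present fact, so there is no circularity.
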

\begin{proof}
    Let $\mathcal{C}$ be in $\gccfoeq$ via a labeling scheme $S=(\varphi,c)$ with $2k$ variables, i.e.~$\mathcal{C} \subseteq \gr{S}$. Observe that $\mathcal{N}_n, \vec{a} \models \varphi$ iff $\mathcal{N}, \vec{a} \models \varphi$ for all $n > r$ and $\vec{a} \in [n]_0^{2k}$ where $r$ denotes the number of free and quantified variables in $\varphi$. 
    This means $\mathcal{C}_{> r} \subseteq \gr[\infty]{S}$.  
    Let $\psi$ be a quantifier-free formula in $\FO_{2k}(=)$ such that $\mathcal{N},\vec{a} \models \varphi$ iff $\mathcal{N},\vec{a} \models \psi$ for all $\vec{a} \in \N_0^{2k}$. The existence of $\psi$ can be proved by quantifier elimination. It follows that $\mathcal{C}_{> r}  \subseteq \gr[\infty]{\psi,c}$ and therefore $\mathcal{C}_{> r} \in \gccfoeqqf$. Since $\gccfoeqqf$ is closed under union and every finite graph class is in $\gccfoeqqf$ it follows that  $\mathcal{C}$ is in $\gccfoeqqf$.    
\end{proof}

\begin{fact}
    $\ccg\FO_{\mathrm{qf}}(<) = \gccfolt$.  
\end{fact}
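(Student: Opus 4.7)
The plan is to mirror the proof of Fact~\ref{fact:foeqqf}, inserting one extra ingredient to compensate for the fact that $\FO(<)$ over $(\mathcal{N},<)$ does not admit quantifier elimination in the pure language $\{<\}$. The inclusion $\gccfoltqf \subseteq \gccfolt$ is immediate. For the reverse inclusion, I would take $\mathcal{C}\in\gccfolt$ witnessed by $(\varphi,c)$ with $\varphi \in \FO_{2k}(<)$ of quantifier depth $q$ and produce a quantifier-free labeling scheme $(\psi,c+1)$ representing the cofinite part $\mathcal{C}_{\geq n_0}$; the finite tail $\mathcal{C}_{< n_0}$ is then absorbed via closure under union (Fact~\ref{fact:gfounion}) and the fact, already used inside the proof of Fact~\ref{fact:foeqqf}, that every finite graph class lies in $\gccfoeqqf\subseteq\gccfoltqf$.

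The first step is an Ehrenfeucht--Fra\"iss\'e game analysis on finite linear orders: the truth value $\mathcal{N}_N,\vec{a}\models\varphi$ depends only on a finite, computable EF-profile of $\vec{a}$, namely the order type of $\vec{a}$, each gap between consecutive distinct entries capped at $2^q$, the distance from $0$ to the minimum entry capped at $2^q$, and the distance from the maximum entry to $N$ capped at $2^q$. Once $\varphi$ is fixed, the set of EF-profiles on which $\varphi$ evaluates to $1$ is finite and can be enumerated.

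The second step is the labeling trick. For each vertex $u$ I would augment the original label $\ell(u)\in{[n^c]}_0^k$ by appending (i) the shifts $\ell(u)_i + s$ for $i\in[k]$ and $s\in\{1,\dots,2^q\}$ (clamped into $[0,n^c]$), and (ii) the shared constants $0,1,\dots,2^q$ together with $n^c,n^c-1,\dots,n^c-2^q$. For sufficiently large $n$, the augmented label fits into ${[n^{c+1}]}_0^{k'}$ with $k'=O(k\cdot 2^q)$, which justifies taking $c'=c+1$. I would then take $\psi$ to be the disjunction over the EF-profiles on which $\varphi$ is true; each such profile is a quantifier-free Boolean combination of $<$-atoms among the augmented coordinates, since the stored shifts let $\psi$ read off pairwise gap sizes between the two vertex labels and the stored constants let $\psi$ read off the distances to $0$ and $n^c$.

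I expect the main obstacle to be the EF-bookkeeping itself. Unlike the $\FO(=)$ case of Fact~\ref{fact:foeqqf}, where the coarse equal/distinct distinction suffices and classical quantifier elimination applies, the $\FO(<)$ case is sensitive to precise gap sizes up to $2^q$ and to proximity to the boundaries $0$ and $N$; this is precisely why formulas such as $\exists z\colon x<z<y$, expressing ``$y$ is not the immediate successor of $x$'', cannot be rewritten quantifier-free in the pure signature $\{<\}$, and why the shifts and boundary constants need to be baked into the labels. Verifying rigorously that this augmentation recovers the full EF-profile and that a single quantifier-free $\psi$ works uniformly in $n$ is the technically tightest point of the argument.
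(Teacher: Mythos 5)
Your argument is correct in outline and takes a genuinely different route from the paper. The paper works syntactically: it proves that every formula of $\FO(\ltp)$ has a quantifier-free equivalent in $\FO(\ltp,\addp)$ uniformly over all $\mathcal{N}_n$ (eliminating an existential $\exists z$ over a conjunction by the guarded atom $x + c_1 < y \wedge x \neq c_m$, with constants $c_0,c_1,c_m$ handling the overflow at the top of the universe), and then invokes Theorem~\ref{thm:adduseless}, whose re-ranking of label values is what ultimately removes the addition again at the labeling-scheme level. You instead argue semantically via Ehrenfeucht--Fra\"iss\'e types on finite linear orders with endpoints, and you bake the needed successor information directly into the labels (shifts $\ell(u)_i+s$ up to the rank-dependent threshold, plus boundary constants), so that a single quantifier-free pure-$\ltp$ formula, a disjunction over accepting profiles, decodes adjacency; the finite tail of small $n$ is then absorbed by closure under union (Fact~\ref{fact:gfounion}), which the paper's route does not need since its equivalence holds for every $\mathcal{N}_n$. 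In effect your shifts-and-constants labeling performs in one step what the paper achieves by the detour through $\FO_{\mathrm{qf}}(\ltp,\addp)$ followed by Theorem~\ref{thm:adduseless}. What each buys: the paper's proof is shorter given that Theorem~\ref{thm:adduseless} is already available and stays purely syntactic, while yours is self-contained within the signature $\{\ltp\}$ and makes explicit exactly which extra data (capped gap sizes and distances to $0$ and $n^c$) a quantifier-free order formula needs. The point you flag as tightest is indeed the only delicate one, and it does go through: clamping of a shift $\ell(u)_i+s$ only occurs when $\ell(u)_i$ lies within the threshold of $n^c$, in which case the stored boundary constants $n^c-s$ pin down its exact position, so all capped gaps remain recoverable by case distinction; with that, the profile determines the truth of $\varphi$ uniformly in $n$ and your $\psi$ is well defined.
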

\begin{proof}
    Observe that $\FO(\ltp)$ has no quantifier-elimination in the sense that there is no quantifier-free formula in $\FO(\ltp)$ which is equivalent to $\exists z \: x < z \wedge z < y$ where $x,y$ are free variables. Instead, we show that $(\star)$ for every formula $\varphi$ in $\FO_k(\ltp)$ there exists a quantifier-free formula $\psi$ in $\FO_k(\ltp,\addp)$ such that $\varphi$ and $\psi$ are equivalent w.r.t.~$\mathcal{N}_n$ for all $n \in \N$. It immediately follows that $\gccfolt \subseteq \ccg\FO_{\mathrm{qf}}(\ltp,\addp)$. Since $\gccfoltqf = \ccg \FO_{\mathrm{qf}}(\ltp,\addp)$ (Theorem~\ref{thm:adduseless}) it holds that $\gccfolt = \gccfoltqf$.
    
    Now, let us argue why $(\star)$ holds. 
    For every formula in $\FO(\ltp)$ it can be assumed w.l.o.g.~that it contains no negation since $\neg x = y$ is equivalent to $x < y \vee y < x$ and $\neg x < y$ is equivalent to $x = y \vee y < x$. To prove that every formula in $\FO(\ltp)$ has a quantifier-free equivalent in $\FO(\ltp,\addp)$ it suffices to show that every formula $\varphi$ of the form $\exists z \: C$  where $C$ is a conjunction of atoms from $\FO(\ltp)$ has a quantifier-free equivalent $\psi$ in $\FO(\ltp,\addp)$ (see \cite[p.~310]{smor}). We assume that $\psi$ can use the constants $c_0,c_1,c_m$ which represent 0, 1 and the maximal value in the universe, respectively.
    If $C$ is unsatisfiable then a quantifier-free equivalent of $\varphi$ is the negation of a tautology. Therefore we assume that $C$ is satisfiable. The conjunctive clause $C$ can be seen as a  directed acyclic graph $D_C$. The equality atoms in $C$ induce a partition of the variables in $C$; let the vertex set of $D_C$ be that partition. For two vertices $U,V$ in $D_C$ there is an edge $(U,V)$  if there exist variables $x \in U, y \in V$ such that $x < y$ is a literal in $C$. Let $Z$ be the vertex of $D_C$ which contains the quantified variable $z$. Assume that $Z$ contains another variable $x\neq z$. In that case a quantifier-free equivalent $\psi$ of $\varphi$ can be obtained by renaming every occurrence of $z$ in $C$ to $x$ and removing the quantifier. If $Z$ contains only $z$ we can proceed as follows. We assume that $z$ occurs in at least one literal of $C$ since otherwise it could be trivially removed. This implies that $Z$ is not an isolated vertex in $D_C$. If $Z$ has in-degree zero then $z$ can be replaced by the constant $c_0$. Similarly, if $Z$ has out-degree zero then $z$ can be replaced by the constant $c_m$. If $Z$ has neither in-degree nor out-degree zero then $\psi$ can be constructed from $\varphi$ as follows. For all in-neighbors $X$ of $Z$, out-neighbors $Y$ of $Z$ and variables $x \in X, y \in Y$ append `$\wedge \: x + c_1 < y \wedge x \neq c_m$' to $\psi$. Then remove  the quantifier and every atom containing $z$ from $\psi$. The atom $x + c_1 < y$ ensures that the difference between $x$ and $y$ is at least two, which was previously expressed by saying that there exists a value $z$ between $x$ and $y$. A problem occurs when $x + c_1$ evaluates to zero because $x$ is assigned the maximal value of the universe due to the overflow condition. To prevent this we add the atom $x \neq c_m$.     
    More formally, it can be checked that $\mathcal{N}_n, \vec{a} \models \varphi$ iff $\mathcal{N}_n, \vec{a} \models \psi$ for all $n \in \N$.     
\end{proof}

We remark that quantifier-free labeling schemes in $\gccfolt$ are solely determined by their formula in the following sense. Given such a formula $\varphi$ with $2k$ variables it holds that $\gr{\varphi,k} = \cup_{i \in \N} \gr{\varphi,i}$ \cite[Lem.~20]{chandoo}. 

\subsection{Complete Graph Classes}

\begin{corollary}
    Let $\sigma = \emptyset$, or $\sigma \subseteq \{\ltp,\addp,\mulp\}$ and `$\ltp$' is in $\sigma$.
    A graph class $\mathcal{D}$ is $\bfreduction$-complete for $\ccg \FO_{\mathrm{qf}}(\sigma)$ iff $\mathcal{D}$ is in $\ccg \FO_{\mathrm{qf}}(\sigma)$ and    
    $\gr[\infty]{S} \bfreduction \mathcal{D}$ holds for all atomic labeling schemes $S$ in $\ccg \FO_{\mathrm{qf}}(\sigma)$.
    \label{corol:complete}
\end{corollary}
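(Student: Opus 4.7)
The plan is to derive both directions from Theorem~\ref{thm:gfo_algebra} (the algebraic interpretation) together with monotonicity of the graph-class interpretation of a boolean function. The crucial observation is that the algebraic interpretation already expresses every class in $\ccg\FO_{\mathrm{qf}}(\sigma)$ as a boolean combination of the $\gr[\infty]{S_i}$ for atomic $S_i$, so completeness of $\mathcal{D}$ should collapse to reducibility of just these atomic building blocks.

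For the forward direction ($\Rightarrow$) I will argue essentially by definition. Assume $\mathcal{D}$ is $\bfreduction$-complete for $\ccg\FO_{\mathrm{qf}}(\sigma)$. Then $\mathcal{D}\in\ccg\FO_{\mathrm{qf}}(\sigma)$ holds by the definition of completeness. Given any atomic $S$ in $\ccg\FO_{\mathrm{qf}}(\sigma)$, Lemma~\ref{lem:inftyinterpretation} produces a labeling scheme $S'$ in $\ccg\FO_{\mathrm{qf}}(\sigma)$ with $\gr[\infty]{S}\subseteq \gr{S'}$, so $\gr[\infty]{S}$ itself is in $\ccg\FO_{\mathrm{qf}}(\sigma)$ (which is closed under subsets), and completeness therefore yields $\gr[\infty]{S}\bfreduction\mathcal{D}$.

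For the backward direction ($\Leftarrow$) suppose $\mathcal{D}\in\ccg\FO_{\mathrm{qf}}(\sigma)$ and $\gr[\infty]{S}\bfreduction \mathcal{D}$ for every atomic labeling scheme $S$ in $\ccg\FO_{\mathrm{qf}}(\sigma)$. Take an arbitrary $\mathcal{C}\in\ccg\FO_{\mathrm{qf}}(\sigma)$. Theorem~\ref{thm:gfo_algebra} supplies an $a$-ary boolean function $f$ and atomic schemes $S_1,\dots,S_a$ with
\[ \mathcal{C}\subseteq f(\gr[\infty]{S_1},\dots,\gr[\infty]{S_a}). \]
By hypothesis, for each $i\in[a]$ there is a boolean function $g_i$ with $\gr[\infty]{S_i}\subseteq g_i(\mathcal{D},\dots,\mathcal{D})$. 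A direct inspection of Definition~\ref{def:gciobf} shows that $f(\mathcal{C}_1,\dots,\mathcal{C}_a)$ is monotone in each argument, so substituting gives $\mathcal{C}\subseteq f(g_1(\mathcal{D},\dots,\mathcal{D}),\dots,g_a(\mathcal{D},\dots,\mathcal{D}))$. Compositional equivalence (Lemma~\ref{lem:compequiv}) collapses the right-hand side to $h(\mathcal{D},\dots,\mathcal{D})$, where $h$ is the boolean function obtained by composing $f$ with $g_1,\dots,g_a$. Hence $\mathcal{C}\bfreduction\mathcal{D}$, and $\mathcal{D}$ is $\bfreduction$-complete.

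There is no genuinely hard step here; essentially all the work has been done in Theorem~\ref{thm:gfo_algebra}. The only mild subtlety worth flagging is the monotonicity check, which justifies plugging the supersets $g_i(\mathcal{D},\dots,\mathcal{D})$ in for the $\gr[\infty]{S_i}$—this is a one-line observation from Definition~\ref{def:gciobf}, since $f(\vec{\mathcal{C}})$ is defined purely as the collection of graphs realised by $f$ applied to some tuple drawn from $\mathcal{C}_1\times\cdots\times\mathcal{C}_a$.
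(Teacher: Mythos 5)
Your proof is correct and follows essentially the same route as the paper: the forward direction uses Lemma~\ref{lem:inftyinterpretation} to place each $\gr[\infty]{S}$ in $\ccg\FO_{\mathrm{qf}}(\sigma)$, and the backward direction plugs the hypothesized reductions of the atomic building blocks into the decomposition from Theorem~\ref{thm:gfo_algebra} and composes the boolean functions. Your explicit remarks on monotonicity of $f(\mathcal{C}_1,\dots,\mathcal{C}_a)$ and on Lemma~\ref{lem:compequiv} merely spell out steps the paper leaves implicit.
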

\begin{proof}
    ``$\Rightarrow$'': If $\mathcal{D}$ is $\bfreduction$-complete for $\ccg \FO_{\mathrm{qf}}(\sigma)$ then every graph class in $\ccg \FO_{\mathrm{qf}}(\sigma)$ is $\bfreduction$-reducible to $\mathcal{D}$. From Lemma \ref{lem:inftyinterpretation} it follows that $\gr[\infty]{S}$ is in $\ccg \FO_{\mathrm{qf}}(\sigma)$ for every atomic labeling scheme $S$ in $\ccg \FO_{\mathrm{qf}}(\sigma)$.
    
    ``$\Leftarrow$'': Suppose that  $\gr[\infty]{S} \bfreduction \mathcal{D}$ holds for all atomic labeling schemes $S$ in $\ccg \FO_{\mathrm{qf}}(\sigma)$. From Theorem \ref{thm:gfo_algebra} it follows that if a graph class $\mathcal{C}$ is in  $\ccg \FO_{\mathrm{qf}}(\sigma)$ then there exist atomic labeling schemes $S_1,\dots,S_a$ and an $a$-ary boolean function $f$ in  $\ccg \FO_{\mathrm{qf}}(\sigma)$ such that $\mathcal{C} \subseteq f(\gr[\infty]{S_1},\dots,\gr[\infty]{S_a})$. There exist boolean functions $g_1,\dots,g_a$ such that $\gr[\infty]{S_i} \subseteq g_i(\mathcal{D},\dots,\mathcal{D})$ for all $i \in [a]$. Therefore $\mathcal{C} \subseteq f(g_1(\mathcal{D},\dots,\mathcal{D}),\dots,g_a(\mathcal{D},\dots,\mathcal{D}))$ and thus $\mathcal{C} \bfreduction \mathcal{D}$. 
\end{proof}

\begin{definition}
    A directed graph $G$ is dichotomic if for all $u, v \in V(G)$ and $\alpha \in \{\mathrm{in},\mathrm{out}\}$ it holds that $N_{\alpha}(u) \cap N_{\alpha}(v) = \emptyset$ or $N_{\alpha}(u) = N_{\alpha}(v)$.    
\end{definition}

Observe that every directed forest is dichotomic. Every vertex in a forest has in-degree at most one and therefore $N_{\mathrm{in}}(u) = N_{\mathrm{in}}(v)$ or $N_{\mathrm{in}}(u) \cap N_{\mathrm{in}}(v) = \emptyset$ for all $u, v \in V(G)$. Additionally, the out-neighborhoods of every distinct pair of vertices are disjoint because every node has a unique parent.

\begin{lemma}
    There exists an atomic labeling scheme $S$ in $\gccfoeq$ such that $\gr{S}$ is exactly the class of dichotomic graphs. 
    \label{lem:dichchar}    
\end{lemma}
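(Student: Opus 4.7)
My plan is to exhibit the atomic labeling scheme $S = (\varphi, 2)$ where $\varphi(x_1, x_2, y_1, y_2) \triangleq x_1 = y_2$ is a single equality atom in $\FO_4(=)$, and show that $\gr{S}$ coincides with the class of dichotomic graphs. The containment $\gr{S} \subseteq \{\text{dichotomic graphs}\}$ is immediate: if a graph $G$ is labeled by $u \mapsto (a_u, b_u)$, then $N_{\mathrm{out}}(u) = \{v : a_u = b_v\}$ depends only on $a_u$, so two vertices either share their out-neighborhood (when $a_u = a_{u'}$) or have disjoint out-neighborhoods (otherwise); the symmetric argument using $b_u$ handles the in-neighborhoods.

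For the reverse containment, given a dichotomic $G$ I would examine the two quotients $V/{\sim_{\mathrm{out}}}$ and $V/{\sim_{\mathrm{in}}}$, where vertices are equivalent iff they share the same out- (resp.\ in-) neighborhood. Dichotomy guarantees that each $\sim_{\mathrm{out}}$-class $C$ has a well-defined common out-neighborhood $N(C)$, and a short dichotomy argument shows that $N(C)$ is a union of $\sim_{\mathrm{in}}$-classes. This gives rise to a bipartite ``class-incidence'' graph $B$ between the out-classes and the in-classes. The central structural claim, which I would prove next, is that $B$ is a matching. Indeed, if some out-class $C$ were linked in $B$ to distinct in-classes $D \neq D'$, then picking $v \in D$ and $v' \in D'$ one has $C \subseteq N_{\mathrm{in}}(v) \cap N_{\mathrm{in}}(v')$, so by dichotomy $N_{\mathrm{in}}(v) = N_{\mathrm{in}}(v')$ and $D = D'$, contradiction; the symmetric argument caps the degree on the in-class side.

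Once $B$ is known to be a matching, the labeling essentially writes itself: assign a single common identifier to each matched pair of classes, and pairwise-distinct fresh identifiers to the unmatched classes, chosen so that no identifier used for an out-class coincides with one used for an in-class, all within $\{1,\dots,2n\} \subseteq [n^2]_0$. Labeling $u$ by $(a_u, b_u) := (\mathrm{id}([u]_{\mathrm{out}}), \mathrm{id}([u]_{\mathrm{in}}))$ then makes $a_u = b_v$ equivalent to $[u]_{\mathrm{out}}$ and $[v]_{\mathrm{in}}$ being matched in $B$, which by construction of $B$ is the same as $(u,v) \in E(G)$; note that self-loops are handled consistently, since $a_u = b_u$ iff $[u]_{\mathrm{out}}$ is matched to $[u]_{\mathrm{in}}$ iff $(u,u) \in E(G)$.

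I expect the main obstacle to be the structural lemma that $B$ is a matching: once that is in hand, the rest is just a choice of names. The proof of the matching property is short but crucially uses the dichotomy hypothesis on both directions of neighborhoods in tandem, which is what one should double-check most carefully.
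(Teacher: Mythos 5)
Your proposal is correct and takes essentially the same route as the paper: the same atom $x_1 = y_2$, the identical argument that every graph in $\gr{S}$ is dichotomic, and for the converse the same labeling by out-/in-neighborhood equivalence classes, with your ``matching'' claim being exactly the structural fact the paper phrases as the sets $V'_0,V'_1,\dots,V'_k$ forming a partition of $V(G)$. The only cosmetic difference is that you allow label length $c=2$ and up to $2n$ identifiers, whereas the paper is slightly more economical, fitting the labeling into $[n]_0^2$ with $c=1$ by lumping all in-degree-zero vertices under the single identifier $0$.
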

\begin{proof}
    Let $S=(\varphi,1)$ with $\varphi(x_1,x_2,y_1,y_2) \triangleq x_1 = y_2$.   
     
    First, we argue that every dichotomic graph is in $\gr{S}$. Given a dichotomic graph $G$ with $n$ vertices. Let $\sim$ be the equivalence relation on $V(G)$ such that $u \sim v$ if $u$ and $v$ have identical out-neighborhoods. Let $V_1,\dots,V_k$ be the equivalence classes of $\sim$.
    We write $V'_i$ to denote the out-neighbors of the vertices in $V_i$ for $i \in [k]$. 
    Let $V'_0$ be the set of vertices which have in-degree zero.      
    It holds that $V'_0,V'_1,\dots,V'_k$ is a partition of $V(G)$ (with possibly some empty sets) since $G$ is dichotomic. 
    The following labeling $\ell \colon V(G) \rightarrow [n]_0^2$ shows that $G$ is in $\gr{S}$. For $u \in V(G)$ let $\ell(u) = (u_1,u_2)$ with $u_1,u_2 \in [k]_0$ such that $u \in V_{u_1}$ and $u \in V'_{u_2}$. Since $k \leq n$ this is a valid labeling. 
    
    For the other direction let $G$ be a graph that is in $\gr{S}$ via the labeling $\ell  \colon V(G) \rightarrow [n]_0^2$. Consider two vertices $u,v$ of $G$.    
    Let $\ell(u) = (u_1,u_2)$ and $\ell(v) = (v_1,v_2)$. If $u_1 = v_1$ then they have identical out-neighborhoods. If $u_1 \neq v_1$ then they have disjoint out-neighborhoods. The same applies to the in-neighborhoods and $u_2,v_2$. Therefore $G$ is dichotomic.
\end{proof}

\begin{theorem}
    Dichotomic graphs are $\bfreduction$-complete for $\gccfoeq$.
    \label{thm:dgbfcompl}
\end{theorem}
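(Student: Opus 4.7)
The plan is to apply Corollary~\ref{corol:complete} together with Fact~\ref{fact:foeqqf}, which yields $\gccfoeq = \gccfoeqqf$. Lemma~\ref{lem:dichchar} already shows that the class $\mathcal{D}$ of dichotomic graphs lies in $\gccfoeqqf$, so it remains to verify the second criterion of Corollary~\ref{corol:complete}: $\gr[\infty]{S} \bfreduction \mathcal{D}$ for every atomic labeling scheme $S = (\alpha, c)$ in $\gccfoeqqf$.

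Since $\FO(=)$ has no function symbols and no constants, every atom with free variables among $\{x_1,\dots,x_k,y_1,\dots,y_k\}$ is of the form $v_1 = v_2$ for two such variables. Up to the symmetric roles of the $x$- and $y$-blocks and the trivial case $v_1 = v_2$ being the same variable (where $\alpha$ is a tautology and $\gr[\infty]{S}$ consists only of the complete graphs, which are trivially in $\mathcal{D}$), there are three cases to consider: (a) $\alpha$ is $x_i = x_j$ with $i \neq j$, (b) $\alpha$ is $y_i = y_j$ with $i \neq j$, (c) $\alpha$ is $x_i = y_j$.

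For each case I would prove the stronger statement $\gr[\infty]{S} \subseteq \mathcal{D}$, from which $\gr[\infty]{S} \bfreduction \mathcal{D}$ follows via the projection $\pi_1^1$. In case (c), adjacency of $G\in\gr[\infty]{S}$ under a labeling $\ell$ is $\ell(u)_i = \ell(v)_j$, so two vertices $v, v'$ either share the same $j$-th label coordinate---giving identical in-neighborhoods---or differ in it---giving disjoint in-neighborhoods, since no $u$ can satisfy $\ell(u)_i = \ell(v)_j$ and $\ell(u)_i = \ell(v')_j$ simultaneously. The symmetric argument for out-neighborhoods of two vertices $u, u'$ (using coordinate $i$) finishes the verification that $G\in\mathcal{D}$. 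In case (a), adjacency depends only on the unary predicate $P(u) := \llbracket \ell(u)_i = \ell(u)_j \rrbracket$ on the source vertex; every $v$ then has the common in-neighborhood $\{u : P(u)\}$, while the out-neighborhood of each $u$ is either $V(G)$ or $\emptyset$ according to $P(u)$. Both dichotomic conditions hold trivially (two out-neighborhoods are either equal, or one is $V(G)$ and the other is $\emptyset$, which are disjoint). Case (b) is the mirror image of (a).

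Combining the three cases yields $\gr[\infty]{S} \subseteq \mathcal{D}$ for every atomic $S$ in $\gccfoeqqf$, hence $\gr[\infty]{S} \bfreduction \mathcal{D}$, and Corollary~\ref{corol:complete} finishes the proof. I expect no real obstacle here: the case analysis is a short and direct inspection of the two atoms permitted by $\FO(=)$, and the only mild care needed is to recall that $f(\mathcal{D})$ and hence $\bfreduction$-reducibility are defined up to the equivalence $\equiv$ from Section~\ref{sec:ar}, so the containments above translate into valid reductions irrespective of how self-loops are treated.
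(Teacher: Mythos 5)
Your proposal is correct and follows essentially the same route as the paper: membership via Lemma~\ref{lem:dichchar}, and hardness via Corollary~\ref{corol:complete} with the same three-way case analysis of the possible atoms in $\FO(=)$. The only (harmless) deviation is in the case $x_a = y_b$, where the paper explicitly renumbers the labels via an order-rank map into $[n]_0^2$ to invoke the labeling characterization of dichotomic graphs, while you verify the equal-or-disjoint neighborhood condition directly from the definition; both yield $\gr[\infty]{S} \subseteq$ dichotomic graphs and hence the required $\bfreduction$-reduction.
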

\begin{proof}
    From the previous lemma it follows that dichotomic graphs are in $\gccfoeq$. For the hardness we have to argue that for every atomic labeling scheme $S$ in $\gccfoeq$ it holds that $\gr[\infty]{S}$ is $\bfreduction$-reducible to dichotomic graphs (see Corollary \ref{corol:complete}). Let $S=(\varphi,c)$ be an atomic labeling scheme in $\gccfoeq$ with $2k$ variables $x_1,\dots,x_k,y_1,\dots,y_k$ for some $c,k \in \N$. The formula $\varphi$ must be one of the following:
    \begin{enumerate}
        \item $x_a = x_b$ for some $a,b \in [k]$
		\item $y_a = y_b$ for some $a,b \in [k]$    	
        \item $x_a = y_b$ for some $a,b \in [k]$
    \end{enumerate}
	It is simple to check that every graph in $\gr{S}$ is dichotomic for the first two cases. It remains to deal with the third case. Given a graph $G$ that is in $\gr{S}$ via a labeling $\ell \colon V(G) \rightarrow [n^c]_0^k$. We construct a labeling $\ell' \colon V(G) \rightarrow [n]_0^2$ such that $(u,v) \in E(G)$ iff $\ell'(u)_1 = \ell'(v)_2$ for all $u, v \in V(G)$. A graph is dichotomic iff it has such a labeling $\ell'$	(see the proof of Lemma \ref{lem:dichchar}).
	Let $V(G) = \{v_1,\dots,v_n\}$ and $\ell(v_i) = (v_i^1,\dots,v_i^k)$ for $i \in [n]$. Observe that only the $a$-th and $b$-th component of the labeling $\ell$ are relevant because the other components are never considered. For a set $Z \subseteq \N$ and $z \in Z$ let $\mathrm{ord}(z,Z)$ denote the number of elements smaller than $z$ in $Z$ plus one, e.g.~$\mathrm{ord}(0,\{0,3,4\}) = 1$.
	Let $A = \{v_1^a,\dots,v_n^a \}$. Given $i \in [n]$ we define $\ell'(v_i)$ as $( \mathrm{ord}(v_i^a,A),v_i')$ with $v_i' = 0$ if $v_i^b$ is not in $A$ and $v_i' = \mathrm{ord}(v_i^b,A)$ otherwise. 
     Correctness follows from the fact that $v_i^a = v_j^b$ iff $\ell'(v_i)_1 = \ell'(v_j)_2$ for all $i,j \in [n]$ and that only numbers between $0$ and $n$ are used. 
\end{proof}

\begin{corollary}
    Dichotomic graphs are $\sgreduction$-complete for $\gccfoeq$. 
    \label{corol:dgsgcompl}
\end{corollary}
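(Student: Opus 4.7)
The plan is to bootstrap from Theorem~\ref{thm:dgbfcompl} via Lemma~\ref{lem:bftosg}. Since Theorem~\ref{thm:dgbfcompl} already establishes that dichotomic graphs lie in $\gccfoeq$ and that every class in $\gccfoeq$ is $\bfreduction$-reducible to dichotomic graphs, the only thing missing is to verify the structural prerequisites of Lemma~\ref{lem:bftosg}, namely that the class of dichotomic graphs is self-universal and inflatable. Lemma~\ref{lem:bftosg} then upgrades every $\bfreduction$-reduction to dichotomic graphs to a $\sgdreduction$-reduction, and since $\sgdreduction$ is by definition a special case of $\sgreduction$, completeness transfers.

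First I would check inflatability. Given a dichotomic graph $G$ on $n$ vertices and some $m > n$, form $G'$ by adding $m - n$ isolated vertices (no incoming or outgoing edges). For any new vertex $w$ one has $N_{\mathrm{in}}(w) = N_{\mathrm{out}}(w) = \emptyset$, which is trivially either disjoint from or equal to the corresponding neighborhood of any other vertex, and adjacencies among old vertices are unaffected. Hence $G'$ is dichotomic and contains $G$ as induced subgraph.

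Next I would check that dichotomic graphs are closed under disjoint union, which by the remark in the preliminaries implies self-universality. If $G_1, G_2$ are dichotomic and $u, v \in V(G_1 \uplus G_2)$, then either both vertices lie in the same component (in which case the dichotomy condition is inherited), or they lie in different components and their in- and out-neighborhoods are subsets of disjoint vertex sets, hence automatically disjoint. So $G_1 \uplus G_2$ is dichotomic, and for any finite collection of dichotomic graphs their disjoint union serves as a common dichotomic supergraph.

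With self-universality and inflatability in hand, Lemma~\ref{lem:bftosg} yields: for every graph class $\mathcal{C} \in \gccfoeq$, the $\bfreduction$-reduction to dichotomic graphs guaranteed by Theorem~\ref{thm:dgbfcompl} can be replaced by a $\sgdreduction$-reduction, which is in particular a $\sgreduction$-reduction. Combined with Lemma~\ref{lem:dichchar}, which places dichotomic graphs in $\gccfoeq$, this proves $\sgreduction$-completeness. I do not anticipate a genuine obstacle here; the entire argument is an assembly of previously established ingredients, and the only mild care needed is in verifying inflatability and disjoint-union-closure for \emph{directed} dichotomic graphs, both of which are immediate.
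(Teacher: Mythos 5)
Your proposal is correct and follows essentially the same route as the paper: both deduce $\sgreduction$-completeness from the $\bfreduction$-completeness of Theorem~\ref{thm:dgbfcompl} via Lemma~\ref{lem:bftosg}, after noting that dichotomic graphs are self-universal and inflatable. The only (harmless) difference is that you verify these two hypotheses directly by adding isolated vertices and by closure under disjoint union, whereas the paper simply asserts them from the characterization of dichotomic graphs as $\gr{S}$ for a logical labeling scheme; your explicit check is if anything a bit more careful.
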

\begin{proof}
   Lemma~\ref{lem:bftosg} states that for every self-universal and inflatable graph class $\mathcal{D}$ it holds that $\mathcal{C} \bfreduction \mathcal{D}$ implies $\mathcal{C} \sgreduction \mathcal{D}$. 
   Since dichotomic graphs can be characterized as $\gr{S}$ for a (logical) labeling scheme $S$ it follows that they are self-universal and inflatable. 
\end{proof}

\begin{theorem}
    Path graphs are $\sgreduction$-complete for $\gccfoeq$. 
\end{theorem}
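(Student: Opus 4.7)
The plan is to show path graphs lie in $\gccfoeq$ and then reduce dichotomic graphs to them via $\sgreduction$; by Corollary~\ref{corol:dgsgcompl} and transitivity of $\sgreduction$, this yields completeness.

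For membership, I would label the $i$-th vertex of $P_m$ by the pair $(i, i+1)$ and use the quantifier-free formula $\varphi(x_1, x_2, y_1, y_2) \equiv (x_1 = y_2) \vee (x_2 = y_1)$. Then $(v_i, v_j)$ is an edge iff $|i-j| = 1$, so every $P_m$ is in $\gr{\varphi, 2}$, and by Fact~\ref{fact:foeqqf} the path graphs lie in $\gccfoeq$.

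For hardness, I would take a dichotomic graph $G$ on $n$ vertices and the labeling $\ell_G \colon V(G) \to [n]_0^2$ produced in the proof of Lemma~\ref{lem:dichchar}, for which $(u, v) \in E(G)$ iff $\ell_G(u)_1 = \ell_G(v)_2$. The task is then to encode this equality condition into the edge structure of a single path $P = P_{n^3}$, whose adjacency is inherently a shift-by-one relation. My idea is to space class indices by a factor of three and place the ``out-coordinate'' and ``in-coordinate'' at offsets $+1$ and $0$ of their block: set $\ell(u) = (3\ell_G(u)_1 + 1,\, 3\ell_G(u)_2) \in V(P)^2$ (which fits for $n \geq 2$, the case $n = 1$ being vacuous). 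Taking $f$ to be the $2^2$-ary boolean function $f(A) = A_{1,2}$, the entry $(A_{uv}^\ell)_{1,2}$ equals one iff $|3\ell_G(u)_1 + 1 - 3\ell_G(v)_2| = 1$; since $3(a-b) = -2$ has no integer solution, this forces $\ell_G(u)_1 = \ell_G(v)_2$, which is exactly the adjacency condition in $G$.

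The main obstacle is the asymmetric offset trick itself: because undirected path adjacency is symmetric under $\pm 1$, a naive encoding would conflate the desired equality $a = b$ with the spurious $a = b \pm 1$. Spacing by three with asymmetric offsets $(+1,0)$ breaks the symmetry by producing a residue class that admits only $a = b$ as an integer solution, so projecting $f$ onto the $(1,2)$-entry cleanly extracts the required dichotomic adjacency; the other entries of $A_{uv}^\ell$ are simply ignored by $f$ and need no further analysis.
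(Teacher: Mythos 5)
Your proof is correct and follows essentially the same route as the paper: establish membership and then reduce dichotomic graphs (which are $\sgreduction$-complete by Corollary~\ref{corol:dgsgcompl}) to a path $P_{n^3}$ by translating the equality-based dichotomic labeling into path adjacency. The only difference is the gadget --- the paper uses $k=4$ with labels $(2u_1,2u_1+1,2u_2,2u_2+1)$ and $f(A)=a_{1,4}\wedge a_{2,3}$, whereas your spacing-by-three trick with $k=2$ and $f(A)=A_{1,2}$ achieves the same effect a bit more economically, and your explicit membership argument (left implicit in the paper) is also fine.
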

\begin{proof}
    Dichotomic graphs are $\sgreduction$-reducible to path graphs via $c=3, k=4$ and $f(A) = a_{1,4} \wedge a_{2,3}$ for $A=(a_{i,j})_{i,j \in [4]}$.
    For $n \in \N$ let $P_n$ be the undirected path graph with $n$ vertices. We assume that $P_n$ has vertex set $\{0,\dots,n-1\}$ and two vertices are adjacent if their absolute difference is one. We need to show that every dichotomic graph $G$ on $n$ vertices has a $(P_{n^3},f)$-representation via some labeling $\ell \colon V(G) \rightarrow [n^3-1]_0^4$. Since $G$ is dichotomic there exists a labeling $\ell' \colon V(G) \rightarrow {[n]}_0^2$ such that $(u,v) \in E(G)$ iff $\ell'(u)_1 = \ell'(v)_2$ for all $u, v \in V(G)$ (see proof of Lemma \ref{lem:dichchar}). 
    For $u \in V(G)$ let $\ell'(u) = (u_1,u_2)$; we define $\ell(u)$ as $(2u_1,2u_1+1,2u_2,2u_2+1)$. The maximal value in the image of $\ell$ is $2n+1$ which is smaller than $n^3-1$ for all $n \geq 2$.     
    For two vertices $u \neq v \in V(G)$ with $\ell'(u)=(u_1,u_2), \ell'(v) = (v_1,v_2)$ it holds that $(u,v) \in E(G)$ iff $u_1 = v_2$ iff $f(A_{uv}^{\ell}) = 1$. Therefore $G$ has a $(P_{n^3},f)$-representation via $\ell$. 
\end{proof}

\begin{definition}
    A directed graph $G$ is a linear neighborhood graph if for all $u, v \in V(G)$ and $\alpha \in \{\mathrm{in},\mathrm{out}\}$ it holds that $N_{\alpha}(u) \subseteq N_{\alpha}(v)$ or $N_{\alpha}(v) \subseteq N_{\alpha}(u)$.
\end{definition}

\begin{lemma}
    There exists an atomic labeling scheme $S$ in $\gccfolt$ such that $\gr{S}$ is exactly the class of linear neighborhood graphs. 
    \label{lem:lngchar}
\end{lemma}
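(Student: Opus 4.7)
The plan is to mirror the dichotomic-graph construction of Lemma~\ref{lem:dichchar}, replacing the equality atom by the order atom. Specifically, I would take the atomic labeling scheme $S = (\varphi, 1)$ with $\varphi(x_1, x_2, y_1, y_2) \triangleq x_1 < y_2$, which clearly lies in $\gccfolt$, and argue that $\gr{S}$ equals exactly the class of linear neighborhood graphs.

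For the inclusion $\gr{S} \subseteq$ linear neighborhood graphs, given a labeling $\ell \colon V(G) \to [n]_0^2$ witnessing $G \in \gr{S}$, I would observe that $N_{\mathrm{out}}(u) = \{v : \ell(u)_1 < \ell(v)_2\}$ depends only on the value $\ell(u)_1$, and that the map $a \mapsto \{v : a < \ell(v)_2\}$ is weakly antitone in $a$. Hence the out-neighborhoods are totally ordered by inclusion, and a symmetric argument on the second coordinate handles the in-neighborhoods.

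For the converse inclusion, given a linear neighborhood graph $G$ on $n$ vertices, I would work directly with the in-neighborhood chain (the out-chain condition is used only to verify that we are in the right class and is in fact equivalent to the in-chain condition). Enumerate the distinct in-neighborhoods as $B_0 \subsetneq B_1 \subsetneq \cdots \subsetneq B_q$ with $q \leq n-1$, define $l \colon V(G) \to \{0, \ldots, q\}$ by $N_{\mathrm{in}}(v) = B_{l(v)}$, and define $t \colon V(G) \to \{0, \ldots, q+1\}$ by $t(u) = \min\{l : u \in B_l\}$, using the convention $t(u) = q+1$ when $u$ lies in no $B_l$. The chain property then yields $(u,v) \in E(G) \Leftrightarrow u \in B_{l(v)} \Leftrightarrow t(u) \leq l(v)$, so that setting $\ell(u) = (t(u), l(u)+1) \in [n]_0^2$ converts the membership condition directly into the strict inequality $\ell(u)_1 < \ell(v)_2$. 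The argument is essentially routine; the only point requiring care is the bookkeeping that $q+1 \leq n$ so that both coordinates land in $[n]_0$, hence I do not anticipate any real obstacles.
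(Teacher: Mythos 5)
Your proposal is correct and follows essentially the same route as the paper: it uses the identical atomic scheme $\varphi(x_1,x_2,y_1,y_2) \triangleq x_1 < y_2$ with $c=1$, proves $\gr{S}\subseteq$ linear neighborhood graphs by monotonicity of the neighborhoods in each label coordinate, and proves the converse by ordering the distinct in-neighborhoods into a chain and labeling each vertex with (index of the first in-neighborhood containing it, index of its own in-neighborhood) — which is the paper's construction up to a shift in indexing (the paper phrases the chain via the equivalence classes $V_0,\dots,V_k$ of vertices with equal in-neighborhoods). The bookkeeping details you flag (the $q+1$ convention and $q+1\le n$) check out, so no gap.
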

\begin{proof}
    Let $S=(\varphi,1)$ with $\varphi(x_1,x_2,y_1,y_2) \triangleq x_1 < y_2$.
    
    First, we show that every linear neighborhood graph is in $\gr{S}$. Given a linear neighborhood graph $G$ with $n$ vertices. Let $\sim$ be the equivalence relation on $V(G)$ such that $u \sim v$ if $u$ and $v$ have identical in-neighborhoods. 
    Let $V_0$ be the set of vertices with in-degree zero. Let $V_1,\dots,V_k$ be the equivalence classes of $\sim$ except $V_0$ such that $N_{\mathrm{in}}(V_i) \subseteq N_{\mathrm{in}}(V_j)$ for all $1 \leq i < j  \leq k$. Observe that $V_0,\dots,V_k$ is a partition of $V(G)$.    
    The following labeling $\ell \colon V(G) \rightarrow [n]_0^2$ shows that $G$ is in $\gr{S}$. For $u \in V(G)$ let $\ell(u) = (u_1,u_2)$ with $u \in V_{u_2}$ and $u_1$ is the minimal value such that $u \in N_{\mathrm{in}}(V_{u_1+1})$ ($u_1=k$ if this minimum does not exist) for $u_1,u_2 \in \{0,\dots,k\}$. To see that this is correct let us consider an edge $(u,v) \in E(G)$ and $\ell(u) = (u_1,u_2), \ell(v) = (v_1,v_2)$. It holds that $u \in N_{\mathrm{in}}(v) = N_{\mathrm{in}}(V_{v_2})$. Since $u \in N_{\mathrm{in}}(V_{v_2})$ it follows that $u_1 + 1  \leq v_2$ and thus $u_1 < v_2$. Next, consider a non-edge $(u,v) \notin E(G)$. 
    It holds that $u \notin N_{\mathrm{in}}(v) = N_{\mathrm{in}}(V_{v_2})$. Therefore $u_1 + 1 \geq v_2$ and thus $u_1 \not< v_2$. 
    
    For the other direction let $G$ be a graph that is in $\gr{S}$ via a labeling $\ell \colon V(G) \rightarrow [n]_0^2$. We argue that $G$ is a linear neighborhood graph. Given two vertices $u, v \in V(G)$ and $\ell(u) = (u_1,u_2), \ell(v) = (v_1,v_2)$. If $u_1 \leq v_1$ then $N_{\mathrm{out}}(v) \subseteq N_{\mathrm{out}}(u)$. If $u_1 \geq v_1$ then $N_{\mathrm{out}}(u) \subseteq N_{\mathrm{out}}(v)$. The same holds for $u_2,v_2$ and the in-neighborhoods of $u$ and $v$. Therefore $G$ is a linear neighborhood graph.
\end{proof}

\begin{theorem}
    Linear neighborhood graphs are $\bfreduction$-complete for $\gccfolt$.
    \label{thm:lngfolt}
\end{theorem}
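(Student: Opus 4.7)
The plan is to apply Corollary~\ref{corol:complete}: it suffices to show (i) that linear neighborhood graphs lie in $\gccfolt$, which is immediate from Lemma~\ref{lem:lngchar}, and (ii) that for every atomic labeling scheme $S=(\varphi,c)$ in $\gccfolt$ we have $\gr[\infty]{S} \bfreduction \mathrm{LNG}$. The first useful structural observation is that $\mathrm{LNG}$ is closed under edge-complement: if $N_{\mathrm{in}/\mathrm{out}}(u) \subseteq N_{\mathrm{in}/\mathrm{out}}(v)$ in $G$, then the reverse inclusion holds in $\overline{G}$, because taking complements in $V$ reverses set inclusion. Moreover, the defining property of $\mathrm{LNG}$ is symmetric between in- and out-neighborhoods, so $\mathrm{LNG}$ is also closed under edge-reversal.

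For (ii), I would case-split on the shape of the atom $\varphi$, which must be $t_1 < t_2$ or $t_1 = t_2$ with $t_1,t_2 \in \{x_1,\dots,x_k,y_1,\dots,y_k\}$. When both variables are on the $x$-side (or both on the $y$-side), the adjacency $(u,v)\in E(G)$ depends only on $\ell(u)$ (respectively $\ell(v)$), so every out-neighborhood (respectively in-neighborhood) equals $V$ or $\emptyset$; such a graph is trivially in $\mathrm{LNG}$. When $\varphi = x_a < y_b$, the argument from Lemma~\ref{lem:lngchar} applies verbatim: out-neighborhoods are linearly ordered by the value of $\ell(\cdot)_a$ and in-neighborhoods by the value of $\ell(\cdot)_b$, so $\gr[\infty]{\varphi,c} \subseteq \mathrm{LNG}$. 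The mirror case $\varphi = y_b < x_a$ follows by the same kind of direct neighborhood analysis, or equivalently by noting that the resulting graph class lies in $\mathrm{LNG}$ after edge-reversal, and $\mathrm{LNG}$ is closed under reversal.

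The remaining and most interesting case is $\varphi = x_a = y_b$. Here I use the identity $\ell(u)_a = \ell(v)_b \Leftrightarrow \neg(\ell(u)_a < \ell(v)_b) \wedge \neg(\ell(v)_b < \ell(u)_a)$ to write, for any $G \in \gr[\infty]{\varphi,c}$ witnessed by labeling $\ell$, the graph $G$ (up to self-loops, which we may ignore since $\mathrm{LNG}$ is closed under $\equiv$) as the edgewise conjunction $\overline{G_{<}} \wedge \overline{G_{>}}$, where $G_{<}$ and $G_{>}$ are the graphs on the same vertex set defined by $(u,v)\in E(G_{<}) \Leftrightarrow \ell(u)_a < \ell(v)_b$ and $(u,v)\in E(G_{>}) \Leftrightarrow \ell(v)_b < \ell(u)_a$. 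Both $G_{<}$ and $G_{>}$ lie in $\mathrm{LNG}$ by the previous paragraph, so this exhibits $\gr[\infty]{\varphi,c} \subseteq f(\mathrm{LNG},\mathrm{LNG})$ for the boolean function $f(p,q) = \neg p \wedge \neg q$, giving the required $\bfreduction$.

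The main point requiring care is the equality case: the crucial feature is that both auxiliary graphs $G_<$ and $G_>$ are built from the \emph{same} labeling $\ell$, so the simultaneous boolean combination reproduces $G$ exactly, and this is precisely what Definition~\ref{def:gciobf} requires for a $\bfreduction$. The other cases are routine once one unpacks the characterization of $\mathrm{LNG}$ from Lemma~\ref{lem:lngchar} and exploits the closure of $\mathrm{LNG}$ under complement and reversal. Combining (i) and the case analysis of (ii) with Corollary~\ref{corol:complete} yields the theorem.
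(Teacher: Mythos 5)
Your proof is correct and follows essentially the same route as the paper: membership via Lemma~\ref{lem:lngchar}, hardness via Corollary~\ref{corol:complete} with a case analysis on atoms, where same-side atoms are trivial and the equality atom is expressed as a NOR of the two order atoms built from the same labeling. The only cosmetic difference is in the atom $x_a < y_b$: the paper compresses the labels order-preservingly into $[n]_0$ to land in $\gr{\varphi,1}$, whereas you note that the neighborhood-nesting argument from Lemma~\ref{lem:lngchar} does not use the bound on label values, which is equally valid.
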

\begin{proof}
    Membership follows from the previous lemma. For the hardness we have to show for every atomic labeling scheme $S=(\varphi,c)$ in $\gccfolt$ that $\gr{S}$ is $\bfreduction$-reducible to linear neighborhood graphs. 
    If $\varphi$ uses equality then it can be rewritten using order because $x = y$ iff $\neg (x < y \vee y < x)$.    
     Therefore we assume that $\varphi$ uses order. Let $\varphi$ have variables $x_1,x_2,y_1,y_2$. Using more than two variables per vertex is useless in an atomic labeling scheme without functional symbols as we have seen in the proof of Theorem~\ref{thm:dgbfcompl}. If  $\varphi \triangleq x_i < x_j$ (or $y_i < y_j$) for $i,j \in [2]$ then it is trivial to see that $\gr{S}$ is a subset of linear neighborhood graphs. We assume w.l.o.g.~that $\varphi  \triangleq  x_1 < y_2$.  We show that $\gr{\varphi,c} \subseteq \gr{\varphi,1}$ for all $c \in \N$. Since $\gr{\varphi,1}$ are linear neighborhood graphs this concludes the hardness. Let $G$ be a graph with $n$ vertices that is in $\gr{\varphi,c}$ via a labeling $\ell \colon V(G) \rightarrow [n^c]_0^2$. We argue that there is a labeling $\ell' \colon V(G) \rightarrow [n]_0^2$ which shows that $G$ is in $\gr{\varphi,1}$. Let $V(G) = \{v_1,\dots,v_n\}$ and $\ell(v_i) = (v_i^1,v_i^2)$ for $i \in [n]$. Observe that only the relative order of the labels is relevant. Therefore the labels $v_1^2,\dots,v_n^2$ can be mapped to new labels $\bar{v}_1^2,\dots,\bar{v}_n^2 \subseteq \{1,\dots,n\}$ such that order is preserved, i.e.~$v_i^2 < v_j^2$ iff $\bar{v}_i^2 < \bar{v}_j^2$ for all $i,j \in [n]$. Similarly, the labels  $v_1^1,\dots,v_n^1$ can be mapped to new labels $\bar{v}_1^1,\dots,\bar{v}_n^1 \subseteq \{0,1,\dots,n\}$ such that $v_i^1 < v_j^2$ iff $\bar{v}_i^1 < \bar{v}_j^2$ for all $i,j \in [n]$.    
\end{proof}

\begin{corollary}
   Linear neighborhood graphs are $\sgreduction$-complete for $\gccfolt$. 
\end{corollary}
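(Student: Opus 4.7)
The plan is to mirror the proof of Corollary~\ref{corol:dgsgcompl} almost verbatim. Membership of linear neighborhood graphs in $\gccfolt$ is already supplied by Lemma~\ref{lem:lngchar}, so it only remains to establish the hardness: every graph class $\mathcal{C} \in \gccfolt$ admits a subgraph reduction to the class of linear neighborhood graphs.

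For the hardness I would invoke the pipeline already built in this section. Theorem~\ref{thm:lngfolt} gives $\mathcal{C} \bfreduction \mathcal{D}$ where $\mathcal{D}$ denotes the class of linear neighborhood graphs. To upgrade this to a subgraph reduction I would apply Lemma~\ref{lem:bftosg}, which says that whenever the target class is self-universal and inflatable, an algebraic reduction can be converted into a diagonal subgraph reduction. Hence it suffices to verify that $\mathcal{D}$ is self-universal and inflatable.

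Both properties are immediate from the characterization $\mathcal{D} = \gr{S}$ obtained in Lemma~\ref{lem:lngchar} for the atomic labeling scheme $S = (x_1 < y_2,\, 1)$. Given any finite collection of graphs in $\mathcal{D}$, labelings for them into $[n]_0^2$ combine into a single labeling of their disjoint union after shifting coordinates, producing a larger graph in $\mathcal{D}$ that contains them all as induced subgraphs; this establishes self-universality. Inflatability is analogous: given $G \in \mathcal{D}_n$ via a labeling $\ell \colon V(G) \rightarrow [n]_0^2$ and any $m > n$, simply add isolated vertices carrying fresh labels (or extend the label range and add vertices with coordinates chosen to be incomparable under~$<$ with existing labels), obtaining a supergraph in $\mathcal{D}_m$ that contains $G$ as an induced subgraph.

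Combining these ingredients yields $\mathcal{C} \sgdreduction \mathcal{D}$ for every $\mathcal{C} \in \gccfolt$, which in particular implies $\mathcal{C} \sgreduction \mathcal{D}$, proving $\sgreduction$-completeness. No step looks to present a serious obstacle; the only mild subtlety is making sure the inflating construction stays within $\mathcal{D}$, but since $\mathcal{D}$ is defined via an existential labeling condition into $[n]_0^2$, this is routine.
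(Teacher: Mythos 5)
Your route is the same as the paper's: membership from Lemma~\ref{lem:lngchar}, hardness by combining the $\bfreduction$-completeness of linear neighborhood graphs (Theorem~\ref{thm:lngfolt}) with Lemma~\ref{lem:bftosg}, for which one only needs that the class is self-universal and inflatable; the paper's proof of Corollary~\ref{corol:dgsgcompl} is exactly this, relying on the fact that the class is of the form $\gr{S}$ for a logical labeling scheme. Your inflatability argument (isolated vertices with label $(\max,0)$) is fine.

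However, your self-universality verification as written contains a false step: you claim that shifting coordinates turns the labelings into a labeling of the \emph{disjoint union}, which then lies in the class. Linear neighborhood graphs are not closed under disjoint union: take $G_1$ a single arc $a \to b$ and $G_2$ a single arc $c \to d$; in $G_1 \disjunion G_2$ the sets $N_{\mathrm{out}}(a)=\{b\}$ and $N_{\mathrm{out}}(c)=\{d\}$ are nonempty and disjoint, so neither contains the other. Correspondingly, no shift can make the combined labeling represent the disjoint union under $x_1 < y_2$ (the four inequalities forced by the two arcs and the absence of cross edges are contradictory). The repair is immediate and is what the paper's argument implicitly does: take as witness the graph \emph{spanned} by the combined labeling, i.e.\ the graph on the union of the vertex sets whose edges (including cross edges between the pieces) are exactly those dictated by the formula. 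This graph is in $\gr{S}$ by construction, and since adjacency inside each piece is decided by the same formula on the original (order-preservingly shifted, or even unshifted) labels, each $G_i$ appears as an induced subgraph. With that correction the proof goes through and matches the paper's.
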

\begin{proof}
    Same argument as in the proof of Corollary \ref{corol:dgsgcompl}.
\end{proof}

\begin{theorem}
    The transitive closure of directed paths is $\sgreduction$-complete for $\gccfolt$.
\end{theorem}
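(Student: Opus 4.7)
My plan is to write $\mathcal{T}$ for the class of transitive closures of directed paths: up to isomorphism these are the linear orders $L_n$ on vertex set $\{0,1,\dots,n-1\}$ with an arc from $i$ to $j$ iff $i<j$. Note that $\mathcal{T}$ is hereditary (deleting vertices from a linear order yields a linear order) and inflatable (every $L_n$ occurs as an induced subgraph of $L_m$ for every $m > n$). Hence by Lemma~\ref{lem:sgwoc} any $\sgreduction$-reduction into $\mathcal{T}$ only needs to exhibit an $(H,f)$-representation with $H \in \mathcal{T}$ of unrestricted size; the exact size constraint $|V(H)| = n^c$ is then absorbed by a sufficiently large $c$. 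I would split the proof into (i)~$\mathcal{T} \in \gccfolt$ and (ii)~linear neighborhood graphs $\sgreduction$-reduce to $\mathcal{T}$. Combined with the previous corollary, which states that linear neighborhood graphs are $\sgreduction$-complete for $\gccfolt$, and the transitivity of $\sgreduction$, this yields the claim.

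For (i), I would take the atomic logical labeling scheme $S=(\varphi,1)$ with a single component $k=1$ and formula $\varphi(x,y) \triangleq x < y$. Labeling each vertex $i \in V(L_n)$ by $i \in [n]_0$ immediately witnesses $L_n \in \gr{S}$, so $\mathcal{T} \subseteq \gr{S}$ and therefore $\mathcal{T} \in \gccfolt$.

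For (ii), let $G$ be a linear neighborhood graph on $n$ vertices. Lemma~\ref{lem:lngchar} supplies a labeling $\ell \colon V(G) \to [n]_0^2$ with $(u,v) \in E(G) \Leftrightarrow \ell(u)_1 < \ell(v)_2$. I would take $H = L_m \in \mathcal{T}$ for any $m \geq n+1$ (say $m = n^2$, so $c=2$), set $k=2$, and define $\ell'(u) = (\ell(u)_1,\ell(u)_2) \in V(H)^2$. Let $f$ be the $4$-ary boolean function reading the $(1,2)$-entry of the input $2\times 2$ matrix. Because $(a,b) \in E(H)$ iff $a < b$, the $(1,2)$-entry of $A_{uv}^{\ell'}$ equals $\llbracket \ell(u)_1 < \ell(v)_2 \rrbracket$, and therefore $f(A_{uv}^{\ell'}) = 1$ iff $(u,v) \in E(G)$. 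Thus $G$ has an $(H,f)$-representation, establishing the reduction.

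There is no substantive obstacle here: the content of the argument is the observation that adjacency in $\mathcal{T}$ is literally the order on positions, so the atomic scheme $x_1 < y_2$ that characterises linear neighborhood graphs translates directly into a single edge query in $H$. The only genuine check is that $\mathcal{T}$ is hereditary and inflatable so that Lemma~\ref{lem:sgwoc} applies; both properties are immediate from the description of linear orders.
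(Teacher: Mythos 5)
Your proof is correct and follows essentially the same route as the paper: membership via the obvious order labeling, and hardness by reducing linear neighborhood graphs (complete by the preceding corollary) with $c=2$, $k=2$, the labeling from Lemma~\ref{lem:lngchar}, and $f$ reading the $(1,2)$-entry of the adjacency matrix. The extra appeal to Lemma~\ref{lem:sgwoc} is harmless but unnecessary, since choosing $H$ on exactly $n^2$ vertices already meets the size requirement, exactly as in the paper.
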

\begin{proof}
Let $D_n$ denote the transitive closure of the directed path on $n$ vertices. Let us assume that $D_n$ has $\{0,\dots,n-1\}$ as vertex set and $(u,v) \in E(D_n)$ if $u < v$. It is clear from the definition that this graph class is in $\gccfolt$.

We show that linear neighborhood graphs are $\sgreduction$-reducible to this class via $c=2$, $k=2$ and $f(A) = a_{1,2} $ for $A=(a_{i,j})_{i,j \in [2]}$. Let $G$ be a linear neighborhood graph with $n$ vertices. Then there exists a labeling $\ell \colon V(G) \rightarrow [n]_0^2$ such that $(u,v) \in E(G)$ iff $\ell(u)_1 < \ell(v)_2$. It holds that $G$ has a $(D_{n^2},f)$-representation via the same labeling $\ell$.   
\end{proof}

\begin{theorem}
    Interval graphs are $\sgreduction$-complete for $\gccfolt$. 
\end{theorem}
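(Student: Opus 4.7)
The plan has two parts: membership in $\gccfolt$ and $\sgreduction$-hardness for $\gccfolt$. For membership, I would use the labeling scheme described in the introduction: given an interval graph on $n$ vertices, enumerate the $2n$ endpoints of an interval model and label each vertex with the two integers bounding its interval. The quantifier-free formula $\varphi(x_1, x_2, y_1, y_2) \triangleq \neg(x_2 < y_1) \wedge \neg(y_2 < x_1)$ in $\FO(\ltp)$ then expresses that $[x_1,x_2]$ and $[y_1,y_2]$ intersect, assuming $x_1 \leq x_2$ and $y_1 \leq y_2$ as in the constructed labeling. Taking $c = 2$ ensures the endpoints, which lie in $\{1,\dots,2n\}$, fit inside $[n^c]_0$.

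For hardness, I would invoke Theorem~\ref{thm:lngfolt} together with Lemma~\ref{lem:bftosg}, which makes linear neighborhood graphs $\sgreduction$-complete for $\gccfolt$, so by transitivity of $\sgreduction$ it suffices to produce a subgraph reduction from linear neighborhood graphs to interval graphs. Given a linear neighborhood graph $G$ on $n$ vertices, Lemma~\ref{lem:lngchar} supplies a labeling $\ell' \colon V(G) \rightarrow [n]_0^2$ with $(u,v) \in E(G) \Leftrightarrow \ell'(u)_1 < \ell'(v)_2$. The plan is then to associate to each vertex $u$ two intervals $I_1(u) = [0,\, 2\ell'(u)_1 + 1]$ and $I_2(u) = [2\ell'(u)_2,\, 2n+1]$ and let $H$ be the interval graph on the resulting $2n$ intervals.

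A short arithmetic check shows $I_1(u) \cap I_2(v) \neq \emptyset \Leftrightarrow \ell'(u)_1 \geq \ell'(v)_2$, so $I_1(u)$ and $I_2(v)$ fail to intersect exactly when $(u,v) \in E(G)$. Taking the $4$-ary boolean function $f$ defined by $f(A) = \neg A_{1,2}$ on $2 \times 2$ matrices $A$, together with the labeling $\ell(u) = (I_1(u), I_2(u))$, then gives $G$ an $(H,f)$-representation with $k = 2$. Since interval graphs form an intersection graph class they are hereditary and inflatable, so Lemma~\ref{lem:sgwoc} removes the constraint that $H$ have exactly $n^c$ vertices and concludes the reduction.

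No serious obstacle is anticipated. The only delicate step is the off-by-one in the endpoints: doubling the coordinates and adding one on the right end of $I_1(u)$ is what converts the \emph{strict} inequality $\ell'(u)_1 < \ell'(v)_2$ into a non-strict \emph{failure} of interval intersection, and the negation in $f$ then converts that failure into the desired edge relation. This mirrors the role played by the atomic formula $x_1 < y_2$ in the proof of Theorem~\ref{thm:lngfolt}, where a single asymmetric atom sufficed to encode the directed edge of a linear neighborhood graph.
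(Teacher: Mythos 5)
Your proposal is correct, and it reaches the result by a slightly different route than the paper. The paper first proves, as a separate theorem, that the transitive closure of directed paths is $\sgreduction$-complete for $\gccfolt$ (reducing linear neighborhood graphs to it), and then reduces that class to interval graphs with $k=2$, labels $\ell(u)=([0,u],[u,u])$ and the two-entry boolean function $f(A)=a_{2,1}\wedge\neg a_{1,2}$, padding the resulting interval graph up to $n^2$ vertices to meet the size requirement of the reduction. You instead reduce directly from linear neighborhood graphs, using the order characterization of Lemma~\ref{lem:lngchar} exactly as the paper does, but encoding the strict inequality $\ell'(u)_1<\ell'(v)_2$ by the doubling/off-by-one trick $[0,2\ell'(u)_1+1]$ versus $[2\ell'(v)_2,2n+1]$, so that a single negated matrix entry $f(A)=\neg A_{1,2}$ suffices; you discharge the exact-size condition via Lemma~\ref{lem:sgwoc} rather than by explicit padding, which is legitimate since interval graphs are an intersection graph class and hence hereditary and inflatable. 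Your arithmetic check is right ($2b\le 2a+1$ iff $b\le a$ for integers), and the only technicality worth a sentence in a final write-up is that the $2n$ intervals should be kept as distinct vertices of $H$ (a multiset) so that for $u\neq v$ the queried pair $(\ell(u)_1,\ell(v)_2)$ never degenerates to a single vertex; the paper's own construction silently does the same. Your explicit membership argument (the comparison-only formula from the introduction, $c=2$) is also fine and is only implicit in the paper. The trade-off: the paper's detour buys an extra complete class (transitive closures of directed paths) as a stand-alone result, while your direct reduction is shorter and uses a simpler boolean function.
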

\begin{proof}
    The transitive closure of directed paths is $\sgreduction$-reducible to interval graphs via $c=2$, $k=2$, $f(A)= a_{2,1} \wedge \neg a_{1,2}$ for $A=(a_{i,j})_{i,j \in [2]}$. We show that for all $n \in \N$ there exists an interval graph $H$ on $n^2$ vertices such that $D_n$ has an $(H,f)$-representation.
    Let $\mathcal{I}$ denote the set of intervals on the real line and $V(D_n) = \{0,1,\dots,n-1\}$. 
    The following function $\ell \colon V(D_n) \rightarrow \mathcal{I}^2$ is a labeling for $D_n$. For $u \in  V(D_n)$ let $\ell(u) = ([0,u],[u,u])$. The image of $\ell$ defines an interval graph $H'$ with $2n \leq n^2$ vertices. Let $H$ be an interval graph with $n^2$ vertices which contains $H'$ as induced subgraph. 
    For two vertices $u \neq v \in V(D_n)$ it holds that
    $$ (u,v) \in E(D_n) \Leftrightarrow u < v \Leftrightarrow [u,u] \cap [0,v] \neq \emptyset \wedge [0,u] \cap [v,v] = \emptyset \Leftrightarrow f(A_{uv}^\ell) = 1$$
    and therefore $D_n$ has an $(H,f)$-representation via $\ell$. 
\end{proof}

\subsection{Polynomial-Boolean Systems}
In the beginning, we defined a labeling scheme independently of a model of computation. The label decoder was just a binary relation over words. In the case of logical labeling schemes we neglected this separation by identifying label decoders with logical formulas. It would have been more hygienic to say that a logical formula $\varphi$ with $2k$ variables computes (or represents) a label decoder $F_\varphi \subseteq \N_0^{2k}$. 
However, a subtle difference between logical labeling schemes and classical ones is that the label length $c$ in a logical labeling scheme also influences how the formula is interpreted  whereas in classical labeling schemes the label length does not affect how an algorithm which computes the label decoder is executed. In the case of quantifier-free logical labeling schemes this dependence can be removed as we have shown in Lemma~\ref{lem:inftyinterpretation}. In this section we consider a generalization of $\gccfo$ where the restriction on the label length is dropped. A quantifier-free logical labeling scheme can be seen as a boolean combination of polynomial inequations. We formalize this by what we call polynomial-boolean systems. We consider a polynomial to be an expression over a set of variables that only involves addition and multiplication. We also consider the constant zero to be a polynomial.

\begin{definition}
    A polynomial-boolean system (PBS) with $k$ variables is a tuple $R=((p_1,\dots,p_l),f)$ where $p_1,\dots,p_l$ are polynomials with $k$ variables  and $f$ is an $l^2$-ary boolean function and $k,l \in \N$.
    Given $\mathbb{X} \in \{\N_0,\mathbb{Q},\mathbb{R} \}$ the PBS $R$ induces a $k$-ary relation  $F_R^\mathbb{X}$ over $\mathbb{X}$ which is defined as
    $$(a_1,\dots,a_k) \in F_R^{\mathbb{X}} \Leftrightarrow f(x_{1,1},\dots,x_{l,l}) = 1
    \text{ with } x_{i,j} = \llbracket p_i(a_1,\dots,a_k) < p_j(a_1,\dots,a_k) \rrbracket
    $$      
    for all $a_1,\dots,a_k \in \mathbb{X}$ and $i,j \in [l]$.
\end{definition}

\begin{definition}
    Let $\mathbb{X} \in \{\N_0, \mathbb{Q},  \mathbb{R} \}$.
    For $k \in \N$ and a relation $F \subseteq \mathbb{X}^{2k}$ let $\gr{F}$ be the following set of graphs. A graph $G$ is in $\gr{F}$ if there exists a labeling $\ell \colon V(G) \rightarrow \mathbb{X}^{k}$ such that $(u,v) \in E(G) \Leftrightarrow (\ell(u),\ell(v)) \in F$ for all $u, v \in V(G)$.     
    A graph class $\mathcal{C}$ is in $\gccpbs(\mathbb{X})$ if there exists a PBS $R$ such that $\mathcal{C} \subseteq \gr{F_R^{\mathbb{X}}}$. 
\end{definition}

\begin{fact}
	$k$d-line segment graphs, $k$-ball graphs and $k$-dot product graphs are in $\gccpbs(\mathbb{Q})$ for all $k \in \N$. 
    \label{fact:expbs}
\end{fact}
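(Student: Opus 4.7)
The plan is to exhibit, for each of the three families, an explicit polynomial--boolean system that captures adjacency and takes a rational geometric realization as its labeling. Two subtleties pervade all three cases: PBS polynomials are built from variables using only addition and multiplication, so their coefficients must be non-negative integers; and the natural real-valued geometric representation must be perturbed into a rational one without changing the adjacency graph.

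For $k$-ball graphs I would label each vertex $v$ with its center and radius $(c_{v,1},\ldots,c_{v,k},r_v)\in\mathbb{Q}^{k+1}$. Two closed balls intersect iff $\|c_u-c_v\|^2\leq(r_u+r_v)^2$; expanding the squares and moving every monomial with a negative coefficient to the opposite side yields
$$\sum_{i=1}^{k}(x_i^2+y_i^2)\;\leq\;2\sum_{i=1}^{k}x_iy_i\;+\;x_{k+1}^2+2x_{k+1}y_{k+1}+y_{k+1}^2,$$
so the PBS with $p_1,p_2$ the two sides and boolean function $\neg(p_1<p_2)$ captures adjacency. Rationalization of a real representation proceeds by enlarging each radius by a small $\varepsilon>0$: every edge inequality becomes strict, and every non-edge inequality is preserved as long as $\varepsilon$ is smaller than half of the minimum gap $\|c_u-c_v\|-r_u-r_v$ over non-edges. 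The resulting configuration space is an open, nonempty subset of $\mathbb{R}^{(k+1)n}$, hence contains a rational point.

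For $k$-dot product graphs I would pad each real label $f(v)\in\mathbb{R}^k$ with an auxiliary coordinate fixed to $1$, giving labels $(f_1(v),\ldots,f_k(v),1)\in\mathbb{Q}^{k+1}$; the defining inequality $f(u)\cdot f(v)\geq 1$ then reads $\sum_{i=1}^{k}x_iy_i\geq x_{k+1}y_{k+1}$, again of the form $\neg(p_1<p_2)$ with both sides legitimate PBS polynomials. The induced PBS relation does not enforce $x_{k+1}=1$ and is therefore a priori larger than the dot-product class, but this is harmless since the definition only asks for the inclusion $\mathcal{C}\subseteq\gr{F_R^{\mathbb{Q}}}$. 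Rationalization reduces to scaling $f$ by $1+\varepsilon$: for small enough $\varepsilon$ every edge satisfies $(1+\varepsilon)^2 f(u)\cdot f(v)>1$, and every non-edge remains strictly below $1$ since $\max\{f(u)\cdot f(v):\{u,v\}\text{ a non-edge}\}<1$ by finiteness of $V(G)$; density of $\mathbb{Q}^k$ in the resulting open region then finishes the argument.

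For $k$d-line segment graphs I would label each vertex with the $2k$ coordinates of the two endpoints of its segment. Intersection of $[A,B]$ and $[C,D]$ in $\mathbb{R}^k$ amounts to solvability of the linear system $A+t(B-A)=C+s(D-C)$ with $t,s\in[0,1]$; by Cramer's rule and sign analysis this can be written as a boolean combination of polynomial (in)equalities in the $4k$ endpoint coordinates, and each such comparison is put into PBS form by shifting all negative-coefficient monomials to the opposite side of the $<$ sign. This yields the PBS; the main obstacle of the proof, which I would tackle last, is the rationalization step for $k\geq 3$, where intersection is a codimension-$(k-2)$ condition and a generic perturbation of the endpoints destroys edges. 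The remedy I would pursue is to reparametrize the configuration by choosing a rational intersection point $p_{uv}\in\mathbb{Q}^k$ for each edge $\{u,v\}$ together with a rational direction vector for each vertex, so that the remaining degrees of freedom lie in an open region cut out only by the strict non-edge conditions, into which rationals are dense.
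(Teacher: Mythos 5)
Your overall route coincides with the paper's: exhibit a fixed PBS whose inequalities express the geometric adjacency condition (so the classes lie in $\gccpbs(\mathbb{R})$), then argue that each graph admits a \emph{rational} labeling realizing it under the same PBS, which gives membership in $\gccpbs(\mathbb{Q})$. Your handling of $k$-ball and $k$-dot product graphs is correct and in fact more self-contained than the paper's: for dot product graphs the paper simply cites Proposition~3 of Fiduccia et al.\ for the existence of rational labelings, whereas your scaling-by-$(1+\varepsilon)$ argument proves it directly, and your radius-inflation argument for balls is exactly the intended perturbation argument (all constraints become strict, the realization set is open and nonempty, and rational points are dense). The bookkeeping forced by the PBS formalism --- only non-negative coefficients, so negative monomials are moved across the inequality, and the constant $1$ realized by a padded coordinate --- is also handled correctly.

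The genuine gap is the one you flag yourself: rationalization of $k$d-line segment graphs for $k\ge 3$, and the remedy you sketch does not work as stated. You cannot choose a rational intersection point for each edge and a rational direction vector for each vertex independently: a segment whose vertex has degree at least two must contain \emph{all} of the intersection points chosen for its incident edges, so those points must be collinear and they already determine the segment's supporting line --- the direction is not a free parameter, and the point choices for edges sharing a vertex are coupled by equality (incidence) constraints. What remains is precisely the problem of realizing a prescribed point-line incidence structure over $\mathbb{Q}$ together with the open non-edge conditions, and the incidence equalities are not open conditions, so density of $\mathbb{Q}^k$ buys nothing; this is the same codimension obstruction you set out to circumvent. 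As written, the proposal therefore establishes the claim for $k$-ball and $k$-dot product graphs but not for $k$d-segment graphs with $k\ge 3$. (To be fair, the paper itself disposes of this case with a one-sentence appeal to ``a perturbation argument'' and gives no details, but your specific reparametrization cannot be that missing argument for the reason above.)
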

\begin{proof}
    It is intuitively clear from the definitions of these graph classes that they lie in $\gccpbs(\mathbb{R})$. For example, in a line segment graph each vertex can be assigned four real numbers which represent the two endpoints of the line segment of that vertex. It remains to verify that a boolean combination of the results of polynomial inequations suffices to determine whether two line segments intersect.           
    To see that these graph classes are in $\gccpbs(\mathbb{Q})$ we make the following observation. If a graph class $\mathcal{C}$ is in $\gccpbs(\mathbb{R})$ via a PBS $R$ and for every graph $G$ in $\mathcal{C}$ there exists a rational labeling $\ell$ of $G$  that shows that $G$ is in $\gr{F_R^{\mathbb{R}}}$ then $\mathcal{C}$ is in $\gccpbs(\mathbb{Q})$ via $R$. For $k$-dot product graphs it is shown in \cite[Proposition 3]{fiduccia} that rational labelings suffice. For $k$d-line segment graphs and $k$-ball graphs a perturbation argument shows that rational coordinates and radii suffice as well.
\end{proof}

For $k \geq 2$ it is unknown whether the graph classes mentioned in the previous fact even have a labeling scheme. 

\begin{lemma}
    Let $\mathbb{X} \in \{\N_0, \mathbb{Q},  \mathbb{R} \}$. $\gccpbs(\mathbb{X})$ is closed under $\bfreduction$ and $\sgreduction$.
\end{lemma}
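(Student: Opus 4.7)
The plan is to mimic the constructions used in the analogous closure proofs for $\gccfo$ (cf.~Lemma~\ref{lem:gfobf}) and $\gccac$ (cf.~Lemma~\ref{lem:sgacclosure}), adapted to the PBS framework. The key flexibility is that polynomial-boolean systems impose no bound on the number of variables, the number of polynomials, their degrees, or the magnitudes of labels in $\mathbb{X}$, so the argument reduces to juxtaposing variable blocks and composing boolean parts. The two reductions are handled separately, but in the same spirit.

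For $\bfreduction$: suppose $\mathcal{C} \subseteq g(\mathcal{D},\dots,\mathcal{D})$ for an $m$-ary boolean function $g$, and $\mathcal{D}$ is represented by a PBS $R = ((p_1,\dots,p_l),f)$ with $2k$ variables. Any $G \in \mathcal{C}$ decomposes as $g(H_1,\dots,H_m)$ with each $H_i \in \mathcal{D}$ on a common vertex set $V$, carrying a PBS-labeling $\ell_i \colon V \to \mathbb{X}^k$. I would build $R' = ((p^{(1)}_1,\dots,p^{(m)}_l), f')$ with $2mk$ variables, where $p^{(i)}_s$ is $p_s$ with its $2k$ variables renamed to the $i$-th block. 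The new $(ml)^2$-ary boolean function $f'$ depends only on comparisons internal to each block: within block $i$ it evaluates $f$ on the $l^2$ within-block comparisons to obtain a bit $b_i$, and returns $g(b_1,\dots,b_m)$. The concatenated labeling $\ell(u) = (\ell_1(u),\dots,\ell_m(u)) \in \mathbb{X}^{mk}$ then witnesses $\mathcal{C} \subseteq \gr{F_{R'}^{\mathbb{X}}}$.

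For $\sgreduction$: suppose $\mathcal{C} \sgreduction \mathcal{D}$ via $c, k \in \N$ and a $k^2$-ary boolean function $f$, and $\mathcal{D}$ is represented by a PBS $R = ((p_1,\dots,p_l), f_R)$ with $2d$ variables. For $G \in \mathcal{C}_n$ we have an $(H,f)$-representation via $\ell_G \colon V(G) \to V(H)^k$ together with a PBS-labeling $\ell_H \colon V(H) \to \mathbb{X}^d$. The target PBS $R'$ has $2kd$ variables, organised as $k$ blocks of $d$ on each side, and $k^2 l$ polynomials $p^{(i,j)}_s$ for $s \in [l]$ and $i, j \in [k]$, where $p^{(i,j)}_s$ substitutes the $i$-th $u$-block together with the $j$-th $v$-block into $p_s$. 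The new boolean function $f'$ considers only comparisons between polynomials that share the same superscript $(i,j)$: for each such pair it evaluates $f_R$ on the $l^2$ within-pair comparisons to obtain $a_{ij} = \llbracket (\ell_G(u)_i, \ell_G(v)_j) \in E(H) \rrbracket$, and then outputs $f((a_{ij})_{i,j \in [k]})$. The composite labeling $\ell(u) = (\ell_H(\ell_G(u)_1),\dots,\ell_H(\ell_G(u)_k)) \in \mathbb{X}^{kd}$ witnesses $\mathcal{C} \subseteq \gr{F_{R'}^{\mathbb{X}}}$.

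The only subtlety worth noting is that $f'$ must be carefully defined to \emph{ignore} every cross-block comparison, since labels in different blocks are chosen independently and could produce arbitrary, semantically meaningless relative orderings between their polynomial values. Apart from this bookkeeping, no real obstacle is expected, and the same construction works uniformly for $\mathbb{X} \in \{\N_0, \mathbb{Q}, \mathbb{R}\}$ since all label values remain within the ambient domain.
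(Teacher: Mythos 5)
Your construction is correct and follows essentially the same route as the paper: for $\sgreduction$ the paper outlines exactly this block-wise PBS $R'$ with $2km$ variables obtained by combining $\ell_G$ with $\ell_H$ and composing $f$ with the boolean part of $R$, and your remark that cross-block comparisons must be ignored is the right (and only) bookkeeping point. For $\bfreduction$ the paper instead appeals to Corollary~\ref{corol:bfcompl} (closure under subsets, negation and conjunction), but your direct product construction for an arbitrary boolean function $g$ is the same juxtaposition argument and equally valid.
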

\begin{proof}
    For $\bfreduction$ it suffices to check that $\gccpbs(\mathbb{X})$ is closed under susbets, negation and conjunction due to Corollary \ref{corol:bfcompl}.    
    For $\sgreduction$ consider the following argument. Let $\mathcal{C} \sgreduction \mathcal{D}$ via $c,k \in \N$ and a $k^2$-ary boolean function $f$. Let $\mathcal{D} \in \gccpbs(\mathbb{X})$ via a PBS $R=((p_1,\dots,p_l),g)$ with $2m$ variables and $l,m \in \N$. To avoid technical clutter we just outline what a PBS $R'$ must look like such that $\mathcal{C} \subseteq \gr{F_{R'}^{\mathbb{X}}}$ (and thus $\mathcal{C} \in \gccpbs(\mathbb{X})$).    
    For a graph $G \in \mathcal{C}$ with $n$ vertices there exists a graph $H \in \mathcal{D}$ with $n^c$ vertices such that $G$ has an $(H,f)$-representation via a labeling $\ell_G \colon V(G) \rightarrow V(H)^k$. Let $\ell_H \colon V(H) \rightarrow \mathbb{X}^m$ be a labeling which shows that $H$ is in $\gr{F_R^{\mathbb{X}}}$. By combining $\ell_G$ and $\ell_H$ we get a labeling $\ell \colon V(G) \rightarrow \mathbb{X}^{km}$. Intuitively, the labeling $\ell$ provides us with all the information required to determine adjacency in $G$. More specifically, one can construct a PBS $R'$ with $2km$ variables from $R$ and $f$ such that $G \in \gr{F_{R'}^{\mathbb{X}}}$ via $\ell$.
\end{proof}

\begin{theorem}
    $\gccpbs(\mathbb{N}_0) = \gccpbs(\mathbb{Q})$.
    \label{thm:gnq}
\end{theorem}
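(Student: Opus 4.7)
The easy inclusion $\gccpbs(\N_0) \subseteq \gccpbs(\mathbb{Q})$ I would dispatch first: a PBS $R$ witnessing $\mathcal{C} \in \gccpbs(\N_0)$ also witnesses $\mathcal{C} \in \gccpbs(\mathbb{Q})$, because every labeling $\ell \colon V(G) \to \N_0^{k}$ is automatically a labeling into $\mathbb{Q}^{k}$, and the strict order on $\N_0$ is the one induced from $\mathbb{Q}$, so the comparisons $p_i < p_j$ are unchanged.

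For the interesting direction $\gccpbs(\mathbb{Q}) \subseteq \gccpbs(\N_0)$, suppose $\mathcal{C} \subseteq \gr{F_R^{\mathbb{Q}}}$ via $R=((p_1,\dots,p_l),f)$ with $2k$ variables, and set $d = \max_i \deg p_i$. The plan is to build a PBS $R'$ with $2(k+2)$ variables representing $\mathcal{C}$ over $\N_0$. Given $G \in \mathcal{C}$ with rational labeling $\ell$, its image is finite, so one can choose (depending on $G$) a common denominator $D \in \N$ of all entries of $\ell$ together with a shift $M \in \N_0$ large enough that $\alpha_{u,t} := D(\ell(u)_t + M) \in \N_0$ for every vertex $u$ and every coordinate $t \in [k]$. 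The new labeling $\ell'(u) = (\alpha_{u,1},\dots,\alpha_{u,k},D,M) \in \N_0^{k+2}$ stores $D$ and $M$ redundantly at each vertex; since they are constant across $G$, the PBS may read them from, say, the first vertex's block of coordinates.

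To obtain the polynomials of $R'$, I would multiply the comparison $p_i < p_j$ by the positive factor $D^d$ and expand $D^d \cdot p_i(\alpha/D - M\mathbf{1})$ via the binomial theorem, producing a polynomial in the variables $(\alpha,D,M)$ with \emph{integer} (possibly negative) coefficients. Splitting this polynomial into its positive and negative parts yields $P_i, N_i$ with non-negative integer coefficients such that $D^d p_i = P_i - N_i$; rearranging gives the equivalence of $p_i < p_j$ (in $\mathbb{Q}$, under $\ell$) with $P_i + N_j < P_j + N_i$ (in $\N_0$, under $\ell'$), which is now a comparison of two polynomials admissible in a PBS. I would accordingly let $R'$ use the $l^2$ polynomials $\{P_i + N_j : i,j \in [l]\}$, together with a boolean function $f'$ that, from its $l^4$ pairwise-comparison inputs, selects the $l^2$ comparisons of the form $[P_i + N_j < P_j + N_i]$ and applies $f$ to them. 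The main technical hurdle is exactly the appearance of negative coefficients when the shift by $M$ is carried through a polynomial with non-negative coefficients, and the positive/negative decomposition trick is what resolves it; the rest is bookkeeping about common denominators and re-indexing the boolean function.
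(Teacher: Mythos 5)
Your proof is correct, but it takes a genuinely different route from the paper. The paper factors the hard inclusion through $\gccpbs(\mathbb{Q}_+)$: first it eliminates signs coordinate-wise, storing for each label entry its absolute value plus a sign-indicating companion value and introducing, for every pair of polynomials and every sign pattern in $\{-1,1\}^{2k}$, an emulating pair of polynomials over $\mathbb{Q}_+$; then it eliminates denominators coordinate-wise, storing numerator and denominator separately and multiplying each inequation through by the denominators. You instead do everything in one step using graph-global data: a common denominator $D$ and shift $M$ chosen per graph and stored redundantly in every label, after which clearing $D^d$ and expanding yields integer-coefficient polynomials whose positive/negative parts you move across the inequality to get the admissible comparisons $P_i+N_j<P_j+N_i$. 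Both the global-constants trick and the positive/negative split are legitimate (the PBS $R'$ depends only on $R$, and the labeling is chosen per graph anyway), and your construction is leaner: $k+2$ coordinates and $l^2$ polynomials, versus the paper's $4k$ coordinates and a family of polynomials exponential in $k$ coming from the sign-pattern case analysis. What the paper's per-coordinate route buys in exchange is that it never inflates label magnitudes beyond the absolute values, numerators and denominators already present in the given labeling; this is exactly why the same two steps are reused verbatim to prove the size-bounded statement $\gccbpbs(\mathbb{N}_0,S)=\gccbpbs(\mathbb{Q},S)$, whereas your global $D$ (an lcm of all denominators) and the values $D(\ell(u)_t+M)$ can be far larger than $s(n)$ and so would not transfer to that refinement. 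For the theorem as stated, your argument is complete.
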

\begin{proof}
    It is clear that $\gccpbs(\mathbb{N}_0) \subseteq \gccpbs(\mathbb{Q})$.
    Let $\mathbb{Q}_+ = \set{n \in \mathbb{Q}}{n \geq 0}$.
    For the other direction we show that $\gccpbs(\mathbb{Q}) \subseteq \gccpbs(\mathbb{Q}_+)$ and $\gccpbs(\mathbb{Q}_+) \subseteq \gccpbs(\mathbb{N}_0)$.
    
    Let $\mathcal{C} \in \gccpbs(\mathbb{Q})$ via a PBS $R=((p_1,\dots,p_l),f)$ with $2k$ variables. We outline a PBS $R'$ such that $\mathcal{C}$ is in $\gccpbs(\mathbb{Q}_+)$ via $R'$. This construction relies on the following observation. 
    Given $a \in \mathbb{Q}$ let $|a|$ denote its absolute value and $\sign(a)$ equals $-1$ if $a$ is negative and $1$ otherwise.       
    For $n \in \N$ and  a vector $\vec{a} \in \mathbb{Q}^n$ let $|\vec{a}| = (|a_1|,\dots,|a_n|)$ and $\sign(\vec{a}) = (\sign(a_1),\dots,\sign(a_n))$.
    For all polynomial functions $p,q \colon \mathbb{Q}^n \rightarrow \mathbb{Q}$ and sign patterns $\vec{s} \in \{-1,1\}^n$ there exist polynomial functions $p',q' \colon \mathbb{Q}_+^n \rightarrow \mathbb{Q}_+$ such that for all $\vec{a} \in \mathbb{Q}^n$ with $\sign(\vec{a}) = \vec{s}$ it holds that $p(\vec{a}) < q(\vec{a})$ iff $p'(|\vec{a}|) < q'(|\vec{a}|)$.     
     For example, consider the polynomials $p(x,y,z) = x^2y^3z + y $ and $q(x,y,z) = z$ and the sign pattern $(-1,1,-1)$ for $(x,y,z)$. If we only consider inputs with this sign pattern then it holds that $p(x,y,z) < q(x,y,z)$ iff $\underbrace{|y| + |z|}_{p'} < \underbrace{|x|^2|y|^3|z|}_{q'}$. For each variable in $R$ we have two variables in $R'$. ($\star$) The first one is used to store the absolute value of the original variable and the second one encodes the sign.       
      Let $G$ be a graph that is in $\gr{F_R^{\mathbb{Q}}}$ via a labeling $\ell \colon V(G) \rightarrow \mathbb{Q}^k$. Then we derive a labeling $\ell' \colon V(G) \rightarrow \mathbb{Q}_+^{2k}$  from $\ell$ as follows. Given $u \in V(G)$ let $\ell(u)=(u_1,\dots,u_k)$. We set $\ell'(u) = (|u_1|,u_1',\dots,|u_k|,u_k')$ where $u_i' = |u_i|$ if $u_i$ is negative and any other non-negative value if $u_i$ is positive. This allows us to infer the sign pattern and absolute values of the original labeling $\ell$ from $\ell'$.      
      The PBS $R'$ is constructed such that $G \in \gr{F_{R'}^{\mathbb{Q}_+}}$ via $\ell'$. Suppose we are given two vertices $u, v \in V(G)$. Then the adjacency of $u$ and $v$ depends on the results of $p_i(\ell(u),\ell(v)) < p_j(\ell(u),\ell(v))$ for $i,j \in [l]$. We emulate the inequation $p_i(\ell(u),\ell(v)) < p_j(\ell(u),\ell(v))$ in $R'$ by  $p'(|\ell(u)|,|\ell(v)|) < q'(|\ell(u)|,|\ell(v)|)$ where $p'$ and $q'$ depend on $p_i,p_j$ and the sign pattern of $\ell(u),\ell(v)$. This means for every pair $i,j \in [l]$ and every sign pattern $s \in \{-1,1\}^{2k}$ there is a pair of polynomials in $R'$ and additionally $R'$ has the $2k$ identity polynomials to decode the signs.       
    
    To see that $\gccpbs(\mathbb{Q}_+) \subseteq \gccpbs(\mathbb{N}_0)$ it suffices to make the following observation. Given two polynomial functions $p,q \colon \mathbb{Q}_+^k \rightarrow \mathbb{Q}_+$ there exist two polynomial functions $p',q' \colon \N_0^{2k} \rightarrow \N_0$ such that for all $\vec{a} = (\frac{a_1}{b_1},\dots,\frac{a_k}{b_k}) \in \mathbb{Q}_+^k$ it holds that $p(\vec{a}) < q(\vec{a})$ iff $p'(a_1,b_1,\dots,a_k,b_k) < q'(a_1,b_1,\dots,a_k,b_k)$. The functions $p'$ and $q'$ can be obtained from the inequation $p < q$ by multiplying with the denominators. Therefore a PBS $R$ with $2k$ variables can be translated into a PBS $R'$ with $4k$ variables such that $\gr{F_R^{\mathbb{Q}_+}} \subseteq \gr{F_{R'}^{\N_0}}$.    
\end{proof}

\begin{theorem}
    $\gccpbs(\mathbb{R}) \subseteq \cogcsmallh$.
    \label{thm:pbssmallh}
\end{theorem}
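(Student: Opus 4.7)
The plan is to fix $\mathcal{C} \in \gccpbs(\mathbb{R})$ via a PBS $R=((p_1,\dots,p_l),f)$ with $2k$ variables, let $d = \max_i \deg p_i$, and set $\mathcal{D} = \gr{F_R^\mathbb{R}}$. It then suffices to show that $\mathcal{D}$ itself is small and hereditary, because then $\mathcal{C} \subseteq \mathcal{D} \in \cogcsmallh$. Hereditariness is immediate: if $G \in \mathcal{D}$ via a labeling $\ell \colon V(G) \to \mathbb{R}^k$ and $V' \subseteq V(G)$, then the restriction $\ell\restr{}{V'}$ witnesses $G[V'] \in \mathcal{D}$, since adjacency of $u,v \in V'$ is determined pointwise by $(\ell(u),\ell(v)) \in F_R^\mathbb{R}$.

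For smallness, the key tool is Warren's theorem: the number of sign patterns realized by $m$ real polynomials in $N$ variables of degree at most $d$ is bounded by $(4edm/N)^N$ when $m \geq N$. I would apply this as follows. Fix $n$ and identify a labeled graph in $\mathcal{D}_n$ on vertex set $[n]$ with a labeling $\ell \in \mathbb{R}^{nk}$, viewed as $nk$ real indeterminates $\ell(i)_s$ for $i \in [n], s \in [k]$. For each triple $(i,j,(a,b))$ with $i \neq j \in [n]$ and $a,b \in [l]$, form the polynomial
$$q_{i,j,a,b} \;=\; p_b(\ell(i),\ell(j)) - p_a(\ell(i),\ell(j))$$
in these $nk$ variables, of degree at most $d$. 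Whether $i$ and $j$ are adjacent in the graph produced by $\ell$ depends only on the sign vector $(\sign(q_{i,j,a,b}))_{a,b}$, since $\llbracket p_a < p_b \rrbracket = \llbracket q_{i,j,a,b} > 0 \rrbracket$ and adjacency is $f$ applied to these bits. Hence the entire labeled graph depends only on the joint sign pattern of the collection $\{q_{i,j,a,b}\}$, whose cardinality is $m = n(n-1)l^2 = O(n^2)$.

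Plugging into Warren's theorem with $N = nk$ and $m = O(n^2)$, the number of sign patterns is at most $(4ed \cdot n^2 l^2 / (nk))^{nk} = (O(n))^{nk} = n^{O(n)}$, where the $O(\cdot)$ hides constants depending only on the PBS $R$ (i.e.~on $k,l,d$). Since each sign pattern yields at most one labeled graph on $[n]$, and each unlabeled graph is counted at least once, $|\mathcal{D}_n| \leq n^{O(n)}$. Thus $\mathcal{D}$ is small, hereditary, and contains $\mathcal{C}$, proving $\mathcal{C} \in \cogcsmallh$.

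The only non-routine step is invoking Warren's theorem with the right parameters; the main conceptual point, already anticipated in the introduction's discussion of polynomial inequation systems, is that the $nk$ real parameters of a labeling are asymptotically too few to produce $2^{\omega(n \log n)}$ distinct Boolean adjacency matrices through low-degree polynomial sign tests.
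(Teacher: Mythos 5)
Your proof is correct and follows essentially the same route as the paper: hereditariness via restricting the real labeling, and smallness by viewing a labeled graph on $n$ vertices as determined by the sign pattern of $O(l^2 n^2)$ degree-$\leq d$ polynomials in the $kn$ labeling coordinates and invoking Warren's theorem to get the $n^{\mathcal{O}(n)}$ bound. The only cosmetic difference is that you write the inequations as differences $p_b - p_a > 0$, while the paper keeps them in the form $p < q$; the parameters and conclusion are identical.
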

\begin{proof}
    Let $R=((p_1,\dots,p_l),f)$ be a PBS with $2k$ variables. We show that $\gr{F_R^{\mathbb{R}}}$ is small and hereditary. From that it follows that $\gccpbs(\mathbb{R})$ is a subset of $\cogcsmallh$. 
    Let $G$ be a graph that is in $\gr{F_R^{\mathbb{R}}}$ via a labeling $\ell \colon V(G) \rightarrow \mathbb{R}^k$. An induced subgraph of $G$ on vertex set $V' \subseteq V(G)$ is in $\gr{F_R^{\mathbb{R}}}$ via the labeling $\ell$ restricted to $V'$. Thus $\gr{F_R^{\mathbb{R}}}$ is hereditary.
     
     It remains to argue that $\gr{F_R^{\mathbb{R}}}$ is small.  We do so by applying Warren's theorem \cite[p.~55]{spinrad}, which can be stated as follows. Let $\mathcal{E}=(E_1,\dots,E_m)$ be a sequence of polynomial inequations over variables $x_1,\dots,x_n$. More specifically, the inequations are assumed to be of the form $p(x_1,\dots,x_n) < q(x_1,\dots,x_n)$ where $p,q$ are polynomials. Also, let $d$ denote the maximum degree that occurs in any of these inequations. The sequence $\mathcal{E}$ can be understood as a function  from $\mathbb{R}^n$ to $\{0,1\}^m$ in the following sense. Given $\vec{a} \in \mathbb{R}^n$ let $\mathcal{E}(\vec{a}) = (e_1,\dots,e_m)$ with $e_i = 1$ iff the inequation $E_i(\vec{a})$ holds. An element of the image of $\mathcal{E}$ is called a sign pattern.     
     Warren's theorem states that the cardinality of the image of $\mathcal{E}$ (or equivalently, the number of sign patterns of $\mathcal{E}$)  is at most $\left({\frac{cdm}{n}}\right)^n$ where $c$ is some constant. 
     
     We show that the number of graphs on $n$ vertices in $\gr{F_R^{\mathbb{R}}}$ is bounded by the number of sign patterns of a certain sequence of equations $\mathcal{E}_{R,n}$.  Consider a graph $G$ on $n$ vertices that is in $\gr{F_R^{\mathbb{R}}}$ via a labeling $\ell \colon V(G) \rightarrow \mathbb{R}^k$. 
     The presence of the edge $(u,v)$ in $G$ is determined by the result of $l^2$ polynomial inequations. 
     Therefore $G$ is determined by the result of a sequence of $l^2n^2$ polynomial inequations. These inequations use  $kn$ variables $x_u^i$ with $u \in V(G)$ and $i \in [k]$. Let $d$ denote the maximum degree over the polynomials $p_1,\dots,p_l$. This means $\mathcal{E}_{R,n}$ has $kn$ variables, $l^2n^2$ equations and maximum degree $d$. Thus a graph on $n$ vertices in $\gr{F_R^{\mathbb{R}}}$ is determined by a sign pattern of $\mathcal{E}_{R,n}$. As a consequence there are at most $\left({\frac{cdl^2n^2}{kn}}\right)^{kn} \in n^{\mathcal{O}(n)}$ graphs on $n$ vertices in $\gr{F_R^{\mathbb{R}}}$ ($c,d,k,l$ are constants). 
\end{proof}

In logical labeling schemes the numbers of the vertex labels have to be polynomially bounded. The following definition adds a size restriction on the labelings to polynomial-boolean systems. Let $F$ be a $2k$-ary relation over $\N_0$ and let $s$ be a function from $\N$ to $\N$. We say
a graph $G$ with $n$ vertices is in $\gr{F,s}$ if there is a labeling $\ell$ of $G$ which shows that $G$ is in $\gr{F}$ and the largest number in the image of $\ell$ does not exceed $s(n)$. In the second definition such a size restriction is applied to relations over $\mathbb{Q}$.

\begin{definition}
    Let $s \colon \N \rightarrow \N$ be a total function, $k \in \N$ and $F \subseteq \N_0^{2k}$. We say a graph $G$ is in $\gr{F,s}$ if there exists a labeling $\ell \colon V(G) \rightarrow [s(n)]_0^k$ such that $(u,v) \in E(G) \Leftrightarrow (\ell(u),\ell(v)) \in F$ for all $u, v \in V(G)$.
    Let $S$ be a set of total functions from $\N$ to $\N$. We say a graph class $\mathcal{C}$ is in $\gccbpbs(\mathbb{N}_0,S)$ if there exists a PBS $R$ and $s \in S$ such that $\mathcal{C} \subseteq \gr{F_R^{\mathbb{N}_0},s}$. 
\end{definition}

\begin{definition}
    For $m \in \N$ let $\mathbb{Q}_m = \set{ \frac{sa}{b} }{ a , b \in [m], s \in \{-1,0,1\} }$. 
    Let $s \colon \N \rightarrow \N$ be a total function, $k \in \N$ and $F \subseteq \mathbb{Q}^{2k}$.
    We say a graph $G$ is in $\gr{F,s}$ if there exists a labeling $\ell \colon V(G) \rightarrow {(\mathbb{Q}_{s(n)})}^k$ such that 
    $(u,v) \in E(G) \Leftrightarrow (\ell(u),\ell(v)) \in F$  for all $u, v \in V(G)$.    
    Let $S$ be a set of total functions from $\N$ to $\N$.
    A graph class $\mathcal{C}$ is in $\gccbpbs(\mathbb{Q},S)$ if there exists a PBS $R$ and $s \in S$ such that $\mathcal{C} \subseteq \gr{F_R^{\mathbb{Q}},s}$.
\end{definition}

\begin{theorem}
	Let $S$ be a set of total functions from $\N$ to $\N$.
    $\gccbpbs(\mathbb{N}_0,S) = \gccbpbs(\mathbb{Q},S)$.
\end{theorem}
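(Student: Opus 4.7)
The plan is to revisit the proof of Theorem~\ref{thm:gnq} and observe that its two-step encoding $\mathbb{Q} \to \mathbb{Q}_+ \to \mathbb{N}_0$ is size-preserving, so it extends directly to the bounded setting. The forward inclusion $\gccbpbs(\mathbb{N}_0, S) \subseteq \gccbpbs(\mathbb{Q}, S)$ is immediate, since every $a \in [s(n)]_0$ lies in $\mathbb{Q}_{s(n)}$ via $a = \frac{a}{1}$ (when $s(n) \geq 1$; the finitely many $n$ with $s(n) = 0$ admit only the empty graph and are handled trivially).

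For the reverse inclusion, let $\mathcal{C} \subseteq \gr{F_R^{\mathbb{Q}}, s}$ via a PBS $R$ with $2k$ variables, and consider $G \in \mathcal{C}$ on $n$ vertices with a labeling $\ell \colon V(G) \to (\mathbb{Q}_{s(n)})^k$. Each coordinate has the form $\frac{\sigma_i a_i}{b_i}$ with $\sigma_i \in \{-1,0,1\}$ and $a_i, b_i \in [s(n)]$. I would replace each such coordinate by a triple $(a_i, b_i, \tau_i) \in [s(n)]_0^3$, where $\tau_i \in \{0,1,2\}$ encodes $\sigma_i$; all three entries are bounded by $s(n)$ whenever $s(n) \geq 2$, and the finitely many smaller cases can be absorbed into $R'$ as a finite list of exceptions.

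To complete the reduction I need a PBS $R'$ on $6k$ variables such that $\mathcal{C} \subseteq \gr{F_{R'}^{\N_0}, s}$. This is constructed exactly as in Theorem~\ref{thm:gnq}: the boolean part of $R'$ case-splits over the $3^{2k}$ possible sign patterns of the two vertex labels, where each pattern is detected by polynomial comparisons of the $\tau_i$ against the constant polynomials $0,1,2$. Within each fixed sign pattern, every rational inequation $p_i < p_j$ coming from $R$ becomes a polynomial inequation in the integer variables $a_i, b_i$ after clearing denominators and accounting for signs, and the outer boolean function is obtained by combining the simulated versions of $f$ from $R$. Since the construction of $R'$ is independent of $n$, the same size function $s \in S$ governs the integer encoding. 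The mildly delicate point — and essentially the only content of the argument beyond Theorem~\ref{thm:gnq} — is verifying that the encoding introduces no integer larger than $s(n)$, which holds because every new quantity is either a numerator, a denominator, or one of the constants $0, 1, 2$.
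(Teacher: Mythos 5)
Your overall strategy is the one the paper itself uses: rerun the two-step $\mathbb{Q}\to\mathbb{Q}_+\to\N_0$ construction of Theorem~\ref{thm:gnq} and check that it is size-preserving, the key observation being that numerators and denominators of elements of $\mathbb{Q}_{s(n)}$ lie in $[s(n)]_0$ by definition. The skeleton is therefore fine; the flaw is in your sign gadget. First, a PBS as defined only has the constant zero, so ``comparisons of $\tau_i$ against the constant polynomials $0,1,2$'' is not literally available (repairable, e.g.\ $\tau_i^2<\tau_i+\tau_i$ separates $1$ from $2$). The genuine gap is the size bound for the flag itself: $\tau_i=2$ is a legal label value only when $s(n)\geq 2$, and you dismiss the remaining cases as ``finitely many''. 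But $s$ is an arbitrary total function from $\N$ to $\N$; it may take the value $1$ on infinitely many $n$ (e.g.\ $s\equiv 1$), and for such $n$ the graphs in $\gr{F_R^{\mathbb{Q}},s}$ (labels in $\{-1,0,1\}^k$) form an infinite family, so they cannot be absorbed into $R'$ as a finite list of exceptions. As written, the step ``every new quantity is either a numerator, a denominator, or one of the constants $0,1,2$'' fails precisely because $2$ need not be bounded by $s(n)$.

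The repair is to encode the three-valued sign using only values that are guaranteed to exist at every $n$. This is exactly what the paper's construction $(\star)$ in Theorem~\ref{thm:gnq} does, and what its proof of the bounded statement checks: each coordinate gets a companion variable that equals its absolute value iff the coordinate is negative, so the sign is decoded by an equality test between two label variables, and the derived labeling uses only absolute values already present plus one alternative value, all within $(\mathbb{Q}_+)_{s(n)}$ respectively $[s(n)]_0$ (note $s(n)\geq 1$ always, so $0$ and $1$ are available). Alternatively, two $\{0,1\}$-valued flags per coordinate, decoded by comparisons with the constant zero, work as well. With such a sign encoding in place of the ternary flag, your argument coincides with the paper's proof.
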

\begin{proof}
    Let $\mathcal{C}$ be in $\gccbpbs(\mathbb{Q},S)$ via a PBS $R$ and $s \in S$, i.e.~$\mathcal{C} \subseteq \gr{ F_R^{\mathbb{Q}}, s }$. First, we construct a PBS $R'$ such that $\gr{ F_R^{\mathbb{Q}}, s } \subseteq \gr{ F_{R'}^{\mathbb{Q}_+}, s }$. This is the same construction that is described in the proof of Theorem~\ref{thm:gnq}. One has to additionally check that the size restriction on the labeling is not violated. 
    More precisely, if a graph $G$ on $n$ vertices is in $\gr{ F_R^{\mathbb{Q}}, s }$ via a labeling $\ell \colon V(G) \rightarrow \mathbb{Q}_{s(n)}$ then the labeling $\ell'$ derived from $\ell$ to show that $G$ is in $\gr{ F_{R'}^{\mathbb{Q}_+}, s }$ is only allowed to contain values from ${(\mathbb{Q}_+)}_{s(n)} = \set{ \frac{a}{b} }{ a,b \in [s(n)] } \cup \{0\}$ in its image. This is the case because $\ell'$ only requires the absolute values that occur in $\ell$ and at least two different values to indicate the sign, see $(\star)$ in the proof of Theorem~\ref{thm:gnq}. 
            
    In a second step, we construct a PBS $R''$ such that $\gr{ F_{R'}^{\mathbb{Q}_+}, s } \subseteq \gr{ F_{R''}^{\mathbb{N}_0}, s }$. We apply the same construction as in the proof of Theorem~\ref{thm:gnq}. Let us consider the size restriction on the labeling. Consider a graph $G$ on $n$ vertices which is in $\gr{ F_{R'}^{\mathbb{Q}_+}, s }$ via a labeling $\ell'$. The values that occur in the image of $\ell'$ must be a subset of $(\mathbb{Q}_+)_{s(n)}$. There is a labeling $\ell''$ which shows that $G$ is in $\gr{ F_{R''}^{\mathbb{N}_0}, s }$ since the values in the image of $\ell''$ are the numerators and denominators required to express the values in the image of $\ell'$. Due to the definition of $(\mathbb{Q}_+)_{s(n)}$ these numerators and denominators are all in $[s(n)]_0$. 
\end{proof}

In \cite{mcdiarmid} it is shown that disk and line segment graphs are in $\gccbpbs(\mathbb{N}_0,\exp^2(\mathcal{O}(n)))$. In \cite{kang} it is shown that $k$-dot product graphs are in $\gccbpbs(\mathbb{N}_0,\exp^2(\mathcal{O}(n)))$. 
What is the `smallest' class of functions $S$ such that $\gccbpbs(\mathbb{N}_0,S) = \gccpbs(\mathbb{N}_0)$? The following statement guarantees the existence of such an $S$. 

\begin{fact}
	Let $\mathrm{Tot}$ be the set of all total functions from $\N$ to $\N$. $\gccbpbs(\mathbb{N}_0,\mathrm{Tot}) = \gccpbs(\N_0)$. 
\end{fact}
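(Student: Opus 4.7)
The plan is to establish the two inclusions separately. The inclusion $\gccbpbs(\N_0, \mathrm{Tot}) \subseteq \gccpbs(\N_0)$ is immediate from the definitions: every labeling witnessing $G \in \gr{F_R^{\N_0}, s}$ is also a valid labeling witnessing $G \in \gr{F_R^{\N_0}}$, so no work is needed beyond unpacking the definitions.

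For the reverse inclusion $\gccpbs(\N_0) \subseteq \gccbpbs(\N_0, \mathrm{Tot})$, I would fix a graph class $\mathcal{C} \in \gccpbs(\N_0)$ together with a witnessing PBS $R$ with $2k$ variables, so that $\mathcal{C} \subseteq \gr{F_R^{\N_0}}$. The key observation is that a graph class is a set of \emph{unlabeled} (and finite) graphs, so for each $n \in \N$ the slice $\mathcal{C}_n$ is a finite set (being a subset of the finite set of isomorphism classes of graphs on $n$ vertices). For each $G \in \mathcal{C}_n$ pick any witnessing labeling $\ell_G \colon V(G) \to \N_0^k$ showing $G \in \gr{F_R^{\N_0}}$, and define
\[
s(n) \;=\; \max_{G \in \mathcal{C}_n} \; \max_{v \in V(G),\, i \in [k]} \; \ell_G(v)_i,
\]
with the convention $s(n)=0$ if $\mathcal{C}_n = \emptyset$. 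This maximum exists because it is taken over finitely many finite numbers, so $s \colon \N \to \N$ is a well-defined total function.

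By construction, every $G \in \mathcal{C}_n$ admits a labeling $\ell_G$ whose image lies in $[s(n)]_0^k$ and which witnesses $G \in \gr{F_R^{\N_0}}$; hence $\mathcal{C} \subseteq \gr{F_R^{\N_0}, s}$, giving $\mathcal{C} \in \gccbpbs(\N_0, \mathrm{Tot})$. There is no real obstacle here: the statement is essentially a tautology once one notices that admitting $\mathrm{Tot}$ as the family of allowed size bounds permits us to tailor $s$ to $\mathcal{C}$ itself, and the only point requiring a line of justification is the finiteness of $\mathcal{C}_n$, which is immediate from the definition of a graph class as a set of isomorphism classes of finite graphs.
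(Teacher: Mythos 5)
Your proposal is correct and follows essentially the same route as the paper: the forward inclusion is definitional, and for the converse the paper likewise defines $s(n)$ as a maximum over the finitely many $n$-vertex graphs of the largest value in a (chosen minimal) witnessing labeling, so that $\mathcal{C} \subseteq \gr{F_R^{\N_0},s}$. The only cosmetic point is that since $\mathrm{Tot}$ consists of functions into $\N = \{1,2,\dots\}$, you should set $s(n)=1$ (or take the maximum with $1$) rather than $0$ in the degenerate cases, which changes nothing in the argument.
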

\begin{proof}
	Let $\mathcal{C}$ be in $\gccpbs(\N_0)$ via a PBS $R$ with $2k$ variables, i.e.~$\mathcal{C} \subseteq \gr{F_R^{\N_0}}$. We argue that there exists a total function $s \colon \N \rightarrow \N$ which depends on $R$ such that $\mathcal{C} \subseteq \gr{F_R^{\N_0},s}$. From that the claim follows. 
	For a graph $G$ in $\gr{F_R^{\N_0}}$ let $\mathcal{L}(G)$ denote the set of labelings $\ell \colon V(G) \rightarrow \N_0^k$ which show that $G$ is in $\gr{F_R^{\N_0}}$. For a labeling $\ell$ let $\Ima^*(\ell) \subseteq \N_0$ be the set of values that occur in the image of $\ell$. Let $x_G$ be defined as $\min\limits_{\ell \in \mathcal{L}(G)} \{ \max( \Ima^*(\ell) ) \}$, i.e.~the largest value that occurs in a `smallest' labeling of $G$. Then $s(n)$ can be defined as $\max \set{ x_G }{ G \text{ has $n$ vertices and is in } \gr{F_R^{\N_0}} }$.     
\end{proof}

\begin{theorem}
    $\gccfo = \gccbpbs(\mathbb{N}_0,n^{\mathcal{O}(1)})$.
    \label{thm:gccfobpbsn}
\end{theorem}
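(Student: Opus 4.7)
The plan is to prove the two inclusions separately, leveraging Lemma~\ref{lem:inftyinterpretation} on the logical side so that overflow of addition and multiplication does not interfere. The theorem essentially says that quantifier-free logical labeling schemes are just a convenient syntax for polynomially bounded polynomial-boolean systems, so both inclusions are direct translations between the two formalisms; the main technical point to watch is that atoms in $\gccfo$ are evaluated modulo an overflow trap in $\mathcal{N}_{n^c}$, whereas a PBS evaluates its polynomials in $\N_0$ without overflow.

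For the inclusion $\gccbpbs(\N_0,n^{\mathcal{O}(1)}) \subseteq \gccfo$, start with a PBS $R=((p_1,\dots,p_l),f)$ with $2k$ variables and a polynomial bound $s(n) \leq n^d$, and let $\mathcal{C} \subseteq \gr{F_R^{\N_0},s}$. Each $p_i$ is an $\FO$-term over the signature $\{\addp,\mulp\}$ in variables $x_1,\dots,x_k,y_1,\dots,y_k$. Define the quantifier-free formula $\varphi(\vec{x},\vec{y}) \triangleq g( (p_i(\vec{x},\vec{y}) < p_j(\vec{x},\vec{y}))_{i,j \in [l]} )$, where $g$ is an $l^2$-ary boolean function whose underlying truth table agrees with $f$. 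If $G \in \gr{F_R^{\N_0},s}$ via a labeling $\ell \colon V(G) \rightarrow [n^d]_0^k$, then the polynomial evaluations agree in $\N_0$ and in $\mathcal{N}$, hence the same $\ell$ witnesses $G \in \gr[\infty]{\varphi,d}$. By Lemma~\ref{lem:inftyinterpretation}, $\gr[\infty]{\varphi,d} \in \gccfo$, which gives $\mathcal{C} \in \gccfo$.

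For the inclusion $\gccfo \subseteq \gccbpbs(\N_0,n^{\mathcal{O}(1)})$, let $\mathcal{C} \in \gccfo$. By Lemma~\ref{lem:inftyinterpretation} there is a quantifier-free $S=(\varphi,c)$ in $\gccfo$ with $\mathcal{C} \subseteq \gr[\infty]{S}$. Every atom of $\varphi$ is either $t_1 < t_2$ or $t_1 = t_2$ where $t_1,t_2$ are polynomial terms; rewrite each equality atom as $\neg(t_1 < t_2) \wedge \neg(t_2 < t_1)$, so that all atoms have the form $t_1 < t_2$. Collect the finitely many distinct polynomial terms $p_1,\dots,p_l$ that appear, and let $f$ be an $l^2$-ary boolean function such that $f( (p_i(\vec{a}) < p_j(\vec{a}))_{i,j} ) = 1$ iff $\mathcal{N},\vec{a} \models \varphi$ (this is well-defined because the truth value of $\varphi$ is a boolean function of the values of its atoms, and every atom $p_i < p_j$ occurs as some input coordinate of $f$). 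Form the PBS $R=((p_1,\dots,p_l),f)$. Setting $s(n) = n^c$, any labeling $\ell \colon V(G) \rightarrow [n^c]_0^k$ witnessing $G \in \gr[\infty]{\varphi,c}$ also witnesses $G \in \gr{F_R^{\N_0},s}$, since polynomial evaluation over $\N_0$ matches evaluation in $\mathcal{N}$. Hence $\mathcal{C} \in \gccbpbs(\N_0,n^{\mathcal{O}(1)})$.

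The only real subtlety is the mismatch between the truncated semantics $\mathcal{N}_{n^c}$ on the logical side and the unbounded semantics on the PBS side; this is precisely what Lemma~\ref{lem:inftyinterpretation} was designed to neutralize, so once we pass to $\gr[\infty]{\cdot}$ both translations are mechanical. Aside from this, the argument is a direct syntactic correspondence between the atoms of $\varphi$ and the polynomial pairs $(p_i,p_j)$ of the PBS, and between the underlying propositional structure of $\varphi$ and the boolean function $f$.
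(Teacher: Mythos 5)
Your proposal is correct and follows essentially the same route as the paper: invoke Lemma~\ref{lem:inftyinterpretation} to pass to the overflow-free semantics $\gr[\infty]{\cdot}$, then translate syntactically between the terms of a quantifier-free formula and the polynomials of a PBS, and between the formula's underlying boolean structure and the PBS's boolean function. Your explicit rewriting of equality atoms as $\neg(t_1 < t_2) \wedge \neg(t_2 < t_1)$ is a detail the paper glosses over, but otherwise the arguments coincide.
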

\begin{proof}
    ``$\subseteq$'': Let $\mathcal{C}$ be a graph class in $\gccfo$. From Lemma \ref{lem:inftyinterpretation} it follows that there exists a logical labeling scheme $S=(\varphi,c)$ in $\gccfo$ such that $\mathcal{C} \subseteq \gr[\infty]{S}$. This means the interpretation of each term in $\varphi$ is identical to a polynomial function over $\N_0$. Therefore $S$  directly translates to a PBS $R$ over $\N_0$ where the polynomial functions are given by the terms of $\varphi$ and the boolean function of $R$ is given by the boolean function underlying $\varphi$. It follows that $\gr[\infty]{S} \subseteq \gr{F_R,n^c}$ and thus $\mathcal{C} \in \gccbpbs(\N_0,n^{\mathcal{O}(1)})$. 
    
    ``$\supseteq$'': Let $\mathcal{C}$ be a graph class that is in $\gccbpbs(\N_0,n^{\mathcal{O}(1)})$ via a PBS $R$ with $2k$ variables and $c,k \in \N$. The PBS $R$ can be translated into a quantifier-free formula $\varphi$ with $2k$ variables in a straightforward fashion. It holds that $\gr{F_R^{\N_0},n^c}$ is a subset of $\gr[\infty]{\varphi,c}$  and thus $\mathcal{C} \in \gccfo$. 
\end{proof}

\begin{corollary}    
    $\gccfo  = \gccbpbs(\mathbb{N}_0,n^{\mathcal{O}(1)})  = \gccbpbs(\mathbb{Q},n^{\mathcal{O}(1)}) \subseteq \gccbpbs(\mathbb{N}_0,\mathrm{Tot}) = \gccpbs(\mathbb{N}_0) = \gccpbs(\mathbb{Q}) \subseteq \gccpbs(\mathbb{R}) \subseteq \cogcsmallh$.
    \label{corol:pbs}
\end{corollary}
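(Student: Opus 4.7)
The plan is simply to chain together the results established earlier in this subsection; no new argument is needed beyond verifying that each link in the chain has already been proved or is trivial.

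Reading the chain from left to right, the first equality $\gccfo = \gccbpbs(\mathbb{N}_0,n^{\mathcal{O}(1)})$ is exactly Theorem~\ref{thm:gccfobpbsn}. The second equality $\gccbpbs(\mathbb{N}_0,n^{\mathcal{O}(1)}) = \gccbpbs(\mathbb{Q},n^{\mathcal{O}(1)})$ is the instance of the theorem $\gccbpbs(\mathbb{N}_0,S) = \gccbpbs(\mathbb{Q},S)$ obtained by letting $S$ be the class of polynomially-bounded functions; that theorem is already proved for arbitrary $S$, so there is nothing further to check. The containment $\gccbpbs(\mathbb{N}_0,n^{\mathcal{O}(1)}) \subseteq \gccbpbs(\mathbb{N}_0,\mathrm{Tot})$ is immediate from the definition, since every polynomial bound is itself a total function from $\N$ to $\N$.

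The next equality $\gccbpbs(\mathbb{N}_0,\mathrm{Tot}) = \gccpbs(\mathbb{N}_0)$ is the fact immediately preceding the corollary (the argument that defined $s(n)$ as the maximum over graphs of size $n$ of the minimum label magnitude suffices). The equality $\gccpbs(\mathbb{N}_0) = \gccpbs(\mathbb{Q})$ is Theorem~\ref{thm:gnq}. For the inclusion $\gccpbs(\mathbb{Q}) \subseteq \gccpbs(\mathbb{R})$, I would observe that given any PBS $R$, the relation $F_R^{\mathbb{R}}$ restricted to $\mathbb{Q}^{2k}$ coincides with $F_R^{\mathbb{Q}}$, because the evaluation of a polynomial inequation at rational arguments yields the same truth value regardless of whether one regards the inputs as rationals or as reals; hence any labeling witnessing membership in $\gr{F_R^{\mathbb{Q}}}$ also witnesses membership in $\gr{F_R^{\mathbb{R}}}$.

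Finally, the inclusion $\gccpbs(\mathbb{R}) \subseteq \cogcsmallh$ is Theorem~\ref{thm:pbssmallh}. The one step that genuinely has content is the very last one, since it is the only place where Warren's theorem is applied; but that bound has already been established. There is no real obstacle here: the purpose of the corollary is purely to consolidate the chain. I would simply lay out the seven links above and cite the corresponding result next to each, noting the two trivial containments (polynomial functions are total; rational labelings are real labelings with a restricted image).
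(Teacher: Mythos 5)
Your proposal is correct and matches the paper's intent exactly: the corollary is stated without a separate proof precisely because it is a consolidation of Theorem~\ref{thm:gccfobpbsn}, the theorem $\gccbpbs(\mathbb{N}_0,S) = \gccbpbs(\mathbb{Q},S)$, the fact about $\mathrm{Tot}$, Theorem~\ref{thm:gnq}, Theorem~\ref{thm:pbssmallh}, and the two trivial inclusions you identify. Nothing further is needed.
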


Next, we show that if $\gccfolt$ can be separated from $\gccpbs(\mathbb{N}_0)$ then this separation can be amplified to $\gccfolt \neq \gccfo$. To prove this we show that if $\gccfo$ has a complete graph class w.r.t~$\bfreduction$ which is hereditary then $\gccpbs(\N_0)$ collapses to $\gccfo$. A similar statement holds w.r.t.~$\sgreduction$-reductions.  

Let us say a set of graph classes $\mathbb{A}$ is closed under hereditary closure if for all $\mathcal{C}$ in $\mathbb{A}$ it holds that its hereditary closure $[\mathcal{C}]_{\subseteq}$ is in $\mathbb{A}$. If a set of graph classes $\mathbb{A}$ is closed under hereditary closure and subsets then a graph class $\mathcal{C}$ is in $\mathbb{A}$ iff $[\mathcal{C}]_{\subseteq}$ is in $\mathbb{A}$. As a consequence it suffices to consider only hereditary graph classes when studying sets of graph classes that are closed under hereditary closure.  

\begin{theorem}
    If $\gccfo$ is closed under hereditary closure then $\gccfo = \gccpbs(\N_0)$.
    \label{thm:gccfohcpbs}
\end{theorem}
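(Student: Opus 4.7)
The inclusion $\gccfo \subseteq \gccpbs(\N_0)$ is immediate from Corollary~\ref{corol:pbs} and does not use the hypothesis. For the converse, fix $\mathcal{C} \in \gccpbs(\N_0)$ witnessed by a PBS $R$ with $2k$ variables, so that $\mathcal{C} \subseteq \gr{F_R^{\N_0}}$. The strategy is to exhibit a class $\mathcal{D} \in \gccfo$ whose hereditary closure contains $\mathcal{C}$; the assumed closure of $\gccfo$ under hereditary closure then yields $\mathcal{C} \in \gccfo$.

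First I would reduce to the case of \emph{injective} labelings. Let $R^+$ be the PBS with $2(k+1)$ variables obtained from $R$ by appending one fresh coordinate per vertex that is ignored by all polynomials: formally, each polynomial $p(x_1,\dots,x_k,y_1,\dots,y_k)$ of $R$ is reinterpreted in $R^+$ as a polynomial in $(x_1,\dots,x_{k+1},y_1,\dots,y_{k+1})$ that does not depend on $x_{k+1}$ or $y_{k+1}$. Any labeling $\ell \colon V(G) \to \N_0^k$ witnessing $G \in \gr{F_R^{\N_0}}$ extends to an injective labeling $\ell^+(u) := (\ell(u), \mathrm{id}(u))$, where $\mathrm{id}$ assigns distinct natural numbers to distinct vertices, so $\gr{F_R^{\N_0}} \subseteq \gr{F_{R^+}^{\N_0}}$ and every graph in the right-hand class admits some injective witnessing labeling.

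The central step is to establish $\gr{F_{R^+}^{\N_0}} \subseteq [\mathcal{D}]_{\subseteq}$ for $\mathcal{D} := \gr{F_{R^+}^{\N_0}, s}$ with $s(n)=n$. By Theorem~\ref{thm:gccfobpbsn} we have $\mathcal{D} \in \gccbpbs(\N_0, n^{\mathcal{O}(1)}) = \gccfo$. Given $G \in \gr{F_{R^+}^{\N_0}}$ with injective labeling $\ell^+$, let $M$ be the maximum entry occurring in $\ell^+$ and consider the graph $H_M$ on vertex set $[M]_0^{k+1}$ whose adjacency is given by $F_{R^+}^{\N_0}$ restricted to this set. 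Then $H_M$ has $n_0 := (M+1)^{k+1}$ vertices, and the identity map is a labeling of $H_M$ with entries bounded by $M \le n_0$, so $H_M \in \mathcal{D}$. Since $\ell^+$ is injective, $G$ is isomorphic to the subgraph of $H_M$ induced by $\ell^+(V(G))$, whence $G \in [\mathcal{D}]_{\subseteq}$.

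Combining the inclusions $\mathcal{C} \subseteq \gr{F_R^{\N_0}} \subseteq \gr{F_{R^+}^{\N_0}} \subseteq [\mathcal{D}]_{\subseteq}$ with the hypothesis $[\mathcal{D}]_{\subseteq} \in \gccfo$ and closure of $\gccfo$ under subsets gives $\mathcal{C} \in \gccfo$. The main obstacle to a cleaner direct embedding (without the padding step $R \rightsquigarrow R^+$) is that two vertices of $G$ sharing a label under $\ell$ would collapse to a single vertex of $H_M$, breaking the induced-subgraph relation; the one-coordinate padding canonically resolves this. All remaining verifications—that ignoring a coordinate preserves the PBS structure, that the identity labeling of $[M]_0^{k+1}$ respects the bound $s(n)=n$, and that Theorem~\ref{thm:gccfobpbsn} applies to $\mathcal{D}$—are routine.
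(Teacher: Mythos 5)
Your proof is correct, and it follows the same overall strategy as the paper: the forward inclusion is quoted from Corollary~\ref{corol:pbs}, and for the converse you pass to the polynomially size-bounded class $\mathcal{D}=\gr{F^{\N_0}_{R^+},s}$ with $s(n)=n$, identify it as a member of $\gccfo$ via Theorem~\ref{thm:gccfobpbsn}, show $\mathcal{C}\subseteq[\mathcal{D}]_{\subseteq}$ by embedding each $G$ into a larger host graph, and then invoke the hypothesis together with closure of $\gccfo$ under subsets. The only real divergence is in how the embedding is realized. The paper does not modify the PBS at all and needs no injectivity: it keeps $V(G)$ as is, adds $z$ dummy vertices labeled $(0,\dots,0)$ (where $z$ is the largest value in the image of $\ell$), and lets $F_R^{\N_0}$ determine the edges of the resulting host $H$ on $z+n$ vertices, so that $G=H[V(G)]$ and the label bound $z\leq z+n$ is met automatically. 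You instead pad $R$ with an ignored identifier coordinate to force an injective labeling and then embed $G$ into the full ``universal'' graph on $[M]_0^{k+1}$. Both are sound; the paper's variant is leaner (no change to $R$, host of size $z+n$ rather than $(M+1)^{k+1}$), while your universal-host formulation makes the induced-subgraph step arguably more transparent, at the cost of the extra padding machinery that the dummy-vertex trick renders unnecessary.
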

\begin{proof}
    Assume that $\gccfo$ is closed under hereditary closure. We show that $\gccpbs(\N_0)$ is a subset of $\gccbpbs(\N_0,n^{\mathcal{O}(1)})$.
    Let $\mathcal{C}$ be a graph class that is in $\gccpbs(\N_0)$ via a PBS $R$ with $2k$ variables, i.e.~$\mathcal{C} \subseteq \gr{F_R^{\N_0}}$. 
    It holds that $\mathcal{D} := \gr{F_R^{\N_0},n}$ is in $\gccbpbs(\N_0,n^{\mathcal{O}(1)})$. 
    We claim that every graph in $\mathcal{C}$ occurs as induced subgraph of some graph in $\mathcal{D}$ and therefore $\mathcal{C} \subseteq [\mathcal{D}]_{\subseteq}$. 
    Since $\gccbpbs(\N_0,n^{\mathcal{O}(1)})=\gccfo$ is closed under hereditary closure by assumption it follows that $ [\mathcal{D}]_{\subseteq}$ (and thus $\mathcal{C}$) is in $\gccfo$. Let $G$ be a graph with $n$ vertices that is in $\mathcal{C}$. There exists a labeling $\ell \colon V(G) \rightarrow \N_0^k$ such that $G$ is in $\gr{F_R^{\N_0}}$ via $\ell$. Let $z \in \N_0$ be the maximal value that occurs in the image of $\ell$. Let $H$ be some graph with $z + n$ vertices and $V(G) \subseteq V(H)$. The labeling $\ell$ is a partial labeling of $H$ which shows that $G$ is an induced subgraph of $H$. If one extends the labeling $\ell$ such that $\ell(u) = (0,\dots,0)$ for all $u \in V(H) \setminus V(G)$ then this shows that $H$ is in $\gr{F_R^{\N_0},n} = \mathcal{D}$.  
\end{proof}

Stated differently, $\gccpbs(\N_0)$ can be characterized as the set of graph classes obtained by closing $\gccfo$ under hereditary closure. 

\begin{lemma}
    Let $\mathbb{A}$ be a set of graph classes closed under $\bfreduction$. If there exists a hereditary graph class that is $\bfreduction$-complete for $\mathbb{A}$ then $\mathbb{A}$ is closed under hereditary closure. 
    \label{lem:hcbfh}
\end{lemma}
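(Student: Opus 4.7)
The plan is a short reduction-chasing argument that combines the hypothesis (existence of a hereditary complete class) with the observation that the algebraic operations $f(\mathcal{D},\dots,\mathcal{D})$ behave well under taking induced subgraphs whenever $\mathcal{D}$ itself is hereditary. Concretely, let $\mathcal{D}^*$ be a hereditary graph class that is $\bfreduction$-complete for $\mathbb{A}$, and let $\mathcal{C}\in\mathbb{A}$ be arbitrary. To show $[\mathcal{C}]_{\subseteq}\in\mathbb{A}$, I would exploit that $\mathbb{A}$ is closed under $\bfreduction$ and prove the single inclusion $[\mathcal{C}]_{\subseteq}\bfreduction \mathcal{D}^*$.

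By completeness, $\mathcal{C}\bfreduction\mathcal{D}^*$, so there is a $k$-ary boolean function $f$ with $\mathcal{C}\subseteq f(\mathcal{D}^*,\dots,\mathcal{D}^*)$. I claim the \emph{same} $f$ witnesses a $\bfreduction$-reduction of $[\mathcal{C}]_{\subseteq}$ to $\mathcal{D}^*$. Pick any $G'\in [\mathcal{C}]_{\subseteq}$; by definition there is a $G\in\mathcal{C}$ with $G'=G[V']$ for some $V'\subseteq V(G)$. Since $\mathcal{C}\subseteq f(\mathcal{D}^*,\dots,\mathcal{D}^*)$, there exist $H_1,\dots,H_k\in\mathcal{D}^*$ on vertex set $V(G)$ such that the adjacency relations of $G$ and $f(H_1,\dots,H_k)$ agree on distinct pairs. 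Now set $H_i':=H_i[V']$. Because $\mathcal{D}^*$ is hereditary, each $H_i'$ lies in $\mathcal{D}^*$. Moreover, the construction $f(\cdot)$ is pointwise in the edge slots, so restricting to $V'\times V'$ gives $f(H_1,\dots,H_k)[V']=f(H_1[V'],\dots,H_k[V'])$; hence $G'$ and $f(H_1',\dots,H_k')$ agree on every distinct pair of vertices in $V'$. This is exactly the condition required by Definition~\ref{def:gciobf}, so $G'\in f(\mathcal{D}^*,\dots,\mathcal{D}^*)$, and $[\mathcal{C}]_{\subseteq}\bfreduction \mathcal{D}^*$ via $f$.

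Finally, applying the hypothesis that $\mathbb{A}$ is closed under $\bfreduction$ and that $\mathcal{D}^*\in\mathbb{A}$, one concludes $[\mathcal{C}]_{\subseteq}\in\mathbb{A}$, which is closure under hereditary closure. There is no genuine obstacle here; the only subtlety is the self-loop caveat discussed in Section~\ref{sec:ar}, but since the definition of $f(\mathcal{C}_1,\dots,\mathcal{C}_k)$ already ignores diagonal entries, the pointwise restriction argument goes through without any change.
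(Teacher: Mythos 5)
Your proposal is correct and follows essentially the same argument as the paper: use completeness to write a class in $\mathbb{A}$ as $f(\mathcal{D}^*,\dots,\mathcal{D}^*)$, restrict the witnessing graphs $H_i$ to $V'$ using heredity of $\mathcal{D}^*$ and the pointwise nature of $f$, and conclude via closure under $\bfreduction$. Your explicit handling of the self-loop/distinct-pair issue is a minor refinement of the paper's presentation but not a different approach.
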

\begin{proof}
	Let $\mathcal{C}$ be a hereditary graph class that is $\bfreduction$-complete for $\mathbb{A}$. Let $\mathcal{D}$ be a graph class in $\mathbb{A}$. Since $\mathcal{C}$ is complete for $\mathbb{A}$ there exists a $k$-ary boolean function $f$ such that $\mathcal{D} \subseteq f(\mathcal{C},\dots,\mathcal{C})$. We show that every graph which occurs as induced subgraph of some graph in $\mathcal{D}$ is also in $f(\mathcal{C},\dots,\mathcal{C})$, i.e.~$[\mathcal{D}]_{\subseteq} \subseteq f(\mathcal{C},\dots,\mathcal{C})$. Let $G$ be a graph in $\mathcal{D}$ and let $G'$ be an induced subgraph of $G$ on vertex set $V' \subseteq V(G)$. 
	There exist graphs $H_1,\dots,H_k \in \mathcal{C}$ on vertex set $V(G)$ such that $G = f(H_1,\dots,H_k)$. It follows that $G' = f(H_1',\dots,H_k')$ where $H_i'$ is the induced subgraph of $H_i$ on vertex set $V'$ for $i \in [k]$. Since $\mathcal{C}$ is hereditary it  contains $H_1',\dots,H_k'$ and thus $G' \in f(\mathcal{C},\dots,\mathcal{C})$. Therefore $[\mathcal{D}]_{\subseteq} \subseteq f(\mathcal{C},\dots,\mathcal{C})$. Stated differently, $[\mathcal{D}]_{\subseteq}$ is $\bfreduction$-reducible to $\mathcal{C}$ via $f$ and thus must be in $\mathbb{A}$.   
\end{proof}

\begin{lemma}
    Let $\mathbb{A}$ be a set of graph classes closed under $\sgreduction$. If there exists a hereditary and inflatable graph class that is $\sgreduction$-complete for $\mathbb{A}$ then $\mathbb{A}$ is closed under hereditary closure. 
\end{lemma}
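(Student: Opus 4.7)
The plan is to mimic the proof of Lemma~\ref{lem:hcbfh}, replacing the algebraic combination with a subgraph representation and exploiting the fact that Lemma~\ref{lem:sgwoc} frees us from having to match the number of vertices exactly when the target class is hereditary and inflatable.

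Let $\mathcal{C}$ be a hereditary and inflatable graph class that is $\sgreduction$-complete for $\mathbb{A}$, and let $\mathcal{D} \in \mathbb{A}$. I want to show $[\mathcal{D}]_{\subseteq} \in \mathbb{A}$. Since $\mathbb{A}$ is closed under $\sgreduction$, it suffices to prove $[\mathcal{D}]_{\subseteq} \sgreduction \mathcal{C}$. By completeness, $\mathcal{D} \sgreduction \mathcal{C}$ via some $c,k \in \N$ and a $k^2$-ary boolean function $f$; equivalently (Lemma~\ref{lem:sgwoc}, applicable because $\mathcal{C}$ is hereditary and inflatable), every $G \in \mathcal{D}$ admits an $(H,f)$-representation for some $H \in \mathcal{C}$, with the same $k$ and $f$.

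The key observation is that $(H,f)$-representations restrict to induced subgraphs on the source side. Given any $G' \in [\mathcal{D}]_{\subseteq}$, pick $G \in \mathcal{D}$ such that $G'$ is an induced subgraph of $G$, identifying $V(G') \subseteq V(G)$. By the previous paragraph there is $H \in \mathcal{C}$ and a labeling $\ell \colon V(G) \to V(H)^k$ witnessing the $(H,f)$-representation of $G$. Define $\ell' = \ell\restr{}{V(G')}$. Then for all distinct $u,v \in V(G')$ we have $A_{uv}^{\ell'} = A_{uv}^{\ell}$ and hence
\[
(u,v) \in E(G') \iff (u,v) \in E(G) \iff f(A_{uv}^{\ell}) = 1 \iff f(A_{uv}^{\ell'}) = 1,
\]
so $G'$ has an $(H,f)$-representation via $\ell'$ with the \emph{same} $H \in \mathcal{C}$.

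Therefore every graph in $[\mathcal{D}]_{\subseteq}$ has an $(H,f)$-representation for some $H \in \mathcal{C}$, using the fixed pair $(k,f)$. Invoking Lemma~\ref{lem:sgwoc} once more (in the other direction, again using that $\mathcal{C}$ is hereditary and inflatable) yields $[\mathcal{D}]_{\subseteq} \sgreduction \mathcal{C}$, and closure of $\mathbb{A}$ under $\sgreduction$ gives $[\mathcal{D}]_{\subseteq} \in \mathbb{A}$. There is no real obstacle here; the only subtle point is that Lemma~\ref{lem:sgwoc} is precisely what lets us ignore the vertex-count requirement $n^c$ in the definition of $\sgreduction$, so that the trivial restriction of $\ell$ to $V(G')$ works without any padding or inflation step on our side.
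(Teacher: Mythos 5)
Your proof is correct and takes essentially the same route as the paper's: both use Lemma~\ref{lem:sgwoc} to replace the vertex-count requirement by the mere existence of $(H,f)$-representations, then observe that restricting the labeling to the vertex set of an induced subgraph gives that subgraph an $(H,f)$-representation with the same $H$, $k$ and $f$, so $[\mathcal{D}]_{\subseteq} \sgreduction \mathcal{C}$ and closure of $\mathbb{A}$ under $\sgreduction$ finishes the argument. You merely spell out the restriction step and the second application of Lemma~\ref{lem:sgwoc} in more detail than the paper does.
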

\begin{proof}
    Let $\mathcal{C}$ be a hereditary and inflatable graph class that is $\sgreduction$-complete for $\mathbb{A}$.
    We argue that if a graph class $\mathcal{D}$ is $\sgreduction$-reducible to $\mathcal{C}$ then $[\mathcal{D}]_{\subseteq}$ is also  $\sgreduction$-reducible to $\mathcal{C}$. From that the above statement follows. Due to Lemma \ref{lem:sgwoc} it holds that  $\mathcal{D}$ is $\sgreduction$-reducible to $\mathcal{C}$ iff there exist a $k \in \N$ and a $k^2$-ary boolean function $f$ such that for all graphs $G$ in $\mathcal{D}$ there exists a graph $H$ in $\mathcal{C}$ such that $G$ has an $(H,f)$-representation. Observe that if $G$ has an $(H,f)$-representation  then every induced subgraph of $G$  has an $(H,f)$-representation as well. Therefore $[\mathcal{D}]_{\subseteq}$ is  $\sgreduction$-reducible to $\mathcal{C}$ via $f$. 
\end{proof}

\begin{corollary}
    If $\gccfo$ has a $\bfreduction$-complete graph class that is hereditary then $\gccfo = \gccpbs(\N_0)$.  If $\gccfo$ has an $\sgreduction$-complete graph class that is hereditary and inflatable then $\gccfo = \gccpbs(\N_0)$.  
    \label{corol:pbscol}
\end{corollary}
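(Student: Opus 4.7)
The plan is to obtain both statements as immediate consequences of the three results that were just established: Theorem~\ref{thm:gccfohcpbs}, Lemma~\ref{lem:hcbfh}, and its $\sgreduction$-analogue. All the machinery is in place, so the proof is essentially a short chain of implications, and no new construction seems to be required.

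For the first statement, I would first observe that $\gccfo$ is closed under $\bfreduction$, which follows from Lemma~\ref{lem:gfobf}. Applying Lemma~\ref{lem:hcbfh} with $\mathbb{A} = \gccfo$, the existence of a hereditary $\bfreduction$-complete graph class for $\gccfo$ then yields that $\gccfo$ is closed under hereditary closure. Finally, Theorem~\ref{thm:gccfohcpbs} converts closure under hereditary closure into the desired collapse $\gccfo = \gccpbs(\N_0)$.

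For the second statement, I would proceed analogously. The class $\gccfo$ is closed under $\sgreduction$ (again by the closure results proved for quantifier-free logical labeling schemes). Setting $\mathbb{A} = \gccfo$ in the second of the two closure lemmas, the existence of a hereditary and inflatable $\sgreduction$-complete graph class for $\gccfo$ forces $\gccfo$ to be closed under hereditary closure. Once more Theorem~\ref{thm:gccfohcpbs} delivers $\gccfo = \gccpbs(\N_0)$.

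There is no real obstacle here; the corollary is just the composition of previously proved facts. The only thing worth double-checking in the write-up is that $\gccfo$ really satisfies the hypotheses of both closure lemmas (closure under $\bfreduction$ in the first case and under $\sgreduction$ in the second), but this was explicitly established earlier, so a one-line citation suffices in each direction.
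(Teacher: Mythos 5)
Your proposal is correct and is exactly the argument the paper intends: the corollary is the composition of Lemma~\ref{lem:hcbfh} (resp.\ its $\sgreduction$-analogue) applied with $\mathbb{A} = \gccfo$, using the previously established closure of $\gccfo$ under $\bfreduction$ and $\sgreduction$, followed by Theorem~\ref{thm:gccfohcpbs}. Nothing is missing.
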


Since $\gccfolt$ has a hereditary graph class which is $\bfreduction$-complete, namely linear neighborhood graphs, it follows that $\gccfolt$ must be a strict subset of $\gccfo$ unless $\gccfolt = \gccpbs(\mathbb{N}_0)$.

\section{Algorithmic Properties}
\label{sec:algo}
Consider the following question: is deciding the existence of a Hamiltonian cycle W[1]-hard when parameterized by $\gccp$? This seems to be an ill-defined question because unlike, for example, tree-width the class $\gccp$ does not resemble a parameter at all. We argue that the incapability of recognizing this as a  well-defined question  is caused by a flawed understanding of what constitutes a parameterization in parameterized complexity. In fact, classes such as $\gccfoeq, \gccac, \gccp$ and $\gccr$ are all sensible parameterizations. After substantiating this claim, we use upper and lower bounds from the literature to identify  algorithmic research questions where the parameterization is a class of labeling schemes.

A parameter $\kappa$ in parameterized complexity is a total function which maps words over some alphabet $\Sigma$ to natural numbers, i.e.~$\kappa \colon \Sigma^* \rightarrow \N$. A parameterized problem is a tuple $(L,\kappa)$ where $L$ is a language and $\kappa$ is a parameter, both over the same alphabet. 
This formalization of parameterized problems is used in the introductory textbook \cite{flumgrohe}\footnote{They make the additional requirement that a parameter must be polynomial-time computable, which we shall ignore here.}. 
A graph parameter is a total function mapping unlabeled graphs to natural numbers and therefore can be regarded as special case of a parameter. For a parameter $\kappa$ and $c \in \N$ let $\kappa_c$ denote the set of words $w$ with $\kappa(w) \leq c$. For example, $\text{tree-width}_c$ is the set of graphs with tree-width at most $c$.

To compare two parameters $\kappa, \tau$ over the same alphabet the following notion of boundedness is used: $\kappa$ is upper bounded by $\tau$, in symbols $\kappa \preccurlyeq \tau$, if there exists a function $f \colon \N \rightarrow \N$ such that $\kappa(w) \leq f(\tau(w))$ holds for all words $w$. If $\kappa \preccurlyeq \tau$ holds then $(L,\tau)$ is fpt-reducible to $(L,\kappa)$ for every language $L$. The precise definition of fpt-reducibility is not relevant here. It suffices to know that it can be seen as an analogon of polynomial-time many-one reducibility in classical complexity. Intuitively, designing a good algorithm for a problem using $\tau$ as parameter is not harder than doing this for $\kappa$. 
If $\kappa \preccurlyeq \tau$ and $\tau \preccurlyeq \kappa$ holds we say that $\kappa$ and $\tau$ are equivalent. From the previous implication it follows that $(L,\kappa)$ and $(L,\tau)$ are fpt-equivalent for all languages $L$ whenever $\kappa$ and $\tau$ are equivalent. As a consequence, it does not make a difference whether $\kappa$ or $\tau$ is considered when assessing the complexity of a parameterized problem and thus it is more accurate to define a parameterized problem as a tuple $(L,\mathbb{K})$ where $\mathbb{K}$ denotes an equivalence class of parameters. In that sense parameterized complexity is no more about parameters than graph theory is about adjacency matrices. However, while it is self-evident that adjacency matrices represent graphs it is not so obvious what is represented by parameters. A different notion of boundedness helps to answer this question. Let us say a language $L$ is bounded by a parameter $\kappa$ if there exists a $c \in \N$ such that $L \subseteq \kappa_c$. Let us write $\mathbb{K}(\kappa)$ to denote the set of languages that are bounded by $\kappa$. We say $\kappa$ is a subset of $\tau$, in symbols $\kappa \subseteq \tau$, if $\mathbb{K}(\kappa) \subseteq \mathbb{K}(\tau)$. Stated differently, every language that is bounded by $\kappa$ is bounded by $\tau$ as well. For example, the maximum degree is a subset of the clique number but not vice versa. In fact, the `$\subseteq$'-relation is just the inverse relation of `$\preccurlyeq$'.

\begin{fact}        
    Let $\kappa,\tau$ be parameters over the same alphabet. It holds that $\kappa \preccurlyeq \tau$ iff $\tau \subseteq \kappa$. 
\end{fact}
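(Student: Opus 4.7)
The plan is to prove both directions directly from the definitions; this is a clean Galois-connection style statement and no real obstacle is expected.

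For the forward direction, assume $\kappa \preccurlyeq \tau$, witnessed by some $f \colon \N \to \N$ with $\kappa(w) \leq f(\tau(w))$ for every word $w$. I want to show $\mathbb{K}(\tau) \subseteq \mathbb{K}(\kappa)$, so pick an arbitrary $L \in \mathbb{K}(\tau)$; by definition there is a constant $c$ with $L \subseteq \tau_c$. Then for every $w \in L$ we have $\tau(w) \leq c$, hence $\kappa(w) \leq f(\tau(w)) \leq c'$ where $c' := \max_{i \leq c} f(i)$. Thus $L \subseteq \kappa_{c'}$, which gives $L \in \mathbb{K}(\kappa)$.

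For the backward direction, assume $\mathbb{K}(\tau) \subseteq \mathbb{K}(\kappa)$. The key idea is to feed the assumption the ``right'' languages, namely the sublevel sets of $\tau$ itself: for each $n \in \N$ the language $\tau_n$ lies in $\mathbb{K}(\tau)$ (it is bounded by $\tau$ with constant $n$), hence by hypothesis it lies in $\mathbb{K}(\kappa)$. This means there exists some $c_n \in \N$ such that $\tau_n \subseteq \kappa_{c_n}$, i.e.\ every $w$ with $\tau(w) \leq n$ satisfies $\kappa(w) \leq c_n$. I then define $f \colon \N \to \N$ by $f(n) = c_n$; for an arbitrary word $w$, setting $n := \tau(w)$ we have $w \in \tau_n$, whence $\kappa(w) \leq c_n = f(\tau(w))$, which is precisely $\kappa \preccurlyeq \tau$.

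The only mildly delicate point is to notice that the candidate ``hardest'' languages to test the inclusion $\mathbb{K}(\tau) \subseteq \mathbb{K}(\kappa)$ against are the sublevel sets $\tau_n$ themselves, because they are maximal among the languages bounded by $\tau$ with a given constant. Once that observation is made, the desired bounding function $f$ is read off directly from the constants supplied by the hypothesis, and no further machinery is needed.
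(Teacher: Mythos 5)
Your proof is correct and takes essentially the same approach as the paper: the backward direction extracts the bounding function by testing the hypothesis against the level sets $\tau_n$ exactly as the paper does, and the forward direction is the same translation from the bounding function to inclusions of the form $\tau_c \subseteq \kappa_{c'}$. The only cosmetic difference is that the paper assumes $f$ monotone and argues by induction on the levels, whereas you handle a general $f$ directly by taking $c' = \max_{i \leq c} f(i)$, which is if anything slightly cleaner.
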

\begin{proof}        
    ``$\Rightarrow$'': Let $f \colon \N \rightarrow \N$ be a monotone function such that $\kappa(w) \leq f(\tau(w))$ for all words $w$. We show inductively that for every $i \in \N$ it holds that $\tau_i \subseteq \kappa_{f(i)}$. For the base case $i = 1$ it holds that $w \in \tau_1$ iff $\tau(w) = 1$. It follows that $\kappa(w) \leq f(1)$ and therefore $w \in \kappa_{f(1)}$. For the inductive step $i \rightarrow i + 1$ it must be the case that $w$ is either in $\tau_{i+1} \setminus \tau_i$ or in $\tau_i$. If $w$ is in $\tau_i$ then by induction hypothesis it holds that $w \in \kappa_{f(i)}$. Since $f$ is monotone it follows that $w \in \kappa_{f(i+1)}$ as well. For the other case it holds that $\tau(w) = i+1$ and therefore $\kappa(w) \leq f(i+1)$ which means $w \in \kappa_{f(i+1)}$.      
    
    ``$\Leftarrow$'': Since $\tau \subseteq \kappa$ there exists a function $f \colon \N \rightarrow \N$ such that $\tau_i \subseteq \kappa_{f(i)}$ for all $i \in \N$. We show that $\kappa(w) \leq f(\tau(w))$ for all words $w$. Let $\tau(w) = k$ for some $k \in \N$. Then it holds that $w \in \tau_k$ and therefore $w \in \kappa_{f(k)}$ as well. This means $\kappa(w) \leq f(k) = f(\tau(w))$.
\end{proof}

\begin{corollary}
    Two parameters are equivalent iff they bound the same set of languages.
\end{corollary}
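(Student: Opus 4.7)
The plan is to derive this corollary directly from the preceding fact, which already does all the work. Unrolling the definitions, two parameters $\kappa$ and $\tau$ (over the same alphabet) are equivalent precisely when both $\kappa \preccurlyeq \tau$ and $\tau \preccurlyeq \kappa$ hold. Applying the fact to each direction, these two inequalities are respectively equivalent to $\tau \subseteq \kappa$ and $\kappa \subseteq \tau$, which by definition of the `$\subseteq$'-relation amounts to $\mathbb{K}(\tau) \subseteq \mathbb{K}(\kappa)$ and $\mathbb{K}(\kappa) \subseteq \mathbb{K}(\tau)$, i.e.\ $\mathbb{K}(\kappa) = \mathbb{K}(\tau)$.

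So the whole proof is essentially a chain of definitional equivalences strung together by one application of the fact in each direction. There is no real obstacle to overcome; the only thing to be careful about is that $\kappa$ and $\tau$ are assumed to be over the same alphabet, matching the hypothesis of the fact. I would write the proof as a two-line argument: $\kappa$ and $\tau$ are equivalent iff $\kappa \preccurlyeq \tau$ and $\tau \preccurlyeq \kappa$, iff (by the fact) $\tau \subseteq \kappa$ and $\kappa \subseteq \tau$, iff $\mathbb{K}(\kappa) = \mathbb{K}(\tau)$, i.e.\ $\kappa$ and $\tau$ bound the same set of languages.
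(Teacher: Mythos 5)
Your proof is correct and is exactly the argument the paper intends: the corollary is an immediate consequence of the preceding fact, applied once in each direction, combined with unfolding the definitions of equivalence and of the $\subseteq$-relation via $\mathbb{K}(\cdot)$. Nothing is missing.
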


Therefore the answer to the previous question is that a parameter represents a set of languages. However, not every set of languages can be interpreted as a parameter. We are only interested in sets of languages that can be represented by a parameter. To distinguish between such sets of languages and parameters let us call the former ones parameterizations. 

\begin{definition}
    A set of languages $\mathbb{K}$ over an alphabet $\Sigma$ is a parameterization if there exists a parameter $\kappa$ over $\Sigma$ such that $\mathbb{K} = \mathbb{K}(\kappa)$.
\end{definition}    

\begin{theorem}
    A set of languages $\mathbb{K}$ over $\Sigma$ is a parameterization iff the following holds:
    \begin{enumerate}
        \item $\mathbb{K}$ is closed under union
        \item $\mathbb{K}$ contains $\{w\}$ for every word $w$ over $\Sigma$        
        \item there exists a countable subset $\mathbb{K}'$ of $\mathbb{K}$ such that the closure of $\mathbb{K}'$ under subsets equals $\mathbb{K}$
    \end{enumerate}
    \label{thm:param}
\end{theorem}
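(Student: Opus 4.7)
The plan is to prove the two directions separately, with all the structural work happening on the backward side where we must build a parameter from scratch.

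For the forward direction, assume $\mathbb{K}=\mathbb{K}(\kappa)$ and verify each of the three conditions in turn. Union closure is immediate: if $L_1\subseteq\kappa_{c_1}$ and $L_2\subseteq\kappa_{c_2}$ then $L_1\cup L_2\subseteq\kappa_{\max(c_1,c_2)}$. For condition~2, the singleton $\{w\}$ is bounded by $\kappa_{\kappa(w)}$. For condition~3, take $\mathbb{K}'=\{\kappa_c : c\in\N\}$; this is countable, and by definition of $\mathbb{K}(\kappa)$ a language lies in $\mathbb{K}$ exactly when it is a subset of some $\kappa_c\in\mathbb{K}'$, so the subset-closure of $\mathbb{K}'$ is precisely $\mathbb{K}$.

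For the backward direction, assume the three properties and build a suitable $\kappa$. First observe that condition~3 already forces $\mathbb{K}$ itself to be closed under taking subsets (any subset of an $L\in\mathbb{K}$ is a subset of the witnessing element of $\mathbb{K}'$). Enumerate $\mathbb{K}'=\{L_1,L_2,\dots\}$ and form the cumulative unions $L_i^{\star}=L_1\cup\cdots\cup L_i$, which lie in $\mathbb{K}$ by condition~1. Define
\[
\kappa(w)=\min\{\,i\in\N : w\in L_i^{\star}\,\}.
\]
The key point is that $\kappa$ is total: by condition~2 the singleton $\{w\}$ lies in $\mathbb{K}$, hence by condition~3 it is a subset of some $L_i$, so $w\in L_i\subseteq L_i^{\star}$. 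By construction $\kappa_c=L_c^{\star}\in\mathbb{K}$ for every $c$.

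It remains to verify $\mathbb{K}(\kappa)=\mathbb{K}$. The inclusion $\mathbb{K}(\kappa)\subseteq\mathbb{K}$ follows because any $L\in\mathbb{K}(\kappa)$ is a subset of the $\mathbb{K}$-language $\kappa_c=L_c^{\star}$ and $\mathbb{K}$ is closed under subsets. For $\mathbb{K}\subseteq\mathbb{K}(\kappa)$, given $L\in\mathbb{K}$ pick (via condition~3) some $L_i\in\mathbb{K}'$ with $L\subseteq L_i\subseteq L_i^{\star}=\kappa_i$, which places $L$ in $\mathbb{K}(\kappa)$.

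There is essentially no hard step here — the content is that conditions~1 and~2 exist solely to guarantee that the enumeration-based definition of $\kappa$ is total and that the level sets $\kappa_c$ sit inside $\mathbb{K}$, while condition~3 supplies the countable scaffolding needed to index the levels. The only detail that requires momentary care is noticing that condition~3 silently encodes closure of $\mathbb{K}$ under subsets, which is what makes the $\mathbb{K}(\kappa)\subseteq\mathbb{K}$ direction go through.
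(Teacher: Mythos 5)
Your proof is correct and follows essentially the same route as the paper: the forward direction verifies the three conditions with $\mathbb{K}'=\{\kappa_1,\kappa_2,\dots\}$, and the backward direction builds $\kappa$ from cumulative unions of an enumeration of $\mathbb{K}'$, taking the least index containing $w$. You merely spell out the totality of $\kappa$ and the equality $\mathbb{K}(\kappa)=\mathbb{K}$ in more detail than the paper's terse final sentence, which is a fair elaboration rather than a different argument.
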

\begin{proof}    
    ``$\Rightarrow$'': Let $\mathbb{K}$ be a parameterization over $\Sigma$. This means there exists a parameter $\kappa$ over $\Sigma$ such that $\mathbb{K} = \mathbb{K}(\kappa)$. Clearly, $\mathbb{K}(\kappa)$ is closed under subsets. Let $L,L'$ be languages over $\Sigma$ which are both bounded by $\kappa$. This means $L \subseteq \kappa_i$ and $L' \subseteq \kappa_j$ for some $i,j \in \N$. We assume w.l.o.g.~that $i \leq j$ and therefore $L \cup L' \subseteq \kappa_j$. Therefore $\mathbb{K}(\kappa)$ is closed under union. Since $\kappa$ is total it follows that $\{w\}$ is in $\mathbb{K}(\kappa)$ for every word $w$. The countable subset $\mathbb{K}'$ of $\mathbb{K}(\kappa)$ such that the closure of $\mathbb{K}'$ under subsets equals $\mathbb{K}(\kappa)$ is given by $\{\kappa_1, \kappa_2, \dots \} $.

    ``$\Leftarrow$'': Let $\mathbb{K}$ be a set of languages over $\Sigma$ which satisfies the above three conditions. Observe that the third condition implies that $\mathbb{K}$ is closed under subsets. We construct a parameter $\kappa$ over $\Sigma$ such that $\mathbb{K} = \mathbb{K}(\kappa)$. Let $\mathbb{K}' = \{L_1,L_2,\dots \}$ be the countable subset of $\mathbb{K}$ whose closure under subsets equals $\mathbb{K}$. Let $L'_c = \bigcup_{i=1}^c L_i$. It holds that $L'_c$ is in $\mathbb{K}$ for every $c \in \N$ because $\mathbb{K}$ is closed under union. Then $\kappa(w)$ being defined as the least $k$ such that $w \in L'_k$ yields the required parameter.    
\end{proof}

Therefore it is more accurate to understand a parameterized problem as a tuple $(L,\mathbb{K})$ where $L$ is a language and $\mathbb{K}$ is a parameterization, both over the same alphabet. This alternative view on parameterized problems leads to an interesting different perspective on parameterized complexity which, however, we do not address here.
The important observation in our context is the following. If we have a set of graph classes $\mathbb{A}$ which satisfies the conditions of Theorem~\ref{thm:param} then asking about the parameterized complexity of some problem parameterized by $\mathbb{A}$ is a well-defined question. We show how Theorem~\ref{thm:param} can be applied to classes such as $\gccp$. 

\begin{lemma}
    For every countable set of languages $\ccex$ that contains all finite languages and for which $\gccex$ is closed under union it holds that $\gccex$ is a parameterization.
\end{lemma}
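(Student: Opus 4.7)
The plan is to verify the three conditions of Theorem~\ref{thm:param} for $\gccex$, where we regard the ambient ``alphabet'' $\Sigma$ as the (countable) set of unlabeled finite graphs, so that ``words'' are individual graphs and ``languages over $\Sigma$'' are graph classes.

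Condition~(1), closure under union, is given by hypothesis. For condition~(2), I need to exhibit, for every unlabeled graph $G$, a labeling scheme $S=(F_L,c)$ with $L\in\ccex$ such that $G\in\gr{S}$; this yields $\{G\}\subseteq\gr{S}$ and hence $\{G\}\in\gccex$. Let $n=|V(G)|$, pick any bijection $\ell\colon V(G)\to\{0,1\}^{\lceil\log n\rceil}$, and set
$$ L_G \;=\; \set{\ell(u)\ell(v)}{(u,v)\in E(G)}. $$
Since $L_G$ is finite, it lies in $\ccex$ by assumption, and with label length $c=1$ the labeling $\ell$ witnesses $G\in\gr{F_{L_G},1}$. (Empty and one-vertex graphs are handled separately by picking any finite $L$ that produces no edges.)

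For condition~(3), the set $\mathbb{K}' = \set{\gr{F_L,c}}{L\in\ccex,\, c\in\N}$ is countable because $\ccex\times\N$ is countable; each $\gr{F_L,c}$ is trivially a subset of itself and therefore lies in $\gccex$, so $\mathbb{K}'\subseteq\gccex$. Conversely, by Definition~\ref{def:genericgcc} every graph class in $\gccex$ is a subset of some $\gr{F_L,c}\in\mathbb{K}'$, so the closure of $\mathbb{K}'$ under subsets equals $\gccex$. All three conditions being met, Theorem~\ref{thm:param} yields that $\gccex$ is a parameterization.

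There is no genuine obstacle here; the only point that requires a moment of care is the bookkeeping in condition~(2) to ensure that the ad hoc label decoder for a single graph really corresponds to a \emph{finite} language (so that the assumption on $\ccex$ applies) and that the label-length convention $c\log n$ is respected, which is why the bijective labeling into $\{0,1\}^{\lceil\log n\rceil}$ with $c=1$ is used.
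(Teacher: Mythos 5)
Your proposal is correct and follows essentially the same route as the paper: it verifies the three conditions of Theorem~\ref{thm:param}, realizes singleton classes via a finite look-up-table language (which lies in $\ccex$ by assumption), and takes the countable family $\set{\gr{F_L,c}}{L\in\ccex,\,c\in\N}$ whose closure under subsets is $\gccex$. The only cosmetic point is that the labeling into $\{0,1\}^{\lceil\log n\rceil}$ need only be injective rather than a bijection, which does not affect the argument.
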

\begin{proof}
    We show that $\gccex$ satisfies the three conditions of Theorem~\ref{thm:param}. $\gccex$ is closed under union by assumption. 
    Furthermore, every singleton graph class lies in $\gccex$. This can be shown by using a  look-up table of finite size as label decoder. More precisely, for a graph $G$ on $n$ vertices a language $L$ that contains only words of length $2\log n$ can be constructed such that $\{G\} \subseteq \gr{F_L,1}$. Since $\ccex$ contains all finite languages it contains $L$ as well.
    The countable subset of $\gccex$ such that its closure under subsets equals $\gccex$ is given by the set of graph classes $\gr {S}$ for every labeling scheme $S$ in $\gccex$. That this set is countable follows from the fact that there are only countably many labeling schemes $S$ in $\gccex$ since $\ccex$ is countable. 
\end{proof}

\begin{corollary}
    $\gccac, \gccl, \gccp, \gccnp, \gccexp$ and $\gccr$ are parameterizations.
    \label{corol:param}
\end{corollary}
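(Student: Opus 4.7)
The plan is to check that each of the listed complexity classes satisfies the three hypotheses of the preceding lemma: countability, containment of all finite languages, and closure of the associated $\ccg \cdot$ class under union. Once these are verified, the corollary follows by a direct application.

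First, I would observe that each of $\AC^0, \ComplexityFont{L}, \P, \NP, \EXP, \R$ is countable, since each of these classes is defined by a countable collection of machine or circuit descriptions (logspace-uniform circuit families, deterministic and nondeterministic Turing machines with explicit time bounds, etc.). Similarly, each class contains every finite language: a finite language can be decided by table lookup, which is expressible by an $\AC^0$ circuit family and hence lies in all the classes under consideration.

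The remaining task is closure under union of the corresponding $\ccg \cdot$ classes. For $\gccac, \gccl, \gccp, \gccexp,$ and $\gccr$ this has essentially already been handled: these classes are closed under $\bfreduction$ by the lemma preceding the discussion of completeness, and closure under $\bfreduction$ implies closure under union (disjunction) since taking unions $\mathcal{C} \vee \mathcal{D}$ is a boolean-function operation and both classes are also closed under subsets. For $\gccnp$, which does not appear in that earlier lemma, I would give the explicit construction: given labeling schemes $S_1 = (F_1, c_1)$ and $S_2 = (F_2, c_2)$ with $F_1, F_2 \in \NP$, form $S_3 = (F_3, c_1 + c_2)$ defined by
\[
(x_1 x_2, y_1 y_2) \in F_3 \Leftrightarrow (x_1, y_1) \in F_1 \vee (x_2, y_2) \in F_2,
\]
where the split respects the label-length proportions $c_1, c_2$. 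Then $\gr{S_1} \cup \gr{S_2} \subseteq \gr{S_3}$ by padding labels with a fixed dummy word on the side not in use, and $F_3 \in \NP$ since $\NP$ is closed under union and $\AC^0$-many-one reductions.

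No step is genuinely difficult; the only mild subtlety is making sure that, for the union construction, the two labeling schemes can be assumed to represent all empty graphs (so that padding the unused half of a label is harmless), which is trivial to arrange. With countability, containment of finite languages, and closure under union established for each class, the preceding lemma yields that each of $\gccac, \gccl, \gccp, \gccnp, \gccexp,$ and $\gccr$ is a parameterization.
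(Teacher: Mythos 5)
Your overall strategy --- verify countability, containment of all finite languages, and union-closure of the corresponding $\ccg\,\cdot$ classes, then invoke the preceding lemma --- is exactly what the corollary needs, and the explicit construction you give for $\gccnp$ is the right argument. The genuine problem is the justification you offer for the other five classes: closure under $\bfreduction$ does not yield closure under union. An algebraic reduction $\mathcal{C} \bfreduction \mathcal{D}$ goes to a \emph{single} class $\mathcal{D}$, whereas the union $\mathcal{C}_1 \cup \mathcal{C}_2$ involves two different members of $\gccex$, and nothing in the reduction-closure lemma lets you combine them into one target class. The paper is careful about exactly this point: in Corollary~\ref{corol:bfcompl} union-closure appears as an \emph{additional} hypothesis (needed for the converse direction), and wherever union-closure is actually used it is proved by a direct construction (Fact~\ref{fact:gfounion}, Lemma~\ref{lem:gregunion}), never derived from closure under reductions. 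You also identify the set-theoretic union with the edgewise disjunction ("union (disjunction)"); the containment $\mathcal{C}_1 \cup \mathcal{C}_2 \subseteq \mathcal{C}_1 \vee \mathcal{C}_2$ only holds after arranging that both classes contain all edgeless graphs, which is precisely the subtlety you defer to the final sentence.

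The repair is immediate and you essentially already wrote it down: the decoder $F_3$ with $(x_1x_2,y_1y_2) \in F_3 \Leftrightarrow (x_1,y_1)\in F_1 \vee (x_2,y_2)\in F_2$, applied after modifying the schemes (say by one extra flag bit per label) so that each also represents all edgeless graphs, establishes union-closure for every one of $\gccac,\gccl,\gccp,\gccnp,\gccexp,\gccr$, since each of these language classes tolerates the trivial manipulations involved. Run that single construction uniformly for all six classes instead of appealing to $\bfreduction$-closure, and your argument coincides with the intended one: the paper states the corollary without proof because, given the lemma, the only content is that each class is countable, contains all finite languages, and has a union-closed $\ccg\,\cdot$ class via this standard construction.
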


It is not difficult to see that $\gccfoeq$, $\gccfolt$, $\gccfo$ and $\gccfoq$ are parameterizations as well; their closure under union is shown in Fact~\ref{fact:gfounion}. An example of a set of graph classes which is no parameterization is the class $\gccall$. It can be shown that there exists no countable subset of $\gccall$ whose closure under subsets equals $\gccall$ by a diagonalization argument. 

For readers not familiar with parameterized complexity we give a rough description of the complexity classes  mentioned in Figure~\ref{fig:alg_side}. 
A parameterized problem $(L,\mathbb{K})$ is in FPT (fixed-parameter tractable) if there exists a $c \in \N$ such that for all $K \in \mathbb{K}$ it holds that $L$ can be decided in $\TIME(n^c)$ if one only considers inputs from $K$. The multiplicative constant hidden in the big-oh can depend on $K$. 
A parameterized problem $(L,\mathbb{K})$ is in XP if for all $K \in \mathbb{K}$ it holds that $L$ is in $\P$ if one only considers inputs from $K$. In contrast to FPT the degree of the polynomial that bounds the runtime is not fixed but can depend on $K$.
The classes FPT and XP are regarded as the analogon of $\P$ and $\EXP$ in the parameterized world. 
If a parameterized problem is W[1]-hard then this can be seen as evidence that it is not in FPT.  

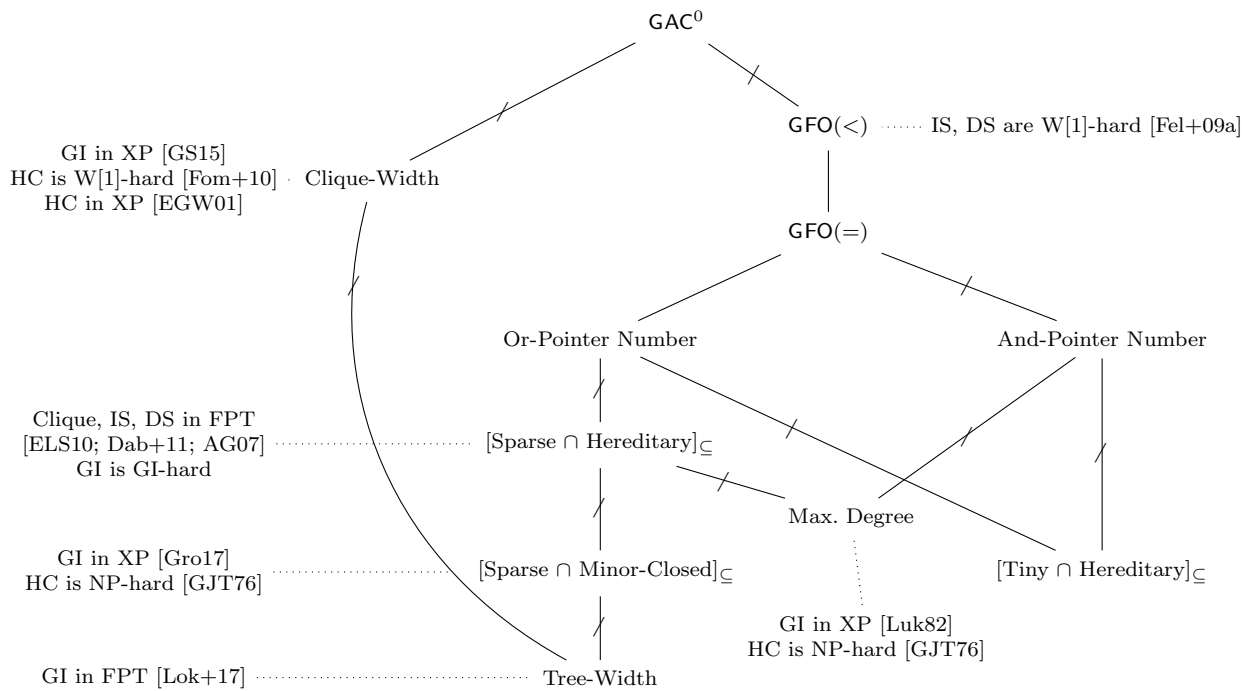
\begin{figure}
    \centering
    \begin{tikzpicture}[shorten >=1pt,auto,node distance=1.2cm,
  main node/.style={circle,draw}]
\usetikzlibrary{shapes.misc}

\newcommand*{\movey}{1.4}%
\newcommand*{\movex}{2}%
\newcommand*{\movexsi}{0.02}%

    \node (gac) at (-1*\movex,-6.5*\movey) {$\gccac$};   

	\node (gcw) at (-3*\movex,-8*\movey) {Clique-Width};    
	\draw[-] (gcw) to  (gac);
	\node (notsign) at (-2.2*\movex,-7.35*\movey) {$\not$};
	\node (notsign) at (-0.55*\movex,-7*\movey) {$\not$};	
	\node (notsign) at (-3.18*\movex,-9*\movey) {$\not$};

    \node[align=center] (gcwinfo) at (-4.5*\movex,-8*\movey) {GI in XP \cite{gicwxp}\\HC is W[1]-hard \cite{hccwhard}\\HC in XP \cite{hccwxp}};
	\draw[-,dotted] (gcwinfo) to (gcw);  

    \node (gfolt) at (0,-7.5*\movey) {$\gccfolt$};    
   	\draw[-] (gfolt) to  (gac);

    \node[align=center] (gfoinfo) at (1.7*\movex,-7.5*\movey) {IS, DS are W[1]-hard \cite{mihard}};
	\draw[-,dotted] (gfoinfo) to (gfolt);

    \node (gfoeq) at (0,-8.5*\movey) {$\gccfoeq$};
    \draw[-] (gfoeq) to  (gfolt);


    \node (gtw) at (-1.5*\movex,-12.5*\movey-0.3) {Tree-Width};     
	\path[-]
		(gtw) edge[bend angle=38,bend left] (gcw);     

    \node[align=center] (gtwinfo) at (-4.5*\movex,-12.5*\movey-0.3) {GI in FPT \cite{gitwfpt}};
    \draw[-,dotted] (gtwinfo) to (gtw);
    

    \node (gsmc) at (-1.5*\movex,-11.5*\movey-0.3) {\: [Sparse $\cap$ Minor-Closed]$_{\subseteq}$};
    \draw[-] (gtw) to node {\(\not\)} (gsmc);
    
    \node[align=center] (gsmcinfo) at (-4.5*\movex,-11.5*\movey-0.3) {GI in XP \cite{gipmcxp} \\ HC is NP-hard \cite{hchard}};
    \draw[-,dotted] (gsmcinfo) to (gsmc);

    \node (gsh) at (-1.5*\movex,-10.5*\movey) {[Sparse $\cap$ Hereditary]$_{\subseteq}$};
    \draw[-] (gsmc) to node {\(\not\)} (gsh);
    
    \node[align=center] (gusinfo) at (-4.5*\movex,-10.5*\movey) {Clique, IS, DS in FPT\\\cite{cliqueusfpt,dabrowski,dsusfpt}\\GI is GI-hard};
	\draw[-,dotted] (gusinfo) to (gsh);

    \node (gdeg) at (0.3,-11.2*\movey) {Max.~Degree};
    \draw[-] (gdeg) to   (gsh);
    \node[]  (shmdst) at (-1.5*\movex+1.5,-10.5*\movey-0.5) {\(\not\)};

	\node[align=center] (gdeginfo) at (0.3+0.2,-12.2*\movey-0.2) {GI in XP \cite{gidegxp} \\ HC is NP-hard \cite{hchard}};
    \draw[-,dotted] (gdeginfo) to (gdeg);

    \node (gopn) at (-1.5*\movex,-9.5*\movey) {Or-Pointer Number};    
    \draw[-] (gsh) to node {\(\not\)} (gopn);

    \draw[-] (gopn)  to  (gfoeq);

    \node (gapn) at (1.8*\movex,-9.5*\movey) {And-Pointer Number};  
    \draw[-] (gapn)  to  (gfoeq);  
    \node[]  (gfoapn) at (1.8*\movex-1.9,-10*\movey+1.4) {\(\not\)};

    \draw[-] (gdeg) to  (gapn);
    \node[]  (apnmdst) at (1.8*\movex-1.9,-10*\movey-0.6) {\(\not\)};

    \node (gth) at (1.8*\movex,-11.5*\movey-0.3) {[Tiny $\cap$ Hereditary]$_{\subseteq}$};
    \draw[-] (gth) to node {\(\not\)} (gapn);


    \draw[-] (gth) to   (gopn);
    \node[]  (opnth) at (-1.5*\movex+2.4,-9.5*\movey-1.2) {\(\not\)};





\end{tikzpicture}
    \caption{Parameterized complexity of graph classes with implicit representations}
    \label{fig:alg_side}                
\end{figure}  

In Figure~\ref{fig:alg_side} we assess the parameterized complexity of graph isomorphism (GI),  Hamiltonian cycle (HC), clique, independent set (IS) and dominating set (DS) for structural parameterizations below $\gccac$. The last three problems are additionally parameterized by the solution size which is part of the input. Let us give a brief explanation of this figure. The problems IS and DS are W[1]-hard for $\gccfolt$ because in \cite{Fellows} these problem are shown to be already W[1]-hard for unit 2-interval graphs which are contained in $\gccfolt$. The statement that GI is GI-hard for sparse and hereditary graph classes means that there is a sparse and hereditary graph class for which GI is GI-hard. An analogous interpretation is meant for the NP-hardness results of HC. It is interesting to note that in \cite{hchard} a stronger result is proven than what is shown in this figure: HC is already NP-hard on planar graphs with degree at most three. This implies that even when parameterizing by proper minor-closed graph classes with bounded degree it is hopeless to find XP-algorithms for HC. On the positive side, we found no results that indicate that clique is not in FPT on $\gccfolt$ and DS or IS are not in FPT on $\gccfoeq$. 
Is the clique problem in FPT on $\gccfolt$? Are IS and DS in FPT on $\gccfoeq$? Is GI in XP when parameterized by the and-pointer number? Instead of $\gccfoeq$ and $\gccfolt$ one can also consider $[\gcforest]_{\clonebf}$ and $[\gcinterval]_{\clonebf}$ as parameterizations for which it might be easier to find efficient algorithms. If unit 2-interval graphs are in $[\gcinterval]_{\clonebf}$ then IS and DS are W[1]-hard on $[\gcinterval]_{\clonebf}$ due to the results from \cite{mihard}. 

\section{Regular Labeling Schemes}
\label{sec:reg}
One of our main objectives is to identify suitable classes of labeling schemes against which lower bounds for hereditary graph classes can be proved. 
Suitable means that such a class should contain most of the graph classes that are known to have a labeling scheme while still possessing enough structure to be amenable to analysis. For the latter condition we consider closure under algebraic and subgraph reductions and being a subset of $\cogcsmallh$ as criteria. We show that a careful definition of labeling schemes in terms of regular languages yields a good candidate. The resulting class which we call $\gccreg$ (definition follows) contains every hereditary graph class that we know to have a labeling scheme, just like $\gccac$. We also show that it is closed under both reduction notions. Unfortunately, we are not able to resolve the question of whether $\gccreg$ is a subset of $\cogcsmallh$.

Clearly, the generic mechanism in Definition~\ref{def:genericgcc} which turns a set of languages into a set of labeling schemes is not adequate for regular languages because the order of the input is crucial. Instead of concatenating the two labels of the vertices they should be interleaved in this case.
For two strings $x,y$ of equal length $n$ we write $x \wr y$ to denote the string $x_1 y_1 x_2 y_2 \dots x_n y_n$.

\begin{definition}
     Let $F \subseteq \{0,1\}^* \times \{0,1\}^*$ be a label decoder. We call $F$ regular if 
     $$ \set{ x \wr y } { n \in \N, x,y \in \{0,1\}^n, (x,y) \in F}$$
     is a regular language.  We call a labeling scheme regular if its label decoder is regular.
\end{definition}

Consider a labeling scheme $S=(F,2c)$ which is defined in terms of a label decoder $F'$ as follows. It holds that $(x_{\mathrm{out}}x_{\mathrm{in}},y_{\mathrm{out}}y_{\mathrm{in}}) \in F$ iff $(x_{\mathrm{out}},y_{\mathrm{in}}) \in F'$ for all $cm \in \N$ and $x_\alpha,y_\alpha \in \{0,1\}^{cm}$ and $\alpha \in \{ {\mathrm{in}}, {\mathrm{out}} \}$. This labeling scheme has the special property that the label of a vertex can be split into two parts of which one is responsible for the outgoing edges and the other for the ingoing edges. Stated differently, a graph $G$ is represented by $S$ iff there exist two labelings $\ell_{\mathrm{in}}, \ell_{\mathrm{out}}  \colon V(G) \rightarrow \{0,1\}^{c \log n}$  such that $(u,v) \in E(G)$ iff $(\ell_{\mathrm{out}}(u),\ell_{\mathrm{in}}(v)) \in F'$ for all $u, v$ in $V(G)$. The same idea can be applied to logical labeling schemes. The logical labeling scheme for dichotomic graphs is an example of this. This trick cannot be applied to regular labeling schemes and therefore we have to externally add this ability in order to get the most out of such labeling schemes.  
 
\begin{definition}
     Let $S=(F,c)$ be a labeling scheme. We say a graph $G$ with $n$ vertices is in $\gr[\mathrm{io}]{S}$ if there exist labelings $\ell_{\mathrm{in}}, \ell_{\mathrm{out}} \colon V(G) \rightarrow \{0,1\}^{c \log n}$ such that for all $u, v \in V(G)$  
     $$ (u,v) \in E(G) \Leftrightarrow (\ell_{\mathrm{out}}(u),\ell_{\mathrm{in}}(v)) \in F$$
\end{definition}

Observe that this definition can be applied to labeling schemes from $\gccac$ or higher without affecting the set of graph classes that are represented due to the previous trick.
 
 \begin{definition}
     A graph class $\mathcal{C}$ is in $\gccreg$ if there exists a regular labeling scheme $S=(F,c)$ such that $\mathcal{C} \subseteq \gr[\mathrm{io}]{S}$.          
 \end{definition}

\begin{lemma}
    $\gccreg$ is closed under $\bfreduction$ and $\sgreduction$.
\end{lemma}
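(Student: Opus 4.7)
The plan is to give, for each type of reduction, an explicit construction of a regular label decoder from a given one. In both cases the new labels are obtained by interleaving copies of the old labels bit by bit (rather than concatenating them), so that a single DFA can simulate several copies of the original DFA in parallel using only a modular counter on top of a product construction.

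For closure under $\bfreduction$, suppose $\mathcal{C} \bfreduction \mathcal{D}$ via a $k$-ary boolean function $g$ and $\mathcal{D} \in \gccreg$ via a regular labeling scheme $S=(F,c)$. Let $A$ be a DFA with state set $Q$ accepting the regular language $\set{x \wr y}{(x,y) \in F}$. Given $G \in \mathcal{C}$ on $n$ vertices, write $G = g(H_1,\dots,H_k)$ with $H_i \in \mathcal{D}$ on vertex set $V(G)$, and let $\ell^i_{\mathrm{in}},\ell^i_{\mathrm{out}} \colon V(G) \rightarrow \{0,1\}^{c \log n}$ be labelings witnessing $H_i \in \gr[\mathrm{io}]{S}$. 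Define the new labels of length $kc\log n$ by bit-interleaving: for a vertex $u$, let the $j$-th bit of $\ell_{\mathrm{out}}(u)$ be the $\lceil j/k \rceil$-th bit of $\ell^i_{\mathrm{out}}(u)$ where $i = ((j-1) \bmod k) + 1$, and analogously for $\ell_{\mathrm{in}}$. Then in the interleaving $\ell_{\mathrm{out}}(u) \wr \ell_{\mathrm{in}}(v)$ the subsequence of positions indexed by $(i,q)$ is exactly $\ell^i_{\mathrm{out}}(u) \wr \ell^i_{\mathrm{in}}(v)$. A DFA with state space $Q^k \times [2k]$ that keeps a counter modulo $2k$ to know to which of the $k$ parallel copies of $A$ the current input bit belongs, and accepts iff $g$ evaluates to $1$ on the acceptance bits of the $k$ copies, recognizes the required combined label decoder.

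For closure under $\sgreduction$, suppose $\mathcal{C} \sgreduction \mathcal{D}$ via $c,k \in \N$ and a $k^2$-ary boolean function $f$, and $\mathcal{D} \in \gccreg$ via $S=(F,c')$ with DFA $A$. Given $G \in \mathcal{C}_n$, pick $H \in \mathcal{D}_{n^c}$ with an $(H,f)$-representation $\ell_G \colon V(G) \rightarrow V(H)^k$ and labelings $\ell_H^{\mathrm{in}},\ell_H^{\mathrm{out}}$ witnessing $H \in \gr[\mathrm{io}]{S}$. Writing $\ell_G(u) = (u_1,\dots,u_k)$, interleave $k^2$ copies: let $\ell_{\mathrm{out}}(u)$ be the bit-interleaving of the $k^2$ strings indexed by $(i,j) \in [k]^2$ where the $(i,j)$-th string is $\ell_H^{\mathrm{out}}(u_i)$, and symmetrically let $\ell_{\mathrm{in}}(v)$ be the bit-interleaving of the $k^2$ strings whose $(i,j)$-th entry is $\ell_H^{\mathrm{in}}(v_j)$. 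Under this arrangement, reading $\ell_{\mathrm{out}}(u) \wr \ell_{\mathrm{in}}(v)$ and filtering to the $(i,j)$-th positions reproduces $\ell_H^{\mathrm{out}}(u_i) \wr \ell_H^{\mathrm{in}}(v_j)$, which $A$ accepts iff $(u_i,v_j) \in E(H)$. A DFA with state space $Q^{k^2} \times [2k^2]$ runs $k^2$ copies of $A$ in parallel using a counter modulo $2k^2$ to route each incoming bit to the right copy, and accepts iff $f$ evaluates to $1$ on the $k^2$ acceptance bits; the new label length is the constant $k^2 c c'$ times $\log n$.

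The arguments are largely symmetric and no step is particularly delicate; the only point that requires care is to choose the interleaving order (bit-wise rather than block-wise) so that the DFA can tell, using a bounded counter, which of the parallel sub-labels each incoming pair of bits belongs to. Concatenating sub-labels instead of interleaving them would require the automaton to detect a boundary after an input-dependent number of steps, which is impossible for a finite-state machine; this is the single nontrivial design choice in the construction.
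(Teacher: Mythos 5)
Your proof is correct and rests on the same core idea as the paper's: run several copies of the given regular label decoder on all relevant pairs of sub-labels and combine the outcomes with the boolean function, using $k^2$ copies for $\sgreduction$ precisely because a finite automaton cannot remember $x_i$ while it reads $y_1,\dots,y_k$. The technical realization, however, is genuinely different. The paper arranges the sub-labels block-wise, e.g.\ $\ell_{\mathrm{out}}(u)=(u_1^{\mathrm{out}})^k\cdots(u_k^{\mathrm{out}})^k$ and $\ell_{\mathrm{in}}(v)=(v_1^{\mathrm{in}}\cdots v_k^{\mathrm{in}})^k$, and, exactly because of the boundary-detection obstacle you identify, it introduces a delimiter symbol over the alphabet $\{0,1,\#\}$ and then re-encodes the decoder over the binary alphabet; for $\bfreduction$ it does not construct an automaton directly but invokes Corollary~\ref{corol:bfcompl}, obtaining negation from complementation of regular languages and conjunction by the same delimiter argument. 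Your bit-wise interleaving sidesteps the boundary problem altogether: the position residue modulo $2k$ (resp.\ $2k^2$) identifies the copy, so a product automaton with state set $Q^{k}\times[2k]$ (resp.\ $Q^{k^2}\times[2k^2]$) suffices, with no alphabet extension and no re-encoding step, and you treat an arbitrary boolean function for $\bfreduction$ directly rather than through the clone machinery. What the paper's route buys is reuse of the $\gccac$ closure proof and of the general reduction corollaries; what yours buys is a cleaner, self-contained construction whose only delicate point (bit-wise versus block-wise arrangement) you state explicitly. Two cosmetic remarks to align with the paper's conventions: the defining equivalences only need to hold for $u\neq v$, since the paper's proviso lets constructed schemes ignore self-loops, and the constructed relation is a regular label decoder in the sense of the definition because the language of accepted even-length interleavings is regular; neither point affects correctness.
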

\begin{proof}
     To see that $\gccreg$ is closed under $\bfreduction$ it suffices to check that it is closed under negation and conjunction (Corollary~\ref{corol:bfcompl}). Closure under negation follows from the fact that the complement of a regular label decoder is regular as well. Closure under conjunction can be shown by essentially the same argument that we are about to give for closure under $\sgreduction$. 
    
     To see that $\gccreg$ is closed under $\sgreduction$ reconsider the proof that $\gccac$ is closed under $\sgreduction$ (Lemma~\ref{lem:sgacclosure}). We proceed similarly but have to take the sequential nature of DFAs into account. More concretely, an $x_i$ has to be compared against $y_1,\dots, y_k$ for $i \in [k]$. Since $x_i$ cannot be remembered by the automaton we use multiple occurrences of the $x_i$'s and $y_i$'s which slightly increases the required label length from $cdk$ to $cdk^2$. We first construct a labeling scheme $S'$ which is not necessarily regular and then explain how to convert it into a regular one $S''$.
    
     Let $\mathcal{C} \sgreduction \mathcal{D}$ via $d,k \in \N$ and a $k^2$-ary boolean function $f$ and $\mathcal{D} \in \gccreg$ via $S=(F,c)$. We claim that the following labeling scheme $S'=(F',cdk^2)$ represents $\mathcal{C}$. For $m \in \N, i \in [k]$ and $x_i,y_i \in \{0,1\}^{cdm}$ we define
     
     $$(x_1x_2\dots x_{k^2},y_1y_2 \dots y_{k^2}) \in F' \Leftrightarrow 
     f \left( 
     \begin{matrix}
     z_1 	 &   z_2 & \dots  & z_k \\
     z_{k+1} & \ddots	& &   \\
     \vdots  & & & \\
     z_{k^2-k+1} &  & & z_{k^2}  
     \end{matrix}
     \right) = 1 $$
     with $z_i := \llbracket (x_i,y_i) \in F \rrbracket$ for $i \in [k^2]$.     
     
     Let $G$ be a graph in $\mathcal{C}$ with $n$ vertices. We construct labelings $\ell_{\mathrm{in}},\ell_{\mathrm{out}} \colon V(G) \rightarrow \{0,1\}^{cdk^2\log n}$ which show that $G$ is in $\gr[\mathrm{io}]{S'}$.      
     There exists a graph $H$ in $\mathcal{D}$ with $n^d$ vertices such that $G$ has an $(H,f)$-representation via a labeling $\ell \colon V(G) \rightarrow V(H)^k$. There exist labelings  $\ell_{\mathrm{in}}^H,\ell_{\mathrm{out}}^H \colon V(H) \rightarrow \{0,1\}^{c \log n}$ such that $(u,v) \in E(H) \Leftrightarrow (\ell_{\mathrm{out}}^H(u),\ell_{\mathrm{in}}^H(v)) \in F$ for all $u \neq v \in V(H)$. Let $u \in V(G)$ and $\ell(u) = (u_1,\dots,u_k)$. Let $u_i^\alpha = \ell_{\alpha}^H(u_i)$ for $\alpha \in \{ \mathrm{in}, \mathrm{out} \}$. 
     For a string $s$ and $k \in \N$ let $s^k$ denote the string which is obtained by concatenating $s$ $k$ times. 
     We define $\ell_{\mathrm{in}}(u)$ as $(u_1^{\mathrm{in}} u_2^{\mathrm{in}} \dots u_k^{\mathrm{in}})^k$ and $\ell_{\mathrm{out}}(u)$ as $(u_1^{\mathrm{out}})^k (u_2^{\mathrm{out}})^k \dots (u_k^{\mathrm{out}})^k$. Consider two vertices $u \neq v \in V(G)$. We claim that $(u,v) \in E(G)$ iff $(\ell_{\mathrm{out}}(u),\ell_{\mathrm{in}}(v)) \in F'$. If we plug in the definitions of $\ell_{\mathrm{out}}(u)$ and $\ell_{\mathrm{in}}(v)$ the right-hand side becomes
     $$ \left(     
      \underbrace{u_1^{\mathrm{out}}  \dots u_1^{\mathrm{out}}}_{k \text{ times}} \: \dots \: \underbrace{u_k^{\mathrm{out}}  \dots u_k^{\mathrm{out}}}_{k \text{ times}} \text{\Large , } 
     v_1^{\mathrm{in}} v_2^{\mathrm{in}} \dots v_k^{\mathrm{in}} 
     \: \dots \:
     v_1^{\mathrm{in}} v_2^{\mathrm{in}} \dots v_k^{\mathrm{in}}      
     \right) \in F' $$
     By definition of $F'$ this holds iff 
     $$f \left( 
\begin{matrix}
    \llbracket (u_1^{\mathrm{out}},v_1^{\mathrm{in}}) \in F \rrbracket 	 &   
    \llbracket (u_1^{\mathrm{out}},v_2^{\mathrm{in}}) \in F \rrbracket  & \dots  & 
    \llbracket (u_1^{\mathrm{out}},v_k^{\mathrm{in}}) \in F \rrbracket  \\
    \llbracket (u_2^{\mathrm{out}},v_1^{\mathrm{in}}) \in F \rrbracket  & \ddots	& &   \\
    \vdots  & & & \\
    \llbracket (u_k^{\mathrm{out}},v_1^{\mathrm{in}}) \in F \rrbracket  &  & & 
    \llbracket (u_k^{\mathrm{out}},v_k^{\mathrm{in}}) \in F \rrbracket  
\end{matrix}
\right) = 1 $$
    Furthermore, it holds that $(u_i^{\mathrm{out}},v_j^{\mathrm{in}}) \in F$ iff $(u_i,v_j) \in E(H)$ for all $i,j \in [k]$. Since $G$ has an $(H,f)$-representation via $\ell $ it follows that $(u,v) \in E(G)$ iff $f(A)=1$ with $A_{i,j} = \llbracket (u_i,v_j) \in E(H)  \rrbracket$. Therefore $G$ is in $\gr[\mathrm{io}]{S'}$.
    
    Now, let us consider the complexity of computing $F'$. To decide $(x_1 \dots x_{k^2},y_1 \dots y_{k^2}) \in F'$ we can evaluate the binary decision tree of $f$ where the $i$-th proposition corresponds to $\llbracket (x_i,y_i) \in F \rrbracket$ for $i \in [k^2]$. Since this tree has depth at most $k^2$ a finite number of states suffices to compute this. Also, the truth value of each proposition can be decided by a DFA because $F$ is regular. However, the difficulty is that a DFA does not know when $x_i$ and $y_i$ end and $x_{i+1}$ and $y_{i+1}$ begin. To resolve this one can introduce a special delimiter sign `$\#$' and define a new label decoder $F''$ over the alphabet $\{0,1, \# \}$ as  
    $$ (x_1 \# \dots \# x_{k^2} \# * ,y_1 \# \dots \# y_{k^2} \# * ) \in F'' \Leftrightarrow (x_1 \dots x_{k^2},y_1 \dots y_{k^2})  \in F'$$
    where `$*$' denotes an arbitrary string of a certain length. By choosing an adequate label length $c''$ one obtains a regular labeling scheme $(F'',c'')$ which represents $\mathcal{C}$. This labeling scheme can be re-encoded over the binary alphabet such that it remains regular. 
\end{proof}

\begin{theorem}
$\gccfolt \subseteq \gccreg \subseteq \gccp$.
\end{theorem}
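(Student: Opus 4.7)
The plan is to handle the two inclusions separately, with the right-hand inclusion essentially routine and the left-hand one reduced via completeness to a single atomic case.

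For $\gccreg \subseteq \gccp$, the idea is to simulate the io-interpretation inside an ordinary (concatenation) labeling scheme in $\gccp$. Given a regular labeling scheme $S=(F,c)$, define a new labeling scheme $S'=(F',2c)$ in which the label of a vertex $u$ on a graph with $n$ vertices is $\ell_{\mathrm{out}}(u)\ell_{\mathrm{in}}(u) \in \{0,1\}^{2c\log n}$, and define $F'$ by
\[(x_{\mathrm{out}}x_{\mathrm{in}}, y_{\mathrm{out}}y_{\mathrm{in}}) \in F' \iff (x_{\mathrm{out}},y_{\mathrm{in}}) \in F.\]
Then $\gr[\mathrm{io}]{S} \subseteq \gr{S'}$. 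To see $F' \in \P$, note that deciding $(x,y) \in F$ reduces to deciding membership in the regular language $\{x \wr y : (x,y)\in F\}$, which is possible in linear time by a DFA (with a trivial rearrangement of the inputs, which is itself polynomial). Hence $\gr{S'} \in \gccp$.

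For $\gccfolt \subseteq \gccreg$, I would invoke Theorem~\ref{thm:lngfolt}, which states that linear neighborhood graphs are $\bfreduction$-complete for $\gccfolt$, together with the lemma just proved that $\gccreg$ is closed under $\bfreduction$. It therefore suffices to exhibit linear neighborhood graphs as a member of $\gccreg$. By Lemma~\ref{lem:lngchar}, every linear neighborhood graph $G$ on $n$ vertices admits a labeling $\ell \colon V(G) \to [n]_0^2$ with $(u,v) \in E(G) \iff \ell(u)_1 < \ell(v)_2$. Setting $\ell_{\mathrm{out}}(u) = \ell(u)_1$ and $\ell_{\mathrm{in}}(v) = \ell(v)_2$ (each padded to binary length $c\log n$ for $c = 2$, say, with the most significant bit first), the task reduces to showing that the set
\[ L = \set{x \wr y}{n \in \N,\ x,y \in \{0,1\}^n,\ \mathrm{bin}^{-1}(x) < \mathrm{bin}^{-1}(y)} \]
is regular. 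This is the standard interleaved binary comparison: a three-state DFA maintaining the current status (equal-so-far / $x < y$ established / $x > y$ established) updated on each pair of consecutive bits $(x_i,y_i)$ suffices, with acceptance on the $x<y$ state at end of input. Hence the corresponding label decoder $F$ is regular, so linear neighborhood graphs lie in $\gccreg$, and closure under $\bfreduction$ gives $\gccfolt \subseteq \gccreg$.

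The main obstacle is minor but worth flagging: making sure the io-interpretation is compatible with both sides of each inclusion. On the $\gccreg \subseteq \gccp$ side, one must verify that the trick of bundling an in-label and an out-label into a single label does not inflate the complexity beyond polynomial time and respects the $\mathcal{O}(\log n)$ label-length constraint; this is immediate since $2c$ is still a constant. On the $\gccfolt \subseteq \gccreg$ side, one must be careful that the numbers in the image of $\ell$ fit within the allotted $c\log n$ bits (which is why a slight overhead $c = 2$ is convenient when encoding values from $[n]_0$), and that the MSB-first convention allows a DFA to finalize the comparison at end of input rather than requiring a look-ahead.
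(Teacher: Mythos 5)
Your proposal is correct, and its overall strategy mirrors the paper's: handle $\gccreg \subseteq \gccp$ by noting regular label decoders are polynomial-time decidable, and handle $\gccfolt \subseteq \gccreg$ by pushing a complete graph class for $\gccfolt$ into $\gccreg$ via closure under reductions, with the regularity of an interleaved ``less-than'' comparison doing the real work. The only substantive difference is which completeness result you invoke: the paper reduces to the transitive closure of directed paths, which is $\sgreduction$-complete for $\gccfolt$, and uses closure of $\gccreg$ under $\sgreduction$, whereas you reduce to linear neighborhood graphs, which are $\bfreduction$-complete (Theorem~\ref{thm:lngfolt}), and use closure under $\bfreduction$; both closures are established in the same lemma, so the two routes are interchangeable, and your choice has the small advantage that the $x_1 < y_2$ characterization of linear neighborhood graphs (Lemma~\ref{lem:lngchar}) splits naturally into out- and in-labels, matching the io-interpretation directly. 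You also spell out the bundling of in- and out-labels needed to place $\gr[\mathrm{io}]{S}$ inside an ordinary labeling scheme for the first inclusion, a step the paper treats as immediate via the trick it mentions before defining $\gr[\mathrm{io}]{\cdot}$; making it explicit, together with your padding remark ensuring values from $[n]_0$ fit in $c\log n$ bits, is a welcome bit of extra care rather than a deviation.
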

\begin{proof}
    It is clear that every regular label decoder is polynomial-time computable and thus $\gccreg \subseteq \gccp$. To show that $\gccfolt \subseteq \gccreg$ it suffices to argue that the transitive closure of directed path graphs is in $\gccreg$ since this class is $\sgreduction$-complete for $\gccfolt$ and $\gccreg$ is closed under $\sgreduction$. This follows from the fact that the following is a regular label decoder:
    $$ \set{ (x,y) }{ m \in \N, \: x,y \in \{0,1\}^m, \text{ $x$ is lexicographically smaller than $y$} }$$    
\end{proof}
 
\begin{fact}
    Every graph class with bounded clique-width is in $\gccreg$. 
    \label{fact:cwreg}
\end{fact}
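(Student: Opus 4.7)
The plan is to reuse the labeling from the proof of Fact~\ref{fact:cwgac} essentially verbatim and argue that its label decoder becomes a regular language once the two labels are interleaved bit by bit. Concretely, for a graph $G$ of clique-width $k$ on $n$ vertices, the decomposition tree $T(G)$ has depth $d \leq \log_{3/2} n$, and we pad shorter root-to-leaf paths to that length with an ``inactive'' marker block. The label of a vertex $v$ is the concatenation of $d$ blocks of a common constant size $b = b(k)$, where the block at level $i$ stores (i) a direction bit (left/right at $x_i$), (ii) a one-hot encoding in $\{0,1\}^k$ of the part index $j$ that places $v$ in $S_j^i$ (relevant only if the direction is ``left''), and (iii) a bit-vector in $\{0,1\}^k$ indicating the subset $X \subseteq [k]$ of parts adjacent to $v$ (relevant only if the direction is ``right''). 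Set $\ell_{\mathrm{out}}(v) = \ell_{\mathrm{in}}(v) = \ell(v)$; since $G$ is undirected and the adjacency predicate from Fact~\ref{fact:cwgac} is symmetric in its two arguments, this symmetric choice is sound.

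The main step is to exhibit a DFA $M$ that, reading the interleaved string $\ell(u) \wr \ell(v)$, accepts iff $u$ and $v$ are adjacent. The DFA maintains (a) a counter modulo $2b$ indicating the position within the current interleaved block, (b) a flag telling whether the common prefix of the two root-to-leaf paths has already ended, and, once divergence is detected, (c) one of two ``mode'' bits recording which of $u,v$ went left at the divergence level, together with (d) a one-bit accumulator for the disjunction $\bigvee_{j \in [k]}(p_j \wedge s_j)$, where $p$ is the one-hot part index of the ``left'' vertex and $s$ is the subset bit-vector of the ``right'' vertex, both read out of the block at the divergence level. After the divergence block has been fully consumed, $M$ latches the accumulator value and ignores the remaining bits of all subsequent blocks, transitioning through them using only the modulo-$2b$ counter. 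All state components are bounded by functions of $k$ and $b$, hence constant, so $M$ is a legitimate DFA.

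Correctness reduces to two routine checks: that on the bits of a single interleaved block $M$ can correctly (1) compare the two direction bits (the first two bits of the block under the interleaving $\wr$) to update the divergence flag, and (2) scan the remaining $2k$ interleaved bits pairwise to compute $\bigvee_j (p_j \wedge s_j)$ when this block is the divergence block. Both are purely local, constant-size computations that trivially fit inside the DFA's state. The hard part of Fact~\ref{fact:cwgac} — the existence of a balanced $k$-module at every internal node of $T(G)$ — is inherited as a black box; the only real content here is the observation that the $\AC^0$ decoder from that proof is in fact realizable by a finite automaton when one insists on the interleaved reading order enforced by the definition of regular labeling schemes. Hence every bounded-clique-width class lies in $\gccreg$.
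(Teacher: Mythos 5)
Your proposal is essentially the paper's proof: both take the balanced-$k$-module labeling from Fact~\ref{fact:cwgac} as a black box, organize the label into one constant-size block per tree level, and build a DFA that on the interleaved string finds the first block where the direction bits differ and then checks the cross-condition there (the paper marks block boundaries with a delimiter symbol and re-encodes in binary, whereas you use fixed-length blocks with a modulo counter and pad short paths; both devices are fine). One imprecision to fix: with your block layout, which stores \emph{both} the one-hot part index and the subset bit-vector in every block, the quantity you must compute at the divergence block is $\bigvee_{j}(p_j \wedge s_j)$ where $p$ comes from the one-hot \emph{field} of the left-going vertex and $s$ from the subset \emph{field} of the right-going vertex; under the interleaving these two fields occupy different positions within the block, so a strictly ``pairwise'' scan with only the one-bit accumulator you list does not compute it. The repair is immediate and stays within constant state: either let the DFA remember the position of the unique $1$ in the left vertex's one-hot field (at most $k$ possibilities, $k$ fixed) and test the right vertex's subset field at that position, or adopt the paper's leaner block format in which each block carries a single $k$-bit vector (one-hot if the vertex went left, subset if it went right), so that the aligned pairwise and-or is exactly the desired cross-check.
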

\begin{proof}
    Reconsider the labeling scheme described in the proof of Fact~\ref{fact:cwgac}. By using a suitable encoding this can be turned into a regular labeling scheme. Let $G$ be a graph with $n$ vertices and clique-width $k$. Let $T(G)$ be a binary decomposition tree of $G$.    
    A vertex $v$ of $G$ is labeled as follows. We start with an empty label for $v$ and explain how to iteratively construct its label by appending a string each time one moves a node down in $T(G)$.    
    Let $r$ be the root node of $T(G)$ and $S$ is the balanced $k$-module in the left child of $r$ which can be partitioned into $S_1,\dots,S_k$. Let $R$ be the subset of $[k]$ such that vertices in $V(G) \setminus S$ are adjacent to all vertices in $S_i$ for $i \in R$ and non-adjacent to all other vertices in $S$.
    Assume $v$ is in the left child of $r$ and $v \in S_i$ for $i \in [k]$. Let $s_i \in \{0,1\}^k$ be the string which only has a 1 at the $i$-th position. Then we append $0s_i\#$ to the label of $v$. The first bit tells us that $v$ is placed in the left child of the root node. The string $s_i$ encodes in which of the $k$ parts of $S$ $v$ lies. Assume $v$ is in the right child of $r$. Let $s \in \{0,1\}^k$ be the string which has a 1 at the $i$-th position iff $i \in R$ for all $i \in [k]$, i.e.~$s$ encodes the subset $R$. Append $1s\#$ to the label of $v$. Repeat this process until the leaf node of $T(G)$ which contains $v$ is reached. 
    We call a string of the form  `$\{0,1\}^{k+1}\#$' a block. 
    This means the label of a vertex in $G$ consists of $\mathcal{O}(\log n)$ blocks since $T(G)$ has depth $\mathcal{O}(\log n)$. To decode adjacency from two such labels one has to find the first blocks where the first bits differ (this means the vertices are placed in different subtrees) and then check for the remaining $k$ bits of the two blocks whether there is a position $i \in [k]$ such that both blocks have a 1 at that position. This can be computed by a DFA.  
\end{proof}

\begin{lemma}
    $\gccreg$ is closed under union.
    \label{lem:gregunion}
\end{lemma}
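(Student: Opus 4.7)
The goal is, given regular labeling schemes $S_1 = (F_1, c_1)$ and $S_2 = (F_2, c_2)$ witnessing $\mathcal{C}_1, \mathcal{C}_2 \in \gccreg$, to construct a regular labeling scheme $S = (F, c)$ with $\mathcal{C}_1 \cup \mathcal{C}_2 \subseteq \gr[\mathrm{io}]{S}$. My approach is the standard one: combine both schemes within a single label using a tag bit that specifies which of the two sub-schemes is ``active'' for a given labeling. The technical issue to watch is that a DFA reading the interleaved form $x \wr y$ cannot count up to $\log n$, so the label layout must be periodic with a period that depends only on the constants $c_1, c_2$.

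I would set $c = c_1 + c_2 + 1$ and view a label of length $c \log n$ as a sequence of $\log n$ consecutive blocks of $c$ bits each. Within each block, the first bit is a tag, the next $c_1$ bits hold a fragment of an $S_1$-labeling, and the final $c_2$ bits hold a fragment of an $S_2$-labeling. For a graph $G \in \mathcal{C}_i$, every vertex is labeled so that all of its tag bits equal $i - 1$, the concatenation of its $S_i$-fragments across blocks equals the $S_i$-labeling of that vertex (one provided by $\gr[\mathrm{io}]{S_i}$), and its $S_{3-i}$-fragments are zero. The label decoder $F$ then accepts $(x, y)$ iff all tag bits of $x$ and all tag bits of $y$ agree on a common value $t \in \{0,1\}$ and the pair of concatenated $S_{t+1}$-fragments of $x$ and $y$ lies in $F_{t+1}$. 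It is immediate that both $\mathcal{C}_1$ and $\mathcal{C}_2$, hence their union, lie in $\gr[\mathrm{io}]{S}$.

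To verify that $F$ is regular, note that $x \wr y$ consists of $\log n$ consecutive super-blocks of $2c$ bits, each with a fixed constant-length internal structure (two interleaved tag bits, then $2c_1$ interleaved $S_1$ bits, then $2c_2$ interleaved $S_2$ bits). A DFA for $L_F$ reads the two tag bits in the first super-block, rejects on disagreement, and otherwise enters one of two simulation modes determined by $t$; in mode $t$, for every super-block it verifies the tag-bit equality, feeds the $2 c_{t+1}$ relevant bits into a running simulation of the DFA for $L_{F_{t+1}}$, and discards the remaining $2 c_{3-(t+1)}$ bits. This is a bounded product construction using only a constant number of states. At the end of the input the DFA accepts iff the simulated DFA for $L_{F_{t+1}}$ is in an accept state.

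The key obstacle, as already noted, is the inability of a DFA to react to position information that grows with $n$. The block-based interleaving circumvents this because the DFA's processing loop has period $2c$, independent of $n$; it simply runs until the input is exhausted, which automatically coincides with the end of the $\log n$-th super-block. Assuming this encoding is set up carefully, the rest is routine verification that the combined scheme is well-defined for every $n$ and correctly decodes adjacency for any $G \in \mathcal{C}_1 \cup \mathcal{C}_2$.
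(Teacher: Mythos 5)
Your proof is correct, and it shares the paper's high-level idea -- a tag bit in the labels selects which of the two decoders $F_1,F_2$ is simulated -- but it resolves the one genuinely technical point, regularity of the combined decoder, by a different encoding. The paper also sets $c=c_1+c_2+1$, but it places the selected sub-label directly after the tag bit, followed by a delimiter symbol $\#$ and dummy padding at the end of the label; its DFA simulates the chosen sub-DFA until it reads the delimiter and ignores the rest, which requires extending the alphabet to $\{0,1,\#\}$ and a final re-encoding step back to binary. You instead stay over the binary alphabet and distribute both sub-labelings (plus a repeated tag bit) over $\log n$ blocks of constant width $c$, so that $x \wr y$ is periodic with period $2c$; a constant-state product automaton with a mod-$2c$ counter can then check tag consistency, skip the irrelevant bits, and feed the relevant ones to the chosen sub-DFA -- and since within each super-block the relevant bits already alternate between $x$ and $y$, the fed subsequence is exactly the interleaving of the two sub-labels, as required. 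Your route avoids the alphabet extension and re-encoding at the cost of a slightly more careful layout argument; both give the same label length and verify $\mathcal{C}_1 \cup \mathcal{C}_2 \subseteq \gr[\mathrm{io}]{S}$ identically (all tags set to the class index, unused fragments zeroed). One small point to make explicit: in the $\gr[\mathrm{io}]{\cdot}$ setting each vertex carries an in-label and an out-label, so the block construction should be applied to both, embedding the $S_i$ out-label into the new out-label and the $S_i$ in-label into the new in-label; this is implicit in your write-up and does not affect correctness.
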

\begin{proof}
    Let $\mathcal{C},\mathcal{D} \in \gccreg$ via $S_1=(F_1,c_1)$ and $S_2=(F_2,c_2)$. We construct a regular label decoder  $F$ such that $\mathcal{C} \cup \mathcal{D} \subseteq \gr[\mathrm{io}]{S}$ with $S=(F,c)$ and $c=c_1+c_2+1$. The idea is to construct $F$ such that the first bits of the two labels determine whether $F_1$ or $F_2$ is used.         
    Since $\max \{c_1,c_2 \} \log n + 1 < c\log n$ the labels must contain some dummy bits in order to be of correct length. 
    We will place these dummy bits at the end of the labels and use a special delimiter sign `$\#$' to signal at what point they start. Formally, this means $F$ is a binary relation over $\{0,1,\#\}$. We give an incomplete specification of $F$ which, however, is sufficient to show that $\mathcal{C} \cup  \mathcal{D}$ is represented by $S$. Let $x,y \in \{0,1,\#\}^{cm}$ for $m \in \N$ and $s \in \{0,1\}$. If $x$ and $y$ are of the form $s \{0,1\}^* \#\{0,1\}^*$ and $x',y'$ denote the substrings of $x,y$ which occur between $s$ and $\#$ then
    $(x,y) \in F \Leftrightarrow (x',y') \in F_{s+1}$. It remains to check that $F$ is regular and $S$ indeed represents $\mathcal{C} \cup \mathcal{D}$. It is also a straightforward task to re-encode $F$ over the binary alphabet. 
\end{proof}

\begin{fact}
    $\gccreg$ is a parameterization. 
\end{fact}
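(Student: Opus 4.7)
The plan is to apply Theorem~\ref{thm:param} and verify its three characterizing conditions for $\gccreg$. Condition (1) is exactly the statement of the preceding Lemma~\ref{lem:gregunion}, so there is nothing new to do there.

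For condition (2), I would show that every singleton graph class $\{G\}$ lies in $\gccreg$ by constructing a finite (hence regular) label decoder. Fix $G$ on $n_0$ vertices and let $\ell_{\mathrm{in}} = \ell_{\mathrm{out}} \colon V(G) \to \{0,1\}^{\log n_0}$ be any bijection to binary strings of length $\log n_0$. Define $F \subseteq \{0,1\}^* \times \{0,1\}^*$ to consist of exactly those pairs $(\ell_{\mathrm{out}}(u), \ell_{\mathrm{in}}(v))$ for which $(u,v) \in E(G)$. Since $F$ is finite, the associated interleaved language $\{x \wr y : (x,y) \in F\}$ is finite and therefore regular. With label length $c = 1$, the labeling scheme $(F,1)$ is regular, and $\{G\} \subseteq \gr[\mathrm{io}]{F,1}$, so $\{G\} \in \gccreg$.

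For condition (3), I would take as the countable subfamily $\mathbb{K}' = \{\gr[\mathrm{io}]{S} : S \text{ is a regular labeling scheme}\}$. Every regular labeling scheme is a pair $(F,c)$ with $c \in \N$ and $F$ a regular label decoder, and the set of regular languages (equivalently, the set of DFAs up to isomorphism) is countable, so $\mathbb{K}'$ is countable. By the very definition of $\gccreg$, every graph class $\mathcal{C} \in \gccreg$ satisfies $\mathcal{C} \subseteq \gr[\mathrm{io}]{S}$ for some regular labeling scheme $S$, so the closure of $\mathbb{K}'$ under subsets is exactly $\gccreg$.

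I do not expect any genuine obstacle here; condition (1) has already been done as Lemma~\ref{lem:gregunion}, condition (2) is a one-line look-up table argument, and condition (3) follows because there are only countably many regular label decoders (finite DFAs). The only minor care required is making sure the finite label decoder in condition (2) is formally regular over $\{0,1\}^*$, which is immediate since every finite set of strings is regular.
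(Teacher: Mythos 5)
Your proposal is correct and follows the paper's proof essentially verbatim: apply Theorem~\ref{thm:param}, use Lemma~\ref{lem:gregunion} for closure under union, and take $\{\gr[\mathrm{io}]{S} : S \text{ regular}\}$ as the countable generating subfamily. The only (harmless) deviation is that you establish singleton membership by a direct finite look-up-table label decoder, whereas the paper simply cites $\gccfolt \subseteq \gccreg$ together with the fact that $\gccfolt$ contains every singleton graph class.
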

\begin{proof}
    We apply Theorem~\ref{thm:param}. $\gccreg$ is closed under union. Also, every singleton graph class is contained in $\gccreg$ because $\gccfolt \subseteq \gccreg$ and $\gccfolt$ already contains every singleton graph class. The countable subset of $\gccreg$ such that its closure under subsets equals $\gccreg$ is given by the set of graph classes $\gr[\mathrm{io}]{S}$ where $S$ is a regular labeling scheme.
\end{proof}

\section{Summary and Open Questions}
\label{sec:final}
In Figure~\ref{fig:bigpicture} an overview of all the sets of graph classes that we have seen is given. First, we summarize the train of thought that motivated us to introduce the various concepts and what we perceive to be their importance in the context of studying the limitations of labeling schemes; this summary does not reflect the order of the paper. Afterwards we point out what we believe to be realistic and meaningful research directions.

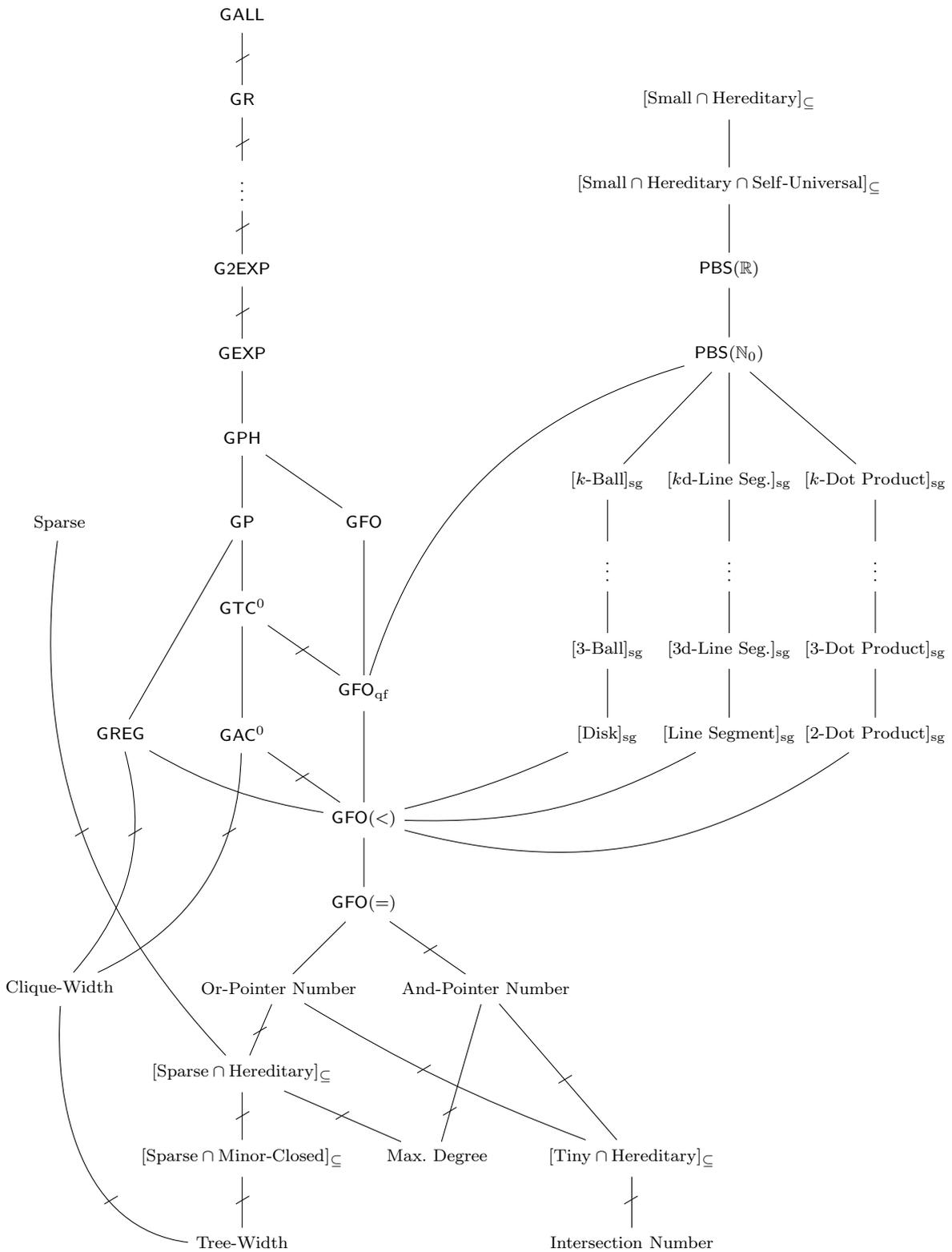
\begin{figure}
    \centering
    \begin{tikzpicture}[shorten >=1pt,auto,node distance=1.2cm,
  main node/.style={circle,draw}]
\usetikzlibrary{shapes.misc}

\newcommand*{\movey}{1.4}%
\newcommand*{\movex}{2}%
\newcommand*{\movexsi}{0.02}%

\node (gsparse) at (-1.5*\movex,-4*\movey) {Sparse};

\node (gcwi) at (-1.5*\movex,-9.5*\movey) {Clique-Width};

\node (gall) at (0,2*\movey) {$\gccall$};

\node (gr) at (0,\movey) {$\gccr$};

\node (grdots) at (0,0) {$\vdots$};

\node (gtwoexp) at (0,-\movey) {$\gcctwoexp$};

\node (gexp) at (0,-2*\movey) {$\gccexp$};

\node (gph) at (0,-3*\movey) {$\gccph$};

\node (gp) at (0,-4*\movey) {$\gccp$};

\node (gtc) at (0,-5*\movey) {$\gcctc$};

\node (gac) at (0,-6.5*\movey) {$\gccac$};

\node (greg) at (-\movex,-6.5*\movey) {$\gccreg$};

\node (gfoq) at (\movex,-4*\movey) {$\gccfoq$};

\node (gram) at (\movex,-6*\movey) {$\gccfo$};

\node (gfolt) at (\movex,-7.5*\movey) {$\gccfolt$};
\node (gfoeq) at (\movex,-8.5*\movey) {$\gccfoeq$};


\node (gdisk) at (3*\movex,-6.5*\movey) {$[\text{Disk}]_\mathrm{sg}$};
\node (gthreeball) at (3*\movex,-5.5*\movey) {$[\text{3-Ball}]_\mathrm{sg}$};
\node (gdotsball) at (3*\movex,-4.5*\movey) {$\vdots$};
\node (gkball) at (3*\movex,-3.5*\movey) {$[\text{$k$-Ball}]_\mathrm{sg}$};

\node (gline) at (4*\movex,-6.5*\movey) {$[\text{Line Segment}]_\mathrm{sg}$};
\node (gthreeline) at (4*\movex,-5.5*\movey) {$[\text{3d-Line Seg.}]_\mathrm{sg}$};
\node (gdotsline) at (4*\movex,-4.5*\movey) {$\vdots$};
\node (gkline) at (4*\movex,-3.5*\movey) {$[\text{$k$d-Line Seg.}]_\mathrm{sg}$};

\node (gtwodp) at (5.2*\movex,-6.5*\movey) {$[\text{2-Dot Product}]_\mathrm{sg}$};
\node (gthreedp) at (5.2*\movex,-5.5*\movey) {$[\text{3-Dot Product}]_\mathrm{sg}$};
\node (gdotsdp) at (5.2*\movex,-4.5*\movey) {$\vdots$};
\node (gkdp) at (5.2*\movex,-3.5*\movey) {$[\text{$k$-Dot Product}]_\mathrm{sg}$};

\node (gopn) at (\movex-\movex+0.6,-9.5*\movey) {Or-Pointer Number};
\node (gapn) at (\movex+\movex,-9.5*\movey) {And-Pointer Number};

\node (gsh) at (\movex-\movex,-10.5*\movey) {$\cogcsh$};
\node (gsmc) at (\movex-\movex,-11.5*\movey) {$\cogcpmc$};
\node (gtw) at (\movex-\movex,-12.5*\movey) {Tree-Width};

\node (gdeg) at (1.6*\movex,-11.5*\movey) {Max.~Degree};

\node (gth) at (3.7*\movex-0.5*\movex,-11.5*\movey) {$\cogcth$};
\node (gecc) at (3.7*\movex-0.5*\movex,-12.5*\movey) {Intersection Number};

\node (gsmallh) at (5*\movex-\movex,-2*\movey) {$\gccpbs(\N_0)$};
\node (gsmallhtwo) at (5*\movex-\movex,-1*\movey) {$\gccpbs(\mathbb{R})$};
\node (gsmallhthree) at (5*\movex-\movex,0*\movey) {$\cogcsmallhsu$};
\node (gsmallhfour) at (5*\movex-\movex,1*\movey) {$\cogcsmallh$};

\path[-]
(gr) edge (gall)
(grdots) edge (gr)
(gtwoexp) edge (grdots)
(gexp) edge (gtwoexp)
(gph) edge (gexp)
(gp) edge (gph)
(gtc) edge (gp)
(gac) edge (gtc)

(greg) edge (gp)
(gfolt) edge[bend angle=10,bend left] (greg)

(gfoq) edge (gph)
(gram) edge (gtc)
(gfolt) edge (gac)

(gfolt) edge[bend angle=5,bend right] (gdisk)
(gfolt) edge[bend angle=15,bend right] (gline)
(gfolt) edge[bend angle=25,bend right] (gtwodp)


(gtwodp) edge (gthreedp)
(gthreedp) edge (gdotsdp)
(gdotsdp) edge (gkdp)

(gram) edge (gfoq)
(gfolt) edge (gram)
(gfoeq) edge (gfolt)

(gopn) edge (gfoeq)
(gapn) edge (gfoeq)

(gtw) edge (gsmc)
(gsmc) edge (gsh)
(gsh) edge (gopn)
(gdeg) edge (gsh)
(gdeg) edge (gapn)
(gecc) edge (gth)
(gth) edge (gapn)

(gth) edge[bend angle=5,bend left] (gopn)

(gsh) edge[bend angle=25,bend left] (gsparse)

(gdisk) edge (gthreeball)
(gthreeball) edge (gdotsball)
(gdotsball) edge (gkball)
(gkball) edge (gsmallh)

(gline) edge (gthreeline)
(gthreeline) edge (gdotsline)
(gdotsline) edge (gkline)
(gkline) edge (gsmallh)

(gkdp) edge (gsmallh)

(gsmallh) edge (gsmallhtwo)
(gsmallhtwo) edge (gsmallhthree)
(gsmallhthree) edge (gsmallhfour)

(gram) edge[bend angle=28,bend left] (gsmallh)

(gcwi) edge[bend angle=28,bend right] (greg)
(gcwi) edge[bend angle=32,bend right] (gac)
(gtw) edge[bend angle=32,bend left,out=55]  (gcwi)
;


\node (si) at (\movexsi,1.5*\movey) {\strictinclusion};
\node (si) at (\movexsi,0.5*\movey) {\strictinclusion};
\node (si) at (\movexsi,-0.5*\movey) {\strictinclusion};
\node (si) at (\movexsi,-1.5*\movey) {\strictinclusion};

\node (sitw) at (\movex-\movex+\movexsi,-12.5*\movey+0.5*\movey) {\strictinclusion};
\node (sismc) at (\movex-\movex+\movexsi,-11.5*\movey+0.5*\movey) {\strictinclusion};

\node (sishapnfoeq) at (2.4*\movex-\movex+\movexsi+0.285,-9.8*\movey+0.5*\movey+0.37) {\strictinclusion};

\node (sishopn) at (\movex-\movex+\movexsi+0.285,-10.5*\movey+0.5*\movey) {\strictinclusion};

\node (sishs) at (-1.32*\movex+\movexsi,-7.65*\movey) {\strictinclusion};

\node (sicwac) at (-0.11*\movex+\movexsi,-7.65*\movey) {\strictinclusion};
\node (sicwreg) at (-0.882*\movex+\movexsi,-7.65*\movey) {\strictinclusion};
\node (sitwcw) at (-1.08*\movex+\movexsi,-12*\movey) {\strictinclusion};

\node (siecc) at (3.7*\movex-0.5*\movex,-12*\movey) {\strictinclusion};

\node (sidegsh) at (2.4*\movex-\movex-0.57*\movex,-12.6*\movey+1.6*\movey) {\strictinclusion};
\node (sidegsapn) at (2.252*\movex-\movex+0.5*\movex-0.08,-12.8*\movey+1.85*\movey) {\strictinclusion};

\node (sithopn) at (2*\movex-\movex+0.5*\movex,-12.3*\movey+1.85*\movey) {\strictinclusion};
\node (sithapn) at (2.6*\movex-\movex+0.36*\movex+0.7*\movex,-12.6*\movey+1.8*\movey+0.32) {\strictinclusion};

\node (sifolt) at (0.5*\movex,-7*\movey) {\strictinclusion};
\node (siram) at (0.5*\movex,-5.5*\movey) {\strictinclusion};

\end{tikzpicture}
    \caption{Landscape of small graph classes}
    \label{fig:bigpicture}                
\end{figure}  

The initial question which started our investigations was whether the computational aspect of label decoders matters at all with respect to the set of graph classes that can be represented. In Section~\ref{sec:hierarchy} we applied a diagonalization argument to affirmatively answer this question. Due to the brute force aspect of this argument it fails to separate classes below $\gcctwoexp$. Nonetheless, it shows that the definition of labeling schemes given by Muller \cite{muller} differs from the one given by Kannan, Naor and Rudich \cite{kannan}, i.e.~$\gccp \neq \gccr$. 
The graph classes which exhibit these separations are far removed from any natural graph class. The next and much more difficult question is whether a hereditary graph class can be shown to not have a labeling scheme under certain computational constraints. The first step is to find adequate computational constraints for this purpose. 
The first obvious candidate for this is $\gccac$ since $\AC^0$ is usually considered to be the smallest meaningful complexity class. To our surprise, every hereditary graph class that we found to have a labeling scheme can also be easily seen to be in $\gccac$. Moreover, we interpret the fact that $\gccac \not\subseteq \cogcsmallh$ (Theorem~$\ref{thm:acsmallh}$) as evidence that proving lower bounds against $\gccac$ in general (not only w.r.t.~hereditary graph classes) is very difficult due to the non-uniform constructions that can be realized by labeling schemes in $\gccac$. This prompted us to look for alternative models of computation that might be more suitable for proving lower bounds and led us to logical and regular labeling schemes.

Logical labeling schemes generalize the naive labeling schemes for many geometrical intersection graph classes such as interval graphs or circle graphs by interpreting the vertex labels as a constant number of polynomially bounded numbers. The quantifier-free fragment $\gccfo$ is well-behaved in the sense that it is a subset of $\cogcsmallh$. Furthermore, it is quite robust and enjoys a characterization in terms of constant-time RAM machines: $\gccfo$ is the set of graph classes which have a labeling scheme with a constant-time decidable label decoder on a RAM machine without division (\cite[Corol.~3.84]{chat}). When the label length constraint in quantifier-free logical labeling schemes is dropped (and thus we do not talk about labeling schemes anymore) one obtains what we named polynomial-boolean systems. They are interesting because they contain many of the candidates for the implicit graph conjecture such as line segment graphs or $k$-dot product graphs. Going below $\gccfo$ we find $\gccfolt$ and $\gccfoeq$ which have various complete graph classes under both types of reduction and contain a wealth of graph classes that have been intensely studied from a graph-theoretical and an algorithmic perspective. It is notable that $\gccfoeq$ and $\gccfolt$ are closed under hereditary closure, i.e.~if they contain a graph class $\mathcal{C}$ then they also contain its hereditary closure $[\mathcal{C}]_{\subseteq}$. This is not the case for $\gccfo$ unless $\gccfo = \gccpbs(\N_0)$ (Theorem~\ref{thm:gccfohcpbs}).
The pointer numbers are two very simple families of labeling schemes from $\gccfoeq$ that already capture uniformly sparse and bounded degree graph classes. They can be used as a starting point to examine the expressiveness and algorithmic properties of labeling schemes.

An integral part of complexity theory are reductions. Initially, it was not clear for us what an adequate reduction notion in the context of labeling schemes should look like. The first important insight was that it should be a relation on graph classes as opposed to labeling schemes. Furthermore, the smallest class of interest $\gccfoeq$ should be closed under such a reduction notion. This naturally led us to consider ways to define such a relation in terms of boolean functions. Algebraic and subgraph reductions satisfy these requirements.
Intuitively, they allow us to compare the complexity of the adjacency structure of graph classes in a combinatorial way. Roughly speaking, algebraic reductions are a special case of subgraph reductions (Lemma~\ref{lem:bftosg}). In the context of labeling schemes they allow us to compare graph classes for which no labeling schemes are known. In particular, it would be interesting to see whether candidates for the implicit graph conjecture can be reduced to each other. For example, can every disk in the plane be assigned to a constant number of line segments in the plane such that two disks intersect iff a boolean combination of their corresponding line segments intersect? This would imply a subgraph reduction from disk graphs to line segment graphs and it would mean that constructing a labeling scheme for disk graphs is not harder than for line segment graphs. 

Another aspect that we considered are algorithmic properties of graph classes with implicit representations.
Can certain algorithmic problems be solved efficiently on graph classes with a labeling scheme of very low complexity? A positive result seems plausible since one would expect such graph classes to have a rather simple adjacency structure. In Section~\ref{sec:algo} we observed that this kind of question naturally fits into the framework of parameterized complexity. More specifically, classes such as $\gccfoeq$ or $\gccfolt$ can be understood as parameterizations (see Theorem~\ref{thm:param} and the paragraph after Corollary~\ref{corol:param}). 
For example, is the dominating set problem fixed-parameter tractable when parameterized by the or-pointer number and the solution size? This would generalize the result for uniformly sparse graphs classes. See Figure~\ref{fig:alg_side} and the subsequent paragraphs for more algorithmic questions in that direction.

Let us turn back to the leading question of finding a small and hereditary graph class that does not have a labeling scheme under certain complexity constraints. The most expressive classes of labeling schemes against which proving lower bounds does not seem inconceivable (given the current state of knowledge) are $\gccac$, $\gccfo$ and $\gccreg$. While we believe that there exists a small and hereditary graph class that does not reside in $\gccac$, we also presume that proving the existence of such a graph class is out of reach. In the case of $\gccreg$ we cannot faithfully claim the situation to be different due to a lack of intuition. 
To better understand the expressiveness of regular labeling schemes we suggest to determine whether $\gccreg$ is a subset of $\cogcsmallh$ and whether it is closed under hereditary closure (the latter implies the former).
 This leaves us with $\gccfo$. Unlike in the case of the other two classes the vertex labels cannot be accessed bitwise by labeling schemes in $\gccfo$ but have to be interpreted as numbers. A concrete example of this limitation is exhibited by the labeling scheme for graph classes with bounded clique-width (Fact \ref{fact:cwgac} and \ref{fact:cwreg}).  While all such graph classes are in $\gccac$ and $\gccreg$ it is not clear whether they are also in $\gccfo$. 
Moreover, the class $\gccfo$ also has practical merit due to its RAM characterization. This motivates us to state the following variant of the implicit graph conjecture (IGC):   

\begin{conjecture}[Weak Implicit Graph Conjecture]
    Every small and hereditary graph class is in $\gccfo$. Stated differently, $\gccfo = \cogcsmallh$.
\end{conjecture}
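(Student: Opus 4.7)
The plan is to establish $\gccfo = \cogcsmallh$ by upgrading the chain of inclusions
$$\gccfo \subseteq \gccpbs(\N_0) \subseteq \gccpbs(\mathbb{R}) \subseteq \cogcsmallh$$
from Corollary~\ref{corol:pbs} into a chain of equalities. One direction, $\gccfo \subseteq \cogcsmallh$, is already immediate, so the work lies entirely in proving the reverse inclusion $\cogcsmallh \subseteq \gccfo$. Since $\gccfo$ is trivially closed under subsets, it suffices to restrict attention to small hereditary classes themselves.

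First I would attack $\cogcsmallh \subseteq \gccpbs(\mathbb{R})$, i.e.\ show that every small hereditary graph class admits a polynomial-boolean representation over the reals. The cardinality bound $n^{\mathcal{O}(n)}$ used in Theorem~\ref{thm:pbssmallh} arises from Warren's theorem applied to the sign patterns of a polynomial system, so the hoped-for converse is that the $n^{\mathcal{O}(n)}$ growth forces the existence of such a system of constant arity. I would try to construct, for a given small hereditary $\mathcal{C}$, a finite family of polynomials whose sign patterns enumerate $\mathcal{C}_n$ uniformly in $n$, using machinery from real algebraic geometry (cylindrical algebraic decomposition, Tarski–Seidenberg) to compile the adjacency relation into a semi-algebraic set, and then bounding its description complexity by a hereditary-structural invariant of $\mathcal{C}$.

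Second, granted $\cogcsmallh \subseteq \gccpbs(\mathbb{R}) = \gccpbs(\N_0)$, I would address the size-reduction step $\gccpbs(\N_0) \subseteq \gccfo$. By Theorem~\ref{thm:gccfobpbsn} this is exactly the requirement that labels be chosen polynomially bounded, and by Theorem~\ref{thm:gccfohcpbs} (together with Corollary~\ref{corol:pbscol}) an equivalent reformulation is that $\gccfo$ be closed under hereditary closure, or admit a hereditary $\bfreduction$-complete graph class. I would pursue the closure version: given $\mathcal{C} \subseteq \gr{\varphi,c}$ with $\varphi$ quantifier-free, produce $(\varphi',c')$ so that $[\mathcal{C}]_{\subseteq} \subseteq \gr{\varphi',c'}$. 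The naive restriction-of-labeling argument fails because an induced subgraph on $m$ vertices inherits numbers up to $n^c$, not $m^{c'}$. A promising strategy is to combine the rank-compression trick from the proof of Theorem~\ref{thm:adduseless} with the algebraic decomposition of Theorem~\ref{thm:gfo_algebra}, compressing each atomic component $\gr[\infty]{A_i,c}$ individually after restriction and then reassembling via the underlying boolean function.

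I expect the first step to be the principal obstacle: extracting a uniform constant-arity polynomial-boolean system out of the mere cardinality estimate $|\mathcal{C}_n| \leq n^{\mathcal{O}(n)}$ is much stronger than any currently known form of the implicit graph conjecture, and the known candidates for the IGC (such as $k$d-line segment, $k$-ball, and $k$-dot product graphs) already demonstrate that even once a semi-algebraic representation is available, the label magnitudes can be forced to be doubly exponential in $n$, so the second step is itself non-trivial. Consequently the proof would rest on a new structural bridge between factorial speed of growth and semi-algebraic representability, together with a compression mechanism that turns arbitrary real representations into ones with labels polynomial in $n$; either half would already yield major progress on the classical IGC, and establishing both simultaneously is what the weak implicit graph conjecture really demands.
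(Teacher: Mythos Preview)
The statement you are attempting to prove is a \emph{conjecture}, not a theorem: the paper explicitly labels it the ``Weak Implicit Graph Conjecture'' and provides no proof. It is posed as an open problem in Section~\ref{sec:final}, where the paper remarks that graph classes with bounded clique-width are candidates for refuting it (since they are in $\gccac$ but not known to be in $\gccfo$), and that every candidate for the original IGC is automatically a candidate for this weaker conjecture. There is therefore no proof in the paper to compare against.

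Your proposal is not a proof either, and you essentially concede this in your final paragraph. The two steps you isolate are both wide open. The first step, $\cogcsmallh \subseteq \gccpbs(\mathbb{R})$, asks for a constant-arity semi-algebraic representation for every small hereditary class; no such structural theorem is known, and the Warren bound in Theorem~\ref{thm:pbssmallh} is a one-way counting argument with no obvious converse. The second step, collapsing $\gccpbs(\N_0)$ to $\gccfo$, is exactly the hereditary-closure question left open after Theorem~\ref{thm:gccfohcpbs}; the rank-compression trick from Theorem~\ref{thm:adduseless} works precisely because atoms without multiplication can be separated into $x$-side and $y$-side terms, and there is no indication it extends to general polynomial atoms. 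So what you have written is an accurate diagnosis of why the conjecture is hard, not a route to proving it.
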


Every candidate for the IGC (small, hereditary graph classes not known to be in $\gccp$) is also a candidate for the weak IGC. Additionally, graph classes with bounded clique-width are candidates for the weak IGC but not for the IGC because they are in $\gccac$ and thus in $\gccp$. 

A related task is to show that none of the six classes which are adjacent to $\gccfolt$ in Figure~\ref{fig:bigpicture} coincide with it. For the seventh neighbor $\gccac$ this separation follows from the fact that it is not a subset of $\cogcsmallh$. In order to separate $\gccfolt$ from $\gccfo$ it already suffices to separate $\gccfolt$ from $\gccpbs(\N_0)$ (Corollary~\ref{corol:pbscol}). Moreover, it suffices to only consider hereditary graph classes when comparing $\gccfolt$ and $\gccpbs(\N_0)$ because both are closed under hereditary closure.  
The class $\gccfolt$ is particularly interesting because it contains many well-studied graph classes and seems to be unsophisticated enough in order to establish a thorough understanding. The following two questions are aimed at developing such an understanding.  
Let us say a graph class is f-hereditary if it is characterized by a finite set of forbidden induced subgraphs. Can the set of (undirected) f-hereditary graph classes in $\gccfolt$ be characterized in terms of their forbidden induced subgraphs? Stated differently, given a finite set of forbidden induced subgraphs decide whether the graph class induced by this set is in $\gccfolt$. A simple, first step is to identify f-hereditary graph classes in- and outside of $\gccfolt$. 
The second question is: does $\gccfolt$ have a $\bfreduction$-complete graph class when restricted to undirected graph classes? We have shown that interval graphs are $\sgreduction$-complete for $\gccfolt$ but it is not clear whether they are also $\bfreduction$-complete for the set of undirected graph classes in $\gccfolt$. For instance, it is not clear whether $k$-interval graphs are $\bfreduction$-reducible to interval graphs for $k \geq 2$. The same two questions can be asked about $\gccfoeq$.

The following paragraphs raise questions that can be treated independently of the concept of labeling schemes. More specifically, they deal with questions regarding the two reduction notions that we introduced.

It is trivially true that $\bfreduction$-reductions are weaker than $\sgreduction$-reductions because a directed graph class cannot be $\bfreduction$-reducible to an undirected one whereas in the case of $\sgreduction$-reductions this is possible. Does this also hold for non-trivial reasons? Let us say $\mathcal{C} \bfreduction^* \mathcal{D}$ if there exists an $n_0 \in \N$ such that $ \mathcal{C}_{\geq n_0} \bfreduction \mathcal{D}$ ($\mathcal{C}_{\geq n_0}$ means the set of graphs with at least $n_0$ vertices in $\mathcal{C}$). Are there undirected, hereditary graph classes $\mathcal{C}$ and $\mathcal{D}$ such that $\mathcal{C} \sgreduction \mathcal{D}$ holds but $\mathcal{C} \bfreduction^* \mathcal{D}$ does not hold? Candidates for $\mathcal{C}$ and $\mathcal{D}$ are $k$-interval graphs and interval graphs, respectively. The relation $\bfreduction^*$ is used to exclude finite anomalies.  

The set of small and hereditary graph classes is a rich class. Finding a characterization of this class is a very ambitious task because the complexity of such a characterization can be expected to reflect the richness of this class. The following question is an attempt at guessing what such a characterization could look like.
Is there a small and hereditary graph class such that every other small and hereditary graph class is $\bfreduction$-reducible to it? This is equivalent to asking whether $\cogcsmallh$ has a $\bfreduction$-complete graph class. If this were to be true then the adjacency structure of every small and hereditary graph class would just be a boolean combination of one such particular graph class. 
We do not believe that such a complete graph class exists but we think that a refutation of this would be insightful by itself.

Can any inclusion or collapse be shown for the classes between $\gccfolt$ and $\gccpbs(\N_0)$? For instance, are disk graphs $\sgreduction$-reducible to line segment graphs? A collapse would mean that the collapsed classes can be effectively treated as a single candidate for the IGC.

\printbibliography[heading=bibintoc]

\end{document}